\numberwithin{equation}{section}
\title{Mathematical theory for the interface mode in a waveguide bifurcated from a Dirac point
\thanks{J. Lin was partially supported by the NSF grant DMS-2011148,
and H. Zhang was partially supported by Hong Kong RGC grant GRF 16304621.}}
\author{Jiayu Qiu\thanks{Department of Mathematics, 
 HKUST,  Clear Water Bay, Kowloon, Hong Kong SAR, China.
    \tt jqiuaj@connect.ust.hk.} \,\,\, Junshan Lin
  \thanks{Department of Mathematics and Statistics, Auburn University, Auburn, AL 36849.  \tt jzl0097@auburn.edu.}\,\,\,
  Peng Xie\thanks{Department of Mathematics, 
 HKUST,  Clear Water Bay, Kowloon, Hong Kong SAR, China.
    \tt mapengxie@ust.hk.} \,\,\,
 Hai Zhang
  \thanks{Department of Mathematics, 
 HKUST,  Clear Water Bay, Kowloon, Hong Kong SAR, China.
    \tt haizhang@ust.hk.}}
\date{\today}
\newtheorem{theorem}{Theorem}[section]
\newtheorem{lemma}[theorem]{Lemma}
\newtheorem{corollary}[theorem]{Corollary}
\newtheorem{remark}[theorem]{Remark}
\newtheorem{proposition}[theorem]{Proposition}
\newtheorem{assumption}[theorem]{Assumption}
\begin{document}

\maketitle
\begin{abstract}
In this paper, we prove the existence of a bound state in a waveguide that consists of two semi-infinite periodic structures separated by an interface. The two periodic structures are perturbed from the same periodic medium with a Dirac point and they possess a common band gap enclosing the Dirac point. The bound state, which  is called interface mode here, decays exponentially away from the interface with a frequency located in the common band gap and can be viewed as a bifurcation from the Dirac point. Using the layer potential technique and asymptotic analysis, we first characterize the band gap opening for the two perturbed periodic media and derive the asymptotics of the Bloch modes near the band gap edges. By formulating the eigenvalue problem for the waveguide with two semi-infinite structures using a boundary integral equation over the interface and analyzing the characteristic values of the associated boundary integral operator, we prove the existence of the interface mode for the waveguide when the perturbation of the periodic medium is small.
\end{abstract}

\section{Introduction}
\subsection{Background}
The localization of acoustic and electromagnetic waves allows for the confinement of waves in a small volume or guiding the waves along the desired direction, and therefore has significant applications in the design of novel acoustic and optical devices \cite{ozawa2019, qiu2011,sheng1990,soukoulis2012}. Several strategies have been proposed to 
achieve wave localization in different settings. For instance, a periodic medium with a single defect by changing the medium in a finite region can induce the so-called defect modes that are localized near the defect region. Such perturbations do not change the essential spectrum of the underlying differential
operators but create isolated eigenvalues of finite multiplicity in a band gap of the periodic medium, with the eigenmodes decaying exponentially \cite{Ammari2018-SIAMAP, Ammari2019linedefect, figotin1997_jsp,figotin1998}. Typically, the perturbations in such settings need to be large to create the defect modes \cite{figotin1998-1, Kuchment2009, Vu2014}.
Another way to create wave localization is by perturbing the periodic medium randomly in the whole domain. It can be shown that localized wave modes arise in the band gaps of the periodic medium, which is known as the Anderson localization \cite{figotin1994,figotin1996,figotin1997,sheng1990}.

In this paper, we explore the idea of creating wave localization in a waveguide by perturbing a periodic structure differently along the positive and negative parts of the waveguide axis.
In particular, we prove the existence of a localized wave mode (called bound state or interface mode) for such a configuration near the Dirac point of the periodic medium, which is a special vertex in the spectral band structure when two dispersion curves (surfaces) intersect in a linear (conic) manner.  The investigation of Dirac points has attracted intense research efforts in recent years due to their important role in topological insulators. For example, Dirac points
have been shown to occur in photonic graphene \cite{ablowitz2012nonlinear, fefferman2012honeycomb} and photonic/phononic models with honeycomb lattices \cite{ammari2020honeycomb, Cassier2021, LLZ2022}.
In general, topological phase transition takes place at a Dirac point. As such, near a Dirac point, eigenmodes localized around an interface can be generated by applying proper perturbations to the periodic operator on both sides of the interface. The bifurcation of eigenvalues from Dirac points was rigorously analyzed for one-dimensional Schr\"{o}dinger operators
\cite{fefferman2017topologically}, two-dimensional Schr\"{o}dinger operators 
\cite{fefferman2016edge, drouot2020edge, Drouot2019TheBC}, and two-dimensional 
elliptic operators with smooth coefficients \cite{lee2019elliptic}, all of which use domain wall models such that two periodic materials are ``connected'' adiabatically over a length scale that is much larger than the period of the structure. Note that such an adiabatic domain wall model may not be realistic for photonic/phononic materials with a sharp interface. In this work, we investigate the model when two different periodic media are glued directly such that the medium coefficient attains a jump across the interface. Our goal is to prove the existence of an interface mode near a Dirac point in the context of a waveguide, for which the interface between two periodic media is located at the origin of the waveguide axis. 
Such an interface mode is bifurcated from the Dirac point, and the corresponding eigenfrequency is located in the common band gap of the two periodic media enclosing the Dirac point. Furthermore, we prove that the eigenspaces at the band edges are swapped for the two periodic media, which demonstrates the topological phase transition of the medium at the Dirac point.

We point out that a bound state/interface mode arising from the bifurcation of a Dirac point in a one-dimensional system has been investigated under several different configurations. The existence and stability of interface mode eigenvalues were established by the transfer matrix method and the oscillatory theory for Sturm-Liouville operators in \cite{LZ2021}. In another work, \cite{TZ2022}, the bulk-interface correspondence for 1D topological materials with inversion symmetry was established. Defect modes for 1D dislocated periodic media were studied in \cite{drouot2020} and it was shown that the
defect modes arise as bifurcations from the Dirac operator eigenmodes. Finally, the existence of a stable interface mode in a finite chain of high contrast bubbles that consists of two chains of different topological properties was reported numerically \cite{ammari2020topologically}. However, the analytical techniques in \cite{drouot2020, LZ2021, TZ2022} only apply particularly to the 1D problems.

In this paper, we develop a new approach based on layer potential techniques and asymptotic analysis to investigate the existence of in-gap bound states for the waveguide. The approach overcomes the difficulty of discontinuous coefficients for the inhomogeneous medium in the waveguide and addresses the challenges brought by the presence of the sharp interface separating two different periodic media. The mathematical framework can be applied to study interface or edge modes in other photonic/phononic systems, which will be reported elsewhere. We refer \cite{ammari2018mathematical} for a systematic review of the application of layer potential technique to various wave propagation problems in photonic/phononic materials consisting of subwavelength resonators.

\subsection{Statement of main results} \label{sec-1-2}
We start with the periodic waveguide in Figure 1 that attains a Dirac point in its band structure. Let $\Gamma_{-}=\mathbf{R}\times\{0\}$ and $\Gamma_{+}=\mathbf{R}\times\{\frac{1}{2}\}$ be two parallel walls of the waveguide in the plane $\mathbf{R}^2$, and let $\Gamma=\{0\}\times (0,\frac{1}{2})$. Denote $\mathbb{Z}^*:=\mathbb{Z}\backslash\{0\}$. Here $0$ is excluded from the index set in order to simplify the notation in the sequel.
An array of identical obstacles $\{D_n\}_{n\in\mathbb{Z}^*}$ are arranged periodically in the center of the waveguide along the $x_1$-direction with period $1/2$. Here 
$D_n = z_n + D$ with $z_n=(\frac{2|n|-1}{4}\text{sgn}(n),\frac{1}{4})$ for $n\in \mathbb{Z}^*$ and $D$ is an open set centered at the origin. The domain outside the obstacles is denoted by $\Omega:=\mathbf{R}\times (0,\frac{1}{2})\backslash \cup_{n\in \mathbb{Z}^*}\overline{D_n}$. We consider a time-harmonic scalar wave that propagates in the waveguide at frequency $\omega$. The wave field $v$ satisfies the following Helmholtz equation 
\begin{equation}\label{eq:Helmholtz}
    \Delta v(x) + \omega^2 v(x)=0 \quad \mbox{for} \; x \in \Omega.
\end{equation}
We impose the following Neumann boundary condition on the waveguide walls
\begin{equation}\label{eq:bnd_waveguide_wall}
\frac{\partial v(x)}{\partial x_2}=0 \quad \mbox{for} \; x\in \Gamma_{-}\bigcup\Gamma_{+},  
\end{equation}
and the Dirichlet boundary condition on the boundaries of the obstacles 
\begin{equation}
    v(x)=0 \quad \mbox{for} \; x \in \cup_{n\in \mathbb{Z}^*} \partial D_n.
\end{equation}
We point out that the above assumptions on the boundary conditions are not essential. The method developed in the paper applies to other boundary conditions; for instance, the Neumann boundary condition can be imposed over the obstacle boundaries.

\begin{figure}
    \centering
    \begin{tikzpicture}

\draw (-0.2,1)--(8.2,1);
\draw (-0.2,-1)--(8.2,-1);
\node at (4,1.3) {$\Gamma_{+}$};
\node at (4,-1.3) {$\Gamma_{-}$};

\node[right] at (8.4,1) {$\cdots$};
\node[right] at (8.4,-1) {$\cdots$};
\node[left] at (-0.4,1) {$\cdots$};
\node[left] at (-0.4,-1) {$\cdots$};
\draw[dashed] (0,-1)--(0,1);
\draw[dashed] (2,-1)--(2,1);
\draw[dashed] (4,-1)--(4,1);
\draw[dashed] (6,-1)--(6,1);
\draw[dashed] (8,-1)--(8,1);

\draw (1,0) ellipse(0.5 and 0.3);
\draw (3,0) ellipse(0.5 and 0.3);
\draw (5,0) ellipse(0.5 and 0.3);
\draw (7,0) ellipse(0.5 and 0.3);

\node[font=\fontsize{10}{10}\selectfont] at (1,0.5) {$D_{1}$};
\node[font=\fontsize{10}{10}\selectfont] at (3,0.5) {$D_{2}$};
\node[font=\fontsize{10}{10}\selectfont] at (5,0.5) {$D_{3}$};
\node[font=\fontsize{10}{10}\selectfont] at (7,0.5) {$D_{4}$};

\draw[dashed] (1,0)--(1,-1);
\draw[dashed] (3,0)--(3,-1);
\draw[decorate,decoration={brace,mirror}] (1,-1.1) -- (3,-1.1);
\node[below,font=\fontsize{10}{10}\selectfont] at (2,-1.1) {$\frac{1}{2}$};
\end{tikzpicture}
    \caption{The waveguide with periodically arranged obstacles.}
    \label{unperturbed waveguide}
\end{figure}
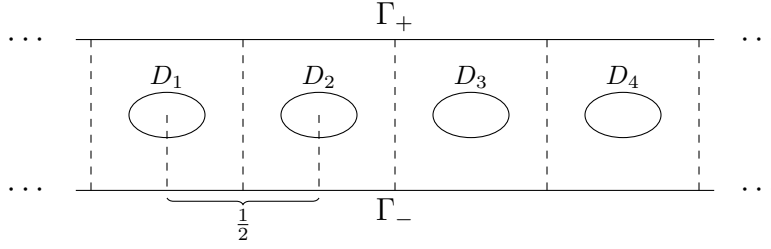

We shall make the following assumptions on the spectrum of the operator $-\Delta$ for the above period-$1/2$ structure $\Omega$. The reader is referred to Section 2.1 for terminologies arising from the Floquet-Bloch theory.
 
\begin{assumption} \label{assump0}
\begin{enumerate}
\item [(1)]
The first spectral band $(p,\mu_1(p))$ of the period-$1/2$ structure $\Omega$ is smooth for $p\in [-2\pi, 2\pi]$, and can be extended analytically in a complex neighborhood of $\mathbf{R}$. Moreover, the eigenspace corresponding to the first spectral band is one dimensional and is given by $\text{span}\{v_1(\cdot;p)\}$. The analytic continuation of $v_1(\cdot;p)$ from $\mathbf{R}$ to its complex neighborhood also holds.

\item [(2)]
The slope of $\mu_1(\cdot)$ at $p=\pi$ satisfies 
$\mu_1'(\pi)=\alpha_* >0$.    

\item [(3)]
$\lambda_*:= \mu_1(\pi)$ is not in the point spectrum of the period-$1/2$ structure.

\item [(4)]
$\mu_1(p)\neq\lambda_*$ for $p\neq\pi$, and consequently $\lambda_* =\max_{0\leq p \leq \pi}\mu_1(p)=\min_{\pi\leq p \leq 2\pi} \mu_1(p)$. In addition, the maximum of the first spectral band is smaller than or equal to the minimum of the second spectral band. 

\end{enumerate}
\end{assumption}
    
\begin{remark}
All the statements in Assumption \ref{assump0} can be proved rigorously in the case when the size of the particle $D$ is sufficiently small by using asymptotic analysis.
\end{remark}
\begin{remark}
It is generally believed that the point spectrum is empty for a periodic structure shown in Figure 1. However, as pointed out in \cite{kuchment2016overview}, its rigorous proof is an open question.
\end{remark}
\begin{remark}
The statement (4) in Assumption \ref{assump0} is similar to the spectral no-fold assumption in \cite{fefferman2016edge} (cf. Definition 7.1). We conjecture that it is not essential for the main result Theorem \ref{main result} to hold. This issue shall be investigated in a separate paper.
\end{remark}
In the sequel, we shall treat the above period-1/2 structure $\Omega$ as a period-1 structure whose 
primitive cell is given by $Y=(0,1)\times(0,\frac{1}{2})$ and the associated Brillouin zone is $[0, 2\pi]$. For the period-1 structure, each period consists of two obstacles.
We denote by $(\lambda_n(p), u_n(\cdot; p))$ ($n\geq 1$) the eigenpairs for the $n$-th spectral band. 
Note that the spectral band structure of the period-1 structure can be obtained by `folding' the bands of the period-1/2 structure. This allows for the creation of a Dirac point $(p_*=\pi, \lambda_*)$ in the spectrum of the period-1 structure in the Brillouin zone as stated in the following proposition.

\begin{proposition}
\label{Existence of Dirac points of the period-1 structure}
A linear band crossing occurs at $\left(p_{*},\lambda_{*}\right)$ between the first and second dispersion curves of the period-1 structure $\Omega$ with the following properties:
\begin{enumerate}
    \item [(1)] $\lambda_*=\lambda_1 (p_{*})=\lambda_{2} (p_{*})$, and $\lambda_*$ is an eigenvalue of multiplicity 2.
    
    \item [(2)]
Denote by $\mu_2(p):= \mu_1(2\pi -p)$ and $v_2(\cdot; p):= \overline{v_1(\cdot; 2\pi -p)}$, then the first and second spectral bands and the corresponding eigenfunctions can be chosen as below: 
\begin{equation*}
    \lambda_1(p)=
    \left\{
    \begin{aligned}
    &\mu_1(p),\quad p\in[0,\pi), \\
    &\mu_2(p),\quad p\in[\pi,2\pi],
    \end{aligned}
    \right.
    \quad
    u_1(\cdot,p)=
    \left\{
    \begin{aligned}
    &v_1(\cdot;p),\quad p\in[0,\pi), \\
    &v_2(\cdot;p),\quad p\in[\pi,2\pi],
    \end{aligned}
    \right.
\end{equation*}
and
\begin{equation*}
    \lambda_2(p)=
    \left\{
    \begin{aligned}
    &\mu_2(p),\quad p\in[0,\pi), \\
    &\mu_1(p),\quad p\in[\pi,2\pi],
    \end{aligned}
    \right.
    \quad
    u_2(\cdot,p)=
    \left\{
    \begin{aligned}
    &v_2(\cdot;p),\quad p\in[0,\pi), \\
    &v_1(\cdot;p),\quad p\in[\pi,2\pi].
    \end{aligned}
    \right.
\end{equation*}
\end{enumerate}
\end{proposition}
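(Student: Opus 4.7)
The plan is to derive the period-$1$ band structure by Brillouin-zone folding of the period-$1/2$ structure, identify the crossing at $q=\pi$, and verify its linear character.

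First, I would fix the folding correspondence. A period-$1/2$ Bloch mode $v_1(\cdot;p)$ satisfying $v_1(x+(\tfrac{1}{2},0);p)=e^{ip/2}v_1(x;p)$ automatically obeys $v_1(x+(1,0);p)=e^{ip}v_1(x;p)$, hence is a period-$1$ Bloch mode at quasi-momentum $q=p\bmod 2\pi$. Since the period-$1/2$ Brillouin zone $[-2\pi,2\pi]$ covers the period-$1$ Brillouin zone $[0,2\pi]$ twice, each $q\in[0,2\pi]$ lifts to two period-$1/2$ quasi-momenta modulo $4\pi\mathbb{Z}$, namely $q$ and $q-2\pi$. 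Consequently the first two period-$1$ bands come from the folded branches $(\mu_1(q),v_1(\cdot;q))$ and $(\mu_1(q-2\pi),v_1(\cdot;q-2\pi))$.

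Next, I would invoke time-reversal symmetry to identify the second branch with the $\mu_2,v_2$ defined in the proposition. Since $-\Delta$ has real coefficients, $\overline{v_1(\cdot;p)}$ satisfies the same eigenvalue equation with Bloch phase $e^{-ip/2}$ and eigenvalue $\mu_1(p)$. Assumption \ref{assump0}(1) forces $\overline{v_1(\cdot;p)}$ to be a scalar multiple of $v_1(\cdot;-p)$, whence $\mu_1(-p)=\mu_1(p)$. Combined with the $4\pi$-periodicity of $\mu_1$, this yields $\mu_1(q-2\pi)=\mu_1(2\pi-q)=\mu_2(q)$, and $v_1(\cdot;q-2\pi)$ coincides with $v_2(\cdot;q)=\overline{v_1(\cdot;2\pi-q)}$ up to a phase. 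The folded second branch can therefore be chosen to be $v_2(\cdot;q)$.

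Third, I would order the branches using Assumption \ref{assump0}(4). On $[0,\pi)$ we have $\mu_1(q)<\lambda_*<\mu_1(2\pi-q)=\mu_2(q)$, with the inequalities reversed on $(\pi,2\pi]$ and equality only at $q=\pi$. Labeling the smaller curve as $\lambda_1$ and the larger as $\lambda_2$ reproduces the piecewise formulas in the statement. At $q=\pi$ the two analytic branches $\mu_1$ and $\mu_2$ meet at $\lambda_*$ with slopes $\mu_1'(\pi)=\alpha_*$ and $\mu_2'(\pi)=-\alpha_*$ by Assumption \ref{assump0}(2), giving a linear (Dirac-type) crossing.

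Finally, for the multiplicity-$2$ claim at $(p_*,\lambda_*)$, I would verify that $v_1(\cdot;\pi)$ and $v_2(\cdot;\pi)$ are linearly independent. They are period-$1/2$ Bloch modes with opposite phases $e^{\pm i\pi/2}=\pm i$ under the half-cell shift $x\mapsto x+(\tfrac{1}{2},0)$, so any scalar relation $v_2(\cdot;\pi)=c\,v_1(\cdot;\pi)$ would force $-i=i$, a contradiction. Assumption \ref{assump0}(4) separates the first period-$1/2$ band from all higher ones at $\lambda_*$, so after folding no additional period-$1$ eigenfunctions appear at $(p_*,\lambda_*)$ and the eigenspace is exactly two-dimensional. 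I expect the main hurdle to be tracking Bloch phases consistently through the folding and applying time-reversal symmetry without sign ambiguities; once these identifications are in place, the linear crossing and the formulas for $\lambda_i$ and $u_i$ follow by direct inspection.
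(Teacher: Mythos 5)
Your proposal is correct and spells out precisely the Brillouin-zone folding argument that the paper itself invokes immediately before the proposition (``the spectral band structure of the period-1 structure can be obtained by `folding' the bands of the period-1/2 structure'') but does not write out in detail. The three ingredients you use — the half-period Bloch phase to lift period-$1/2$ modes to period-$1$ modes at the two preimages $q$ and $q-2\pi$, time-reversal to identify the second fold with $(\mu_2,v_2)$, and Assumption \ref{assump0}(2),(4) together with the opposite half-shift eigenvalues $\pm i$ at $p=\pi$ to get the linear crossing and the exact two-dimensionality — are exactly what is needed, and the argument is sound.
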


\begin{remark} \label{remark on the q-p boundary condition}
The function $v_1(\cdot;p)$ above stands for the Bloch eigenmode when the primitive cell is $(0,\frac{1}{2})\times (0,\frac{1}{2})$. Specifically, $v_1(\cdot;p)$ satisfies the quasi-periodic boundary condition $v_1(x_1+\frac{1}{2}, x_2;p)=e^{i\frac{p}{2}}v_1(x_1,x_2;p)$.
\end{remark}

We assume that the periodic structure $\Omega$ attains reflection symmetry, which is a consequence of the following assumption. 
\begin{assumption} \label{hypo_reflection symmetry}
The obstacle $D$ attains the reflection symmetry such that its boundary $\partial D$ is parameterized by $\left\{(\theta,r(\theta)):0\leq\theta\leq 2\pi \right\}$ in the polar coordinate, in which $r(\theta)=r(\pi-\theta)$ for $0\leq \theta \leq \pi$.
\end{assumption}

We have the following corollary for the eigenmodes at the Dirac point $(p_*,\lambda_*)$.
\begin{corollary} \label{even odd mode and root function}
Under Assumptions \ref{assump0} and \ref{hypo_reflection symmetry}, the eigenmodes at the Dirac point $(p_*,\lambda_*)$ can be chosen to be odd and even functions with respect to $x_1$. More precisely, $\text{span}\{u_1(\cdot;\pi),u_2(\cdot;\pi)\}=\text{span}\{\phi_1,\phi_2\}$ with
\begin{equation*}
    \phi_1(-x_1,x_2)=-\phi_1(x_1,x_2), \quad
    \phi_2(-x_1,x_2)=\phi_2(x_1,x_2).
\end{equation*}
Let
\begin{eqnarray} \label{eq24}
\bm{\varphi}_1(x)&=\left(\partial_n\phi_1|_{\partial D_1}(x+z_1),\partial_n\phi_1|_{\partial D_2}(x+z_2)\right)^T\in (H^{-\frac{1}{2}}(\partial D))^2, \label{eq24-1}\\
\bm{\varphi}_2(x)&=\left(\partial_n\phi_2|_{\partial D_1}(x+z_1),\partial_n\phi_2|_{\partial D_2}(x+z_2)\right)^T\in (H^{-\frac{1}{2}}(\partial D))^2, \label{eq24-2}
\end{eqnarray}
then 
$$
\bm{\varphi}_1=(\varphi_{ref},\varphi),\quad
\bm{\varphi}_2=(\varphi,-\varphi_{ref}),
$$
for some real-valued function $\varphi\in H^{-\frac{1}{2}}(\partial D)$ and 
$
\varphi_{ref}(x_1, x_2):=\varphi(-x_1, x_2).
$
\end{corollary}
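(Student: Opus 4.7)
The plan is to exploit the commuting group of symmetries $\{R, C, T\}$ acting on the two-dimensional Dirac eigenspace $V := \mathrm{span}\{u_1(\cdot;\pi), u_2(\cdot;\pi)\}$, where $R$ is the reflection $(x_1,x_2)\mapsto(-x_1,x_2)$, $C$ is complex conjugation, and $T$ is translation by $(1/2, 0)$. Each of $R$ and $C$ commutes with $-\Delta$ and sends the period-$1$ quasi-momentum $p$ to $-p$, which is congruent to $\pi$ modulo $2\pi$ at $p=\pi$, so both operators preserve $V$. To see that $R$ is non-scalar on $V$, I would observe that $R v_1(\cdot;\pi)$ has period-$1/2$ quasi-momentum $-\pi$, hence by the one-dimensionality of the first period-$1/2$ band in Assumption~\ref{assump0} it equals a unimodular multiple of $v_2(\cdot;\pi) = \overline{v_1(\cdot;\pi)}$, which is linearly independent from $v_1(\cdot;\pi)$. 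Therefore $R$ splits $V$ into two one-dimensional eigenlines $V_{\pm}$ for the eigenvalues $\pm 1$. Since $C$ commutes with $R$ and acts on each $V_{\pm}$ as an antilinear involution, a standard phase adjustment produces real generators, which I label $\phi_1 \in V_-$ (odd in $x_1$) and $\phi_2 \in V_+$ (even in $x_1$). This establishes the first half of the corollary.

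To derive the form of the normal-derivative vectors, I use that on $V$ the translation $T$ satisfies $T^2 = -\mathrm{Id}$ (because the period-$1$ quasi-momentum equals $\pi$) and $TR = -RT$ (by direct computation on $V$). Thus $T$ interchanges $V_+$ and $V_-$, and after renormalizing the generator of $V_-$ by a scalar I may arrange
\begin{equation*}
\phi_1(x_1 + 1/2, x_2) = \phi_2(x_1, x_2), \qquad \phi_2(x_1 + 1/2, x_2) = -\phi_1(x_1, x_2).
\end{equation*}
Because $T$ carries $\partial D_1$ bijectively onto $\partial D_2$ and preserves the outward normal, these two identities give $\partial_n \phi_1|_{\partial D_2}(\cdot + z_2) = \partial_n \phi_2|_{\partial D_1}(\cdot + z_1)$ and $\partial_n \phi_2|_{\partial D_2}(\cdot + z_2) = -\partial_n \phi_1|_{\partial D_1}(\cdot + z_1)$. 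Setting $\varphi_{\mathrm{ref}}(x) := \partial_n \phi_1|_{\partial D_1}(x+z_1)$ and $\varphi(x) := \partial_n \phi_2|_{\partial D_1}(x+z_1)$, both real-valued, I obtain $\bm{\varphi}_1 = (\varphi_{\mathrm{ref}}, \varphi)^T$ and $\bm{\varphi}_2 = (\varphi, -\varphi_{\mathrm{ref}})^T$.

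The last identity to establish is $\varphi_{\mathrm{ref}}(x_1, x_2) = \varphi(-x_1, x_2)$. Combining $\phi_1(y) = \phi_2(y - (1/2, 0))$ with the evenness $\phi_2(-y_1, y_2) = \phi_2(y_1, y_2)$ yields the pointwise identity $\phi_1(x + z_1) = \phi_2((-x_1, x_2) + z_1)$ as smooth functions of $x$ in a neighborhood of $\partial D$. Differentiating in the outward-normal direction at the boundary point $x + z_1 \in \partial D_1$ then requires the observation that, because $D$ itself is reflection-symmetric by Assumption~\ref{hypo_reflection symmetry}, the outward unit normal of $\partial D_1$ at the reflected point $(-x_1, x_2) + z_1$ is the $x_1$-reflection of the outward normal at $x + z_1$; the sign introduced by the chain rule when differentiating $\phi_2(-\,\cdot\,,\,\cdot\,)$ in $x_1$ cancels exactly against the sign coming from the reflected normal component, producing $\varphi_{\mathrm{ref}}(x) = \varphi(-x_1, x_2)$. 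The main bookkeeping obstacle is precisely this final sign-matching step; everything else is straightforward representation-theoretic manipulation of a two-dimensional eigenspace under the three commuting or anticommuting symmetries.
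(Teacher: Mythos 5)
Your proposal is correct and follows the same underlying structure as the paper's argument, but it is organized differently. The paper passes to the real and imaginary parts of $v_1(\cdot;\pi)$, shows they are linearly independent, writes the reflection $R$ as an explicit real $2\times2$ involution on that basis, and diagonalizes it; the translation-by-$\tfrac12$ map (equivalently, reflection about $x_1=\tfrac14$) is then invoked rather tersely to produce the boundary potentials. You instead work at the level of the commuting/anticommuting symmetry operators $R$, $C$, $T$ acting on the two-dimensional Dirac eigenspace: non-scalarity of $R$ splits $V$ into $V_\pm$, reality of $R$ and $C$-invariance yield real generators, and the relations $T^2=-\mathrm{Id}$, $TR=-RT$ make $T$ an isomorphism $V_-\to V_+$, which you exploit to set $\phi_2=T\phi_1$. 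What this buys you is a cleaner and genuinely more complete derivation of the potential identities: the statements $\bm{\varphi}_1=(\varphi_{\mathrm{ref}},\varphi)$, $\bm{\varphi}_2=(\varphi,-\varphi_{\mathrm{ref}})$ and $\varphi_{\mathrm{ref}}(x_1,x_2)=\varphi(-x_1,x_2)$ fall out from (i) $T$ translating $\partial D_1$ isometrically onto $\partial D_2$ and (ii) the sign cancellation between the chain rule and the reflected normal, which uses Assumption~\ref{hypo_reflection symmetry} on $D$. The paper asserts these relations with only a one-sentence justification, so your argument fills in exactly the step the paper leaves implicit. One small point worth making explicit in a write-up: having chosen a real $\phi_1\in V_-$, the companion $\phi_2:=T\phi_1$ is automatically real (because $T$ is a real operator and commutes with $C$), so the simultaneous compatibility of the $R$-, $C$-, and $T$-normalizations is not an issue, as you correctly anticipate.
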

\begin{proof}
    See Appendix A.
\end{proof}

For ease of notation, we denote $v_i=v_i(\cdot,\pi)$ for $i=1,2$. 

\begin{remark} \label{link between two types of modes}
The eigenmodes $\phi_1, \phi_2$ introduced in the above lemma are linear combinations of $v_1$ and $v_2$. We note that $\phi_1|_{\Gamma}$ vanishes due to the odd parity. Thus both $v_1|_{\Gamma}$ and $v_2|_{\Gamma}$ are proportional to $\phi_2|_{\Gamma}$. This relationship will be used later.  
\end{remark}

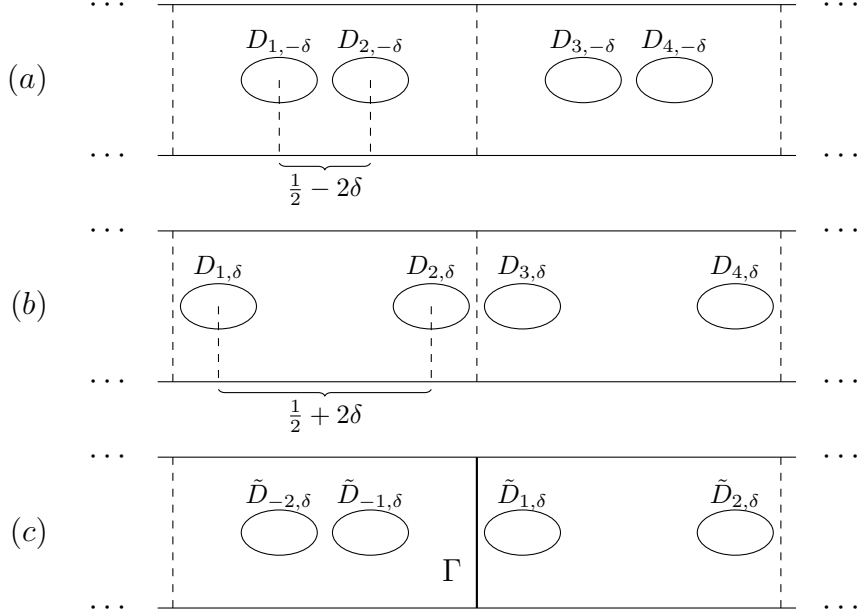
\begin{figure}
    \centering
    \begin{tikzpicture}
\node[left] at (-1.5,0) {$(a)$};
\draw (-0.2,1)--(8.2,1);
\draw (-0.2,-1)--(8.2,-1);
\node[right] at (8.4,1) {$\cdots$};
\node[right] at (8.4,-1) {$\cdots$};
\node[left] at (-0.4,1) {$\cdots$};
\node[left] at (-0.4,-1) {$\cdots$};
\draw[dashed] (0,-1)--(0,1);
\draw[dashed] (4,-1)--(4,1);
\draw[dashed] (8,-1)--(8,1);

\draw (1.4,0) ellipse(0.5 and 0.3);
\draw (2.6,0) ellipse(0.5 and 0.3);
\draw (5.4,0) ellipse(0.5 and 0.3);
\draw (6.6,0) ellipse(0.5 and 0.3);

\node[font=\fontsize{10}{10}\selectfont] at (1.4,0.5) {$D_{1,-\delta}$};
\node[font=\fontsize{10}{10}\selectfont] at (2.6,0.5) {$D_{2,-\delta}$};
\node[font=\fontsize{10}{10}\selectfont] at (5.4,0.5) {$D_{3,-\delta}$};
\node[font=\fontsize{10}{10}\selectfont] at (6.6,0.5) {$D_{4,-\delta}$};

\draw[dashed] (1.4,0)--(1.4,-1);
\draw[dashed] (2.6,0)--(2.6,-1);
\draw[decorate,decoration={brace,mirror}] (1.4,-1.1) -- (2.6,-1.1);
\node[below,font=\fontsize{10}{10}\selectfont] at (2,-1.1) {$\frac{1}{2}-2\delta$};

\node[left] at (-1.5,-3) {$(b)$};
\draw (-0.2,-2)--(8.2,-2);
\draw (-0.2,-4)--(8.2,-4);
\node[right] at (8.4,-2) {$\cdots$};
\node[right] at (8.4,-4) {$\cdots$};
\node[left] at (-0.4,-2) {$\cdots$};
\node[left] at (-0.4,-4) {$\cdots$};
\draw[dashed] (0,-4)--(0,-2);
\draw[dashed] (4,-4)--(4,-2);
\draw[dashed] (8,-4)--(8,-2);

\draw (0.6,-3) ellipse(0.5 and 0.3);
\draw (3.4,-3) ellipse(0.5 and 0.3);
\draw (4.6,-3) ellipse(0.5 and 0.3);
\draw (7.4,-3) ellipse(0.5 and 0.3);

\node[font=\fontsize{10}{10}\selectfont] at (0.6,-2.5) {$D_{1,\delta}$};
\node[font=\fontsize{10}{10}\selectfont] at (3.4,-2.5) {$D_{2,\delta}$};
\node[font=\fontsize{10}{10}\selectfont] at (4.6,-2.5) {$D_{3,\delta}$};
\node[font=\fontsize{10}{10}\selectfont] at (7.4,-2.5) {$D_{4,\delta}$};

\draw[dashed] (0.6,-3)--(0.6,-4);
\draw[dashed] (3.4,-3)--(3.4,-4);
\draw[decorate,decoration={brace,mirror}] (0.6,-4.1) -- (3.4,-4.1);
\node[below,font=\fontsize{10}{10}\selectfont] at (2,-4.1) {$\frac{1}{2}+2\delta$};

\node[left] at (-1.5,-6) {$(c)$};
\draw (-0.2,-5)--(8.2,-5);
\draw (-0.2,-7)--(8.2,-7);
\node[right] at (8.4,-5) {$\cdots$};
\node[right] at (8.4,-7) {$\cdots$};
\node[left] at (-0.4,-5) {$\cdots$};
\node[left] at (-0.4,-7) {$\cdots$};
\draw[dashed] (0,-7)--(0,-5);
\draw[thick] (4,-7)--(4,-5);
\node[right] at (3.4,-6.5) {$\Gamma$};
\draw[dashed] (8,-7)--(8,-5);

\draw (1.4,-6) ellipse(0.5 and 0.3);
\draw (2.6,-6) ellipse(0.5 and 0.3);
\draw (4.6,-6) ellipse(0.5 and 0.3);
\draw (7.4,-6) ellipse(0.5 and 0.3);

\node[font=\fontsize{10}{10}\selectfont] at (1.4,-5.5) {$\tilde{D}_{-2,\delta}$};
\node[font=\fontsize{10}{10}\selectfont] at (2.6,-5.5) {$\tilde{D}_{-1,\delta}$};
\node[font=\fontsize{10}{10}\selectfont] at (4.6,-5.5) {$\tilde{D}_{1,\delta}$};
\node[font=\fontsize{10}{10}\selectfont] at (7.4,-5.5) {$\tilde{D}_{2,\delta}$};

\end{tikzpicture}
\caption{Waveguide with perturbations: (a) The distance between each pair of two obstacles in one periodic cell is $\frac{1}{2}-2\delta$; (b) The distance between each pair of two obstacles in one periodic cell is $\frac{1}{2}+2\delta$; (c) The waveguide obtained by gluing the structures in (a) and (b) along the interface $\Gamma$.}
    \label{perturbed waveguide}
\end{figure}

For technical reasons, we make the following assumption which guarantees the analyticity of the Green's function for the empty waveguide, and will be used in the proof of Proposition 3.5. 

\begin{assumption} \label{hypo_singular frequency}
The Dirac energy level $\lambda_*$ is away from the singular frequencies such that $\lambda_*\neq (2m+1)\pi$ for $m\in\mathbb{N}$.
\end{assumption}

We now introduce perturbations to the periodic structure along the negative and positive parts of the waveguide axis. Let $0<\delta\ll 1$. The obstacle $D_n$ is translated to $\tilde{D}_{n,\delta}$ ($n\in\mathbb{Z}^*$) with the mass center given by
$$
\Tilde{z}_{n,\delta}=\left\{
\begin{aligned}
    &z_n+\delta\bm{e}_1,\quad \text{n is even},\\
    &z_n-\delta\bm{e}_1,\quad \text{n is odd}.
\end{aligned}
\right.
$$
The corresponding structure is depicted in Figure 2(c), which can be viewed as a joint structure that glues the two semi-infinite periodic structures in Figure 2(a) and 2(b) along the interface $\Gamma$. Define $\tilde{\Omega}_{\delta}:=\mathbf{R}\times(0,\frac{1}{2})\backslash \bigcup_n \overline{\tilde{D}_{n,\delta}}$.
An interface mode $u$ for the waveguide in Figure 2(c) is a finite-energy solution to the following spectral problem:
\begin{equation} \label{eq7}
    \left\{
    \begin{aligned}
        &(\Delta_x+\lambda)u(x;\lambda)=0,\quad x \in \tilde{\Omega}_{\delta},\\
        &u(x;\lambda)=0,\quad x \in\cup_{n\in \mathbb{Z}^*} \partial \tilde{D}_{n,\delta}, \\
        &\frac{\partial}{\partial x_2} u(x;\lambda)=0,\quad x\in \Gamma_{-}\bigcup\Gamma_{+}.
    \end{aligned}
    \right.
\end{equation}
We define the Sobolev space
\begin{equation*} \label{eq8}
    H_{b}^1(\tilde{\Omega}_{\delta};\Delta):=\left\{u\in H^1(\tilde{\Omega}_{\delta}):\Delta u\in L^2(\tilde{\Omega}_{\delta}),  \frac{\partial u}{\partial x_2}|_{\Gamma_{-}\bigcup\Gamma_{+}}=0,u|_{\partial \tilde{D}_{n,\delta}}=0, n\in \mathbb{Z}^*\right\}.
\end{equation*}
Then an interface mode corresponds to an eigenfunction of the operator
$$
\tilde{\mathcal{L}}_{\delta}:H_{b}^1(\tilde{\Omega}_{\delta};\Delta)\subset L^2(\tilde{\Omega}_{\delta})\to L^2(\tilde{\Omega}_{\delta}),\quad \phi\mapsto -\Delta\phi .
$$
Now we can state the main result of the paper as follows:
\begin{theorem} \label{main result}
Under the Assumptions \ref{assump0}, \ref{hypo_reflection symmetry} and \ref{hypo_singular frequency}, if $t_*$ defined in \eqref{eq39} is nonzero, then there exists $\delta_0>0$ such that for $|\delta|<\delta_0$, the spectral problem \eqref{eq7} attains an eigenpair $(u^\star,\lambda^\star) \in L^2(\tilde{\Omega}_\delta)\times\mathbf{R}$ for $\lambda^\star$ near $\lambda_*$. Moreover,
$$
\sigma_p(\tilde{\mathcal{L}}_{\delta})\bigcap\left(\lambda_*-\big|\frac{t_*}{\gamma_*}\big|\delta,\lambda_*+\big|\frac{t_*}{\gamma_*}\big|\delta\right)=\{\lambda^\star\},
$$
where $\gamma_*$ and $t_*$ are defined in \eqref{eq38}. In addition, $u^\star$ decays exponentially away from the interface $\Gamma$ as $|x_1|\to\infty$.
\end{theorem}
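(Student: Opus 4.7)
My plan is to reformulate the eigenvalue problem \eqref{eq7} as a boundary integral equation on the interface $\Gamma$ and then locate the characteristic values of the resulting operator pencil near $\lambda_*$ via the generalized Rouch\'e (Gohberg--Sigal) theorem, using the Dirac-point structure from Corollary \ref{even odd mode and root function} to extract the leading asymptotics.

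First, for each of the two semi-infinite perturbed periodic structures (the portions of Figures 2(a) and 2(b) occupying $x_1<0$ and $x_1>0$, with a homogeneous Dirichlet screen on $\Gamma$) I would construct a Green's function $G_\pm(x,y;\lambda,\delta)$ satisfying the Helmholtz equation together with the Neumann wall and Dirichlet obstacle conditions. The preceding sections establish that the two semi-infinite structures share a common band gap of width $O(\delta)$ centered at $\lambda_*$, so for $\lambda$ in this gap both $G_\pm$ exist uniquely and decay exponentially in $|x_1-y_1|$ by a Combes--Thomas estimate. Representing an interface mode $u$ by a single-layer potential over $\Gamma$ on each side with a density $\psi\in H^{-1/2}(\Gamma)$ and imposing the jump relations yields a boundary integral equation
\[
\mathcal{A}(\lambda,\delta)\,\psi \;=\; 0,
\]
where $\mathcal{A}(\cdot,\delta)$ is an operator pencil on $\Gamma$ that is analytic in $\lambda$; an interface mode exists iff $\lambda$ is a characteristic value of $\mathcal{A}(\cdot,\delta)$.

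Second, I would compute the leading asymptotics of $\mathcal{A}(\lambda,\delta)$ as $(\lambda,\delta)\to(\lambda_*,0)$. Using the Floquet--Bloch representation of $G_\pm$ and the band-edge asymptotics at $p=\pi$ (governed by the Dirac slope $\alpha_*=\mu_1'(\pi)$ from Assumption \ref{assump0}), the singular part of $\mathcal{A}$ concentrates on the two-dimensional resonant subspace spanned by the Dirac eigenmodes $\phi_1,\phi_2$. Since $\phi_1|_\Gamma\equiv 0$ by odd parity (Remark \ref{link between two types of modes}), only the even mode $\phi_2$ actually couples to $\Gamma$, collapsing the singular principal part of $\mathcal{A}$ to rank one. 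A Schur complement reduction on the matrix form of $\mathcal{A}$ then produces a scalar characteristic equation of the form
\[
\gamma_*\,(\lambda-\lambda_*) \;=\; \pm\, t_*\,\delta \;+\; O(\delta^2),
\]
in which $\gamma_*$ is a nonzero normalization constant coming from the band-edge behavior, and $t_*$, defined in \eqref{eq39}, encodes the coupling between the two half-structures induced by the $\delta$-shift of the obstacles across $\Gamma$.

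Finally, assuming $t_*\neq 0$, the generalized Rouch\'e theorem applied to $\mathcal{A}(\cdot,\delta)$ on a complex disk of radius $|t_*/\gamma_*|\,\delta$ about $\lambda_*$ produces exactly one characteristic value $\lambda^\star(\delta)=\lambda_*+O(\delta)$, which must be real by self-adjointness of $\tilde{\mathcal{L}}_\delta$. The associated density $\psi^\star\in\ker\mathcal{A}(\lambda^\star,\delta)$ defines an interface mode $u^\star$ through the layer-potential ansatz, and the exponential decay of $u^\star$ in $|x_1|$ is inherited from the exponential decay of $G_\pm(\cdot,y;\lambda^\star,\delta)$ in the gap. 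The main obstacle will be the second step: extracting the singular part of $\mathcal{A}$ demands control of the Floquet--Bloch integral of $G_\pm$ uniformly in both $\lambda$ and $\delta$, identification of the rank of the principal resonant term after the odd-parity cancellation of $\phi_1$ on $\Gamma$, and inversion of the regular part on the complement of $\mathrm{span}\{\phi_2|_\Gamma\}$ --- precisely the steps needed to show that the reduced scalar equation is genuinely first order in $\delta$, which is the quantitative content of the hypothesis $t_*\neq 0$.
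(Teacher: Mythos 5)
Your outline follows the paper's route at a high level: reduce to a boundary integral equation on $\Gamma$, exploit $\phi_1|_\Gamma=0$ to collapse the resonant contribution to the rank-one projection onto $\phi_2|_\Gamma$, and apply the generalized Rouch\'e theorem. But the scalar characteristic equation you predict, $\gamma_*(\lambda-\lambda_*)=\pm t_*\delta + O(\delta^2)$, is not what comes out, and if it were the theorem would fail: its two roots $\lambda_*\pm(t_*/\gamma_*)\delta+O(\delta^2)$ land at the \emph{edges} of the common gap $I_\delta$, not uniquely in its interior. You appear to have carried over the $2\times 2$ Dirac determinant $\gamma_*^2(\lambda^{(1)})^2 - t_*^2\delta^2 - \theta_*^2(p^{(1)})^2=0$ from the bulk band-gap-opening computation of Theorem \ref{dispersion relation near the dirac point} (which at $p=p_*$ indeed gives $\lambda=\lambda_*\pm|t_*/\gamma_*|\delta$). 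The interface problem has a different reduced equation: after rescaling $\lambda=\lambda_*+\delta h$, the paper shows $\tilde{\mathbb{G}}_\delta(\lambda_*+\delta h)\to 2\mathbb{T}_0+\beta(h)\mathbb{P}$ uniformly on the rescaled gap $\tilde J$, with $\beta(h)=-\tfrac{1}{\beta_*\alpha_*}\,h/\sqrt{1-(h/\beta_*)^2}$, and the reduced condition is $\beta(h)=0$, which has the unique simple root $h=0$. Thus $\lambda^\star-\lambda_*=o(\delta)$ lies near the \emph{center} of the gap, not its boundary, and this is what makes the characteristic value unique in all of $I_\delta$.

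The second gap is that ``the singular principal part collapses to rank one'' does not by itself yield a unique characteristic value: the non-resonant block $\mathbb{T}_0$ is \emph{not} invertible. Proposition \ref{lem-t0} shows $\ker\mathbb{T}_0=\mathrm{span}\{\partial_{x_1}v_1|_\Gamma\}$, and the uniqueness argument in Proposition \ref{property of the limiting operator} relies on the fact that this kernel element pairs nondegenerately with $\phi_2|_\Gamma\propto v_1|_\Gamma$, namely $\langle\partial_{x_1}\overline{v_1}|_\Gamma,\,v_1\rangle=\tfrac{i}{2}\alpha_*\neq 0$ (Lemma \ref{orthogonality of the propagating mode}). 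Without isolating this kernel and verifying that pairing, your Schur-complement reduction cannot be carried out; it is the quantitative content needed to equate ``characteristic value'' with ``$\beta(h)=0$.'' A further, smaller issue: the paper represents $u^\pm$ via single-layer potentials built from the Green's functions $G_{\pm\delta}$ of the \emph{infinite} perturbed periodic structures, which by reflection symmetry coincide (up to a factor of two) with the half-structure \emph{Neumann} Green's functions on $\Gamma$; a single-layer potential with a Dirichlet-screen kernel would vanish identically on $\Gamma$ and so cannot serve as the representation you propose.
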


\begin{remark}
The assumption $t_* \neq 0$ guarantees that the perturbed structures in Figure 2(a) and 2(b) has a common band gap near the Dirac point $(p_*, \lambda_*)$ of the unperturbed structure; See Corollary \ref{common band gap}. This assumption can be verified when the size of the obstacles is sufficiently small by using asymptotic analysis. 
\end{remark}

\subsection{Outline}
\quad\enspace The rest of the paper is organized as follows. In Section 2, we briefly review the Floquet-Bloch theory for periodic differential operators and 
introduce the Green's functions for periodic waveguide structures.
We also recall the Gohberg-Sigal theory which shall be used in the investigation of the spectral problem \eqref{eq7}.  In Section 3, we present the asymptotic expansions of Bloch eigenvalues and eigenfunctions near the Dirac point $(p_*, \lambda_*)$ for the periodic structures in Figure 2(a)(b). In particular, we prove that a band gap is opened near the Dirac point after the perturbation. Furthermore, the eigenspaces at the band edges are swapped for these two periodic structures. These results are presented in Theorem \ref{dispersion relation near the dirac point} and Corollary \ref{common band gap}. Finally, in Section 4, we reformulate the spectral problem \eqref{eq7} by using a boundary integral equation and prove Theorem \ref{main result} by analyzing the characteristic values of the associated boundary integral operator.

\section{Preliminaries}

\subsection{Floquet-Bloch theory for periodic structures}
For clarity, we focus on the periodic waveguide structure in Figure 1. Let $\mathcal{L}:H_{b}^1(\Omega;\Delta)\subset L^2(\Omega)\to L^2(\Omega)$ be the Laplacian operator. The eigenvalue problem for the waveguide is to solve for all eigenpairs that satisfy the following equations
\begin{equation} \label{eq14}
    \left\{
    \begin{aligned}
        &(\mathcal{L}-\lambda)u(x;\lambda)=0,\quad x \in \Omega ,\\
        &u(x;\lambda)=0,\quad x \in \cup_{n\in \mathbb{Z}^*} \partial D_n, \\
        &\frac{\partial}{\partial x_2} u(x;\lambda)=0,\quad x\in \Gamma_{-}\bigcup\Gamma_{+}.
    \end{aligned}
    \right.
\end{equation}
To study the spectrum $\sigma(\mathcal{L})$ of $\mathcal{L}$, we consider a family of operators $\mathcal{L}(p)$, where $p$ lies in the Brillouin zone $B:=[0, 2\pi]$. More precisely, $\mathcal{L}(p)$
is the Laplacian operator restricted to the function space with the quasi-periodic boundary condition:
$$
\mathcal{L}(p):H_{p,b}^1(\Omega;\Delta)\subset L^2_{loc}(\Omega)\to L^{2}_{loc}(\Omega),\quad \phi\to -\Delta\phi.
$$
In the above, 
$$ 
\begin{aligned}
    H_{p,b}^1(\Omega;\Delta):=\Big\{u\in H_{loc}^1(\Omega):&\Delta u\in L^2_{loc}(Y) ,\frac{\partial u}{\partial x_2}|_{\Gamma_{-}\bigcup\Gamma_{+}}=0,u|_{\partial D_n}=0, n\in\mathbb{Z}^* ,\\
    &u(x+\bm{e}_1)=e^{ip}u(x),\frac{\partial u}{\partial x_1}(x+\bm{e}_1)=e^{ip}\frac{\partial u}{\partial x_1}(x)
    \Big\}. 
\end{aligned}
$$
For each $p\in B$, the spectral theory for self-adjoint operators (cf. \cite{Conca1995FluidsAP, reed1972methods}) states that $\sigma(\mathcal{L}(p))$ consists of a discrete set of real eigenvalues
\begin{equation*} 
0<\lambda_1(p)\leq\lambda_2(p)\leq\cdots\leq\lambda_n(p)\leq\cdots.
\end{equation*}
We denote by $u_{n}(x;p)$ the eigenfunction  associated with the eigenvalue $\lambda_n(p)$, which is also called the $n$-th Bloch mode at quasi-momentum $p$. 
We have the following standard results  for the dispersion relation $\lambda_n(p)$ (cf. \cite{kuchment2016overview}):

\begin{proposition}
(1) $\lambda_n(p)$ ($n\geq 1$) is Lipschitz continuous with respect to $p\in B$;\\
(2) $\lambda_n(p)$ ($n\geq 1$) can be periodically extended to $p\in\mathbf{R}$ with period $2\pi$, i.e. $\lambda_n(p+2\pi)=\lambda_j(p)$. Moreover, $\lambda_n(p)=\lambda_j(-p)$.
\end{proposition}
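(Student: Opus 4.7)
The plan is to reduce both statements to properties of a family of operators acting on a $p$-independent Hilbert space via the gauge transform $u(x) \mapsto \psi(x) := e^{-ipx_1}u(x)$. Under this unitary map, the quasi-periodic space $H^1_{p,b}(\Omega;\Delta)$ becomes a fixed space $\mathcal{H}$ of functions that are $1$-periodic in $x_1$ and satisfy the same Neumann/Dirichlet conditions on $\Gamma_\pm$ and $\partial D_n$. The operator $\mathcal{L}(p)$ is conjugated to $\mathcal{L}_0(p) := -(\nabla + ip\bm{e}_1)^2$ on $\mathcal{H}$, with associated sesquilinear form
\begin{equation*}
a_p(\psi,\varphi) \;=\; \int_{Y} (\nabla + ip\bm{e}_1)\psi \cdot \overline{(\nabla + ip\bm{e}_1)\varphi}\, dx,
\end{equation*}
which is a polynomial of degree two in $p$. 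Since $Y$ is bounded, the Rellich--Kondrachov theorem provides a compact embedding of the form domain into $L^2(Y)$; hence $\mathcal{L}_0(p)$ has compact resolvent and its spectrum is a discrete sequence $\lambda_1(p)\leq \lambda_2(p)\leq\cdots$ coinciding with the Bloch eigenvalues in the proposition.

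For (1), I would invoke the Courant--Fischer min-max principle
\begin{equation*}
\lambda_n(p) \;=\; \inf_{\substack{V\subset \mathcal{H}\\ \dim V = n}}\ \sup_{\psi\in V\setminus\{0\}} \frac{a_p(\psi,\psi)}{\|\psi\|_{L^2(Y)}^2},
\end{equation*}
combined with the identity
\begin{equation*}
a_p(\psi,\psi) - a_q(\psi,\psi) \;=\; 2(p-q)\,\mathrm{Im}\!\int_Y \overline{\psi}\,\partial_{x_1}\psi\, dx + (p^2-q^2)\|\psi\|_{L^2(Y)}^2.
\end{equation*}
On an $n$-dimensional subspace $V_\ast$ that almost realizes the infimum at $p$, one controls $\|\partial_{x_1}\psi\|_{L^2(Y)}$ by $a_p(\psi,\psi)^{1/2}\lesssim \lambda_n(p)^{1/2}$ (using coercivity of $a_p$ up to a $p$-dependent shift), and this yields a one-sided bound $\lambda_n(q)-\lambda_n(p)\leq C_n(p)|p-q|$. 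Swapping the roles of $p$ and $q$ gives the other inequality, so $\lambda_n$ is Lipschitz on every compact subset of $B$. This is essentially the standard statement that $\{\mathcal{L}_0(p)\}_{p\in B}$ is a Kato type-(A) analytic family of self-adjoint operators.

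Part (2) splits into two independent observations. First, $e^{i(p+2\pi)}=e^{ip}$ means the function spaces $H^1_{p,b}(\Omega;\Delta)$ and $H^1_{p+2\pi,b}(\Omega;\Delta)$ are literally equal and the operators $\mathcal{L}(p)$ and $\mathcal{L}(p+2\pi)$ coincide, so $\lambda_n(p+2\pi)=\lambda_n(p)$. Second, the symmetry $\lambda_n(-p)=\lambda_n(p)$ follows from time-reversal: if $u$ solves $\mathcal{L}(p)u=\lambda u$ with $u(x+\bm{e}_1)=e^{ip}u(x)$, then $\overline{u}$ solves the same real PDE with the same real boundary conditions on $\Gamma_\pm$ and $\partial D_n$ and satisfies $\overline{u}(x+\bm{e}_1)=e^{-ip}\overline{u}(x)$, so $\overline{u}$ is a nontrivial eigenfunction of $\mathcal{L}(-p)$ with the same eigenvalue. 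Hence the spectra of $\mathcal{L}(p)$ and $\mathcal{L}(-p)$ coincide including multiplicities, which forces $\lambda_n(-p)=\lambda_n(p)$ after the common increasing enumeration.

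The main (and essentially only) obstacle is the quantitative $H^1$-control on subspaces realizing the min-max, needed to make the Lipschitz constants locally uniform in $p$; this is routine once one notes that $a_p(\psi,\psi)+C\|\psi\|_{L^2}^2$ is coercive on $\mathcal{H}$ with constants that are locally uniform in $p$ on the compact interval $B$. Everything else is standard Floquet--Bloch bookkeeping.
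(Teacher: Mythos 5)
Your argument is correct, and it is the canonical one; note, however, that the paper does not actually prove this proposition --- it simply records it as a standard consequence of Floquet--Bloch theory with a citation to \cite{kuchment2016overview}, so there is no paper proof to compare against. Your route (gauge transform to the $p$-independent periodic space, the explicit polynomial dependence of the shifted form $a_p$ on $p$ together with Courant--Fischer min-max for the Lipschitz bound, and complex conjugation/time-reversal for $\lambda_n(-p)=\lambda_n(p)$) is precisely what a self-contained treatment would supply. One cosmetic point: the paper's statement writes $\lambda_n(p+2\pi)=\lambda_j(p)$ and $\lambda_n(p)=\lambda_j(-p)$, which are typos for $j=n$, and you have correctly proved the intended identities. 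On the Lipschitz step, the one-sided estimate $\lambda_n(q)\le\lambda_n(p)+C(p)\,|p-q|$ with $C(p)$ depending on $\lambda_n(p)$ and $|p|$ is indeed what the min-max gives directly; to symmetrize one first needs local boundedness of $\lambda_n$ on $B$, which you obtain by iterating the one-sided bound (or, as you note, from the locally uniform coercivity of $a_p + C$). That remark closes the only potential gap, and the proof stands.
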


For each integer $n$, let
\begin{equation*} \label{eq18}
    \lambda_n^{-}=\min_{p\in\mathcal{B}}\lambda_n(p),\quad
    \lambda_n^{+}=\max_{p\in\mathcal{B}}\lambda_n(p).
\end{equation*}
The Floquet-Bloch theory states that the spectrum $\sigma(\mathcal{L})$ is given by
\begin{equation*} \label{eq19}
\sigma(\mathcal{L})=\bigcup_{p\in\mathcal{B}}\sigma(\mathcal{L}(p))=\bigcup_{n\geq 1}I_n,\quad I_n:=\left[\lambda_n^{-},\lambda_n^{+}\right].
\end{equation*}
We say that a band gap opens between the $n$-th and $n+1$-th band
if $\lambda_n^{+}<\lambda_{n+1}^{-}$; See Figure 3 for an example.

\begin{figure}
    \centering
    \begin{tikzpicture}[scale=0.2]

\draw[thick,->] (0,0)--(15,0);
\draw[thick,->] (0,0)--(0,28);
\node[below] at (0,0) {$0$};
\node[below] at (14,0) {$2\pi$};
\node[right] at (15.2,0) {$p$};
\node[above] at (0,28) {$\lambda$};
\draw[dashed] (14,0)--(14,27.2);
\draw plot [smooth] coordinates {(0,0.45) (4,1) (10,5) (14,5.55)};
\draw plot [smooth] coordinates {(0,5.55)  (4,5)  (10,1)  (14,0.45)};
\draw[dashed] (7,3)--(0,3);
\draw[thick,red] (0,0.45)--(0,3);
\draw[thick,blue] (0,5.55)--(0,3);
\node[left,red] at (0,1.725) {$I_1$};
\node[left,blue] at (0,4.275) {$I_2$};

\draw plot [smooth] coordinates {(0,27.2)  (4,25)  (10,9)  (14,6.8)};
\draw plot [smooth] coordinates {(0,6.8)  (4,9)  (10,25)  (14,27.2)};
\draw[dashed] (7,17)--(0,17);
\draw[thick,green] (0,6.8)--(0,17);
\draw[thick,yellow] (0,27.2)--(0,17);
\node[left,green] at (0,11.9) {$I_3$};
\node[left,yellow] at (0,22.1) {$I_4$};
\end{tikzpicture}
    \caption{An example of the band structure: the first four bands of the spectrum are depicted in the figure. Note that there is a spectral gap between the second and third bands.
}
    \label{band sturcture and fold}
\end{figure}
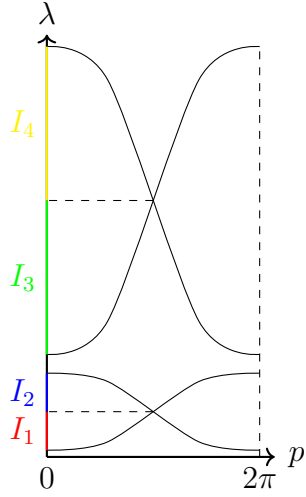


\subsection{The Green's function and representation of solutions for the periodic structure}

We present some properties of the Green's function $G(x, y, \lambda)$ for the waveguide problem \eqref{eq14} at $\lambda=\lambda_*$, which is defined to be the unique physical solution to the following equations:
\begin{equation*} 
    \left\{
    \begin{aligned}
        &(\Delta_x +\lambda)G(x,y;\lambda_*)=\tilde\delta(x-y),\quad x,y \in \Omega,\\
        &G_\delta(x,y;\lambda_*)=0,\quad x \in \cup_{n\in\mathbb{Z}^*}\partial D_{n}, \\
        &\frac{\partial}{\partial x_2} G(x,y;\lambda_*)=0,\quad x\in \Gamma_{-}\bigcup\Gamma_{+}.
    \end{aligned}
    \right.
\end{equation*}
Here $\tilde\delta(\cdot)$ denotes the Dirac delta function. We follow closely the discussions in \cite{fliss2016solutions}. The existence and uniqueness of the Green's function can be established by using the limiting absorbing principle, which gives rise to a physical solution to the wave propagation problem 
\eqref{eq14} as the limit of the unique finite-energy solution of the corresponding model with absorption when the absorption tends to zero. More precisely, the Green's function for the waveguide problem \eqref{eq14} can be defined as the limit of the Green's function $G(x,y;\lambda+ i\varepsilon)$ for the waveguide problem \eqref{eq14} with $\lambda$ being replaced by $\lambda+ i\varepsilon$ as $\varepsilon \to 0$. 

We apply the Floquet-Bloch theory to derive the asymptotic behavior of $G(x, y, \lambda_*)$ at infinity. Recall that $(\lambda_{n}(p),u_n(x;p)), n\geq 1$ are the Bloch eigenpairs associated with the periodic structure. First, 
by using the Floquet transform, the following spectral representation of the Green's function holds:
\begin{eqnarray*} \label{eq88_1}
    G(x,y;\lambda_*)= \lim_{\varepsilon \to 0^+}G(x,y;\lambda_* + i \varepsilon) 
    &=\displaystyle{\frac{1}{2\pi} \lim_{\varepsilon \to 0^+}
    \int_0^{2\pi}\frac{u_{1}(x;p)\overline{u_{1}(y;p)}}{\lambda_*-\lambda_{1}(p) +i \varepsilon}dp + \frac{1}{2\pi}\lim_{\varepsilon \to 0^+}
    \int_0^{2\pi}\frac{u_{2}(x;p)\overline{u_{2}(y;p)}}{\lambda_*-\lambda_{2}(p) +i\varepsilon}dp } \\
     & +  \displaystyle{ \frac{1}{2\pi}
    \int_0^{2\pi}\sum_{n=3}^{+\infty}\frac{u_{n}(x;p)\overline{u_{n}(y;p)}}{\lambda_*-\lambda_{n}(p)}dp. }
\end{eqnarray*}
Note that by Proposition \ref{Existence of Dirac points of the period-1 structure}, there holds
$$
\int_0^{2\pi}\frac{u_{1}(x;p)\overline{u_{1}(y;p)}}{\lambda_*-\lambda_{1}(p) +i \varepsilon}dp + \int_0^{2\pi}\frac{u_{2}(x;p)\overline{u_{2}(y;p)}}{\lambda_*-\lambda_{2}(p) +i \varepsilon}dp = \int_0^{2\pi}\frac{v_{1}(x;p)\overline{v_{1}(y;p)}}{\lambda_*-\mu_{1}(p) +i \varepsilon}dp + \int_0^{2\pi}\frac{v_{2}(x;p)\overline{v_{2}(y;p)}}{\lambda_*-\mu_{2}(p) +i \varepsilon}dp. 
$$
Taking the limit as $\varepsilon \to 0$ yields (see Remark 8 in \cite{fliss2016solutions}): 
\begin{equation} \label{eq-G}
 \begin{aligned}
    G(x,y;\lambda_*)=&-\frac{i}{2}\frac{v_1(x)\overline{v_1(y)}}{\alpha_*} +\frac{1}{2\pi}p.v.\int_{0}^{2\pi}\frac{v_1(x;p)\overline{v_1(y;p)}}{\lambda_*-\mu_1(p)}dp \\
    &-\frac{i}{2}\frac{v_2(x)\overline{v_2(y)}}{\alpha_*} + \frac{1}{2\pi}p.v.\int_{0}^{2\pi}\frac{v_2(x;p)\overline{v_2(y;p)}}{\lambda_*-\mu_2(p)}dp \\
    &+ 
    \frac{1}{2\pi}\int_{0}^{2\pi}\sum_{n\geq 3}\frac{u_n(x;p)\overline{u_n(y;p)}}{\lambda_*-\lambda_n(p)}dp.  
    \end{aligned}
\end{equation}
Moreover, it can be shown that $G$ admits the following decomposition for fixed $y$ (see Remark 9 in \cite{fliss2016solutions}):
\begin{equation} \label{eq-73}
G(x,y;\lambda_*)=G_0^+(x,y;\lambda_*)-i\frac{v_1(x)\overline{v_1(y)}}{\alpha_*}, \quad x_1 \to  \infty, 
\end{equation}
and
\begin{equation} \label{eq73}
    G(x,y;\lambda_*)=G_0^-(x,y;\lambda_*)-i\frac{v_2(x)\overline{v_2(y)}}{\alpha_*},
    \quad x_1 \to  -\infty.
\end{equation}
where $G_0^{+}(x,y;\lambda)$ ($G_0^{-}(x,y;\lambda)$) decays exponentially as $x_1\to +\infty$ ($x_1\to -\infty$). 
We remark that (\ref{eq-73})-(\ref{eq73}) define the so-called radiation condition for the Green's function $G$, with
$v_1$ and $v_2$ being the right and left propagating modes respectively.

We present some properties for $v_1$ and $v_2$ in the next two lemmas.  

\begin{lemma} \label{conjugate vs reflection}
There exists some $\tau\in\mathbb{C}$ s.t. $|\tau|=1$ and
$$
v_2(x_1,x_2)=\tau v_1(-x_1,x_2).
$$
\end{lemma}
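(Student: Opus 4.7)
The strategy is to identify both $v_1(-x_1, x_2)$ and $v_2(x_1, x_2)$ as first-band period-$1/2$ Bloch eigenmodes at quasi-momentum $-\pi$ with eigenvalue $\lambda_*$, and then invoke the one-dimensionality of that eigenspace. Set $w(x_1, x_2) := v_1(-x_1, x_2; \pi)$. Since $\Omega$ is invariant under the reflection $x_1 \mapsto -x_1$, a consequence of Assumption \ref{hypo_reflection symmetry} together with the symmetric arrangement of $\{D_n\}_{n\in\mathbb{Z}^*}$ about $x_1 = 0$, the function $w$ solves the same Helmholtz eigenvalue problem as $v_1(\cdot; \pi)$ with eigenvalue $\lambda_*$. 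Its quasi-periodicity is computed directly: applying the relation $v_1(\cdot - \tfrac{1}{2}, \cdot; \pi) = e^{-i\pi/2} v_1(\cdot, \cdot; \pi)$ with argument $-x_1$ gives
\begin{equation*}
    w(x_1 + \tfrac{1}{2}, x_2) = v_1(-x_1 - \tfrac{1}{2}, x_2; \pi) = e^{-i\pi/2} v_1(-x_1, x_2; \pi) = -i\, w(x_1, x_2),
\end{equation*}
so $w$ is a period-$1/2$ Bloch mode with quasi-momentum $-\pi$ (that is, $e^{iq/2} = -i$). Conjugating the analogous relation for $v_1$ itself yields
\begin{equation*}
    v_2(x_1 + \tfrac{1}{2}, x_2) \;=\; \overline{v_1(x_1 + \tfrac{1}{2}, x_2; \pi)} \;=\; \overline{i\, v_1(x_1, x_2; \pi)} \;=\; -i\, v_2(x_1, x_2),
\end{equation*}
so $v_2(\cdot; \pi) = \overline{v_1(\cdot; \pi)}$ is also a period-$1/2$ first-band Bloch mode at quasi-momentum $-\pi$ with eigenvalue $\lambda_*$ (complex conjugation commutes with $-\Delta$ since its coefficients are real).

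By Assumption \ref{assump0}(1) the first-band eigenspace of the period-$1/2$ Laplacian is one-dimensional at every quasi-momentum, so $w$ and $v_2(\cdot; \pi)$ must be linearly dependent, yielding $v_2(x_1, x_2) = \tau\, v_1(-x_1, x_2)$ for some $\tau \in \mathbb{C}$. The equality $|\tau| = 1$ then follows by comparing $L^2$-norms over a fundamental domain $Y_{1/2} = (0, 1/2) \times (0, 1/2) \setminus \overline{D}$: conjugation is an isometry, so $\|v_2(\cdot;\pi)\|_{L^2(Y_{1/2})} = \|v_1(\cdot;\pi)\|_{L^2(Y_{1/2})}$, while the substitution $y_1 = -x_1$ combined with the half-period translation invariance of $|v_1|^2$ (a direct consequence of the quasi-periodicity $|v_1(\cdot+\tfrac{1}{2},\cdot;\pi)| = |v_1(\cdot,\cdot;\pi)|$) gives $\|v_1(-\cdot,\cdot;\pi)\|_{L^2(Y_{1/2})} = \|v_1(\cdot;\pi)\|_{L^2(Y_{1/2})}$. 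There is no substantive obstacle in this argument; the only care required is bookkeeping of the phase factor $e^{\pm i\pi/2}$ when reversing the sense of the half-period shift and in aligning the reflected fundamental domain with the original one via translation.
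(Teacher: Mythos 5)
Your proof follows the same route as the paper: observe that $v_1(-x_1,x_2)$ is again a period-$1/2$ Bloch eigenmode at eigenvalue $\lambda_*$ (using the reflection symmetry of $\Omega$) with the same quasi-periodicity as $v_2=\overline{v_1(\cdot;\pi)}$, and then invoke the one-dimensionality of the first-band eigenspace from Assumption \ref{assump0}(1). You supply more explicit bookkeeping — verifying that both functions transform under $x_1\mapsto x_1+\tfrac12$ by the factor $-i$, and giving the $L^2$-norm comparison that proves $|\tau|=1$ (a point the paper leaves implicit) — but the argument is otherwise identical to the one in the text.
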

\begin{proof}
With Remark \ref{remark on the q-p boundary condition} and Assumption \ref{hypo_reflection symmetry}, it is straightforward to check that $v_1(-x_1,x_2)$ is also a Bloch mode with the same quasi-periodic boundary condition as $v_2$. Then the proof is completed by recalling that the dimension of the eigenspace is one as stated in Assumption \ref{assump0}.
\end{proof}
Let 
$$
\mathcal{V}(\lambda):=\big\{u\in H^1_{loc}(\Omega): (\Delta+\lambda)u(x)=0\enspace \mbox{in} \; \Omega,
\frac{\partial u}{\partial x_2}\big|_{\Gamma_{-}\bigcup\Gamma_{+}}=0,
u\big|_{\partial D_n}=0, n\in\mathbb{Z}^*\big\}.
$$
Then $v_1,v_2\in \mathcal{V}(\lambda_*)$. Moreover, we have the following proposition.
\begin{lemma}\label{orthogonality of the propagating mode}
The right- and left-propagating modes $v_1$ and $v_2$ satisfy
\begin{equation} \label{eq78}
    \int_{\Gamma}\frac{\partial v_1}{\partial x_1}\overline{v_1}dx_2=\frac{i}{2}\mu_1'(\pi)=\frac{i}{2}\alpha_*,\quad\int_{\Gamma}\frac{\partial v_2}{\partial x_1}\overline{v_2}dx_2=-\frac{i}{2}\alpha_*.
\end{equation}
Moreover, for each $u\in\mathcal{V}(\lambda_*)$ that decays exponentially as $x_1\to\infty$, there holds
\begin{equation} \label{eq79}
     \int_{\Gamma}\frac{\partial v_1}{\partial x_1}\overline{u}dx_2=\int_{\Gamma}\frac{\partial \overline{u}}{\partial x_1}v_1 dx_2=\int_{\Gamma}\frac{\partial v_2}{\partial x_1}\overline{u}dx_2=\int_{\Gamma}\frac{\partial \overline{u}}{\partial x_1}v_2 dx_2=0.
\end{equation}
\end{lemma}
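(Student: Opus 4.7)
The plan is to split the lemma into the propagating-mode identity \eqref{eq78} and the orthogonality relations \eqref{eq79}, and in both cases to transfer the analysis onto the parity-symmetric modes $\phi_1$ (odd in $x_1$) and $\phi_2$ (even in $x_1$) from Corollary \ref{even odd mode and root function}. The identity for $v_2$ will be deduced from that for $v_1$ by taking complex conjugates, using the identification $v_2=\overline{v_1}$ at $p=\pi$ which is built into Proposition \ref{Existence of Dirac points of the period-1 structure}.

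For \eqref{eq78}, write $J:=\int_\Gamma \partial_{x_1}v_1\,\overline{v_1}\,dx_2$ and split $J$ into real and imaginary parts. Since $\mathrm{Re}(\partial_{x_1}v_1\,\overline{v_1})=\tfrac{1}{2}\partial_{x_1}|v_1|^2$, combining Lemma \ref{conjugate vs reflection} with $v_2=\overline{v_1}$ yields $|v_1(x_1,x_2)|=|v_1(-x_1,x_2)|$, so $|v_1|^2$ is even in $x_1$ and $\mathrm{Re}\,J=0$. For the imaginary part, I will differentiate the Bloch eigenproblem $(-\Delta-\mu_1(p))v_1(\cdot;p)=0$ in $p$, pair the resulting identity against $\overline{v_1}$, and integrate over the small cell $Y':=(0,\tfrac{1}{2})\times(0,\tfrac{1}{2})\setminus\overline{D_1}$. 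Green's second identity converts the volume integral into a boundary integral: contributions on $\partial D_1$ and on $\Gamma_\pm\cap\overline{Y'}$ vanish by the boundary conditions, and those at the quasi-periodic boundaries $x_1=0,\tfrac{1}{2}$ match through $v_1(x+\tfrac{e_1}{2};p)=e^{ip/2}v_1(x;p)$, leaving only the residue produced by $p$-differentiating the phase $e^{ip/2}$. At $p=\pi$ this residue collapses to $\mathrm{Im}\,J$, and matching it with the volume side gives $\mathrm{Im}\,J=\alpha_*\|v_1\|_{L^2(Y')}^2$. Under the natural normalization $\|v_1\|_{L^2(Y)}^2=1$ on the period-$1$ cell $Y:=(0,1)\times(0,\tfrac{1}{2})$, which forces $\|v_1\|_{L^2(Y')}^2=\tfrac{1}{2}$ because $|v_1|^2$ is $\tfrac{1}{2}$-periodic in $x_1$, I obtain $\mathrm{Im}\,J=\tfrac{1}{2}\alpha_*$ and hence $J=\tfrac{i}{2}\alpha_*$; the analogous identity for $v_2$ follows by complex conjugation.

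For \eqref{eq79}, I apply Green's second identity to the pairs $(\phi_i,\overline{u})$ on the semi-infinite strip $\Omega_+:=(0,\infty)\times(0,\tfrac{1}{2})\setminus\bigcup_{n\geq 1}\overline{D_n}$ for $i=1,2$. Since $\phi_i$ and $\overline{u}$ both lie in $\mathcal{V}(\lambda_*)$, the volume integral vanishes; boundary contributions on $\Gamma_\pm$ and on the $\partial D_n$ vanish by the Neumann/Dirichlet conditions, and the flux at $x_1\to+\infty$ vanishes because $u,\partial_{x_1}u$ decay exponentially while $\phi_i,\partial_{x_1}\phi_i$ remain bounded. This leaves $\int_\Gamma(\phi_i\partial_{x_1}\overline{u}-\overline{u}\partial_{x_1}\phi_i)\,dx_2=0$ for $i=1,2$. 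Using $\phi_1|_\Gamma=0$ (odd parity) and $\partial_{x_1}\phi_2|_\Gamma=0$ (derivative of an even function on its axis of symmetry), these reduce respectively to $\int_\Gamma\partial_{x_1}\phi_1\,\overline{u}\,dx_2=0$ and $\int_\Gamma\phi_2\,\partial_{x_1}\overline{u}\,dx_2=0$. By Remark \ref{link between two types of modes} each $v_j|_\Gamma$ lies in $\mathrm{span}\{\phi_2|_\Gamma\}$, and writing $v_j=b_j\phi_1+c_j\phi_2$ gives $\partial_{x_1}v_j|_\Gamma=b_j\,\partial_{x_1}\phi_1|_\Gamma$ since $\partial_{x_1}\phi_2|_\Gamma=0$; substituting these expressions into the two reduced relations yields all four orthogonality identities in \eqref{eq79}. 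The principal technical hurdle is the careful accounting of boundary terms in the $p$-differentiation argument for \eqref{eq78}, where the inhomogeneity in the quasi-periodic condition for $\partial_p v_1$ is precisely what generates the group-velocity factor $\mu_1'(\pi)$; the orthogonality part is essentially routine once $\phi_1,\phi_2$ are exploited in place of $v_1,v_2$.
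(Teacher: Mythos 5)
Your proposal is correct, but it takes a genuinely different and more self-contained route than the paper. For \eqref{eq78}, the paper cites from Fliss--Joly the energy-flux identity $q(v_n,v_m)=i\mu_n'(\pi)\delta_{nm}$ (with $q$ the sesquilinear flux form), and then combines the diagonal fact $q(v_1,v_1)=i\alpha_*$ with the off-diagonal fact $q(v_2,v_1)=0$ rewritten via Lemma~\ref{conjugate vs reflection} as $-\tau\int_\Gamma(\partial_{x_1}v_1\,\overline{v_1}+\partial_{x_1}\overline{v_1}\,v_1)dx_2=0$; adding the two immediately gives the formula. You instead prove both ingredients from scratch: the vanishing of $\mathrm{Re}\,J$ follows from $|v_1(x_1,x_2)|=|v_1(-x_1,x_2)|$ (Lemma~\ref{conjugate vs reflection} combined with $v_2=\overline{v_1}$), and $\mathrm{Im}\,J=\tfrac{1}{2}\alpha_*$ follows from the Feynman--Hellmann / group-velocity computation on the small cell $Y'$, which is essentially a self-contained derivation of the cited $q(v_1,v_1)=i\mu_1'(\pi)$. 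The only detail worth flagging is that your final numerical value relies on the Bloch normalization $\|v_1\|_{L^2(Y)}=1$ (hence $\|v_1\|_{L^2(Y')}^2=\tfrac{1}{2}$ by the $\tfrac12$-periodicity of $|v_1|^2$), which the paper never states explicitly but which is indeed the convention underlying the spectral representation \eqref{eq-G}, so your appeal to it is legitimate. For \eqref{eq79} the paper again exploits the $x$-independence of $q$ to push the interface integral to $\Gamma_N$ and send $N\to\infty$; you instead apply Green's second identity on the truncated half-strip to the pairs $(\phi_i,\overline u)$, $i=1,2$, let $N\to\infty$, and then use the parity facts $\phi_1|_\Gamma=0$ and $\partial_{x_1}\phi_2|_\Gamma=0$ to decouple the four identities. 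This is formally equivalent (the truncate-and-limit argument is exactly what makes $q$ $x$-independent), but working directly with the parity-adapted modes $\phi_1,\phi_2$ is arguably cleaner than the paper's passage through the reflection relation between $v_1$ and $v_2$. Both routes are valid; yours trades the external citation for a longer explicit calculation.
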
 

\begin{proof}
We consider the following sesquilinear form defined on $\mathcal{V}(\lambda)$: 
\begin{equation*} \label{eq76}
    q(v,w;x)= \int_{\Gamma_x}\big(\frac{\partial v}{\partial x_1}\overline{w}-\frac{\partial \overline{w}}{\partial x_1}v\big)dx_2,\quad \Gamma_x:=\{x_2:(x,x_2)\in\Omega\},\quad
    v,w\in\mathcal{V}(\lambda). 
\end{equation*}
It is shown in \cite{fliss2016solutions} that $q(v,w;x)$ is independent of $x\in\mathbf{R}$, and moreover, 
\begin{equation*} \label{eq77}
q(v_n,v_m)=i\mu_n'(\pi)\delta_{nm}, \quad n,m = 1, 2.
\end{equation*}
Using Lemma \ref{conjugate vs reflection}, we have
\begin{equation*}
0=q(v_2,v_1)=\int_{\Gamma}\big(\frac{\partial v_2}{\partial x_1}\overline{v_1}-\frac{\partial \overline{v_1}}{\partial x_1}v_2\big)dx_2=-\tau\int_{\Gamma}\big(\frac{\partial v_1}{\partial x_1}\overline{v_1}+\frac{\partial \overline{v_1}}{\partial x_1}v_1\big)dx_2.
\end{equation*}
This, combined with the equality $q(v_1,v_1)= i\mu_1'(\pi)$ for $n=1, 2$, yield 
$\int_{\Gamma}\frac{\partial v_1}{\partial x_1}\overline{v_1}dx_2=\frac{i}{2}\mu_1'(\pi)$. Similarly, we have $\int_{\Gamma}\frac{\partial v_2}{\partial x_1}\overline{v_2}dx_2=-\frac{i}{2}\alpha_*$. Next, since $q(v_i,u;x)$ ($i=1,2$) is independent of $x$, we have for positive integer $N$,
\begin{equation*}
    \begin{aligned}
    &\int_{\Gamma}\big(\frac{\partial v_1}{\partial x_1}\overline{u}-\frac{\partial \overline{u}}{\partial x_1}v_1\big)dx_2=q(v_1,u)=\int_{\Gamma_N}\big(\frac{\partial v_1}{\partial x_1}\overline{u}-\frac{\partial \overline{u}}{\partial x_1}v_1\big)dx_2, \\
    &\int_{\Gamma}\big(\frac{\partial v_1}{\partial x_1}\overline{u}+\frac{\partial \overline{u}}{\partial x_1}v_1\big)dx_2=-\frac{1}{\tau}q(v_2,u)=-\frac{1}{\tau}\int_{\Gamma_N}\big(\frac{\partial v_2}{\partial x_1}\overline{u}-\frac{\partial \overline{u}}{\partial x_1}v_2\big)dx_2.
    \end{aligned}
\end{equation*}
Since $u$ decays exponentially, 
\begin{equation*}
    \begin{aligned}
    &\int_{\Gamma}\big(\frac{\partial v_1}{\partial x_1}\overline{u}-\frac{\partial \overline{u}}{\partial x_1}v_1\big)dx_2=\lim_{N\to +\infty}\int_{\Gamma_N}\big(\frac{\partial v_1}{\partial x_1}\overline{u}-\frac{\partial \overline{u}}{\partial x_1}v_1\big)dx_2=0, \\
    &\int_{\Gamma}\big(\frac{\partial v_1}{\partial x_1}\overline{u}+\frac{\partial \overline{u}}{\partial x_1}v_1\big)dx_2=-\frac{1}{\tau}\lim_{N\to +\infty}\int_{\Gamma_N}\big(\frac{\partial v_2}{\partial x_1}\overline{u}-\frac{\partial \overline{u}}{\partial x_1}v_2\big)dx_2=0.
    \end{aligned}
\end{equation*}
By adding those two identities together, we conclude that $\displaystyle{\int_{\Gamma}\frac{\partial v_1}{\partial x_1}\overline{u}dx_2=
\int_{\Gamma}\frac{\partial \overline{u}}{\partial x_1}v_1dx_2=0
}$. Following the same argument, $\displaystyle{\int_{\Gamma}\frac{\partial v_2}{\partial x_1}\overline{u}dx_2=\int_{\Gamma}\frac{\partial \overline{u}}{\partial x_1}v_2 dx_2=0}$.
\end{proof}

We next present some useful properties of the Green's function $G(x,y;\lambda_*)$.

\begin{lemma} \label{commutativity and parity of the Green function}
For $x,y\in\Omega$ and $x\neq y$, the Green function in \eqref{eq-G} satisfies
\begin{equation} \label{eq80}
    G(x,y;\lambda_*)=G(y,x;\lambda_*).
\end{equation}
On the other hand, when $y\in \Gamma:=\{0\}\times (0,\frac{1}{2})$, there holds
\begin{equation} \label{eq80_1}
    G((x_1,x_2),y;\lambda_*)=G((-x_1,x_2),y;\lambda_*),\quad \forall (x_1,x_2)\in \Omega.
\end{equation}
\end{lemma}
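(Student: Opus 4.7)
\medskip

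\noindent\textbf{Proof plan.} The two identities have different flavours, so I would treat them separately. For \eqref{eq80} the natural tool is Green's second identity applied to $G(\cdot,y;\lambda_*)$ and $G(\cdot,z;\lambda_*)$ on the truncated domain $\Omega_N:=\Omega\cap\{|x_1|<N\}$, combined with the radiation decomposition \eqref{eq-73}--\eqref{eq73}. For \eqref{eq80_1} I would exploit the reflection symmetry of $\Omega$ inherited from Assumption~\ref{hypo_reflection symmetry}, observing that when $y\in\Gamma$ the source $\delta(\cdot-y)$ is itself invariant under $x_1\mapsto -x_1$; a uniqueness argument for the outgoing Green's function then closes the loop, with Lemma~\ref{conjugate vs reflection} doing the bookkeeping for the radiating parts.

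\smallskip

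\noindent\textbf{Step 1 (reciprocity).} On $\Omega_N$, Green's identity gives
\begin{equation*}
G(y,z;\lambda_*)-G(z,y;\lambda_*)
=\int_{\partial\Omega_N}\!\bigl[G(x,z;\lambda_*)\partial_{n_x}G(x,y;\lambda_*)-G(x,y;\lambda_*)\partial_{n_x}G(x,z;\lambda_*)\bigr]\,ds_x .
\end{equation*}
The contributions from $\cup_n\partial D_n$ and from $\Gamma_\pm$ vanish identically because of the Dirichlet and Neumann boundary conditions imposed on $G$. On the vertical cross-section $\Gamma_{\pm N}:=\{\pm N\}\times(0,\tfrac12)$ I plug in the decompositions
$G(x,w;\lambda_*)=G_0^{\pm}(x,w;\lambda_*)-i\,v_{1/2}(x)\overline{v_{1/2}(w)}/\alpha_*$ from \eqref{eq-73}--\eqref{eq73}. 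The cross terms involving $G_0^\pm$ decay exponentially and produce $O(e^{-cN})$, while the only possibly surviving contribution on $\Gamma_N$ is
\begin{equation*}
-\frac{\overline{v_1(z)\,v_1(y)}}{\alpha_*^2}\int_{\Gamma_N}\!\bigl[v_1(x)\partial_{x_1}v_1(x)-v_1(x)\partial_{x_1}v_1(x)\bigr]dx_2=0,
\end{equation*}
and analogously on $\Gamma_{-N}$ with $v_2$ in place of $v_1$. Sending $N\to\infty$ yields $G(y,z;\lambda_*)=G(z,y;\lambda_*)$.

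\smallskip

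\noindent\textbf{Step 2 (parity in $x_1$ for sources on $\Gamma$).} Fix $y=(0,y_2)\in\Gamma$ and set $\widetilde G(x,y):=G((-x_1,x_2),y;\lambda_*)$. Since $\Omega$ is invariant under $x_1\mapsto -x_1$ and $\partial D_n$ is mapped to $\partial D_{-n}$, the function $\widetilde G$ satisfies the Helmholtz equation in $\Omega$, vanishes on every $\partial D_n$, and satisfies the Neumann condition on $\Gamma_\pm$. Moreover, because $y_1=0$, the distributional source transforms as $\delta((-x_1,x_2)-y)=\delta(x-y)$, so $\widetilde G$ satisfies the same inhomogeneous equation as $G(\cdot,y;\lambda_*)$. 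It remains to check the radiation conditions at $x_1\to\pm\infty$. As $x_1\to+\infty$ we have $-x_1\to-\infty$ and \eqref{eq73} gives
\begin{equation*}
\widetilde G(x,y)\;\sim\;-\,i\,\frac{v_2(-x_1,x_2)\,\overline{v_2(y)}}{\alpha_*};
\end{equation*}
applying Lemma~\ref{conjugate vs reflection} twice (once to rewrite $v_2(-x_1,x_2)=\tau\, v_1(x_1,x_2)$ and once to rewrite $\overline{v_2(y)}=\bar\tau\,\overline{v_1(y)}$ using $y_1=0$ and $|\tau|=1$) collapses the prefactor $\tau\bar\tau=1$ and produces exactly the right-going leading term $-i\,v_1(x)\overline{v_1(y)}/\alpha_*$ of \eqref{eq-73}. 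The analogous computation at $x_1\to-\infty$ matches the left-going leading term of \eqref{eq73}.

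\smallskip

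\noindent\textbf{Step 3 (uniqueness).} Now $\widetilde G-G(\cdot,y;\lambda_*)$ solves the homogeneous waveguide problem with both Dirichlet/Neumann conditions satisfied and with no outgoing amplitudes (the $v_1,v_2$ components cancel by Step~2), hence it decays exponentially at $\pm\infty$. By Assumption~\ref{assump0}(3), $\lambda_*$ is not in the point spectrum of the periodic structure, so no such nontrivial exponentially decaying solution exists; thus $\widetilde G\equiv G(\cdot,y;\lambda_*)$, which is \eqref{eq80_1}.

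\smallskip

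\noindent\textbf{Main obstacle.} The delicate point is Step~2: one must show that the two $\tau$-phases coming from Lemma~\ref{conjugate vs reflection} conspire to produce precisely the correct outgoing mode at each infinity, which uses in an essential way that $y$ lies on the axis of symmetry $\Gamma$. Any $y_1\neq 0$ would break the exact cancellation of the source under reflection, and an off-axis source would require a genuinely different argument (or would simply be false). The reciprocity of Step~1, although standard in spirit, also requires care because the Green's function is not in $L^2$ at infinity; the exact cancellation of the leading $v_1\partial v_1-v_1\partial v_1$ term is what makes the classical Sommerfeld-type argument go through in the waveguide setting.
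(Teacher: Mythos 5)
Your Step~1 coincides with the paper's proof of \eqref{eq80} in every detail: Green's identity on the truncated strip, insertion of the decompositions \eqref{eq-73}--\eqref{eq73}, exponential decay of the cross terms involving $G_0^{\pm}$, and the trivial cancellation of the $v_1\partial_{x_1}v_1-v_1\partial_{x_1}v_1$ term (and its $v_2$ analogue) on $\Gamma_{\pm N}$. For \eqref{eq80_1}, the paper only remarks that the proof ``follows similarly,'' which presumably means another Green's-identity computation pairing $G(\cdot,y)$ against the reflected function $G((-\,\cdot_1,\cdot_2),y)$; your Steps~2--3 take a genuinely different route, arguing by reflection symmetry of the domain and source, explicit matching of the radiation behaviour at $\pm\infty$, and then uniqueness via Assumption~\ref{assump0}(3). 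Both routes hinge on the same computation (your ``main obstacle''): since $y\in\Gamma$, the factor $v_2(-x_1,x_2)\overline{v_2(y)}=\tau\bar\tau\,v_1(x)\overline{v_1(y)}=v_1(x)\overline{v_1(y)}$ collapses exactly, which is what makes the reflected Green's function satisfy the same outgoing condition as $G$ itself. The trade-off is that your uniqueness step invokes the hypothesis that $\lambda_*$ is not an embedded eigenvalue, whereas a direct Green's-identity argument (pairing the exponentially decaying difference $\widetilde G-G$ against $G(\cdot,x)$) would close without invoking that assumption; on the other hand, your argument is more transparent and mirrors the uniqueness-type reasoning the paper itself uses later in the proof of Proposition~\ref{lem-t0}. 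The proof is correct.
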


\begin{proof}
We only prove \eqref{eq80} and the proof of \eqref{eq80_1} follows similarly. For $x\neq y$, define $u(z):=G(z,x;\lambda_*)$ and $w(z):=G(z,y;\lambda_*)$. Let $\Omega_{-N,N}:=\Omega\bigcap \big((-N,N)\times (0,\frac{1}{2})$\big) where $N$ is a positive integer.
Then $x,y\in\Omega_{N}$ for sufficiently large $N$. An integration by part yields
\begin{equation} \label{eq134}
    \begin{aligned}
    w(x)-u(y)
    &=\int_{\Omega_{-N,N}}(\Delta+\lambda_*)u(z)w(z)-(\Delta+\lambda_*)w(z)u(z)dz \\
    &=\int_{\Gamma_N}\left(\frac{\partial u}{\partial z_1}w-\frac{\partial w}{\partial z_1}u\right)dz_2
    -\int_{\Gamma_{-N}}\left(\frac{\partial u}{\partial z_1}w-\frac{\partial w}{\partial z_1}u\right)dz_2.
    \end{aligned}
\end{equation}
By \eqref{eq-73} and Lemma \ref{orthogonality of the propagating mode}, as $N\to +\infty$, we have
\begin{equation*}
    \lim_{N\to +\infty}\int_{\Gamma_N}\frac{\partial u}{\partial z_1}w-\frac{\partial w}{\partial z_1}udz_2
    =-\frac{1}{\alpha_*^2}\left(\overline{v_1(x)\cdot v_1(y)}\int_{\Gamma}\frac{\partial v_1}{\partial z_1}v_1dz_2
    -\overline{v_1(x)\cdot v_1(y)}\int_{\Gamma}\frac{\partial v_1}{\partial z_1}v_1dz_2\right)=0.
\end{equation*}
Similarly,
\begin{equation*}
    \lim_{N\to +\infty}\int_{\Gamma_{-N}}\frac{\partial u}{\partial z_1}w-\frac{\partial w}{\partial z_1}udz_2
    =-\frac{1}{\alpha_*^2}\left(\overline{v_2(x)\cdot v_2(y)}\int_{\Gamma}\frac{\partial v_2}{\partial z_1}v_2 dz_2
    -\overline{v_2(x)\cdot v_2(y)}\int_{\Gamma}\frac{\partial v_2}{\partial z_1}v_2 dz_2\right)=0.
\end{equation*}
Thus $w(x)=u(y)$, which implies $G(y,x;\lambda_*)=G(x,y;\lambda_*)$ from definition of $w$ and $u$.
\end{proof}

Finally, as an application of Green's function, we present a representation formula for the solution to the Helmholtz equation in the semi-infinite structure $\Omega_+:=\{x=(x_1,x_2):x\in\Omega,x_1>0\}$. Denote
\begin{equation*}
    \mathcal{V}^+(\lambda):=\big\{u\in H^1_{loc}(\Omega^+):
    (\Delta+\lambda)u(x)=0\; \mbox{in} \; \Omega^+,
    \frac{\partial u}{\partial x_2}\big|_{\Gamma_{-}\bigcup\Gamma_{+}}=0, u\big|_{\partial D_n}= 0, n\geq 1\big\}.
\end{equation*}

\begin{proposition}\label{representation formula}
If $u\in\mathcal{V}^+(\lambda_*)$ satisfies the right-going condition such that
\begin{equation*} \label{eq81}
u(x)=c_{1}v_1(x)+u^+_0(x)\text{ for some $c_{1}\in\mathbb{C}$ and $u^+_0(x)$ decays exponentially as $x_1\to +\infty$},
\end{equation*}
then
\begin{equation}  \label{eq-repre-u}
u(x)=2\int_{\Gamma}G(x,y;\lambda_*)\frac{\partial u}{\partial x_1}(0^+,y_2)dy_2,\quad x\in\Omega^+ .
\end{equation}
\end{proposition}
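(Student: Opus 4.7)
The plan is to extend $u$ across $\Gamma$ by even reflection, show that the resulting field is a distributional solution of the Helmholtz equation on the full waveguide $\Omega$ whose source is concentrated on $\Gamma$, and then read off \eqref{eq-repre-u} from the waveguide Green's function $G(x,y;\lambda_*)$.

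Concretely, I would define $\tilde u(x_1,x_2):=u(|x_1|,x_2)$ on $\Omega$. Assumption \ref{hypo_reflection symmetry} makes the array of obstacles symmetric about $\Gamma$, so $\tilde u$ vanishes on every $\partial D_n$ and inherits the Neumann condition on $\Gamma_\pm$. The function is continuous across $\Gamma$, while its $x_1$-derivative jumps by $2\,\partial_{x_1}u(0^+,x_2)$. A standard integration by parts against a test function $\phi\in C^\infty_c(\Omega)$ (splitting into $\Omega^\pm$, integrating twice by parts on each piece, and adding) then shows that
\[
(\Delta+\lambda_*)\tilde u \;=\; 2\,\partial_{x_1}u(0^+,\cdot)\,\tilde\delta_{\Gamma}\quad\text{on }\Omega,
\]
where $\tilde\delta_\Gamma$ denotes the surface Dirac measure on $\Gamma$.

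Next I would apply Green's second identity to $\tilde u$ and $G(x,\cdot;\lambda_*)$ on the truncated domain $\Omega_N:=\Omega\cap\{|y_1|<N\}$, for $x\in\Omega^+$ and $N$ large. Using the distributional equations satisfied by $G$ and by $\tilde u$, the interior contribution collapses exactly to $u(x)-2\int_\Gamma G(x,(0,y_2);\lambda_*)\,\partial_{y_1}u(0^+,y_2)\,dy_2$. The surface contributions from $\Gamma_\pm$ and from $\cup_n\partial D_n$ vanish identically because $\tilde u$ and $G$ share the same Neumann resp.\ Dirichlet data there, so the whole identity reduces to showing that the two lateral integrals on $\Gamma_{\pm N}$ tend to zero as $N\to\infty$.

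This last step is the main technical point. By hypothesis $u=c_1v_1+u^+_0$ with $u^+_0$ exponentially decaying as $x_1\to+\infty$, while Lemma \ref{conjugate vs reflection} gives $\tilde u(x_1,x_2)=c_1\bar\tau\,v_2(x_1,x_2)+u^+_0(-x_1,x_2)$ as $x_1\to-\infty$, so $\tilde u$ carries exactly the right-going and left-going radiation tails prescribed for $G$ in \eqref{eq-73}--\eqref{eq73}. Splitting both $\tilde u$ and $G$ into propagating and exponentially-decaying parts, the pure propagating-times-propagating part of $\tilde u\,\partial_{y_1}G-G\,\partial_{y_1}\tilde u$ cancels pointwise on $\Gamma_{\pm N}$ (on $\Gamma_N$ the non-decaying part of $G$ has the separable form $-iv_1(y)\overline{v_1(x)}/\alpha_*$ by \eqref{eq-73} combined with the symmetry \eqref{eq80}, which makes the two halves of the Wronskian identical; the analogous statement with $v_2$ holds on $\Gamma_{-N}$). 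Each cross term involving a decaying factor is handled by the constant-in-$x_1$ Wronskian argument from the proof of Lemma \ref{orthogonality of the propagating mode}: the $x_1$-independence of the sesquilinear form $q(\cdot,\cdot;x_1)$ together with the exponential decay of one factor forces every such integral to be identically zero. Assembling these estimates gives $\int_{\Gamma_{\pm N}}[\tilde u\,\partial_{y_1}G-G\,\partial_{y_1}\tilde u]\,dy_2\to 0$, and the identity closes to \eqref{eq-repre-u}.
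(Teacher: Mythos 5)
Your proposal is correct and follows essentially the same route as the paper: both use the even reflection $\tilde u$, apply Green's identity against $G(\cdot,\cdot;\lambda_*)$ on the truncated waveguide, and kill the lateral boundary terms by observing that the non-decaying tails of $\tilde u$ and $G$ are both proportional to $v_1$ on $\Gamma_N$ (resp.\ $v_2$ on $\Gamma_{-N}$, via Lemma \ref{conjugate vs reflection}), so the Wronskian cancels pointwise while the remaining cross terms decay. The only packaging difference is that you first record the distributional identity $(\Delta+\lambda_*)\tilde u = 2\,\partial_{x_1}u(0^+,\cdot)\,\tilde\delta_\Gamma$ and then apply Green's identity once on $\Omega_N$, whereas the paper integrates by parts separately on $\Omega_{0,N}$ and $\Omega_{-N,0}$ and adds the two resulting interface identities; these computations are equivalent.
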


\begin{proof}[Proof]
Suppose $u\in\mathcal{V}^+(\lambda_*)$. Define the following even extension of $u$:
\begin{equation} \label{eq82}
    \tilde{u}(x_1,x_2)=\left\{
    \begin{aligned}
    &u(x_1,x_2),\quad x_1\geq 0, \\
    &u(-x_1,x_2),\quad x_1<0.
    \end{aligned}
    \right.
\end{equation}
Let $\Omega_{x,y}:=\Omega\bigcap ((x,y)\times (0,\frac{1}{2}))$. For $y=(y_1,y_2)\in\Omega^+$ and positve integer $N$, an integration by parts shows that
\begin{equation*}
    \begin{aligned}
    \tilde{u}(y)&=\int_{\Omega_{0,N}}(\Delta_x+\lambda_*)G(x,y;\lambda_*)\tilde{u}(x)-(\Delta_x+\lambda_*)\tilde{u}(x)G(x,y;\lambda_*)dx \\
    &=\int_{\Gamma_N}\left(
    \frac{\partial}{\partial x_1}G(x,y;\lambda_*)\tilde{u}(x)-\frac{\partial}{\partial x_1}\tilde{u}(x)G(x,y;\lambda_*)
    \right)dx_2 \\
    &\quad -\int_{\Gamma}\left(
    \frac{\partial}{\partial x_1}G(x,y;\lambda_*)\tilde{u}(0^+,x_2)-\frac{\partial}{\partial x_1}\tilde{u}(0^+,x_2)G(x,y;\lambda_*)
    \right)dx_2,
    \end{aligned}
\end{equation*}
where $\tilde{u}(0^+,x_2)=u(x)\big|_{\Gamma}$. By the right-going condition of $u(\cdot)$, we have
\begin{equation*}
    \begin{aligned}
    \lim_{N\to +\infty}\int_{\Gamma_N}\Big(
    \frac{\partial}{\partial x_1}G(x&,y;\lambda_*)\tilde{u}(x)-\frac{\partial}{\partial x_1}\tilde{u}(x)G(x,y;\lambda_*)
    \Big)dx_2 \\
    &=-\frac{ic_1}{\alpha_*}\overline{v_1(y)}\lim_{N\to +\infty}\int_{\Gamma_N}\left(
    v_1(x)\frac{\partial}{\partial x_1} v_1(x)-v_1(x)\frac{\partial}{\partial x_1}v_1(x)
    \right)dx_2=0.
    \end{aligned}
\end{equation*}
Therefore,
\begin{equation} \label{eq83}
\begin{aligned}
\tilde{u}(y)
&=-\int_{\Gamma}\left(
\frac{\partial}{\partial x_1}G(x,y;\lambda_*)\tilde{u}(0^+,x_2)-\frac{\partial}{\partial x_1}\tilde{u}(0^+,x_2)G(x,y;\lambda_*)
\right)dx_2 \\
&=-\int_{\Gamma}\left(
\frac{\partial}{\partial x_1}G(x,y;\lambda_*)u(0^+,x_2)-\frac{\partial}{\partial x_1}u(0^+,x_2)G(x,y;\lambda_*)
\right)dx_2.
\end{aligned}
\end{equation}
Similarly, an integration by parts in $\Omega_{-N,0}$ yields
\begin{equation*} \label{eq84}
    \begin{aligned}
    0&=\int_{\Omega_{-N,0}}(\Delta_x+\lambda_*)G(x,y;\lambda_*)\tilde{u}(x)-(\Delta_x+\lambda_*)\tilde{u}(x)G(x,y;\lambda_*)dx \\
    &=-\int_{\Gamma_{-N}}\left(
    \frac{\partial}{\partial x_1}G(x,y;\lambda_*)\tilde{u}(x)-\frac{\partial}{\partial x_1}\tilde{u}(x)G(x,y;\lambda_*)
    \right)dx_2 \\
    &\quad +\int_{\Gamma}\left(
    \frac{\partial}{\partial x_1}G(x,y;\lambda_*)\tilde{u}(0^-,x_2)-\frac{\partial}{\partial x_1}\tilde{u}(0^-,x_2)G(x,y;\lambda_*)
    \right)dx_2 
      \end{aligned}
\end{equation*}
By letting $N\to \infty$, we have
\begin{equation*}
    \begin{aligned}
    \lim_{N\to +\infty}\int_{\Gamma_{-N}}\Big(
    &\frac{\partial}{\partial x_1}G(x,y;\lambda_*)\tilde{u}(x)-\frac{\partial}{\partial x_1}\tilde{u}(x)G(x,y;\lambda_*)
    \Big)dx_2 \\
    &=-\frac{ic_1}{\alpha_*}\overline{v_2(y)}\lim_{N\to +\infty}\int_{\Gamma_{-N}}\left(
    v_1(-x_1,x_2)\frac{\partial v_2}{\partial x_1}(x)+v_2(x)\frac{\partial v_1}{\partial x_1}(-x_1,x_2)
    \right)dx_2 \\
    &=-\frac{ic_1}{\tau\alpha_*}\overline{v_2(y)}\lim_{N\to +\infty}\int_{\Gamma_{-N}}\left(
    v_2(x)\frac{\partial v_2}{\partial x_1}(x)-v_2(x)\frac{\partial v_2}{\partial x_1}(x)
    \right)dx_2=0,
    \end{aligned}
\end{equation*}
where \eqref{eq73} is applied in the first equality, and Lemma \ref{conjugate vs reflection} is used in the second one. Thus, we obtain 
\[
0= \int_{\Gamma}\left(
    \frac{\partial}{\partial x_1}G(x,y;\lambda_*)\tilde{u}(0^-,x_2)-\frac{\partial}{\partial x_1}\tilde{u}(0^-,x_2)G(x,y;\lambda_*)
    \right)dx_2.
\]
Note that $\tilde{u}(0^-,x_2)=u(0^+,x_2)$ and $\frac{\partial}{\partial x_1}\tilde{u}(0^-,x_2)=-\frac{\partial}{\partial x_1}u(0^+,x_2)$. Thus,
\[
0= \int_{\Gamma}\left(
    \frac{\partial}{\partial x_1}G(x,y;\lambda_*)u(0^+,x_2)+\frac{\partial}{\partial x_1}u(0^+,x_2)G(x,y;\lambda_*)
    \right)dx_2.
\]
The above equality, combined with \eqref{eq83}, gives 
the representation formula \eqref{eq-repre-u}.
\end{proof}

\subsection{The Green's functions for the perturbed periodic structures}
We introduce the Green's functions for the perturbed periodic structures in Figure 2(a)(b). Let $G_\delta (x,y;\lambda)$ be the  Green's function associated with the perturbed waveguide $\Omega_{\delta}$ in Figure 2(b) that is obtained by using the limiting absorption principle (see also Section 2.2). More precisely, $G_\delta (x,y;\lambda)$ is the unique physical solution that satisfies the following equations: 
\begin{equation} \label{eq85}
    \left\{
    \begin{aligned}
        &(\Delta_x +\lambda)G_\delta(x,y;\lambda)=\tilde\delta(x-y),\quad x,y \in \Omega_\delta,\\
        &G_\delta(x,y;\lambda)=0,\quad x \in \cup_{n\in\mathbb{Z}^*}\partial D_{n,\delta}, \\
        &\frac{\partial}{\partial x_2} G_\delta(x,y;\lambda)=0,\quad x\in \Gamma_{-}\bigcup\Gamma_{+}.
    \end{aligned}
    \right.
\end{equation}
Here again $\tilde\delta(\cdot)$ denotes the Dirac delta function. We are particularly interested in the Green's function when $\lambda$ lies in a spectral band gap of the periodic structure. 
For such $\lambda$, there is no propagating Bloch mode, thus $G_\delta (x,y;\lambda)$ decays exponentially at infinity. This gives the so-called radiation condition for $G_\delta (x,y;\lambda)$. Similarly, we define the Green's function $G_{-\delta} (x,y;\lambda)$ for the periodic structure Figure 2(a). 

We denote $\{(\lambda_{j,\delta}(p), u_{j,\delta}(x;p))\}_{j\geq 1}$ the Bloch eigenpairs of the perturbed periodic structure $\Omega_{\delta}$ for each  $p\in B=[0,2\pi]$. Then $G_{\pm\delta}$ attains the following spectral representation for $\lambda$ in band gaps:
\begin{equation} \label{eq88}
    G_{\pm\delta}(x,y;\lambda)=\frac{1}{2\pi}
    \int_{0}^{2\pi}\sum_{n=1}^{+\infty}\frac{u_{n,\pm\delta}(x;p)\overline{u_{n,\pm\delta}(y;p)}}{\lambda-\lambda_{n,\pm\delta}(p)}dp.
\end{equation}

Finally, we present two Parseval-type identities using the Bloch eigenpairs $(\lambda_{j,\delta}(p), u_{j,\delta}(x;p))$:
\begin{lemma}
For $u\in H^1(\Omega_\delta)$ given by $\displaystyle{u(x) =  \int_{0}^{2\pi}\sum_{n=1}^{\infty} a_{n,\delta}(u;p)u_{n,\delta} (x;p)dp}$,
      \begin{equation} \label{eq87_4}
          \|u\|_{L^2(\Omega_{\delta})}^2 = \int_{0}^{2\pi}\sum_{n=1}^{\infty} |a_{n,\delta}(u;p)|^2 dp,
      \end{equation}
      \begin{equation} \label{eq87_5}
          \|u\|_{H^1(\Omega_{\delta})}^2 = \int_{0}^{2\pi}\sum_{n=1}^{\infty}(1+ \lambda_{n,\delta}(p)) |a_{n,\delta}(u;p)|^2 dp.
      \end{equation}
\end{lemma}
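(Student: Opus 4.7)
The plan is to establish both Parseval-type identities by combining the unitarity of the Floquet--Bloch transform with Parseval's identity in each Bloch fiber, together with an integration by parts to convert the Dirichlet form into the eigenvalue. First, I would recall that the Floquet--Bloch transform $\mathcal{F}:L^2(\Omega_\delta)\to L^2(B;L^2(Y_\delta))$ is a unitary isomorphism (with the standard $1/\sqrt{2\pi}$ normalization), where $B=[0,2\pi]$ is the Brillouin zone and $Y_\delta$ is a fundamental cell of $\Omega_\delta$. For each fixed $p$, the fiber operator $\mathcal{L}_\delta(p)$ has compact resolvent on the space of quasi-periodic $H^1$-functions satisfying the Neumann/Dirichlet boundary conditions, so the normalized eigenfunctions $\{u_{n,\delta}(\cdot;p)\}_{n\geq 1}$ form a complete orthonormal basis of that fiber space. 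The representation of $u$ in the statement is exactly the inverse Floquet--Bloch transform composed with the spectral decomposition on each fiber.

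For the identity \eqref{eq87_4}, unitarity of $\mathcal{F}$ gives $\|u\|_{L^2(\Omega_\delta)}^2=\int_0^{2\pi}\|(\mathcal{F}u)(\cdot,p)\|_{L^2(Y_\delta)}^2\,dp$. Expanding $(\mathcal{F}u)(\cdot,p)=\sum_{n\geq 1}a_{n,\delta}(u;p)u_{n,\delta}(\cdot;p)$ in the orthonormal basis of the fiber and applying Parseval's identity inside each fiber yields $\|(\mathcal{F}u)(\cdot,p)\|_{L^2(Y_\delta)}^2=\sum_n|a_{n,\delta}(u;p)|^2$, and integration in $p$ produces \eqref{eq87_4}.

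For the identity \eqref{eq87_5}, the key step is to show that the Bloch eigenfunctions are also orthogonal in the Dirichlet form at each $p$. Since $-\Delta u_{n,\delta}(\cdot;p)=\lambda_{n,\delta}(p)u_{n,\delta}(\cdot;p)$, integration by parts on $Y_\delta$ gives
\[
\int_{Y_\delta}\nabla u_{n,\delta}(\cdot;p)\cdot\overline{\nabla u_{m,\delta}(\cdot;p)}\,dx=\lambda_{n,\delta}(p)\,\delta_{nm},
\]
where the boundary contributions on $\Gamma_\pm$ vanish by the Neumann condition, those on $\partial \tilde D_{n,\delta}$ vanish by the Dirichlet condition, and the contributions on the two $x_1$-faces of $Y_\delta$ cancel pairwise via the quasi-periodicity (the $e^{\pm ip}$ phases match under complex conjugation). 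Summing with the coefficients $a_{n,\delta}(u;p)$ therefore gives $\|\nabla(\mathcal{F}u)(\cdot,p)\|_{L^2(Y_\delta)}^2=\sum_n \lambda_{n,\delta}(p)|a_{n,\delta}(u;p)|^2$, and integrating in $p$ and adding the $L^2$ identity produces \eqref{eq87_5}.

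The calculation is essentially routine, and I do not expect a genuine obstacle. The only delicate point is to verify that $u\in H^1(\Omega_\delta)$ translates into $(\mathcal{F}u)(\cdot,p)$ lying, for almost every $p$, in the quasi-periodic form domain on $Y_\delta$ so that the fiberwise integration by parts is justified; this follows from the standard Plancherel-type characterization of $H^s$-regularity under the Floquet--Bloch transform applied to this waveguide geometry with mixed boundary conditions.
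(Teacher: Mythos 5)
Your proposal is correct and follows essentially the same route as the paper: the paper also derives \eqref{eq87_4} from the Parseval identity for the Floquet transform and derives \eqref{eq87_5} from the fiberwise Dirichlet-form identity $\int_{C_\delta}\nabla u_{n,\delta}(x;p)\cdot\overline{\nabla u_{m,\delta}(x;p)}\,dx=\lambda_{n,\delta}(p)\delta_{nm}$. You simply supply the integration-by-parts and boundary-cancellation details that the paper states without proof.
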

\begin{proof}
Let $C_{\delta}:=Y\cap \Omega=(0,1)\times (0,\frac{1}{2})\backslash\overline{D_1\cup D_2}$. Then \eqref{eq87_4} can be derived from the Parseval identity of the Floquet transform \cite{kuchment2016overview}. On the other hand, \eqref{eq87_5} follows from the identity
\begin{equation*}
    \int_{C_\delta}\nabla u_{n,\delta}(x;p)\cdot\overline{\nabla u_{m,\delta}(x;p)}dx=\lambda_{n,\delta}(p)\int_{C_\delta}u_{n,\delta}(x;p)\overline{u_{m,\delta}(x;p)}dx=\delta_{nm}\lambda_{n,\delta}(p).
\end{equation*}
\end{proof}

\subsection{Gohberg and Sigal theory}
We briefly introduce the Gohberg and Sigal theory, especially the analytic version of the generalized Rouché theorem, which is used to solve the characteristic values of integral operators. We refer to Chapter 1.5 of \cite{ammari2018mathematical} for a thorough exposition of the topic.

Let $X$ and $Y$ be two Banach spaces. An operator $A\in\mathcal{B}(X,Y)$ is said to be \textbf{Fredholm} if the subspace $\text{Ker} A$ is finite-dimensional and the subspace $\text{Ran} A$ is closed in $Y$ and of finite codimension. The index of a Fredholm operator $A$ is defined as
\begin{equation*}
    \text{ind} (A):=\text{dim Ker}A-\text{codim Ran} A.
\end{equation*}
The following propositions show the stability of the index.
\begin{proposition}
If $A:X\to Y$ is a Fredholm operator and $K:X\to Y$ is compact, then  $A+K$ is a Fredholm operator and $\text{ind}(A+K)=\text{ind} (A)$.
\end{proposition}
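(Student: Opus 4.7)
The plan is to handle the two assertions separately: the Fredholm property of $A+K$ follows from a parametrix construction (Atkinson's theorem), and the equality of indices comes from a homotopy argument that reduces to a local-constancy lemma for $\text{ind}$.

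First I would invoke the Atkinson characterization of Fredholm operators: $A \in \mathcal{B}(X,Y)$ is Fredholm if and only if there exists a bounded parametrix $B : Y \to X$ together with compact operators $K_1 \in \mathcal{B}(X)$ and $K_2 \in \mathcal{B}(Y)$ such that $BA = I_X + K_1$ and $AB = I_Y + K_2$. Given such a $B$, the identities
\[
B(A+K) = I_X + K_1 + BK, \qquad (A+K)B = I_Y + K_2 + KB
\]
show that $B$ is also a parametrix for $A+K$ modulo compact operators, because the compact operators form a two-sided ideal in $\mathcal{B}(X)$ and $\mathcal{B}(Y)$ respectively, so $BK$ and $KB$ are compact. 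The converse direction of Atkinson's criterion then yields that $A+K$ is Fredholm.

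For the index identity, I would introduce the continuous path $A_t := A + tK$ for $t \in [0,1]$. By the previous step each $A_t$ is Fredholm. It then suffices to prove that $t \mapsto \text{ind}(A_t)$ is continuous: as an integer-valued function on the connected interval $[0,1]$ it must then be constant, which gives $\text{ind}(A+K) = \text{ind}(A_1) = \text{ind}(A_0) = \text{ind}(A)$. This continuity is a consequence of the local-constancy of the index on the (open) set of Fredholm operators in $\mathcal{B}(X,Y)$.

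The main obstacle is precisely this local-constancy lemma, which I would establish by a finite-dimensional reduction. Using the topological splittings $X = \ker T_0 \oplus X_1$ and $Y = Y_0 \oplus \text{Ran}(T_0)$, with $Y_0$ a finite-dimensional complement to the closed range $\text{Ran}(T_0)$, the operator $T_0$ is represented by a block-diagonal matrix whose only nonzero entry $T_0|_{X_1} : X_1 \to \text{Ran}(T_0)$ is an isomorphism. For $T$ close to $T_0$ the corresponding $(X_1, \text{Ran}(T_0))$ block is still invertible by the Neumann series, so a Schur-complement manipulation identifies the kernel and a complement of the range of $T$ with the kernel and cokernel of a single linear map $S : \ker T_0 \to Y_0$ between fixed finite-dimensional spaces. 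Since any linear map between finite-dimensional vector spaces $V$ and $W$ satisfies $\dim \ker - \text{codim Ran} = \dim V - \dim W$, one obtains $\text{ind}(T) = \dim \ker T_0 - \dim Y_0 = \text{ind}(T_0)$, closing the argument.
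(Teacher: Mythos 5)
Your proof is correct and complete. The paper itself does not supply a proof of this proposition---it is stated as a standard background fact in the review of Gohberg--Sigal theory, with a pointer to Chapter~1.5 of the Ammari et al.\ reference---so there is no paper-internal argument to compare against. What you have written is the canonical textbook proof: Atkinson's parametrix characterization combined with the compactness-ideal property yields the Fredholm property of $A+K$, and the homotopy $A_t = A + tK$ together with local constancy of the index (established via the finite-dimensional Schur-complement reduction around the splittings $X = \ker T_0 \oplus X_1$ and $Y = Y_0 \oplus \mathrm{Ran}(T_0)$) gives $\mathrm{ind}(A+K)=\mathrm{ind}(A)$. One small point worth making explicit in the local-constancy lemma: the equality $\mathrm{ind}(T) = \dim\ker T_0 - \dim Y_0$ relies on the Schur complement $S$ being a map between the \emph{same} fixed finite-dimensional spaces for all $T$ in a neighborhood of $T_0$, which your block-decomposition does deliver once the invertibility of the $(X_1,\mathrm{Ran}(T_0))$ block is secured by the Neumann series; as stated the argument is sound.
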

\begin{proposition} \label{stability of index under norm perturbation}
If $A:X\to Y$ is a Fredholm operator, then there exists $\epsilon>0$ such that for $B\in\mathcal{B}(X,Y)$ and $\|B\|<\epsilon$, $A+B$ is a Fredholm operator and
\begin{equation*}
    \text{ind}(A+B)=\text{ind}(A).
\end{equation*}
\end{proposition}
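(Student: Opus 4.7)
The plan is to exploit the block structure of $A$ that arises from splitting the domain and codomain along the finite-dimensional kernel of $A$ and a finite-codimensional complement of $\text{Ran}(A)$, and then to analyze $A+B$ through a Schur-complement-type factorization that isolates a linear map between two finite-dimensional spaces.

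First, I would fix closed topological direct sum decompositions $X=\text{Ker}(A)\oplus X_1$ and $Y=Y_1\oplus \text{Ran}(A)$. The first exists because every finite-dimensional subspace of a Banach space admits a closed complement; for the second, any algebraic complement of $\text{Ran}(A)$ is automatically closed, since $\text{Ran}(A)$ is closed of finite codimension and the complement is thus finite-dimensional. Relative to these decompositions $A$ takes the block form $\begin{pmatrix}0 & 0\\ 0 & A_1\end{pmatrix}$, where $A_1:X_1\to \text{Ran}(A)$ is a continuous bijection, hence a topological isomorphism by the open mapping theorem.

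Next I would write $A+B$ in the same block form as $\begin{pmatrix}B_{00} & B_{01}\\ B_{10} & A_1+B_{11}\end{pmatrix}$, where the $B_{ij}$ are the components of $B$ cut out by the bounded projections associated with the decompositions and therefore satisfy $\|B_{ij}\|\leq C\|B\|$ for a universal constant $C$. Choosing $\epsilon$ smaller than $1/(2C\|A_1^{-1}\|)$, a Neumann series argument shows that $A_1+B_{11}$ remains an isomorphism of $X_1$ onto $\text{Ran}(A)$ whenever $\|B\|<\epsilon$. The central step is then the explicit factorization
\begin{equation*}
A+B=\begin{pmatrix}I & B_{01}(A_1+B_{11})^{-1}\\ 0 & I\end{pmatrix}\begin{pmatrix}S & 0\\ 0 & A_1+B_{11}\end{pmatrix}\begin{pmatrix}I & 0\\ (A_1+B_{11})^{-1}B_{10} & I\end{pmatrix},
\end{equation*}
where the Schur complement $S:=B_{00}-B_{01}(A_1+B_{11})^{-1}B_{10}$ is a bounded operator between the finite-dimensional spaces $\text{Ker}(A)$ and $Y_1$. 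The two outer factors are block-triangular with identity on the diagonal and are therefore bounded isomorphisms on $X$ and $Y$ respectively.

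Finally, because the middle factor is block diagonal with the invertible piece $A_1+B_{11}$, the kernel and cokernel of $A+B$ are in natural bijection with those of $S$, so in particular $A+B$ is Fredholm. The index then satisfies
\begin{equation*}
\text{ind}(A+B)=\text{ind}(S)=\dim\text{Ker}(A)-\text{codim}\,\text{Ran}(A)=\text{ind}(A),
\end{equation*}
since any linear map between finite-dimensional spaces of dimensions $m$ and $n$ has index $m-n$ by the rank--nullity theorem, independent of the map itself. The only delicate ingredient is the existence of the closed complement $X_1$ and the boundedness of the projections feeding into the Schur block; both are automatic because $\text{Ker}(A)$ is finite-dimensional and $\text{Ran}(A)$ is closed with finite codimension. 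Everything else is a routine consequence of the open mapping theorem and the Neumann series.
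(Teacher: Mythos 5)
Your argument is correct. The paper itself does not prove this proposition; it records it as a known fact from Gohberg--Sigal theory and cites Chapter~1.5 of the Ammari et al.\ monograph, so there is no ``paper proof'' to compare against. Your block-decomposition and Schur-complement route is one of the two standard ways to establish the stability of the Fredholm property and index under small perturbations (the other being the Atkinson characterization of Fredholm operators as those invertible modulo finite-rank, combined with the openness of the set of invertible operators). Every step in your write-up holds: the closed complement $X_1$ of $\operatorname{Ker}(A)$ exists because the kernel is finite-dimensional, the algebraic complement $Y_1$ of the closed finite-codimensional range is automatically closed, the projections onto the summands are bounded so that $\|B_{ij}\|\lesssim\|B\|$, $A_1+B_{11}$ is invertible by the Neumann series once $\|B\|<1/(2C\|A_1^{-1}\|)$, the triangular outer factors in the factorization are automorphisms of $X$ and $Y$ respectively, and the reduction of $\operatorname{ind}(A+B)$ to $\operatorname{ind}(S)$ followed by rank--nullity gives $\operatorname{ind}(S)=\dim\operatorname{Ker}(A)-\operatorname{codim}\operatorname{Ran}(A)$, independent of $S$. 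The one thing you could make slightly more explicit is that the factorization shows $\operatorname{Ran}(A+B)$ is closed (the middle factor has range $\operatorname{Ran}(S)\oplus\operatorname{Ran}(A)$, a closed subspace, and the left factor is a homeomorphism), but this is clearly what you intend and is immediate from what you wrote.
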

Let $\mathfrak{U}(z_0)$ be the set of all operator-valued functions with values in $\mathcal{B}(X,Y)$, which are holomorphic in some neighborhood of $z_0$, except possibly at $z_0$. Then the point $z_0$ is called a \textbf{characteristic value} of $A(z)\in\mathfrak{U}(z_0)$ if there exists a vector-valued function $\phi(z)$ with values in $X$ such that
\begin{enumerate}
    \item $\phi(z)$ is holomorphic at $z_0$ and $\phi(z_0)\neq 0$,
    \item $A(z)\phi(z)$ is holomorphic at $z_0$ and vanishes at this point.
\end{enumerate}
Here $\phi(z)$ is called a \textbf{root function} of $A(z)$ associated with the characteristic value $z_0$, and $\phi(z_0)$ is called an \textbf{eigenvector}. By this definition, there exists an integer $m(\phi)\geq 1$ and a vector-valued function $\psi(z)\in Y$, holomorphic at $z_0$, such that
\begin{equation*}
    A(z)\phi(z)=(z-z_0)^{m(\phi)}\psi(z),\quad \psi(z_0)\neq 0.
\end{equation*}
The number $m(\phi)$ is called the \textbf{multiplicity} of the root function $\phi(z)$. For $\phi_0\in\text{Ker}A(z_0)$, the \textbf{rank} of $\phi_0$, which is denoted by $\text{rank}(\phi_0)$, is defined as the maximum of the multiplicities of all root functions $\phi(z)$ with $\phi(z_0)=\phi_0$.

Suppose that $n=\text{dim Ker}A(z_0)<+\infty$ and the ranks of all vectors in $\text{Ker}A(z_0)$ are finite. A system of eigenvectors $\phi_0^j$ ($j=1,2,\cdots,n$) is called a \textbf{canonical system of eigenvectors} of $A(z)$ associated to $z_0$ if for $j=1,2,\cdots,n$, $\text{rank}(\phi_0^j)$ is the maximum of the ranks of all eigenvectors in the direct complement in $\text{Ker}A(z_0)$ of the linear span of the vectors $\phi_0^{1},\cdots,\phi_0^{n-1}$. We call
\begin{equation*}
    N(A(z_0)):=\sum_{j=1}^{n}\text{rank}(\phi_0^{j})
\end{equation*}
the \textbf{null multiplicity} of the characteristic value $z_0$ of $A(z)$. Suppose that $A^{-1}(z)$ exists and is holomorphic in some neighborhood of $z_0$, except possibly at $z_0$. Then the number
\begin{equation*}
    M(A(z_0)):= N(A(z_0))- N(A^{-1}(z_0))
\end{equation*}
is called the \textbf{multiplicity} of $z_0$.

Now, let $V$ be a simply connected bounded domain with a rectifiable boundary $\partial V$. For an analytic operator-valued (or meromorphic operator-valued as in \cite{ammari2018mathematical}) function $A(z)$, it is \textbf{normal with respect to $\partial V$} if 1)$A(z)$ is Fredholm for all $z\in V$; 2)$A^{-1}(z)$ exists in $\overline{V}$, except for a finite number of points; 3)$A(z)$ is continuous for $z\in \partial V$. For such a function $A(z)$, the full multiplicity $\mathcal{M}(A(z);\partial V)$ counts the number of characteristic values of $A(z)$ in $V$ (computed with their multiplicities). Namely,
\begin{equation*}
    \mathcal{M}(A(z);\partial V):=\sum_{i=1}^{\sigma}M(A(z_i)),
\end{equation*}
where $z_i$ ($i=1,2,\cdots,\sigma$) are all characteristic values of $A(z)$ lying in $V$.
The generalized Rouché theorem is stated as follows:
\begin{theorem}[Generalized Rouché theorem]
\label{generalized rouche theorem}
Let $A(z)$ and $B(z)$ be two analytic operator-valued functions that are normal with respect to $\partial V$, and $B(z)$ is continuous on $\partial V$ satisfying the condition
\begin{equation*}
    \|A^{-1}(z)B(z)\|_{\mathcal{B}(X,Y)}<1,\quad z\in \partial V.
\end{equation*}
Then $A(z)+B(z)$ is also normal with respect to $\partial V$ and
\begin{equation*}
    \mathcal{M}(A(z);\partial V)=\mathcal{M}(A(z)+B(z);\partial V).
\end{equation*}
\end{theorem}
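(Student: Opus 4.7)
The plan is to pass from the integer-valued quantity $\mathcal{M}(\cdot;\partial V)$ to an analytic object via the operator-valued logarithmic residue formula, and then exploit a one-parameter homotopy together with integer-valued continuity to conclude.

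First I would establish that $A(z)+B(z)$ is normal with respect to $\partial V$. On $\partial V$ factor
\begin{equation*}
A(z)+B(z) \;=\; A(z)\bigl(I + A^{-1}(z)B(z)\bigr).
\end{equation*}
Since $\|A^{-1}(z)B(z)\|<1$ on $\partial V$, the Neumann series gives invertibility of the second factor, hence invertibility of $A+B$ on $\partial V$. Continuity of $z\mapsto A^{-1}(z)B(z)$ and compactness of $\partial V$ upgrade this to a uniform bound $\|A^{-1}(z)B(z)\|\le \kappa<1$ on $\partial V$. The Fredholm property inside $V$ follows from Proposition~2.3 applied pointwise; that $(A+B)^{-1}$ has only finitely many poles in $\overline V$ can be verified by combining analyticity of $A^{-1}$ (away from its finitely many characteristic values) with the local Smith-type factorization of $A$.

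Second, I would invoke the operator-valued logarithmic residue theorem, which states that for any analytic $C(z)$ normal with respect to $\partial V$,
\begin{equation*}
\mathcal{M}(C;\partial V) \;=\; \frac{1}{2\pi i}\,\operatorname{tr}\oint_{\partial V} C^{-1}(z)\,\frac{dC}{dz}(z)\,dz.
\end{equation*}
To establish this I would proceed locally: by the Gohberg--Sigal normal-form theorem, near each characteristic value $z_0\in V$ there exist holomorphic invertible operator-valued functions $E(z),F(z)$, a finite-rank projection $P=P_1+\cdots+P_r$ with disjoint one-dimensional ranges, and partial multiplicities $\kappa_1,\ldots,\kappa_r\in\mathbb Z$ such that
\begin{equation*}
C(z) \;=\; E(z)\Bigl(\sum_{j=1}^{r}(z-z_0)^{\kappa_j}P_j + (I-P)\Bigr)F(z).
\end{equation*}
A direct computation using the cyclicity of the trace and holomorphy of $E,F$ shows that the residue of $C^{-1}C'$ at $z_0$ is a finite-rank operator with trace $\sum_j \kappa_j = M(C(z_0))$; summing over all characteristic values in $V$ yields the global formula.

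Third, with the residue formula in hand I would deploy the homotopy $A_t(z):=A(z)+tB(z)$ for $t\in[0,1]$. The uniform bound $\|tA^{-1}(z)B(z)\|\le \kappa<1$ on $\partial V\times[0,1]$ gives
\begin{equation*}
A_t(z)^{-1} \;=\; \sum_{k=0}^{\infty}(-t)^{k}\bigl(A^{-1}(z)B(z)\bigr)^{k}A^{-1}(z),
\end{equation*}
with uniform convergence in $(z,t)$, so $A_t(z)^{-1}A_t'(z)$ is jointly continuous on $\partial V\times[0,1]$. By step two,
\begin{equation*}
\phi(t)\;:=\;\frac{1}{2\pi i}\,\operatorname{tr}\oint_{\partial V} A_t^{-1}(z)\,A_t'(z)\,dz\;=\;\mathcal{M}(A_t;\partial V).
\end{equation*}
The left-hand expression depends continuously on $t$, while the right-hand side is integer-valued; hence $\phi$ is constant on $[0,1]$, and comparing $t=0$ and $t=1$ yields $\mathcal{M}(A;\partial V)=\mathcal{M}(A+B;\partial V)$.

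The main obstacle is step two: rigorously establishing the operator-valued logarithmic residue formula. This requires the local Gohberg--Sigal Smith-type factorization and a justification that $C^{-1}C'$ has a finite-rank principal part at each characteristic value so that the trace-contour-integral is well defined and additive over poles. Once that tool is secured, the factorization, Neumann series, and integer-valued continuity arguments above close the proof.
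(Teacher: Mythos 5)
The paper itself offers no proof of this statement: Theorem \ref{generalized rouche theorem} is quoted as a classical result of Gohberg--Sigal theory and referred to Chapter 1.5 of \cite{ammari2018mathematical}. So the only meaningful benchmark is the classical proof, and your outline (operator-valued logarithmic residue formula, factorization/homotopy, integer-valued continuity) is indeed the standard strategy. Notably, the step you single out as the main obstacle --- the local Smith-type factorization and the residue computation giving the trace formula --- is the well-documented, unproblematic part. The genuine gaps are elsewhere.

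First, normality of $A+B$. The Fredholm property of $A(z)+B(z)$ at points $z$ \emph{inside} $V$ does not ``follow from Proposition 2.3 applied pointwise'': the perturbation facts available (Propositions \ref{stability of index under norm perturbation} and the compact-perturbation statement preceding it) do not apply, since $B(z)$ is neither compact nor small in norm at interior points, and the hypothesis $\|A^{-1}(z)B(z)\|<1$ holds only on $\partial V$ and cannot be pushed inside by a maximum principle because $A^{-1}$ has poles in $V$. A correct argument is genuinely more work: e.g.\ write $A+B=A\,(I+A^{-1}B)$, note that $A^{-1}B$ is finitely meromorphic with finite-rank principal parts (because $A$ is normal), so its image in the Calkin algebra extends holomorphically across the poles; subharmonicity of the essential norm then gives $\|A^{-1}(z)B(z)\|_{\mathrm{ess}}<1$ throughout $V$, and Atkinson's theorem yields Fredholmness (with the finitely many non-invertibility points handled by analytic Fredholm theory). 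Second, in your homotopy step the claim that $\phi(t)=\frac{1}{2\pi i}\operatorname{tr}\oint_{\partial V}A_t^{-1}A_t'\,dz$ ``depends continuously on $t$'' is not justified by joint operator-norm continuity of the integrand: the trace is applied to the contour integral, a finite-rank operator, and the trace is not continuous with respect to the operator norm unless the ranks are uniformly controlled --- which is essentially the quantity being estimated. The classical treatments do not rest on this bare continuity claim; they first prove the multiplicativity of the logarithmic residue, $\mathcal{M}(CD;\partial V)=\mathcal{M}(C;\partial V)+\mathcal{M}(D;\partial V)$, reduce to showing $\mathcal{M}(I+A^{-1}B;\partial V)=0$, and then exploit the finite-rank structure of the principal parts (or, in the trace-class setting, a determinant/winding-number argument). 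A minor further point: under the stated hypotheses $A'$ and $B'$ need not extend continuously to $\partial V$, so the contour integrals should be taken over a slightly shrunk contour. With these repairs your outline coincides with the cited proof; as written, steps one and three do not close.
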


\section{Band-gap opening at the Dirac point $(p_*, \lambda_*)$} 
In this section, we investigate the band-gap opening at the Dirac point $(p_*, \lambda_*)$ by perturbing the periodic waveguide in Figure 1.
The perturbed structure in Figure 2(b) is obtained by shifting the obstacle $D_n$ in Figure 1 to $D_{n,\delta}$ for each $n\in\mathbb{Z}^*$, with the mass centers given by
$$
z_{n,\delta}=\left\{
\begin{aligned}
    &z_n-\delta\bm{e}_1,\quad \text{$n$ is odd and positive, or, $n$ is even and negative},\\
    &z_n+\delta\bm{e}_1,\quad \text{$n$ is even and positive, or, $n$ is odd and negative}. 
\end{aligned}
\right.
$$
Denote the perturbed domain by $\Omega_{\delta}:=\mathbf{R}\times(0,1)\backslash \bigcup_{n\in \mathbb{Z}^*} \overline{D_{n,\delta}}$. 
The associated partial differential operator is 
\begin{equation} \label{eq21}
    \begin{aligned}
&\qquad\qquad\mathcal{L}_\delta:H_{b}^1(\Omega_\delta)\subset L^2(\Omega_\delta)\to L^2(\Omega_\delta),\quad \phi\mapsto -\Delta\phi ,\\
    &H_{b}^1(\Omega_\delta;\Delta):=\left\{u\in H^1(\Omega_\delta): \Delta u\in L^2(\Omega), \frac{\partial}{\partial x_2} u|_{\Gamma_{-}\bigcup\Gamma_{+}}=0, u|_{\partial D_{n,\delta}}=0, n\in \mathbb{Z}^*\right\}.
\end{aligned}
\end{equation}
The band structure of the operator $\mathcal{L}_{\delta}$ can be obtained by solving the following eigenvalue problem for each $p\in[0,2\pi]$:
\begin{equation} \label{eq31}
    \left\{
    \begin{aligned}
        &(\mathcal{L}_\delta -\lambda)u(x;p,\lambda)=0,\quad x \in \Omega_\delta ,\\
        &u(x;p,\lambda)=0,\quad x \in \cup_{n\in\mathbb{Z}^*} \partial D_{n,\delta}, \\
        &\frac{\partial}{\partial x_2} u(x;p,\lambda)=0,\quad x\in \Gamma_{-}\bigcup\Gamma_{+} ,\\
        &u(x+\bm{e}_1)=e^{ip}u(x),\quad \frac{\partial u}{\partial x_1}(x+\bm{e}_1)=e^{ip}\frac{\partial u}{\partial x_1}(x).
    \end{aligned}
    \right.
\end{equation}
Similarly, by replacing $\delta$ above with $-\delta$, one can obtain the eigenvalue problem for the perturbed structure shown in Figure 2(a). The corresponding partial differential operator is denoted by $\mathcal{L}_{-\delta}$.

We aim to prove that a common band gap exists for $\sigma(\mathcal{L}_{\delta})$ and $\sigma(\mathcal{L}_{-\delta})$ near the Dirac point $(p_*, \lambda_*)$ of the operator $\mathcal{L}$. Furthermore, we derive the asymptotic expansions of the dispersion relations and the corresponding Floquet-Bloch eigenmodes near the Dirac point.
In particular, we show that the eigenspaces at the band edges are swapped for $\mathcal{L}_{\delta}$ and $\mathcal{L}_{-\delta}$.

\subsection{Boundary-integral equation formulations for eigenvalue problems}
We present a boundary-integral equation formulation for the eigenvalue problem \eqref{eq31}. 
We begin with the case when $\delta=0$, which corresponds to the unperturbed periodic structure. For each $p\in[0,2\pi]$, let $G^e(x,y;p,\lambda)$ be the quasi-periodic Green's function for the empty waveguide that solves the following equations:
\begin{equation*} \label{eq25}
    \left\{
    \begin{aligned}
        &(\Delta_x +\lambda)G^e(x,y;p,\lambda)=\tilde\delta(x-y),\quad x,y \in \mathbf{R}\times(0,\frac{1}{2}),\\
        &\frac{\partial}{\partial x_2} G^e(x,y;p,\lambda)=0,\quad x\in \Gamma_{-}\bigcup\Gamma_{+},\\
        &G^e(x+\bm{e}_1,y;p,\lambda)=e^{ip}G^e(x,y;p,\lambda)\text{ for each $y$},
    \end{aligned}
    \right.
\end{equation*}
where $\tilde\delta(\cdot)$ is the Dirac delta function. $G^e(x,y;p,\lambda)$ can be expressed explicitly as a Fourier series as follows:
\begin{equation} \label{eq26}
    \begin{aligned}
G^e(x,y;p,\lambda)=&\sum_{m \in \mathbb{Z}} \sum_{n \in \mathbb{Z}} \frac{e^{\mathrm{i} p_{m}\left(x_{1}-y_{1}\right)}}{\lambda-p_{m}^{2}-(2n \pi)^{2}}\left(e^{\mathrm{i} 2n \pi\left(x_{2}-y_{2}\right)}+e^{\mathrm{i} 2n \pi\left(x_{2}+y_{2}\right)}\right) \\
=&\sum_{m \in \mathbb{Z}} \frac{e^{\mathrm{i} p_{m}\left(x_{1}-y_{1}\right)}}{\sqrt{p_{m}^{2}-\lambda}\left(e^{-\sqrt{p_{m}^{2}-\lambda}}-e^{\sqrt{p_{m}^{2}-\lambda}}\right)}\left(e^{\sqrt{p_{m}^{2}-\lambda}\left(1-2\left|x_{2}-y_{2}\right|\right)}\right.\\
&\left.+e^{-\sqrt{p_{m}^{2}-\lambda}\left(1-2\left|x_{2}-y_{2}\right|\right)}+e^{\sqrt{p_{m}^{2}-\lambda}\left(1-2\left(x_{2}+y_{2}\right)\right)}+e^{-\sqrt{p_{m}^{2}-\lambda}\left(1-2\left(x_{2}+y_{2}\right)\right)}\right),
\end{aligned}
\end{equation}
where $p_m=p+2m\pi$. We can also expand $G^e(x,y;p,\lambda)$ into the following Floquet series: 
\begin{equation} \label{eq26_1}
G^e(x,y;p,\lambda)=\sum_{n\geq 1}\frac{u_n^e(x;p)\overline{u_n^e(y;p)}}{\lambda-\lambda_n^e(p)},
\end{equation}
where $\{\lambda_n^e(p),u_n^e(x;p)\}_{n\geq 1}$ are the Bloch eigenpairs in the empty waveguide.

Using the Green's function $G^e$, we may express the eigenfunction for \eqref{eq31} as
\begin{equation} \label{eq27}
         u(x;p,\lambda)=\int_{\partial D} G^e(x,y+z_1;p,\lambda) \varphi_{1}(y) d\sigma(y)+\int_{\partial D} G^e(x,y+z_2;p,\lambda) \varphi_{2}(y) d\sigma(y),
\end{equation}
where $\varphi_i\in  H^{-\frac{1}{2}}(\partial D)$ ($i=1,2$) and $d\sigma(y)$ is area element for the boundary surface $\partial D$.  Here and after, $H^{-\frac{1}{2}}(\partial D)$ and $H^{\frac{1}{2}}(\partial D)$ are the standard Sobolev spaces defined on the closed curve $\partial D$.
By imposing the Dirichlet boundary conditions on the obstacle boundaries $\partial D_1$ and $\partial D_2$, we obtain the following homogeneous system:
\begin{equation*} 
\left\{
\begin{aligned}
\int_{\partial D} G^e(x+z_1,y+z_1;p,\lambda) \varphi_{1}(y) d\sigma(y)+\int_{\partial D} G^e(x+z_1,y+z_2;p,\lambda) \varphi_{2}(y) d\sigma(y) =0, \quad x\in \partial D, \\
\int_{\partial D} G^e(x+z_2,y+z_1;p,\lambda) \varphi_{1}(y) d\sigma(y)+\int_{\partial D} G^e(x+z_2,y+z_2;p,\lambda) \varphi_{2}(y) d\sigma(y) =0, \quad x\in \partial D.
\end{aligned}\right.
\end{equation*}

Note that $G^e(x+t\bm{e}_1,y+t\bm{e}_1;p,\lambda)=G^e(x,y;p,\lambda)$ for $t\in\mathbf{R}$ and that $z_2-z_1 = \frac{1}{2}\bm{e}_1$. The above  system is equivalent to
\begin{equation} \label{eq29}
    T(p,\lambda)\bm{\varphi}:=
    \begin{pmatrix}
    T_{11}(p,\lambda) & T_{12}(p,\lambda) \\
    T_{21}(p,\lambda) & T_{22}(p,\lambda) 
    \end{pmatrix}
    \bm{\varphi}=0,
\end{equation}
where $\bm{\varphi}=(\varphi_1\enspace \varphi_2)^T$
and the operators $T_{ij}(p, \lambda)$'s are defined by
\begin{equation*} \label{eq30}
\begin{aligned}
        &T_{11}(p,\lambda)=T_{22}(p,\lambda):\phi(x)\mapsto\int_{\partial D} G^e(x,y;p,\lambda) \phi(y) d\sigma(y), \\
        &T_{12}(p,\lambda):\phi(x)\mapsto\int_{\partial D} G^e(x,y+\frac{1}{2}\bm{e}_1;p,\lambda) \phi(y) d\sigma(y), \\
        &T_{21}(p,\lambda):\phi(x)\mapsto\int_{\partial D} G^e(x+\frac{1}{2}\bm{e}_1,y;p,\lambda) \phi(y) d\sigma(y).
    \end{aligned}
\end{equation*}
Therefore, the spectrum of the operator $\mathcal{L}(p)$ can be obtained from the characteristic values of $T(p,\lambda)$.  In addition, from the solution of \eqref{eq29}, one can recover the corresponding Bloch eigenfunction by the layer potential \eqref{eq27}. We remark that
using the standard layer potential theory, one can show that 
$T_{ij}(p,\lambda) \in \mathcal{B}\left( H^{-\frac{1}{2}}(\partial D), H^{\frac{1}{2}}(\partial D)\right)$ for $1\leq i, j \leq 2$. 

For $\delta\neq0$, following a similar procedure, the eigenvalue problem \eqref{eq31} can be formulated using the following boundary integral equations:
\begin{equation} \label{eq32}
T_\delta(p,\lambda)\bm{\varphi}_\delta:=\begin{pmatrix}
    T_{11}(p,\lambda) & T_{12,\delta}(p,\lambda) \\
    T_{21,\delta}(p,\lambda) & T_{22}(p,\lambda) 
    \end{pmatrix}
    \bm{\varphi}_\delta=0, 
\end{equation}
where the operators $T_{12,\delta}(p,\lambda)$ and $T_{12,\delta}(p,\lambda)$ are defined by
\begin{equation*} \label{eq33}
\left\{
    \begin{aligned}
        &T_{12,\delta}:\phi(x)\mapsto\int_{\partial D} G^e(x,y+(\frac{1}{2}+2\delta)\bm{e}_1;p,\lambda) \phi(y) d\sigma(y) ,\\
        &T_{21,\delta}:\phi(x)\mapsto\int_{\partial D} G^e(x+(\frac{1}{2}-2\delta)\bm{e}_1,y;p,\lambda) \phi(y) d\sigma(y).
    \end{aligned}
\right.
\end{equation*}


\begin{lemma} \label{particle integral operator at the Dirac point}
The operator $T(p_*,\lambda_*): H^{-\frac{1}{2}}(\partial D)\times H^{-\frac{1}{2}}(\partial D)  \to  H^{\frac{1}{2}}(\partial D)\times H^{\frac{1}{2}}(\partial D)$ is a Fredholm operator with zero index. Moreover, $\text{Ker}\, T(p_*,\lambda_*)=\text{span}\{ \bm{\varphi_1},\bm{\varphi_2}\}$, where $\bm{\varphi_1}$ and $\bm{\varphi_2}$ are defined in Corollary \ref{even odd mode and root function}.
\end{lemma}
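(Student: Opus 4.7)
The plan is to exploit the $2\times 2$ block structure of $T(p_*,\lambda_*)$, reduce the Fredholm question to classical single-layer potential theory on one obstacle, and then match $\text{Ker}\,T(p_*,\lambda_*)$ with the two-dimensional Dirac eigenspace $\text{span}\{\phi_1,\phi_2\}$ via Green's representation formula.

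For Fredholmness, first note that the off-diagonal kernels $G^e(x,y\pm\tfrac{1}{2}\bm{e}_1;p_*,\lambda_*)$ are smooth: since $\text{diam}(D)<1/2$, the distance from $x$ to $y\pm\tfrac{1}{2}\bm{e}_1$ and to all its lattice translates stays bounded below for $x,y\in\partial D$, so $T_{12}(p_*,\lambda_*)$ and $T_{21}(p_*,\lambda_*)$ are compact from $H^{-1/2}(\partial D)$ to $H^{1/2}(\partial D)$. For the diagonal block I would decompose
\[
G^e(x,y;p_*,\lambda_*)=\Gamma_{\lambda_*}(x-y)+R(x,y;p_*,\lambda_*),
\]
where $\Gamma_{\lambda_*}$ is the free-space fundamental solution of $-(\Delta+\lambda_*)$ on $\mathbf{R}^2$ and $R$ is smooth in a neighborhood of $\partial D\times\partial D$. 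Then $T_{11}(p_*,\lambda_*)=\mathcal{S}_D^{\lambda_*}+K$ with $\mathcal{S}_D^{\lambda_*}$ the classical single-layer operator and $K$ compact. Since $\mathcal{S}_D^{\lambda_*}:H^{-1/2}(\partial D)\to H^{1/2}(\partial D)$ is Fredholm of index zero by standard potential theory, so is $T_{11}(p_*,\lambda_*)$; combined with the compact off-diagonal entries, $T(p_*,\lambda_*)$ is Fredholm of index zero.

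To obtain the inclusion $\text{span}\{\bm{\varphi}_1,\bm{\varphi}_2\}\subseteq\text{Ker}\,T(p_*,\lambda_*)$, I would apply Green's second identity to $\phi_i$ and $G^e(x,\cdot;p_*,\lambda_*)$ on the primitive cell $Y\cap\Omega$. The boundary contributions on $\Gamma_\pm$ vanish by the Neumann conditions; those on the quasi-periodic sides $x_1=0,1$ cancel because both $\phi_i$ and $G^e$ acquire the same phase $e^{ip_*}$ while the outward normals point oppositely; and those on $\partial D_1\cup\partial D_2$ reduce, using $\phi_i|_{\partial D_j}=0$, to
\[
\phi_i(x)=\int_{\partial D_1\cup\partial D_2}G^e(x,y;p_*,\lambda_*)\,\partial_\nu\phi_i(y)\,d\sigma(y),
\]
with $\partial_\nu$ the outward normal of $D_j$. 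Letting $x\to\partial D_k$ and using $\phi_i|_{\partial D_k}=0$ gives $T(p_*,\lambda_*)\bm{\varphi}_i=0$. Linear independence of $\bm{\varphi}_1,\bm{\varphi}_2$ is inherited from the parity distinction of $\phi_1,\phi_2$ via Cauchy data uniqueness for the Helmholtz equation.

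Conversely, given $\bm{\varphi}=(\varphi_1,\varphi_2)^T\in\text{Ker}\,T(p_*,\lambda_*)$, I would define $u$ on the whole cell by the layer potential representation \eqref{eq27}. Then $u|_{Y\cap\Omega}$ solves the Helmholtz equation at $\lambda_*$ with the correct boundary and quasi-periodicity conditions and vanishes on $\partial D_1\cup\partial D_2$, so by Proposition \ref{Existence of Dirac points of the period-1 structure} and Corollary \ref{even odd mode and root function} it has the form $\alpha\phi_1+\beta\phi_2$. Inside each $D_j$, $u$ solves the interior Dirichlet problem with zero boundary data, and assuming $\lambda_*$ is not a Dirichlet eigenvalue of the obstacle $D$ one gets $u\equiv 0$ on $D_j$; the single-layer jump relation then yields $\varphi_j=\partial_\nu u^+|_{\partial D_j}=\alpha\,\partial_\nu\phi_1|_{\partial D_j}+\beta\,\partial_\nu\phi_2|_{\partial D_j}$, so $\bm{\varphi}=\alpha\bm{\varphi}_1+\beta\bm{\varphi}_2$. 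The main obstacle is precisely this last step: one must exclude $\lambda_*$ from the interior Dirichlet spectrum of $D$, which is automatic in the small-obstacle asymptotic regime used elsewhere in the paper to verify Assumption \ref{assump0} but otherwise requires a separate hypothesis.
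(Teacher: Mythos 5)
Your Fredholm argument takes a genuinely different route from the paper's. You decompose $G^e$ near the diagonal as the free-space fundamental solution $\Gamma_{\lambda_*}$ plus a smooth remainder $R$ (valid by Assumption \ref{hypo_singular frequency}, which excludes cut-off frequencies of the empty waveguide), so $T_{11}=\mathcal{S}_D^{\lambda_*}+\text{compact}$, while the off-diagonal blocks are compact because the kernel is smooth for $x,y\in\partial D$. This appeals to the classical fact that $\mathcal{S}_D^{\lambda_*}:H^{-1/2}(\partial D)\to H^{1/2}(\partial D)$ is Fredholm of index zero. The paper instead works directly with the Floquet--Bloch expansion \eqref{eq26_1} of $G^e$: it identifies $\langle\overline{\bm{\psi}},T(p_*,\lambda_*)\bm{\varphi}\rangle$ with a sesquilinear form $a(\tilde\varphi,\tilde\psi)=\sum_n c_n(\tilde\varphi)\overline{c_n(\tilde\psi)}/(\lambda_*-\lambda_n^e(p_*))$ on $H^{-1/2}(\partial D_1\cup\partial D_2)$, splits $a=a^{(0)}+a^{(1)}$ with $a^{(1)}$ finite rank, and proves coercivity of $a^{(0)}$ via a trace/extension duality estimate. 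Your version is shorter and more classical; the paper's is self-contained (no outside potential-theoretic facts) and stylistically matches the coercivity arguments reused in Appendix E. Both reach Fredholm of index zero, and both rely on the separation of obstacles (disjoint $D_n$'s) for the smoothness of the off-diagonal kernels, which you state as $\text{diam}(D)<1/2$.

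For the kernel, your inclusion $\text{span}\{\bm{\varphi}_1,\bm{\varphi}_2\}\subseteq\text{Ker}\,T(p_*,\lambda_*)$ via Green's second identity is exactly the natural argument and matches the paper's (implicit) reasoning. Your reverse inclusion also matches in spirit, but you correctly identify a hypothesis the paper leaves unstated: the jump relation only gives $\varphi_j=\partial_\nu u^+|_{\partial D_j}$ once one knows $u\equiv 0$ inside $D_j$, which requires $\lambda_*$ to avoid the interior Dirichlet spectrum of $D$. Indeed, if $v$ is an interior Dirichlet eigenfunction of $D$ at $\lambda_*$, the density built from $\partial_\nu v$ lies in $\text{Ker}\,T(p_*,\lambda_*)$ without corresponding to any Bloch mode, so the paper's assertion that ``any $\bm{\varphi}\in\text{Ker}\,T(p_*,\lambda_*)$ corresponds to the Neumann trace of a Bloch eigenfunction'' would fail. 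Your explicit flagging of this point is a genuine improvement in rigor over the paper's ``it is clear,'' and you are also right that it holds automatically in the small-obstacle regime invoked to verify Assumption \ref{assump0}, but it is not implied by Assumptions \ref{assump0}--\ref{hypo_singular frequency} as written.
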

\begin{proof}
In the following, we prove that $T(p_*,\lambda_*)$ can be decomposed as the sum of an operator with a bounded inverse and a compact operator; hence it is a Fredholm operator with zero index.

To construct such a decomposition, we first identify each $\bm{\varphi}=(\varphi_1,\varphi_2)\in H^{-\frac{1}{2}}(\partial D)\times H^{-\frac{1}{2}}(\partial D)$ with $\tilde{\varphi}\in H^{-\frac{1}{2}}(\partial D_1\cup \partial D_2)$ by
\begin{equation*}
    \tilde{\varphi}(x)|_{\partial D_1}=\varphi_1(x-z_1),\quad
    \tilde{\varphi}(x)|_{\partial D_2}=\varphi_2(x-z_2).
\end{equation*}
Observe that for $\bm{\psi},\bm{\varphi}\in H^{-\frac{1}{2}}(\partial D)\times H^{-\frac{1}{2}}(\partial D)$,
\begin{equation*} \label{eq34_0}
\begin{aligned}
\langle\overline{\bm{\psi}},&T(p_*,\lambda_*)\bm{\varphi}\rangle \\
=&\int_{\partial D\times \partial D}\left(G^e(x,y;p_*,\lambda_*)\varphi_1(y)+G^e(x,y+\frac{1}{2}\bm{e}_1;p_*,\lambda_*)\varphi_2(y)\right)\overline{\psi_1(x)}d\sigma(y)d\sigma(x) \\
&+\int_{\partial D\times \partial D}\left(G^e(x+\frac{1}{2}\bm{e}_1,y;p_*,\lambda_*)\varphi_1(y)+G^e(x+\frac{1}{2}\bm{e}_1,y+\frac{1}{2}\bm{e}_1;p_*,\lambda_*)\varphi_2(y)\right)\overline{\psi_2(x)}d\sigma(y)d\sigma(x)
\\
=&\int_{\partial D\times \partial D}\Big(G^e(x+z_1,y+z_1;p_*,\lambda_*)\varphi_1(y)+G^e(x+z_1,y+z_2;p_*,\lambda_*)\varphi_2(y)\Big)\overline{\psi_1(x)}d\sigma(y)d\sigma(x) \\
&+\int_{\partial D\times \partial D}\Big(G^e(x+z_2,y+z_1;p_*,\lambda_*)\varphi_1(y)+G^e(x+z_2,y+z_2;p_*,\lambda_*)\varphi_2(y)\Big)\overline{\psi_2(x)}d\sigma(y)d\sigma(x) \\
=&\int_{\partial D_1}\int_{\partial D_1}G^e(x,y;p_*,\lambda_*)\tilde{\varphi}(y)\overline{\tilde{\psi}(x)}d\sigma(y)d\sigma(x)
+\int_{\partial D_1}\int_{\partial D_2}G^e(x,y;p_*,\lambda_*)\tilde{\varphi}(y)\overline{\tilde{\psi}(x)}d\sigma(y)d\sigma(x) \\
&+\int_{\partial D_2}\int_{\partial D_1}G^e(x,y;p_*,\lambda_*)\tilde{\varphi}(y)\overline{\tilde{\psi}(x)}d\sigma(y)d\sigma(x)
+\int_{\partial D_2}\int_{\partial D_2}G^e(x,y;p_*,\lambda_*)\tilde{\varphi}(y)\overline{\tilde{\psi}(x)}d\sigma(y)d\sigma(x) \\
=&\int_{\partial D_1\cup \partial D_2}\int_{\partial D_1\cup \partial D_2}G^e(x,y;p_*,\lambda_*)\tilde{\varphi}(y)\overline{\tilde{\psi}(x)}d\sigma(y)d\sigma(x). 
\end{aligned}
\end{equation*}
Denote $c_n(\tilde{\varphi})=\int_{\partial D_1\cup \partial D_2}\tilde{\varphi(x)}\overline{u_n(x;p_*)}d\sigma(x)$. By \eqref{eq26_1}, we have
\[
\langle\overline{\bm{\psi}},T(p_*,\lambda_*)\bm{\varphi}\rangle=a(\tilde{\varphi},\tilde{\psi}):=\sum_{n\geq 1} \frac{c_n(\tilde{\varphi})\cdot\overline{c_n(\tilde{\psi})}}{\lambda_*-\lambda_n^e(p_*)}. 
\]

 We further introduce the following two sesquilinear forms
\begin{equation} \label{eq34}
\begin{aligned}
&a^{(0)}(\tilde{\varphi},\tilde{\psi}):=
\sum_{\{n:\lambda_n^e(p_*)>\lambda_*\}} \frac{c_n(\tilde{\varphi})\cdot\overline{c_n(\tilde{\psi})}}{\lambda_*-\lambda_n^e(p_*)}
-\sum_{\{n:\lambda_n^e(p_*)<\lambda_*\}} c_n(\tilde{\varphi})\cdot\overline{c_n(\tilde{\psi})}, \\
&a^{(1)}(\tilde{\varphi},\tilde{\psi}):=
\sum_{\{n:\lambda_n^e(p_*)<\lambda_*\}} \frac{c_n(\tilde{\varphi})\cdot\overline{c_n(\tilde{\psi})}}{\lambda_*-\lambda_n^e(p_*)}
+\sum_{\{n:\lambda_n^e(p_*)<\lambda_*\}} c_n(\tilde{\varphi})\cdot\overline{c_n(\tilde{\psi})},
\end{aligned}
\end{equation}
and their associated operators $\tilde{T}^{(0)},\tilde{T}^{(1)}\in\mathcal{B}(H^{-\frac{1}{2}}(\partial D_1\cup \partial D_2),H^{\frac{1}{2}}(\partial D_1\cup \partial D_2))$. Then we have $a(\tilde{\varphi},\tilde{\psi})=a^{(0)}(\tilde{\varphi},\tilde{\psi})+a^{(1)}(\tilde{\varphi},\tilde{\psi})$; thus, it is sufficient to show that $\tilde{T}^{(0)}$ is invertible and $\tilde{T}^{(1)}$ is compact to conclude that $T(p_*,\lambda_*)$ is a Fredholm operator with zero index. Since $\{n:\lambda_n^e(p_*)<\lambda_*\}$ is a finite set, it is clear that $\tilde{T}^{(1)}$ has finite rank and thus is compact. So we only need to prove the invertibility of $\tilde{T}^{(0)}$.
To this end, we shall show that the sesquilinear form $a^{(0)}$ is coercive and symmetric in the following two steps.

\medskip

Step 1. We prove the following inequality
\begin{equation} \label{eq_A_2}
|a^{(0)}(\tilde{\varphi},\tilde{\varphi})|\gtrsim \|\tilde{\varphi}\|^2,
\end{equation}
which ensures that $\tilde{T}^{(0)}$ is injectivity and that its range space is closed. First, \eqref{eq34} shows that
\begin{equation} \label{eq36}
|a^{(0)}(\tilde{\varphi},\tilde{\varphi})|
=\sum_{\{n:\lambda_n^e(p_*)>\lambda_*\}} \frac{|c_n(\tilde{\varphi})|^2}{|\lambda_*-\lambda_n^e(p_*)|}
+\sum_{\{n:\lambda_n^e(p_*)<\lambda_*\}} |c_n(\tilde{\varphi})|^2
\gtrsim \sum_{n\geq 1} \frac{|c_n(\tilde{\varphi})|^2}{|\lambda_*-\lambda_n^e(p_*)|}.
\end{equation}
Next, let $\text{Tr}:H^1_{p_*}(Y)\to H^{\frac{1}{2}}(\partial D_1\cup \partial D_2)$ be the trace operator ($H^1_{p_*}(Y)$ consists of the functions in $H^1(Y)$ that satisfy the $p_*-$quasi-periodic boundary condition in $x_1-$direction) and $E:H^{\frac{1}{2}}(\partial D_1\cup \partial D_2)\to H^1_{p_*}(Y)$ be the Sobolev extension operator such that $\text{Tr}\circ E=id|_{H^{\frac{1}{2}}(\partial D_1\cup \partial D_2)}$. Since $\{u_n^e(x;p_*)\}$ forms an orthogonal basis of $H^1_{p_*}(Y)$, we can expand $E\phi$ for any $\phi\in H^{\frac{1}{2}}(\partial D_1\cup \partial D_2)$ as $(E\phi)(x)\equiv \sum_{n\geq 1}a_n u_n^e(x;p_*)$. Similar to \eqref{eq87_5}, the following identity holds:
$$
\|E\phi\|^2_{H^1(Y)}=\sum_{n\geq 1}(1+\lambda_n^e(p_*))|a_n|^2.
$$
As a result,
\begin{equation*}
\begin{aligned}
\Big|\langle \tilde{\varphi}, \overline{\phi}\rangle_{H^{-\frac{1}{2}}(\partial D_1\cup \partial D_2)\times H^{\frac{1}{2}}(\partial D_1\cup \partial D_2)}\Big|
&=\Big|\langle \tilde{\varphi}, \overline{\text{Tr}(E\phi)}\rangle\Big|
=\Big|\sum_{n\geq 1}\overline{a_n}\cdot c_n(\tilde{\varphi})\Big| \\
&\leq(\sum_{n\geq 1}|\lambda_*-\lambda_n^e(p_*)||a_n|^2)^{\frac{1}{2}}
\cdot (\sum_{n\geq 1}\frac{|c_n(\tilde{\varphi})|^2}{|\lambda_*-\lambda_n^e(p_*)|})^{\frac{1}{2}}
\end{aligned}
\end{equation*}
Since $\lim_{n\to +\infty}\lambda_n^e(p_*)=+\infty$, $\sum_{n\geq 1}|\lambda_*-\lambda_n^e(p_*)||a_n|^2 \lesssim \sum_{n\geq 1}(1+\lambda_n^e(p_*))|a_n|^2$. Thus we have
\begin{equation*}
\begin{aligned}
\Big|\langle \tilde{\varphi}, \overline{\phi}&\rangle_{H^{-\frac{1}{2}}(\partial D_1\cup \partial D_2)\times H^{\frac{1}{2}}(\partial D_1\cup \partial D_2)}\Big|
\lesssim
(\sum_{n\geq 1}(1+\lambda_n^e(p_*))|a_n|^2)^{\frac{1}{2}}
\cdot (\sum_{n\geq 1}\frac{|c_n(\tilde{\varphi})|^2}{|\lambda_*-\lambda_n^e(p_*)|})^{\frac{1}{2}} \\
&=\|E\phi\|_{H^1(Y)}\left(\sum_{n\geq 1}\frac{|c_n(\tilde{\varphi})|^2}{|\lambda_*-\lambda_n^e(p_*)|}\right)^{\frac{1}{2}}
\lesssim \|\phi\|_{H^{\frac{1}{2}}(\partial D_1\cup \partial D_2)}\left(\sum_{n\geq 1}\frac{|c_n(\tilde{\varphi})|^2}{|\lambda_*-\lambda_n^e(p_*)|}\right)^{\frac{1}{2}}.
\end{aligned}
\end{equation*}
Therefore
\begin{equation} \label{eq36_1}
    \|\tilde{\varphi}\|\leq (\sum_{n\geq 1}\frac{|c_n(\tilde{\varphi})|^2}{|\lambda_*-\lambda_n^e(p_*)|})^{\frac{1}{2}}.
\end{equation}
Then the desired estimate \eqref{eq_A_2} follows directly from \eqref{eq36} and \eqref{eq36_1}.

\medskip
Step 2. We prove that $a^{(0)}(\cdot,\cdot)$ is symmetric. Then the injectivity, which is proved in Step 1, implies that $\text{Ran}\, T^{(0)}$ is dense. In fact, from the definition of  $a^{(0)}(\cdot,\cdot)$ in \eqref{eq34}, it follows that
\begin{equation} \label{eq_A_3}
\begin{aligned}
a^{(0)}(\tilde{\varphi},\tilde{\psi})&:=
\sum_{\{n:\lambda_n^e(p_*)>\lambda_*\}} \frac{c_n(\tilde{\varphi})\cdot\overline{c_n(\tilde{\psi})}}{\lambda_*-\lambda_n^e(p_*)}
-\sum_{\{n:\lambda_n^e(p_*)<\lambda_*\}} c_n(\tilde{\varphi})\cdot\overline{c_n(\tilde{\psi})} \\
&=\sum_{\{n:\lambda_n^e(p_*)>\lambda_*\}} \overline{\frac{\overline{c_n(\tilde{\varphi})}\cdot c_n(\tilde{\psi})}{\lambda_*-\lambda_n^e(p_*)}}
-\sum_{\{n:\lambda_n^e(p_*)<\lambda_*\}} \overline{\overline{c_n(\tilde{\varphi})}\cdot c_n(\tilde{\psi})} \\
&=\overline{a^{(0)}(\tilde{\psi},\tilde{\varphi})}.
\end{aligned}
\end{equation}

We conclude from Step 1 and 2 that $\tilde{T}^{(0)}$ is bijective. By the open mapping theorem,  $\tilde{T}^{(0)}$ is also  invertible with a bounded inverse. Consequently, $T(p_*,\lambda_*)$ is a Fredholm operator with zero index.

Finally, it is clear that any $\bm{\varphi}\in \text{Ker}\, T(p_*,\lambda_*)$ corresponds to the Neumann trace of a Bloch eigenfunction on $\partial D_1\cup \partial D_2$ at the Dirac point $(p_*, \lambda_*)$. Since the eigenspace at the Dirac point is 2-dimensional as stated in Proposition \ref{Existence of Dirac points of the period-1 structure}, Corollary \ref{even odd mode and root function} shows that $\text{Ker}\, T(p_*,\lambda_*)=\text{span}\{ \bm{\varphi_1},\bm{\varphi_2}\}$.
\end{proof}

\subsection{Band-gap opening and the asymptotic expansions}
We investigate band-gap opening at the Dirac point $(p_*, \lambda_*)$ of $\mathcal{L}$ for the perturbed periodic structures using the integral equation formulation presented in the previous subsection. 
First note that the quasi-periodic Green function $G^e(x,y;p,\lambda)$ in \eqref{eq26} is analytic with respect to $\lambda$ for $\lambda$ near $\lambda_*$  (by Assumption \ref{hypo_singular frequency}), the following property holds:

\begin{lemma} \label{taylor approximation of T_delta}
For $|p-p_*|,|\lambda-\lambda_*|,|\delta|\ll 1$, $T_\delta(p,\lambda)$ is analytic in $p,\lambda$. Moreover, there exist $T_p,T_\lambda,S\in \mathcal{B}\left(H^{-\frac{1}{2}}(\partial D)\times H^{-\frac{1}{2}}(\partial D), H^{\frac{1}{2}}(\partial D)\times H^{\frac{1}{2}}(\partial D)\right)$ such that when $|p-p_*|,|\lambda-\lambda_*|,|\delta|\ll 1$, there holds
\begin{equation} \label{eq37}
    T_\delta(p,\lambda)=T(p_*,\lambda_*)+(p-p_*)T_p+(\lambda-\lambda_*)T_\lambda+\delta S+\mathcal{O}((p-p_*)^2,(\lambda-\lambda_*)^2,\delta^2).
\end{equation}
\end{lemma}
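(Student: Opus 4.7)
The plan is to establish joint real-analyticity of $T_\delta(p,\lambda)$ as an operator-valued function of the three parameters $(p,\lambda,\delta)$ in a neighborhood of $(p_*,\lambda_*,0)$, and then invoke the standard Taylor formula for Banach-space-valued analytic functions with values in $\mathcal{B}\big((H^{-\frac{1}{2}}(\partial D))^2, (H^{\frac{1}{2}}(\partial D))^2\big)$. All parameter dependence enters through the empty-waveguide quasi-periodic Green's function $G^e$: the diagonal blocks $T_{11}=T_{22}$ depend on $(p,\lambda)$ but not on $\delta$, while the off-diagonal blocks $T_{12,\delta}$ and $T_{21,\delta}$ depend on $(p,\lambda,\delta)$ through an additional translation of the source (resp.\ observation) point by $(\tfrac{1}{2}\pm 2\delta)\bm{e}_1$.

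First I would read off joint analyticity in $(p,\lambda)$ of $G^e(x,y;p,\lambda)$ from the explicit series \eqref{eq26}. By Assumption \ref{hypo_singular frequency} one has $\lambda_*\neq (2m+1)\pi$ for any $m\in\mathbb{N}$, so the denominators $\sqrt{p_m^2-\lambda}\,(e^{-\sqrt{p_m^2-\lambda}}-e^{\sqrt{p_m^2-\lambda}})$ remain uniformly bounded away from zero for $(p,\lambda)$ in a small complex polydisc around $(p_*,\lambda_*)$, and each term is analytic with exponential $|m|$-decay away from the diagonal $x=y$. To handle the diagonal blocks, I would extract the free-space Helmholtz kernel $\Phi(x,y;\lambda)=-\tfrac{i}{4}H_0^{(1)}(\sqrt{\lambda}|x-y|)$, which carries the logarithmic diagonal singularity and is itself analytic in $\lambda$; writing $G^e=\Phi+R$ with $R$ jointly real-analytic in $(x,y,p,\lambda)$, standard single-layer mapping properties then give $T_{11}(p,\lambda)\in\mathcal{B}(H^{-\frac{1}{2}}(\partial D),H^{\frac{1}{2}}(\partial D))$ analytic in $(p,\lambda)$.

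For the off-diagonal blocks, when $x\in \partial D$ the shifted source (or observation) point lies in $\partial D+(\tfrac{1}{2}\pm 2\delta)\bm{e}_1$, so the two arguments of $G^e$ are separated by at least $\tfrac{1}{2}-2|\delta|-2\,\mathrm{diam}(D)>0$ for $\delta$ and $D$ sufficiently small. Hence the relevant kernel is smooth in $(x,y)$ and jointly real-analytic in $(p,\lambda,\delta)$, and the same series argument shows $T_{12,\delta}$ and $T_{21,\delta}$ are smoothing operators, jointly analytic in $(p,\lambda,\delta)$ as elements of $\mathcal{B}(H^{-\frac{1}{2}}(\partial D),H^{\frac{1}{2}}(\partial D))$.

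Taylor's theorem for analytic Banach-space-valued functions then yields \eqref{eq37} with
\[
T_p:=\partial_p T_\delta(p,\lambda)\big|_{(p_*,\lambda_*,0)},\quad T_\lambda:=\partial_\lambda T_\delta(p,\lambda)\big|_{(p_*,\lambda_*,0)},\quad S:=\partial_\delta T_\delta(p,\lambda)\big|_{(p_*,\lambda_*,0)},
\]
and an operator-norm remainder of size $\mathcal{O}((p-p_*)^2+(\lambda-\lambda_*)^2+\delta^2)$ by Cauchy's estimates on the polydisc. The operator $S$ vanishes on the diagonal blocks, and its off-diagonal entries have kernels $\pm 2\,(\partial_{y_1}G^e)(x,y+\tfrac{1}{2}\bm{e}_1;p_*,\lambda_*)$ and $\pm 2\,(\partial_{x_1}G^e)(x+\tfrac{1}{2}\bm{e}_1,y;p_*,\lambda_*)$. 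The main technical hurdle is the diagonal singularity in $T_{11}$, which prevents a na\"ive term-by-term differentiation of \eqref{eq26}; this is overcome by the explicit $\Phi$+$R$ splitting, which isolates the universal, parameter-independent logarithmic singularity inside the well-understood single-layer operator and leaves a smooth remainder that is analytic in $(p,\lambda)$ in the operator norm.
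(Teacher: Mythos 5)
The paper does not supply a formal proof of this lemma; it is presented as an immediate consequence of the explicit formula \eqref{eq26} for $G^e$ and Assumption \ref{hypo_singular frequency}, which keeps $\lambda_*$ away from the Fourier-series poles. Your proposal fleshes out exactly this line of reasoning — joint analyticity of $G^e$ in $(p,\lambda,\delta)$ implies analyticity of the matrix of boundary-integral operators, and Taylor's theorem for Banach-valued analytic maps produces $T_p$, $T_\lambda$, $S$ and the quadratic-order remainder — so it is the same approach, spelled out in more detail.

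One small imprecision worth noting: the claim of ``exponential $|m|$-decay away from the diagonal $x=y$'' is not literally correct, because the decay rate in the $m$-sum in \eqref{eq26} is governed by $|x_2-y_2|$ rather than $|x-y|$; when $x_2=y_2$ but $x_1\neq y_1$ (which occurs for the off-diagonal blocks since $\partial D$ is connected), the coefficients are only $\mathcal{O}(|m|^{-1})$ and the series is merely conditionally convergent. This is precisely why the $\Phi+R$ split you introduce is the right move: after peeling off the free-space singularity the remainder $R$ is smooth and its Fourier series converges absolutely with genuine exponential decay in $|m|$, which is what is needed to pass analyticity and Cauchy estimates through the $m$-sum to the operator norm. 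It would be cleaner to present that split as essential to both the diagonal and off-diagonal blocks, rather than as a fix for the diagonal only. With that adjustment the argument is complete and correct.
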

 
Properties of the operators $T_p$, $T_\lambda$, and $S$ above are given in Proposition \ref{non-degeneracy 1} and \ref{non-degeneracy 2} below, wherein $\langle \cdot,\cdot\rangle$ denote the dual pair between $H^{-\frac{1}{2}}(\partial D)\times H^{-\frac{1}{2}}(\partial D)$ and $H^{\frac{1}{2}}(\partial D)\times H^{\frac{1}{2}}(\partial D)$.

\begin{proposition} \label{non-degeneracy 1}
Let $\bm{\varphi}_1,\bm{\varphi}_2$ be the boundary potentials at the Dirac point $(p_*, \lambda_*)$ defined in \eqref{eq24-1}-\eqref{eq24-2}. Then there exist real numbers $\gamma_*,\theta_*$ such that for $i,j=1,2$, 
\begin{equation} \label{eq38}
    \begin{aligned}
    & \langle \bm{\varphi_j},T_\lambda \bm{\varphi_i} \rangle=\gamma_*\delta_{ij},\\
    & \langle \bm{\varphi_j},T_p \bm{\varphi_i} \rangle=\sqrt{-1}\theta_*(1-\delta_{ij})(-1)^{i-1}.
     \end{aligned}
\end{equation}

\end{proposition}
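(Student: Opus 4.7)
The plan is to realize $T(p,\lambda)\bm{\varphi}_i$ as the trace of a single-layer potential $U_i(x;p,\lambda):=\int_{\partial D_1\cup\partial D_2}G^e(x,y;p,\lambda)\tilde{\bm{\varphi}}_i(y)\,d\sigma(y)$, to differentiate in $p$ or $\lambda$, and then to apply Green's identity on the fundamental cell $Y\cap\Omega$ to convert the pairings $\langle\bm{\varphi}_j,T_\lambda\bm{\varphi}_i\rangle$ and $\langle\bm{\varphi}_j,T_p\bm{\varphi}_i\rangle$ into integrals involving the Dirac-point eigenmodes $\phi_1,\phi_2$. The reflection parities of $\phi_1,\phi_2$ from Corollary \ref{even odd mode and root function} will then produce the claimed diagonal/anti-diagonal structure.

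At $(p_*,\lambda_*)$ the standard jump relations and uniqueness of the Dirichlet problem identify $U_i(\cdot;p_*,\lambda_*)\equiv\phi_i$ on $Y\cap\Omega$ and zero inside each obstacle. Differentiating $(\Delta+\lambda)G^e=\tilde\delta(x-y)$ in $\lambda$ gives $(\Delta+\lambda_*)\partial_\lambda U_i=-U_i$ away from $\partial D$. Green's identity applied to $\partial_\lambda U_i$ and $\phi_j$ on $Y\cap\Omega$, with the Neumann top/bottom walls killed and the lateral walls cancelling via the $p_*$-quasi-periodicity, yields
\[
\langle\bm{\varphi}_j,T_\lambda\bm{\varphi}_i\rangle=-\int_{Y\cap\Omega}\phi_i\phi_j\,dx.
\]
The folding relation in Proposition \ref{Existence of Dirac points of the period-1 structure} implies that $\phi_1(\cdot+\tfrac12 \mathbf{e}_1)$ and $\phi_2$ may be chosen to coincide up to sign, so $\phi_1\phi_2$ is antiperiodic under the half-period shift while $\phi_1^2$ and $\phi_2^2$ are swapped. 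This forces $\int\phi_1\phi_2\,dx=0$ and $\int\phi_1^2\,dx=\int\phi_2^2\,dx$; minus the common value is the desired real constant $\gamma_*$.

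For the $p$-entry, $(\Delta+\lambda_*)\partial_p U_i=0$ away from $\partial D$, but $\partial_p U_i$ no longer closes under quasi-periodicity: differentiating $U_i(x+\mathbf{e}_1)=e^{ip}U_i(x)$ at $p=p_*$ yields
\[
\partial_p U_i(x+\mathbf{e}_1;p_*,\lambda_*)=-\partial_p U_i(x;p_*,\lambda_*)-i\phi_i(x).
\]
Re-running Green's identity with $\partial_p U_i$ and $\phi_j$, the bulk integral and the top/bottom walls still vanish, but the lateral walls no longer cancel; tracking the inhomogeneous piece $-i\phi_i$ together with the quasi-periodicity of $\phi_j$ gives
\[
\langle\bm{\varphi}_j,T_p\bm{\varphi}_i\rangle=i\int_{\Gamma}\Big(\phi_i\frac{\partial\phi_j}{\partial x_1}-\phi_j\frac{\partial\phi_i}{\partial x_1}\Big)\,dx_2,
\]
with $\Gamma=\{0\}\times(0,\tfrac12)$. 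Since $\phi_1$ is odd in $x_1$ it vanishes on $\Gamma$, and since $\phi_2$ is even in $x_1$ its $x_1$-derivative vanishes on $\Gamma$. Substituting, the $(i,i)$ entries vanish, while the $(1,2)$ and $(2,1)$ entries reduce to $\pm i\int_\Gamma\phi_2\,\partial_{x_1}\phi_1\,dx_2$ with opposite signs; setting $\theta_*:=-\int_\Gamma\phi_2\,\partial_{x_1}\phi_1\,dx_2$, which is real because $\phi_1,\phi_2$ can be chosen real-valued given the real normal-derivative data in Corollary \ref{even odd mode and root function}, produces the second identity.

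The main technical obstacle is the $p$-differentiation: the failure of quasi-periodic closure for $\partial_p U_i$ forces one to retain the inhomogeneous term $-i\phi_i$ on the lateral walls, and it is precisely this contribution, coupled with the parities of $\phi_1,\phi_2$, that produces the off-diagonal structure encoded by the combinatorial factor $(1-\delta_{ij})(-1)^{i-1}$. A secondary point is verifying that a single common value $\gamma_*$ governs both diagonal $\lambda$-entries; this is a direct consequence of the fold-unfold half-period relation between $\phi_1$ and $\phi_2$ in Proposition \ref{Existence of Dirac points of the period-1 structure}.
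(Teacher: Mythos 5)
Your proposal is correct in its essential mechanism, but it takes a genuinely different route from the paper. The paper proves Proposition~\ref{non-degeneracy 1} entirely at the level of the kernel: it writes $\langle\bm{\varphi}_j,T(p,\lambda)\bm{\varphi}_i\rangle$ as a double integral of $G^e$ against the densities, then uses the explicit Fourier series \eqref{eq26} together with the algebraic symmetries $G^e(x,y;\pi+h,\lambda_*)=\overline{G^e(x,y;\pi-h,\lambda_*)}$, $G^e(x,y;p,\lambda)=\overline{G^e(y,x;p,\lambda)}$, and the reflection $r(\theta)=r(\pi-\theta)$ to establish directly that the diagonal $T_p$ entries vanish, the off-diagonal $T_p$ entries are purely imaginary and opposite, the off-diagonal $T_\lambda$ entries vanish, and the diagonal $T_\lambda$ entries are equal and real. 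You instead recognize $T(p_*,\lambda_*)\bm{\varphi}_i$ as the boundary trace of the single-layer potential $U_i$ (which, via the Green representation formula on $Y\cap\Omega$ and the Dirichlet data $\phi_i|_{\partial D}=0$, agrees with the Bloch mode $\phi_i$), differentiate in $\lambda$ or $p$, and run Green's second identity on the fundamental cell to obtain the closed forms $\langle\bm{\varphi}_j,T_\lambda\bm{\varphi}_i\rangle=-\int_{Y\cap\Omega}\phi_i\phi_j$ and $\langle\bm{\varphi}_j,T_p\bm{\varphi}_i\rangle=\pm i\int_\Gamma(\phi_i\partial_{x_1}\phi_j-\phi_j\partial_{x_1}\phi_i)$. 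The diagonal/anti-diagonal structure then drops out from the parities of $\phi_1,\phi_2$ and the half-period relation $\phi_2(x)=\phi_1(x+\tfrac12\bm{e}_1)$ (which you correctly read off from Proposition~\ref{Existence of Dirac points of the period-1 structure}; note it is in fact encoded exactly, not just up to sign, by the formulas $\bm{\varphi}_1=(\varphi_{ref},\varphi)$, $\bm{\varphi}_2=(\varphi,-\varphi_{ref})$ in Corollary~\ref{even odd mode and root function}). Your route has the advantage of delivering interpretable expressions for $\gamma_*$ and $\theta_*$ as overlap integrals of the Dirac eigenmodes, which in particular makes the consistency with Proposition~\ref{prop-alpha} and with Lemma~\ref{orthogonality of the propagating mode} more transparent; the paper's route is more elementary but leaves $\gamma_*,\theta_*$ implicit. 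Two small points to tighten: the identification $U_i\equiv\phi_i$ on $Y\cap\Omega$ does not follow from ``uniqueness of the Dirichlet problem'' (which fails at the Dirac eigenvalue, where the solution space is two-dimensional), but rather from the Green representation formula on $Y\cap\Omega$ combined with $\phi_i|_{\partial D}=0$; and the equality $\int\phi_1^2=\int\phi_2^2$ uses that the half-period shift carries $\phi_1$ to $\phi_2$ with a unimodular constant, which is ensured by the explicit normalization built into Corollary~\ref{even odd mode and root function}.
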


\begin{proposition}{\label{prop-alpha}}
Let $\theta_*$ and $\gamma_*$ be defined in (\ref{eq38}), and $\alpha_*$ be defined in Assumption \ref{assump0}, we have 
$$
\alpha_*=\left|\frac{\theta_*}{\gamma_*}\right|. 
$$
\end{proposition}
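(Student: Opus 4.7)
The plan is to compute, in two different ways, the slopes at the Dirac point $(p_*,\lambda_*)$ of the two dispersion branches that cross there, and then equate the two expressions. One expression is provided by Proposition~\ref{Existence of Dirac points of the period-1 structure} combined with Assumption~\ref{assump0}(2), which yields the slopes $\pm\alpha_*$. The other will come from a Lyapunov--Schmidt reduction of the characteristic-value equation $T(p,\lambda)\bm{\varphi}=0$ near $(p_*,\lambda_*)$, producing a $2\times 2$ reduced eigenvalue problem whose coefficients are exactly the pairings in \eqref{eq38}. Matching the two determinations of the slope will give $\alpha_*=|\theta_*/\gamma_*|$.

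Concretely, set $\mu=\lambda-\lambda_*$ and $q=p-p_*$, and for $(q,\mu)$ near $0$ decompose a candidate solution as $\bm{\varphi}=a_1\bm{\varphi}_1+a_2\bm{\varphi}_2+\bm{\varphi}^{\perp}$ with $\bm{\varphi}^{\perp}$ in a fixed closed complement of $\mathrm{Ker}\,T(p_*,\lambda_*)$. Using the expansion \eqref{eq37} with $\delta=0$, the equation $T(p,\lambda)\bm{\varphi}=0$ reads
\[
T(p_*,\lambda_*)\bm{\varphi}^{\perp}+(q\,T_p+\mu\,T_\lambda)(a_1\bm{\varphi}_1+a_2\bm{\varphi}_2)+\mathcal{O}\bigl((|q|+|\mu|)^2\bigr)=0.
\]
Since $T(p_*,\lambda_*)$ is Fredholm of index zero with two-dimensional kernel by Lemma~\ref{particle integral operator at the Dirac point}, it restricts to a bounded invertible map from the chosen complement onto its range, so $\bm{\varphi}^{\perp}=\mathcal{O}(|q|+|\mu|)$ as a function of $(a_1,a_2)$. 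Pairing against the cokernel basis $\{\bm{\varphi}_1,\bm{\varphi}_2\}$, which coincides with the kernel through the symmetric sesquilinear form built in the proof of Lemma~\ref{particle integral operator at the Dirac point}, the $\bm{\varphi}^{\perp}$ contribution cancels to leading order, and \eqref{eq38} reduces the problem to
\[
\begin{pmatrix}\mu\gamma_* & iq\theta_*\\ -iq\theta_* & \mu\gamma_*\end{pmatrix}\begin{pmatrix}a_1\\ a_2\end{pmatrix}=\mathcal{O}\bigl((|q|+|\mu|)^2\bigr)\begin{pmatrix}a_1\\ a_2\end{pmatrix}.
\]
The determinant of the matrix on the left is $\gamma_*^2\mu^2-\theta_*^2 q^2$, so the two leading-order branches along which nontrivial solutions exist are $\mu=\pm(\theta_*/\gamma_*)\,q+o(q)$.

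By the equivalence between the Bloch eigenvalues and the characteristic values of $T(p,\cdot)$ established via \eqref{eq29}, these two branches must agree with $\lambda_1(p)$ and $\lambda_2(p)$ in a neighborhood of $p_*$. By Proposition~\ref{Existence of Dirac points of the period-1 structure} together with Assumption~\ref{assump0}(2) the corresponding slopes at $p=\pi$ are $+\alpha_*$ and $-\alpha_*$; comparing the two sets of slopes gives $\alpha_*=|\theta_*/\gamma_*|$. The main obstacle I expect is making the Lyapunov--Schmidt reduction fully rigorous: in particular, verifying that pairing $T(p,\lambda)\bm{\varphi}$ against $\bm{\varphi}_1,\bm{\varphi}_2$ eliminates the $\bm{\varphi}^{\perp}$ contribution to the required order. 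This relies on the self-adjoint-like sesquilinear structure of $T(p_*,\lambda_*)$ that was already exploited in Lemma~\ref{particle integral operator at the Dirac point}. As a by-product one also recovers $\gamma_*\neq 0$, since otherwise the reduced equation would admit no finite-slope solution, contradicting $\alpha_*>0$.
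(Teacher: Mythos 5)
Your argument is correct and is in substance the same as the paper's: the paper proves Proposition~\ref{prop-alpha} by setting $\delta=0$ in the Lyapunov--Schmidt reduced matrix \eqref{eq60} established in the proof of Theorem~\ref{dispersion relation near the dirac point}, which is precisely the computation you carry out directly. The off-diagonal signs in your $2\times 2$ matrix are transposed relative to \eqref{eq60} (from \eqref{eq38} one gets $\langle\bm{\varphi}_1,T_p\bm{\varphi}_2\rangle=-i\theta_*$), but since only the determinant $\gamma_*^2\mu^2-\theta_*^2 q^2$ enters, the conclusion is unaffected.
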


\begin{proposition} \label{non-degeneracy 2}
Let $\bm{\varphi}_1$ and $\bm{\varphi}_2$ be the same as in Proposition \ref{non-degeneracy 1}. Then there exist $t_*\in\mathbf{R}$ such that
\begin{equation}\label{eq39}
\langle \bm{\varphi_j},S\bm{\varphi_i} \rangle=t_*(-1)^{i-1}\delta_{ij}. 
\end{equation}
\end{proposition}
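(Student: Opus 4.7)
The plan is to compute $S=\partial_\delta T_\delta|_{\delta=0}$ explicitly and then exploit three structural symmetries of $G^e(\cdot,\cdot;\pi,\lambda_*)$ at the Dirac quasi-momentum $p_*=\pi$ to collapse the four pairings $\langle\bm{\varphi}_j,S\bm{\varphi}_i\rangle$ to a single real quantity $t_*$. Differentiating the defining formulas of $T_{12,\delta}$ and $T_{21,\delta}$ at $\delta=0$ yields an off-diagonal $S$ with
\begin{equation*}
S_{12}\psi(x)=2\int_{\partial D}(\partial_{y_1}G^e)(x,y+\tfrac{1}{2}\bm{e}_1;\pi,\lambda_*)\,\psi(y)\,d\sigma(y)
\end{equation*}
and an analogous formula for $S_{21}$ in terms of $\partial_{x_1}G^e$. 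Substituting $\bm{\varphi}_1=(\varphi_{ref},\varphi)$ and $\bm{\varphi}_2=(\varphi,-\varphi_{ref})$ from Corollary \ref{even odd mode and root function}, each of the four pairings becomes a linear combination of brackets $\langle\phi,S_{12}\psi\rangle$ and $\langle\phi,S_{21}\psi\rangle$ with $\phi,\psi\in\{\varphi,\varphi_{ref}\}$.

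The crux is to read off from the Fourier representation \eqref{eq26}, using $p_m=(2m+1)\pi$ and the reindexing $m\mapsto -m-1$ (which preserves $p_m^2$ while flipping $p_m$), three structural identities for $G^e$ at $(p,\lambda)=(\pi,\lambda_*)$: (i) transpose symmetry $G^e(x,y;\pi,\lambda_*)=G^e(y,x;\pi,\lambda_*)$; (ii) joint reflection invariance $G^e((-x_1,x_2),(-y_1,y_2);\pi,\lambda_*)=G^e(x,y;\pi,\lambda_*)$; and (iii) half-period antisymmetry $G^e(x,y+\tfrac{1}{2}\bm{e}_1;\pi,\lambda_*)=-G^e(x,y-\tfrac{1}{2}\bm{e}_1;\pi,\lambda_*)$, which follows because the sum of the two sides carries the scalar factor $\cos((2m+1)\pi/2)=0$. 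Differentiating in $x_1$ or $y_1$ yields the corresponding identities for $\partial_{x_1}G^e$ and $\partial_{y_1}G^e$. Identity (i), together with a relabeling of integration variables, produces $\langle\phi,S_{21}\psi\rangle=\pm\langle\psi,S_{12}\phi\rangle$ (the sign being dictated by the shift convention in $T_{21,\delta}$), while (ii) and (iii), combined with the reflection symmetry of $\partial D$, give the central identity
\begin{equation*}
\langle\phi, S_{12}\psi\rangle=\langle\phi_{ref}, S_{12}\psi_{ref}\rangle
\end{equation*}
via the substitution $x\mapsto(-x_1,x_2),\,y\mapsto(-y_1,y_2)$ in the defining bilinear integral.

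Once these two identities are in hand, the four pairings collapse: the off-diagonal entries $\langle\bm{\varphi}_1,S\bm{\varphi}_2\rangle$ and $\langle\bm{\varphi}_2,S\bm{\varphi}_1\rangle$ reduce to $\langle\varphi,S_{12}\varphi\rangle-\langle\varphi_{ref},S_{12}\varphi_{ref}\rangle$, which vanishes by the central identity specialised to $\phi=\psi=\varphi$; the diagonal entries become $\pm 2\langle\varphi_{ref},S_{12}\varphi\rangle$ after applying the central identity with $\phi=\varphi,\,\psi=\varphi_{ref}$. Setting $t_*:=2\langle\varphi_{ref},S_{12}\varphi\rangle$ then gives \eqref{eq39}. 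Reality of $t_*$ follows because $G^e(\cdot,\cdot;\pi,\lambda_*)$ is real (its Fourier coefficients are conjugate symmetric at $p=\pi$) and $\varphi,\varphi_{ref}$ may be chosen real, so the kernel of $S_{12}$ and both densities are real. The main obstacle is the careful sign bookkeeping through the derivatives and reflections, and the rigorous verification of the three Fourier-series identities (i)--(iii) that underlie the entire collapse.
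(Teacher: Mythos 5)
Your proposal is correct and proceeds by essentially the same computation as the paper's Appendix C, merely re-organized. The paper substitutes the parity structure $\bm{\varphi}_1=(\varphi_{ref},\varphi)$, $\bm{\varphi}_2=(\varphi,-\varphi_{ref})$ directly into the double integrals $\langle\bm{\varphi}_j, T_\delta(\pi,\lambda_*)\bm{\varphi}_i\rangle$ and reads off the cancellations term by term from the Fourier representation $G^e(x,y;\pi,\lambda_*)=\sum_m f(\lambda_*,|p_m|,x_2,y_2)e^{ip_m(x_1-y_1)}$ with $p_m=(2m+1)\pi$; you differentiate in $\delta$ first and then encode the same Fourier-level facts (that $G^e$ is real and even in $x_1-y_1$ at $p=\pi$, and that $\cos\big((2m+1)\pi/2\big)=0$) as three structural identities of $G^e$ and of the off-diagonal blocks of $S$. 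The two are the same argument in different packaging; your version is perhaps slightly more modular since the three identities can be stated once and reused. The one thing to watch is the sign you hedge on: the reduction of $\langle\bm{\varphi}_1,S\bm{\varphi}_2\rangle$ to $\langle\varphi,S_{12}\varphi\rangle-\langle\varphi_{ref},S_{12}\varphi_{ref}\rangle$ requires $\langle\phi,S_{21}\psi\rangle=+\langle\psi,S_{12}\phi\rangle$ (equivalently $S_{21}=S_{12}$ as forms), which holds when $T_{21,\delta}$ carries the same effective shift $\big(\tfrac{1}{2}+2\delta\big)\bm{e}_1$ as $T_{12,\delta}$, as is dictated by $z_{2,\delta}-z_{1,\delta}=\big(\tfrac{1}{2}+2\delta\big)\bm{e}_1$ and as is used in the paper's Appendix C computation. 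The formula displayed after \eqref{eq32} has $\tfrac{1}{2}-2\delta$ in $T_{21,\delta}$, which would flip the sign and break the cancellation; that should be read as a misprint, and with the consistent convention your argument closes correctly.
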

The proofs of Proposition \ref{non-degeneracy 1} and \ref{non-degeneracy 2} contain technical calculations and are given in Appendix B and C, respectively. On the other hand, Proposition \ref{prop-alpha} is a consequence of Theorem \ref{dispersion relation near the dirac point} below and will be proved afterward. Now we are ready to state the main result of this section on the asymptotics of the perturbed dispersion relations and eigenmodes near the Dirac point. 

\begin{theorem}[Dispersion relation and eigenmodes near the Dirac point] \label{dispersion relation near the dirac point}
Under Assumption \ref{assump0}, \ref{hypo_reflection symmetry}, \ref{hypo_singular frequency}, and the assumption that $t_*$ defined in \eqref{eq39} is nonzero, there exists $\delta_0>0$ such that for all $|\delta|<\delta_0$ and quasi-momentum p with $|p-p_*|\ll 1$, there are two branches of dispersion curves $(p,\lambda_{2,\delta}(p))$ and $(p,\lambda_{1,\delta}(p))$ which are the characteristic values of \eqref{eq32}. In addition, they admit the following expansions:
\begin{equation} \label{eq40}
    \begin{aligned}
    &\lambda_{2,\delta}(p)
    =\lambda_*+\frac{1}{|\gamma_*|}\sqrt{\delta^2t^2_*+\theta^2_*(p-p_*)^2}\left(1+\mathcal{O}(p-p_*,\delta)\right), \\
    &\lambda_{1,\delta}(p)
    =\lambda_*-\frac{1}{|\gamma_*|}\sqrt{\delta^2t^2_*+\theta^2_*(p-p_*)^2}\left(1+\mathcal{O}(p-p_*,\delta)\right).  \\
    \end{aligned}
\end{equation}
Moreover, when $\delta>0$ and assume further that $\gamma_*,\theta_*,t_* >0$, the Floquet-Bloch eigenmodes defined in \eqref{eq31} admit the asymptotic expansion for $|p-p_*|\ll 1$:
\begin{equation} \label{eq41}
    \begin{aligned}
    &u_{2,\delta}(x;p):=u_{\delta}(x;p,\lambda_{2,\delta}(p))=\frac{\sqrt{-1}\theta_*(p-p_*)}{\delta t_*+\sqrt{\delta^2 t_*^2+\theta_*^2 (p-p_*)^2}}\phi_1(x)+\phi_2(x)+\mathcal{O}(p-p_*,\delta),\\
    &u_{1,\delta}(x;p):=u_{\delta}(x;p,\lambda_{1,\delta}(p))=\phi_1(x)+\frac{\sqrt{-1}\theta_*(p-p_*)}{\delta t_*+\sqrt{\delta^2 t_*^2+\theta_*^2 (p-p_*)^2}}\phi_2(x)+\mathcal{O}(p-p_*,\delta),
    \end{aligned}
\end{equation}
where $\phi_1$ and $\phi_2$ are the eigenmodes at the Dirac point $(\pi,\lambda_*)$ of $\mathcal{L}$.
Similarly,
\begin{equation} \label{eq42}
    \begin{aligned}
    &\lambda_{2,-\delta}(p)
    =\lambda_*+\frac{1}{|\gamma_*|}\sqrt{\delta^2t^2_*+\theta^2_*(p-p_*)^2}\left(1+\mathcal{O}(p-p_*,\delta)\right), \\
    &\lambda_{1,-\delta}(p)
    =\lambda_*-\frac{1}{|\gamma_*|}\sqrt{\delta^2t^2_*+\theta^2_*(p-p_*)^2}\left(1+\mathcal{O}(p-p_*,\delta)\right),  \\
    \end{aligned}   
\end{equation}
and
\begin{equation} \label{eq43}
    \begin{aligned}
    &u_{2,-\delta}(x;p):=u_{-\delta}(x;p,\lambda_{2,-\delta}(p))=\phi_1(x)-\frac{\sqrt{-1}\theta_*(p-p_*)}{\delta t_*+\sqrt{\delta^2 t_*^2+\theta_*^2 (p-p_*)^2}}\phi_2(x)+\mathcal{O}(p-p_*,\delta),\\
    &u_{1,-\delta}(x;p):=u_{-\delta}(x;p,\lambda_{1,-\delta}(p))=-\frac{\sqrt{-1}\theta_*(p-p_*)}{\delta t_*+\sqrt{\delta^2 t_*^2+\theta_*^2 (p-p_*)^2}}\phi_1(x)+\phi_2(x)+\mathcal{O}(p-p_*,\delta).
    \end{aligned}
\end{equation}
\end{theorem}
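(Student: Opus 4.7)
The plan is to carry out a Lyapunov--Schmidt reduction of the boundary integral equation $T_\delta(p,\lambda)\bm{\varphi}=0$ near $(p_*,\lambda_*,0)$. By Lemma~\ref{particle integral operator at the Dirac point}, $T(p_*,\lambda_*)$ is Fredholm of index zero with two-dimensional kernel $\mathrm{span}\{\bm{\varphi}_1,\bm{\varphi}_2\}$; since its cokernel is also two-dimensional and, by the symmetry of the underlying sesquilinear form used in the proof of Lemma~\ref{particle integral operator at the Dirac point}, can be identified with the same span. Choose a closed complement of $\mathrm{Ker}\, T(p_*,\lambda_*)$ in $H^{-1/2}(\partial D)^2$ and a closed complement of $\mathrm{Ran}\, T(p_*,\lambda_*)$ in $H^{1/2}(\partial D)^2$, with corresponding projections $P,Q$. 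Decompose $\bm{\varphi}=\bm{\varphi}_0+\bm{\varphi}_\perp$ with $\bm{\varphi}_0\in\mathrm{Ker}\, T(p_*,\lambda_*)$; then the equation $T_\delta(p,\lambda)\bm{\varphi}=0$ splits into
\[
(I-Q)T_\delta(p,\lambda)\bm{\varphi}=0\qquad\text{and}\qquad QT_\delta(p,\lambda)\bm{\varphi}=0.
\]
The first equation is uniquely solvable for $\bm{\varphi}_\perp$ in terms of $\bm{\varphi}_0$ when $|p-p_*|,|\lambda-\lambda_*|,|\delta|$ are small, by the analyticity stated in Lemma~\ref{taylor approximation of T_delta} combined with the fact that the restriction $(I-Q)T(p_*,\lambda_*)\colon (I-P)H^{-1/2}(\partial D)^2\to (I-Q)H^{1/2}(\partial D)^2$ is invertible; moreover $\bm{\varphi}_\perp = O(|p-p_*|+|\lambda-\lambda_*|+|\delta|)\|\bm{\varphi}_0\|$.

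Substituting back, the reduced problem is a $2\times 2$ linear system for the coefficients $(a_1,a_2)$ of $\bm{\varphi}_0=a_1\bm{\varphi}_1+a_2\bm{\varphi}_2$ with matrix $M(p,\lambda,\delta)=[\langle\bm{\varphi}_j,T_\delta(p,\lambda)(\bm{\varphi}_i+\bm{\varphi}_{\perp,i})\rangle]_{i,j}$. Using the Taylor expansion \eqref{eq37} together with Propositions~\ref{non-degeneracy 1} and \ref{non-degeneracy 2}, the leading order matrix is
\[
M_0(p,\lambda,\delta)=\begin{pmatrix} \gamma_*(\lambda-\lambda_*)+\delta t_* & -i\theta_*(p-p_*) \\ i\theta_*(p-p_*) & \gamma_*(\lambda-\lambda_*)-\delta t_* \end{pmatrix},
\]
which is Hermitian (so its eigenvalues are automatically real in $\lambda$, consistent with self-adjointness of $\mathcal{L}_\delta(p)$). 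Imposing $\det M(p,\lambda,\delta)=0$ and noting that the correction $M-M_0$ is quadratic in the small parameters (and produces an error factor of the form $1+\mathcal{O}(p-p_*,\delta)$ after dividing by the leading symbol), I obtain the equation
\[
\gamma_*^2(\lambda-\lambda_*)^2=\delta^2 t_*^2+\theta_*^2(p-p_*)^2 +\text{(higher order)},
\]
whose two real roots give the expansions \eqref{eq40}. For the eigenmodes, I solve $M_0(p,\lambda_{j,\delta}(p),\delta)(a_1,a_2)^T=0$ to read off the leading linear combination of $\bm{\varphi}_1,\bm{\varphi}_2$ (equivalently $\phi_1,\phi_2$ after recovering the Bloch function via the layer potential \eqref{eq27}), and the remainder term $\mathcal{O}(p-p_*,\delta)$ in \eqref{eq41} comes from $\bm{\varphi}_\perp$.

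For the companion structure with perturbation parameter $-\delta$, the only change in the Taylor expansion is that $\delta S$ is replaced by $-\delta S$, so the leading matrix becomes the same as $M_0$ with $t_*$ replaced by $-t_*$. The dispersion relations \eqref{eq42} therefore coincide with \eqref{eq40} (the expansion is even in $\delta t_*$ modulo the common even factor), while the eigenmodes \eqref{eq43} acquire the sign flip visible in the formulas, reflecting the swap of the band-edge eigenspaces. The main technical point to be careful about is ensuring that the Lyapunov--Schmidt reduction can be carried out uniformly in $(p-p_*,\lambda-\lambda_*,\delta)$ in a joint neighborhood of the origin and that the perturbative corrections to the $2\times 2$ reduced matrix are genuinely of higher order; both follow from the analyticity in Lemma~\ref{taylor approximation of T_delta} and from the fact that $\bm{\varphi}_\perp$ is controlled linearly in the parameters. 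The hypothesis $t_*\neq 0$ guarantees that $M_0$ is nondegenerate when $(p-p_*,\delta)\neq(0,0)$, so that the two roots $\lambda_{1,\delta}(p)<\lambda_*<\lambda_{2,\delta}(p)$ are simple and do not cross, yielding a genuine band gap of width $2|t_*/\gamma_*|\,|\delta|$ at leading order.
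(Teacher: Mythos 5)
Your proposal is correct and mirrors the paper's own proof closely: the paper likewise performs a Lyapunov--Schmidt reduction using the projection onto $\mathrm{Ran}\,T(p_*,\lambda_*)$, arrives at the same $2\times 2$ matrix $\mathcal{M}$ from Propositions~\ref{non-degeneracy 1}--\ref{non-degeneracy 2}, solves $\det\mathcal{M}=0$ for the two eigenvalue branches, and then reads off the eigenvector coefficients. The only difference of note is that the paper prefaces the reduction with a generalized Rouch\'e argument (Theorem~\ref{generalized rouche theorem}) to first count the characteristic values near $\lambda_*$; your argument absorbs that step into the reduction itself, which is acceptable since the reduction already shows that every nontrivial solution is captured by the determinant condition.
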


\begin{remark} \label{assump on positive constants}
In the above theorem, 
the assumption that $\gamma_*,\theta_*>0$ is made for the ease of the presentation. Without it, similar asymptotic expansions of Floquet-Bloch eigenvalues and eigenmodes, see \eqref{eq40}-\eqref{eq43}, can still be derived using the same arguments.
\end{remark}

\begin{remark}
It can be observed from \eqref{eq40} that the perturbed dispersion curves are locally parabolic when $t_*\neq 0$.  In addition, Theorem \ref{dispersion relation near the dirac point} implies that
\begin{equation*} \label{eq44}
    \begin{aligned}
    &\lim_{\delta\to 0+}\lambda_{2,\delta}(p_*)=
    \lim_{\delta\to 0+}\lambda_{1,\delta}(p_*)=
    \lim_{\delta\to 0+}\lambda_{2,-\delta}(p_*)=
    \lim_{\delta\to 0+}\lambda_{1,-\delta}(p_*)=\lambda_*,\\
    &\lim_{\delta\to 0+}u_{2,\delta}(x,p_*)=\lim_{\delta\to 0+}u_{1,-\delta}(x,p_*)=\phi_2(x),\\
    &\lim_{\delta\to 0+}u_{1,\delta}(x,p_*)=\lim_{\delta\to 0+}u_{2,-\delta}(x,p_*)=\phi_1(x).
    \end{aligned}
\end{equation*}
In other words, when the perturbation is introduced, the degeneracy at the Dirac point is lifted; See Figure 4 for such an illustration. Very importantly, for the two perturbations with $\delta>0$ and $\delta<0$, although the eigenvalues near the Dirac point are the same, the corresponding eigenspaces are swapped. This demonstrates the topological phase transition of the periodic structure at the Dirac point when $\delta=0$.
\end{remark}

\begin{figure}
    \centering
    \includegraphics[scale=0.3]{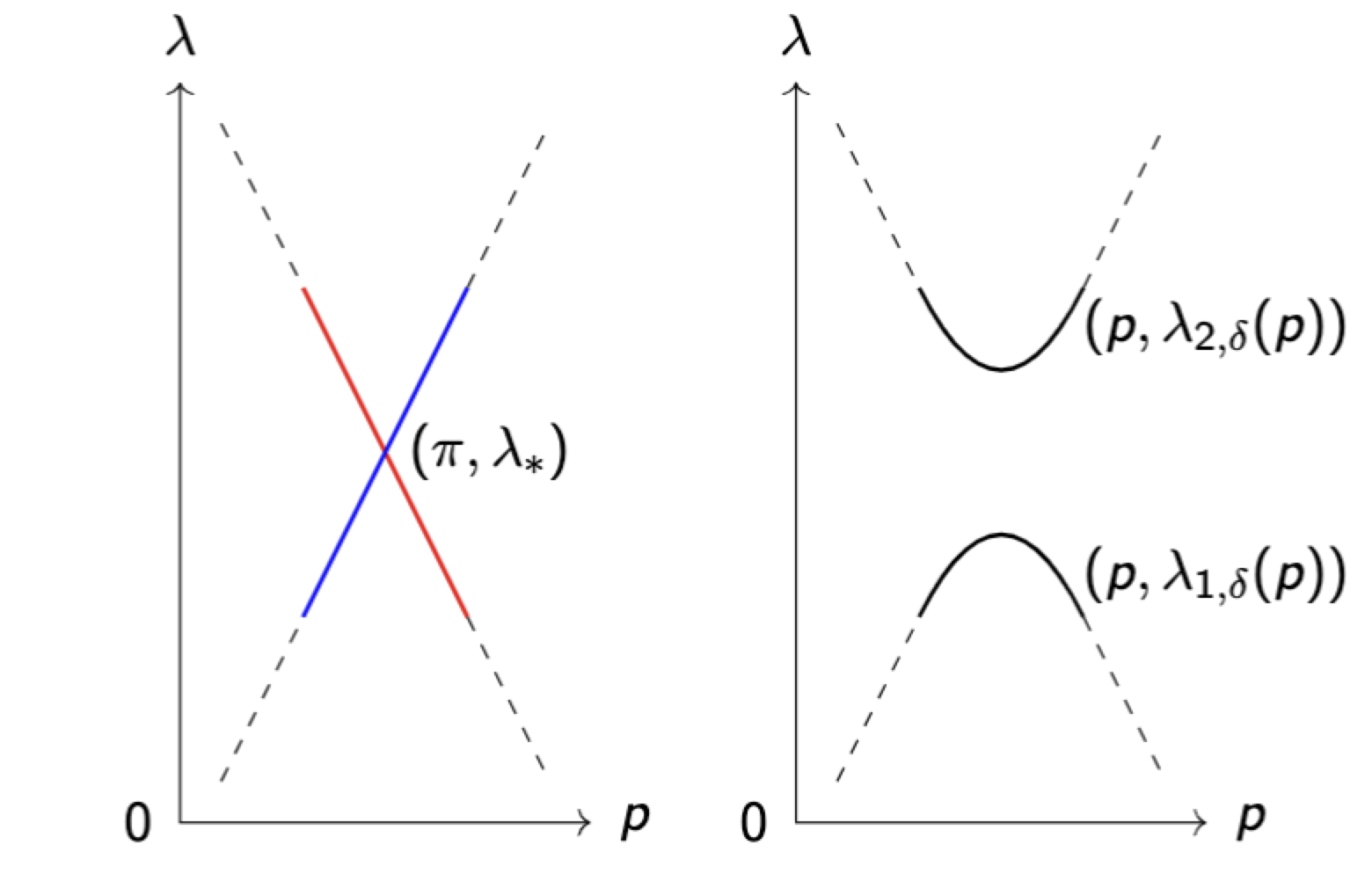}
    \caption{Lifting of the degeneracy near the Dirac point.}
\end{figure}

The following corollary states the existence of the common band gap for $\sigma (\mathcal{L}_\delta)$ and $\sigma (\mathcal{L}_{-\delta})$:
\begin{corollary}[The common band gap of $\sigma (\mathcal{L}_\delta)$ and $\sigma (\mathcal{L}_{-\delta})$] \label{common band gap}
Let $\delta>0$ and $0<c<1$ be a constant.  We follow all the notations and assumptions in Theorem \ref{dispersion relation near the dirac point} and define the real interval
\begin{equation} \label{eq45}
I_\delta=(E_{1,\delta},E_{2,\delta}):=\left(\lambda_*-c\delta|\beta_*|,\lambda_*+c\delta|\beta_*|\right),
\end{equation}
where $\beta_*=\frac{t_*}{\gamma_*}$. Then there exists
$\delta_0>0$, such that for all $0<\delta<\delta_0$, there holds
$$
I_\delta\bigcap\sigma(\mathcal{L}_\delta)=I_\delta\bigcap\sigma(\mathcal{L}_{-\delta})=\emptyset .
$$
\end{corollary}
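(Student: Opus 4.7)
The plan is to split the Brillouin zone $[0,2\pi]$ into a small neighborhood $U_r := (p_*-r, p_*+r)$ of the Dirac momentum and its complement, and show separately that no Bloch eigenvalue of $\mathcal{L}_{\pm\delta}(p)$ enters $I_\delta$ on each piece; by Floquet--Bloch theory this rules out $I_\delta\cap\sigma(\mathcal{L}_{\pm\delta})$. On $U_r$ the two low bands are governed by Theorem \ref{dispersion relation near the dirac point}, and from \eqref{eq40} and \eqref{eq42} one has
\begin{equation*}
|\lambda_{2,\pm\delta}(p) - \lambda_*| = \frac{1}{|\gamma_*|}\sqrt{\delta^2 t_*^2+\theta_*^2(p-p_*)^2}\bigl(1+O(|p-p_*|+\delta)\bigr) \geq |\beta_*|\,\delta\bigl(1-C(r+\delta)\bigr),
\end{equation*}
with an analogous lower bound for $\lambda_*-\lambda_{1,\pm\delta}(p)$. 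Choosing $r$ and $\delta_0$ so small that $1-C(r+\delta_0)>c$ forces $\lambda_{1,\pm\delta}(p)$ and $\lambda_{2,\pm\delta}(p)$ to lie outside $I_\delta$ for $p\in U_r$. Higher bands $\lambda_{n,\pm\delta}(p)$, $n\geq 3$, are controlled by the band-separation clause of Assumption \ref{assump0}(4), which provides a uniform positive gap between $\lambda_*$ and the third and higher bands of the unperturbed operator; this gap survives an $O(\delta)$ perturbation.

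On the complement $[0,2\pi]\setminus U_r$, Assumption \ref{assump0}(4) together with Proposition \ref{Existence of Dirac points of the period-1 structure} yields a uniform distance $d_0=d_0(r)>0$ with $|\lambda_n(p)-\lambda_*|\geq d_0$ for all $n\geq 1$ and all such $p$. To transfer this separation to the perturbed operators I use the boundary integral formulation of Section 3.1: the Bloch eigenvalues at quasi-momentum $p$ coincide with the characteristic values of $T_{\pm\delta}(p,\cdot)$, and by Lemma \ref{taylor approximation of T_delta}, $T_{\pm\delta}(p,\lambda)=T(p,\lambda)+O(\delta)$ in operator norm. I then apply the generalized Rouch\'e theorem (Theorem \ref{generalized rouche theorem}) to $T_{\pm\delta}(p,\lambda)$ as a perturbation of $T(p,\lambda)$ with contour $\{\lambda:|\lambda-\lambda_*|=d_0/2\}$, concluding that for small $\delta$, $T_{\pm\delta}(p,\cdot)$ has no characteristic value inside the disk. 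Shrinking $\delta_0$ once more so that $c\delta_0|\beta_*|<d_0/2$ excludes $I_\delta$ from the spectrum on this piece.

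The main obstacle is to make the Rouch\'e estimate \emph{uniform} in $p\in[0,2\pi]\setminus U_r$. The required uniform bound $\|T(p,\lambda)^{-1}\|\leq M$ on the compact set $([0,2\pi]\setminus U_r)\times\{|\lambda-\lambda_*|=d_0/2\}$ follows from the analytic dependence of $T(p,\lambda)$ on $(p,\lambda)$, which uses Assumption \ref{hypo_singular frequency}, together with the invertibility of $T(p,\lambda)$ throughout this set, a consequence of the absence of characteristic values there combined with Lemma \ref{particle integral operator at the Dirac point} extended to $\lambda$ in a complex neighborhood of $\lambda_*$. Once this uniform bound is secured, $\|T(p,\lambda)^{-1}(T_{\pm\delta}(p,\lambda)-T(p,\lambda))\|<1$ holds uniformly for $\delta<\delta_0$, and Theorem \ref{generalized rouche theorem} then yields the conclusion with a single $\delta_0$ valid throughout the complement. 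Taking the minimum of the two values of $\delta_0$ produced in the two regimes completes the argument.
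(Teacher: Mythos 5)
Your proof follows the same overall splitting of the Brillouin zone as the paper's and uses the same Theorem~\ref{dispersion relation near the dirac point} estimate for the two Dirac bands near $p_*$, so the two arguments agree in structure. The substantive difference is how you treat the complement $[0,2\pi]\setminus U_r$: the paper disposes of it with a one-line appeal to Assumption~\ref{assump0}(4), whereas you flesh it out with a Rouch\'e argument built on the boundary-integral operator $T(p,\lambda)$.

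That Rouch\'e step, however, has a genuine gap. The operator $T(p,\lambda)$ is assembled from the empty-waveguide quasi-periodic Green's function $G^e(\cdot,\cdot;p,\lambda)$, which blows up whenever $\lambda$ coincides with an empty-waveguide Bloch eigenvalue $\lambda_n^e(p)$. Assumption~\ref{hypo_singular frequency} only excludes such singular frequencies at $p=\pi$; it says nothing about $p\in[0,2\pi]\setminus U_r$. If $\lambda_*=\lambda_n^e(p_0)$ for some such $p_0$ (nothing in Assumption~\ref{assump0} forbids this), then $T(p_0,\lambda)$ is not defined---much less analytic or uniformly invertible---on a full neighborhood of $\lambda_*$, and your contour estimate collapses there. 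Your claim that ``analytic dependence of $T(p,\lambda)$ on $(p,\lambda)$ \ldots uses Assumption~\ref{hypo_singular frequency}'' implicitly assumes this assumption covers all $p$, which it does not. Relatedly, Lemma~\ref{taylor approximation of T_delta} provides only a \emph{local} Taylor expansion of $T_\delta$ around $(p_*,\lambda_*)$; the $O(\delta)$ bound on $T_{\pm\delta}(p,\lambda)-T(p,\lambda)$ uniformly for $p$ in the complement must be re-derived directly (easy where $G^e$ is regular, but again failing at the singular $p$'s), and likewise Lemma~\ref{particle integral operator at the Dirac point} is stated only at $(p_*,\lambda_*)$. To close the gap you would need to either cover the finitely many singular $p$ by a separate continuity argument, or abandon the Green's-function formulation on the complement. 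The simplest repair, and likely what the paper's terse citation has in mind, is to invoke continuity of the Bloch eigenvalues $\lambda_{n,\pm\delta}(p)$ in $\delta$ uniformly over the compact set $[0,2\pi]\setminus U_r$ (e.g.\ via the min--max characterization and convergence of the associated quadratic forms), which together with the uniform gap $d_0$ excludes $I_\delta$ from the spectrum there without ever touching $G^e$.
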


\begin{proof}
We only consider $\mathcal{L}_\delta$, and the proof for $\mathcal{L}_{-\delta}$ is identical. From Theorem \ref{dispersion relation near the dirac point}, for any $c\in (0,1)$, there exists $\delta_0>0$ such that for all $0<\delta<\delta_0$, the dispersion relations $\lambda_{1,\delta}(p)$ and $\lambda_{2,\delta}(p)$ satisfy
$$
\max_{|p-p_*|\leq\delta}\lambda_{1,\delta}(p)\leq\lambda_*-c\delta\beta_*,\quad
\min_{|p-p_*|\leq\delta}\lambda_{2,\delta}(p)\geq\lambda_*+c\delta\beta_*.
$$
Thus to show that $I_\delta$ is indeed a band gap, it suffices to show that for $p\in\mathcal{B}\bigcap\left\{p:|p-\pi|\geq\delta\right\}$,  $\lambda_{1,\delta}(p),\lambda_{2,\delta}(p)\notin I_{\delta}$, and $\lambda_{n,\delta}(p)\notin I_\delta$ for all $n\geq 3$ and $p\in\mathcal{B}$. Indeed, this follows from Assumption \ref{assump0} that only the first and second dispersion curves touch at the Dirac point $(p_*,\lambda_*)$, and they are away from $\lambda=\lambda_*$ when $p\neq p_*$.
\end{proof}

\subsection{Proof of Proposition \ref{prop-alpha} and Theorem \ref{dispersion relation near the dirac point}}

\begin{proof}[Proof of Theorem \ref{dispersion relation near the dirac point}]

For ease of presentation, we assume that the constants $\theta_*,\gamma_*$ and $t_*$ defined in Proposition \ref{non-degeneracy 1} and \ref{non-degeneracy 2} are all positive. Recall Remark \ref{assump on positive constants} that such an assumption is not essential.

We first show that for each $p$ near $\pi$ and small $\delta$, the operator $T_\delta(\lambda;p):=T_\delta(p,\lambda)$ has two characteristic values (counted with multiplicity) 
with $|\lambda-\lambda_*|\ll 1$. To this end, it is sufficient to consider $p=\pi$, while the cases of $p\neq\pi$ can be treated similarly.

Note the characteristic values of $T(\lambda;p_*)$ correspond to the Bloch eigenvalues of the problem \eqref{eq31} at $p=p_*$ and $\delta =0$. There is a small neighborhood $U(\lambda_*)$ of $\lambda=\lambda_*$ in the complex plane such that $\lambda_*$ is the only characteristic value of $T(\lambda;p_*)$ inside $\overline{U(\lambda_*)}$ and its multiplicity is two. The analyticity of $T(\lambda;p_*)$ in the variable $\lambda$ (See Lemma \ref{taylor approximation of T_delta}) implies that $T(\lambda;p_*)$ is normal with respect to $\partial U(\lambda_*)$, so does $T_{\delta}(\lambda;p_*)$. On the other hand, by Lemma \ref{taylor approximation of T_delta}, $\|T(\lambda;p_*)^{-1}(T(\lambda;p_*)-T_{\delta}(\lambda;p_*))\|< 1$ for $\lambda\in \partial U(\lambda_*)$ for $\delta$ sufficiently small. Then Theorem \ref{generalized rouche theorem} implies that $T_\delta(\lambda;p_*)$ attains two characteristic values (counted with multiplicity) for $\lambda\in U(\lambda_*)$.

Next, we calculate the asymptotic expansion of the characteristic values of $T_{\delta}(\lambda;p)$ and their associated eigenvectors by a perturbation argument.

Step 1.  We first set up the framework to conduct the perturbation. In the vicinity of the Dirac point $(p_*, \lambda_*)$, we write the quasi-momentum as
\begin{equation} \label{eq46}
    p=p_*+p^{(1)} ,
\end{equation}
where $|p^{(1)}|\ll 1$. We seek a solution to \eqref{eq32} of the form
\begin{equation} \label{eq47}
    \begin{aligned}
    &\qquad\qquad\qquad\qquad\qquad\qquad\qquad
    \lambda(p_*+p^{(1)})=\lambda_*+\lambda^{(1)}, \\
    &\bm{\varphi}(x;p_*+p^{(1)})=\bm{\varphi}^{(0)}+\bm{\varphi}^{(1)},\quad \bm{\varphi}^{(0)}=a\bm{\varphi_1}+b\bm{\varphi_2}
    \in \text{Ker}\, T(p_*,\lambda_*),\quad \bm{\varphi}^{(1)}\in (\text{Ker}\, T(p_*,\lambda_*))^\perp ,
    \end{aligned}
\end{equation}
where $|\lambda^{(1)}|\ll 1$, and $(\text{Ker}\, T(p_*,\lambda_*))^\perp$ is the orthogonal complement of $\text{Ker}\, T(p_*,\lambda_*)$ in $H^{-\frac{1}{2}}(\partial D)\times H^{-\frac{1}{2}}(\partial D)$ . Here $\bm{\varphi_1}$ and $\bm{\varphi_2}$ are defined in Corollary \ref{even odd mode and root function}.  

Substituting \eqref{eq37}, \eqref{eq46} and \eqref{eq47} into \eqref{eq32}, we obtain the following equation for $\bm{\varphi}^{(1)}$:
\begin{equation} \label{eq48}
\begin{aligned}
T(p_*,\lambda_*)\bm{\varphi}^{(1)}=&-\left(p^{(1)} T_p+ \lambda^{(1)} T_\lambda +\delta S+\mathcal{O}((p^{(1)})^2,(\lambda^{(1)})^2,\delta^2)\right)\bm{\varphi}^{(0)} \\
&-\left(p^{(1)} T_p+ \lambda^{(1)} T_\lambda +\delta S+\mathcal{O}((p^{(1)})^2,(\lambda^{(1)})^2,\delta^2)\right)\bm{\varphi}^{(1)}.
\end{aligned}
\end{equation}

\medskip

Step 2. We solve \eqref{eq48} by following a Lyapunov-Schmidt reduction argument. Since $\text{Ran}\, (T(p_*,\lambda_*))$ is closed in $H^{\frac{1}{2}}(\partial D)\times H^{\frac{1}{2}}(\partial D)$, we introduce the orthogonal projection $Q:H^{\frac{1}{2}}(\partial D)\times H^{\frac{1}{2}}(\partial D)\to \text{Ran}\, (T(p_*,\lambda_*))$. By applying $Q$ to \eqref{eq48}, we obtain
\begin{equation} \label{eq51}
\centering
\begin{aligned}
T(p_*,\lambda_*)\bm{\varphi}^{(1)}=&-Q\left(p^{(1)} T_p+ \lambda^{(1)} T_\lambda +\delta S+\mathcal{O}((p^{(1)})^2,(\lambda^{(1)})^2,\delta^2)\right)\bm{\varphi}^{(0)} \\
&-Q\left(p^{(1)} T_p+ \lambda^{(1)} T_\lambda +\delta S+\mathcal{O}((p^{(1)})^2,(\lambda^{(1)})^2,\delta^2)\right)\bm{\varphi}^{(1)}.
\end{aligned}
\end{equation}
 By Lemma \ref{particle integral operator at the Dirac point}, $T^{-1}(p_*,\lambda_*)\in \mathcal{B}(\text{Ran}\, (T(p_*,\lambda_*)),(\text{Ker}\, T(p_*,\lambda_*))^\perp)$. Then \eqref{eq51} can be rewritten as
\begin{equation} \label{eq53}
    (I+A)\bm{\varphi}^{(1)}=-A\bm{\varphi}^{(0)},
\end{equation}
where the map $A$ is defined as
\begin{equation*} \label{eq54}
    \bm{f}\mapsto A\bm{f}:=T^{-1} (p_*,\lambda_*)Q\left(p^{(1)} T_p+ \lambda^{(1)} T_\lambda +\delta S+\mathcal{O}((p^{(1)})^2,(\lambda^{(1)})^2,\delta^2)\right)\bm{f}.
\end{equation*}
Thus, for $p^{(1)},\lambda^{(1)},\delta$ sufficiently small, $(I+A)^{-1}$ exists, which implies that \eqref{eq53} is uniquely solvable:
\begin{equation} \label{eq55}
    \bm{\varphi}^{(1)}=-(I+A)^{-1}A\bm{\varphi}^{(0)}.
\end{equation}
With \eqref{eq47}, we may rewrite \eqref{eq55} as 
\begin{equation*} \label{eq56}
    \bm{\varphi}^{(1)}=\bm{\varphi}^{(1)}(x;p^{(1)},\lambda^{(1)},\delta)
    =\bm{g_1}(x;p^{(1)},\lambda^{(1)},\delta)a+\bm{g_1}(x;p^{(1)},\lambda^{(1)},\delta)b,
\end{equation*}
where the map $(p^{(1)},\lambda^{(1)},\delta)\mapsto \bm{g_{i}}(x;p^{(1)},\lambda^{(1)},\delta)$ ($i=1,2$) is smooth from a neighborhood of $(0,0,0)$ to $(\text{Ker}\, T(p_*,\lambda_*))^\perp$ with the following estimate:
\begin{equation*} \label{eq57}
   \left\| \bm{g_{i}}(x;p^{(1)},\lambda^{(1)},\delta)\right\|
   \lesssim |p^{(1)}|+|\lambda^{(1)}|+|\delta|.
\end{equation*}
\medskip
Step 3. We take dual pair with $\bm{\varphi}_{i}$ ($i=1,2$) on both sides of \eqref{eq48}. From the identity $\langle \overline{\bm{\psi}},T(p_*,\lambda_*)\bm{\varphi}\rangle =\overline{\langle \overline{\bm{\varphi}},T(p_*,\lambda_*)\bm{\psi}\rangle}$ for any $\bm{\varphi},\bm{\psi}$, we obtain the following equations for $(a,b)^T$:
\begin{equation} \label{eq59}
    \mathcal{M}(p^{(1)},\lambda^{(1)},\delta)
    \begin{pmatrix}
    a \\b
    \end{pmatrix}
    =0,
\end{equation}
with
\begin{equation} \label{eq60}
\begin{aligned}
\mathcal{M}(p^{(1)},\lambda^{(1)},\delta)&:= 
    \begin{pmatrix}
    \gamma_* \lambda^{(1)} +t_*\delta & -i\theta_*p^{(1)} \\
    i\theta_*p^{(1)} & \gamma_* \lambda^{(1)} -t_*\delta \\
    \end{pmatrix}
    \quad+\mathcal{O}((\lambda^{(1)})^2,(p^{(1)})^2,\delta^2).
\end{aligned}
\end{equation}
Note that the higher-order term in \eqref{eq60} is smooth in $(p^{(1)},\lambda^{(1)},\delta)$ near $(0,0,0)$. Thus $(p,\lambda)$ with $\lambda=\lambda_*+\lambda^{(1)}(p^{(1)},\delta)$ is a characteristic value of $T_\delta (p,\lambda)$ defined in \eqref{eq32} if and only if $\lambda^{(1)}=\lambda^{(1)}(p^{(1)},\delta)$ solves the following equation
\begin{equation} \label{eq61}
    F(p^{(1)},\lambda^{(1)},\delta)\equiv \det\mathcal{M}(p^{(1)},\lambda^{(1)},\delta)=\gamma_*^2(\lambda^{(1)})^2-t_*^2\delta^2-\theta_*^2(p^{(1)})^2+\rho(p^{(1)},\lambda^{(1)},\delta)=0,
\end{equation}
where $\rho(p^{(1)},\lambda^{(1)},\delta)$ is smooth near $(0,0,0)$ and satisfies
\begin{equation} \label{eq62}
   \left\| \rho(p^{(1)},\lambda^{(1)},\delta)\right\|
   = \mathcal{O}((\lambda^{(1)})^3,(p^{(1)})^3,\delta^3).
\end{equation}

\medskip

Step 4. We solve $\lambda^{(1)}=\lambda^{(1)}(p^{(1)},\delta)$ from \eqref{eq61} for each $p^{(1)}$ and $\delta$. We first note that $\pm\frac{1}{|\gamma_*|}\sqrt{t^2_*\delta^2+\theta^2_*(p^{(1)})^2}$ give two branches of solutions if we drop the remainder term $\rho$. Thus, we seek a solution to \eqref{eq61} in the following form
\begin{equation} \label{eq63}
    \lambda^{(1)}(p^{(1)},\delta)=\frac{x}{|\gamma_*|}\sqrt{t^2_*\delta^2+\theta^2_*(p^{(1)})^2}
\end{equation}
with $|x|$ close to 1. It is clear that $x$ depends on $p^{(1)}$ and $\delta$. By substituting \eqref{eq63} into \eqref{eq61}, we obtain the following equation of $x$, with $p^{(1)}$ and $\delta\neq 0$ being viewed as two parameters:
\begin{equation} \label{eq63_1}
    \begin{aligned}
    H(x;p^{(1)},\delta)
    &:=\frac{1}{t^2_*\delta^2+\theta^2_*(p^{(1)})^2}F(p^{(1)},\frac{x}{|\gamma_*|}\sqrt{t^2_*\delta^2+\theta^2_*(p^{(1)})^2},\delta) \\
    &=x^2-1+\rho_1(x;p^{(1)},\delta)=0,
    \end{aligned}
\end{equation}
where $\rho_1(x;p^{(1)},\delta):=\frac{\rho(\frac{x}{|\gamma_*|}\sqrt{t^2_*\delta^2+\theta^2_*(p^{(1)})^2},p^{(1)},\delta)}{t^2_*\delta^2+\theta^2_*(p^{(1)})^2}$. Now we consider the upper branch of solution to (\ref{eq63_1}) with $|x-1|\ll 1$. Note that the following estimates hold uniformly in $x$ by \eqref{eq62}:
\begin{equation} \label{eq63_2}
\left|\rho_1(x;p^{(1)},\delta)\right|=\mathcal{O}(p^{(1)},\delta).
\end{equation}
We conclude that there exists a unique solution $x_s(p^{(1)},\delta)$ to the equation \eqref{eq63_1} with the estimate 
$x_s(p^{(1)},\delta)=1+\mathcal{O}(p^{1},\delta)$
for $\delta,|p^{(1)}|\ll 1$. It follows from \eqref{eq63} that there exists a unique solution $\lambda^{(1)}_{+}(p^{(1)},\delta)$ to the equation \eqref{eq61} near $\frac{1}{|\gamma_*|}\sqrt{t^2_*\delta^2+\theta^2_*(p^{(1)})^2}$.  Moreover, 
\begin{equation*}
    \lambda^{(1)}_{+}(p^{(1)},\delta)=\frac{x_s(p^{(1)},\delta)}{|\gamma_*|}\sqrt{t^2_*\delta^2+\theta^2_*(p^{(1)})^2})=\frac{1}{|\gamma_*|}\sqrt{t^2_*\delta^2+\theta^2_*(p^{(1)})^2})\left(1+\mathcal{O}(p^{(1)},\delta)\right).
\end{equation*}
Similarly, we can derive that
\begin{equation*}
    \lambda^{(1)}_{-}(p^{(1)},\delta)=-\frac{1}{|\gamma_*|}\sqrt{t^2_*\delta^2+\theta^2_*(p^{(1)})^2})\left(1+\mathcal{O}(p^{(1)},\delta)\right).
\end{equation*}
Note that $\lambda_{1,\delta}(p)=\lambda_*+\lambda^{(1)}_{+}(p-p_*,\delta)$, $\lambda_{2,\delta}(p)=\lambda_*+\lambda^{(1)}_{-}(p-p_*,\delta)$, whence \eqref{eq40} follows. 
\medskip

Step 5. Finally, by substituting $\lambda^{(1)}(p)=\lambda_{i,\delta}(p)-\lambda_*$ in \eqref{eq59} for $i=1,2$ respectively, one obtains the following solutions accordingly, 
$$
\begin{pmatrix}
    a_1 \\ b_1
\end{pmatrix}
=
\begin{pmatrix}
    1 \\
    \frac{\sqrt{-1}\theta_*(p-p_*)}{\delta t_*+\sqrt{\delta^2 t_*^2+\theta_*^2 (p-p_*)^2}}
\end{pmatrix}
,\quad
\begin{pmatrix}
    a_2 \\ b_2
\end{pmatrix}
=
\begin{pmatrix}
    \frac{\sqrt{-1}\theta_*(p-p_*)}{\delta t_*+\sqrt{\delta^2 t_*^2+\theta_*^2 (p-p_*)^2}} \\
    1
\end{pmatrix}.
$$
The asymptotic expansions of the eigenmodes $u_{i,\delta}(\cdot;p)$ in \eqref{eq41} are obtained by substituting $(a_i,b_i)^T$ above into \eqref{eq47} and then using the layer potentials in \eqref{eq27}. The proof for \eqref{eq42} and \eqref{eq43} follows in a similar manner.

\end{proof}

\medskip

\begin{proof}[Proof of Proposition \ref{prop-alpha}] The proposition is a consequence of Theorem \ref{dispersion relation near the dirac point}.  By letting $\delta=0$ in the asymptotic formula (\ref{eq40}), we see that the first two dispersion functions of the unperturbed structure near $p=p^*$ admit the following expansions:
\begin{equation*} \label{eq66_B}
    \begin{aligned}
        &\lambda_{1}(p)=\lambda_*+\lambda^{(1)}_{-}(p-p_*)
    =\lambda_*-\Big|\frac{\theta_
    *}{\gamma_*}\Big||p-p_*|+\mathcal{O}((p-p_*)^2),\\
        &\lambda_{2}(p)=\lambda_*+\lambda^{(1)}_{+}(p-p_*)
    =\lambda_*+\Big|\frac{\theta_
    *}{\gamma_*}\Big||p-p_*|+\mathcal{O}((p-p_*)^2).
    \end{aligned}
\end{equation*}
Therefore, the slope of the dispersion curve at the intersection point (which is the Dirac point) is $|\frac{\theta_*}{\gamma_*}|$. The slope is consistent with the one proposed in Proposition \ref{Existence of Dirac points of the period-1 structure} and Assumption \ref{assump0}. Thus we have
\begin{equation*}
    \alpha_*=\Big|\frac{\theta_*}{\gamma_*}\Big|.
\end{equation*}
\end{proof}

\section{Interface mode for the waveguide with perturbations}
In this section, we prove the existence of an interface mode for the waveguide in Figure 2(c), as stated in Theorem \ref{main result}. In Section 4.1, we reformulate the eigenvalue problem \eqref{eq7} as a boundary integral equation. The asymptotic expansions of the related boundary integral operators are derived with respect to the perturbation parameter $\delta$ in Section 4.2. Finally, we prove Theorem \ref{main result} in Section 4.3 by investigating the characteristic values of the associated boundary integral operator.


\subsection{Boundary-integral formulation for the joint system with two semi-infinite perturbed media}
In this subsection, we reformulate the eigenvalue problem \eqref{eq7} for the joint system in Figure 2(c) by using a boundary integral equation. The idea is to match the wave fields on both sides of the waveguide over the interface $\Gamma$. To proceed, we first introduce some notations. Recall that $\Gamma:=\{0\}\times (0,\frac{1}{2})$.  Let $\tilde{\Gamma}:=\Gamma\cup \left([0,+\infty)\times\{0\}\right)\cup \left([0,+\infty)\times\{\frac{1}{2}\}\right)$ be the boundary of the semi-infinite waveguide $(0,+\infty)\times (0,\frac{1}{2})$.
We define 
$$
H^{\frac{1}{2}}(\Gamma):=\{u=U|_{\Gamma}:U\in H^{\frac{1}{2}}(\tilde{\Gamma})\},
$$
and 
$$
\tilde{H}^{-\frac{1}{2}}(\Gamma):=\{u=U|_{\Gamma}:U\in H^{-\frac{1}{2}}(\tilde{\Gamma})\text{ and }supp (U)\subset \overline{\Gamma}\}.
$$ 
Then ${H}^{\frac{1}{2}}(\Gamma)$ is the dual space of $\tilde H^{-\frac{1}{2}}(\Gamma)$ and vice versa.  
We also denote the left and right domain in Figure 2(c) by $\tilde{\Omega}_\delta^-:=\tilde{\Omega}_\delta\bigcap\left((-\infty,0)\times(0,\frac{1}{2})\right)$ and $\tilde{\Omega}_\delta^+:=\tilde{\Omega}_\delta\bigcap\left((0,+\infty)\times(0,\frac{1}{2})\right)$, respectively. 

Suppose that $u(x;\lambda) \in L^2(\tilde{\Omega}_\delta) $ is a solution of the eigenvalue problem \eqref{eq7}. We express $u(x;\lambda)$ as
\begin{equation} \label{eq102}
    u(x;\lambda)=\left\{
    \begin{aligned}
        &u^+(x),\quad x\in\tilde{\Omega}_{\delta}^+, \\
        &u^-(x),\quad x\in\tilde{\Omega}_{\delta}^-,
    \end{aligned}
    \right.
\end{equation}
where $u^+$ and $u^-$ satisfy respectively the following equations
\begin{equation*} \label{eq99}
    \left\{
    \begin{aligned}
        &(\Delta_x +\lambda)u^+(x;\lambda)=0,\quad x \in \tilde{\Omega}_{\delta}^+,\\
        &u^+(x;\lambda)=0,\quad x \in \partial \tilde{D}_{n,\delta}\enspace(n=1,2,\cdots),\\
        &\frac{\partial}{\partial x_2} u^+(x;\lambda)=0,\quad x\in \Gamma_{-}\bigcup\Gamma_{+}.
    \end{aligned}
    \right.
    \quad
    \left\{
    \begin{aligned}
        &(\Delta_x +\lambda)u^-(x;\lambda)=0,\quad x \in \tilde{\Omega}_{\delta}^-,\\
        &u^-(x;\lambda)=0,\quad x \in \partial \tilde{D}_{n,\delta}\enspace(n=-1,-2,\cdots),\\
        &\frac{\partial}{\partial x_2} u^-(x;\lambda)=0,\quad x\in \Gamma_{-}\bigcup\Gamma_{+}.
    \end{aligned}
    \right.
\end{equation*}
For $\lambda\in I_{\delta}$, a common band gap for the left and the right periodic structures near the Dirac point $(p_*, \lambda_*)$, it is known that (see \cite{fliss2016solutions}) $u^+$ and $u^-$  decay exponentially away from the interface $\Gamma$ as $|x_1|\to \infty$ in $\tilde{\Omega}_{\delta}^+$ and $\tilde{\Omega}_{\delta}^-$ respectively. Moreover, 
the following interface conditions hold:
\begin{equation} \label{eq103}
    u(0^-,x_2;\lambda)=u(0^+,x_2;\lambda),
\end{equation}
\begin{equation} \label{eq104}
    \frac{\partial u}{\partial x_1}(0^-,x_2;\lambda)=\frac{\partial u}{\partial x_1}(0^+,x_2;\lambda).
\end{equation}
We have the following representation formulas for $u^\pm$. 
\begin{lemma} \label{representation formula and jump formula in the band-gap}
Let $\lambda\in I_{\delta}$, then
\begin{equation}\label{eq-u_plus}
    u^+(x;\lambda)=2\int_{\Gamma}G_\delta (x,y;\lambda)\phi^+(y;\lambda)d\sigma(y),\quad
    \phi^+=\frac{\partial u^+}{\partial x_1}\Big|_{\Gamma},
\end{equation}
\begin{equation}\label{eq-u_mius}
    u^-(x;\lambda)=-2\int_{\Gamma}G_{-\delta} (x,y;\lambda)\phi^-(y;\lambda)d\sigma(y),\quad
    \phi^-=\frac{\partial u^-}{\partial x_1}\Big|_{\Gamma},
\end{equation}
where $G_{\pm\delta} (x,y;\lambda)$ is the Green's function defined in \eqref{eq85}. Moreover, for each $\phi\in \tilde{H}^{-\frac{1}{2}}(\Gamma)$, the following identity holds:
\begin{equation}\label{eq_int_eqn_phi}
    \left(\frac{\partial}{\partial x_1}\int_{\Gamma}G_\delta (x,y;\lambda)\phi(y)d\sigma(y)\right)(0^+,x_2)=\frac{\phi(x_2)}{2}.
\end{equation}
\end{lemma}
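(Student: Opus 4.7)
The plan is to obtain the two representation formulas by an even-extension argument paired with Green's identity, and to derive the jump formula from the reflection symmetry of $\Omega_\delta$ combined with the distributional Laplacian of the resulting single-layer-type potential.

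For \eqref{eq-u_plus}, I would introduce the even extension $\tilde u^+(x_1,x_2):=u^+(|x_1|,x_2)$ of $u^+$ across $\Gamma$. Because of Assumption \ref{hypo_reflection symmetry} and the explicit form of $z_{n,\delta}$, the reflection $x_1\to -x_1$ sends $\tilde D_{n,\delta}$ ($n\geq 1$) onto $D_{-n,\delta}\subset\Omega_\delta$, so $\tilde u^+$ is a well-defined element of $H^1_{\mathrm{loc}}(\Omega_\delta)$ that satisfies the Dirichlet condition on every obstacle of $\Omega_\delta$ and the Neumann condition on $\Gamma_{\pm}$, obeys $(\Delta+\lambda)\tilde u^+=0$ away from $\Gamma$, and has normal-derivative jump $[\partial_{x_1}\tilde u^+]_\Gamma=2\phi^+$. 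For fixed $y\in\tilde\Omega_\delta^+$ I would then pair $\tilde u^+$ with $G_\delta(\cdot,y;\lambda)$ and apply Green's identity separately on $\Omega_\delta\cap((0,N)\times(0,\tfrac{1}{2}))$ and on $\Omega_\delta\cap((-N,0)\times(0,\tfrac{1}{2}))$. Obstacle and wall terms drop out by the respective boundary conditions; the contributions at $x_1=\pm N$ vanish as $N\to\infty$ because $\lambda\in I_\delta$ forces exponential decay of both $G_\delta$ and $u^\pm$ via \eqref{eq88} and Corollary \ref{common band gap}. Summing the two interior identities and using the continuity of $\tilde u^+$ and of $\partial_{x_1}G_\delta(\cdot,y;\lambda)$ across $\Gamma$, the $\tilde u^+\,\partial_{x_1}G_\delta$ terms cancel, and only the jump of $\partial_{x_1}\tilde u^+$ survives. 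The reciprocity $G_\delta(x,y;\lambda)=G_\delta(y,x;\lambda)$, which is inherited from the argument of Lemma \ref{commutativity and parity of the Green function} (in fact simpler in the band gap since both arguments decay exponentially), then delivers \eqref{eq-u_plus}. Identity \eqref{eq-u_mius} follows by the identical procedure carried out on $\Omega_{-\delta}$, with the overall minus sign coming from the reversed outward normal of the left half $\tilde\Omega_\delta^-$ at $\Gamma$.

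For the jump formula \eqref{eq_int_eqn_phi}, I would set $U(x):=\int_\Gamma G_\delta(x,y;\lambda)\phi(y)\,d\sigma(y)$. Since $\Omega_\delta$ is invariant under $x_1\to -x_1$ and $\Gamma$ is pointwise fixed by this reflection, the limiting-absorption characterisation of $G_\delta$ forces
\[G_\delta((x_1,x_2),(0,y_2);\lambda)=G_\delta((-x_1,x_2),(0,y_2);\lambda).\]
Hence $U$ is even in $x_1$, which gives $\partial_{x_1}U(0^+,x_2)=-\partial_{x_1}U(0^-,x_2)$. Simultaneously, by the defining equation \eqref{eq85} for $G_\delta$, the potential $U$ satisfies $(\Delta+\lambda)U=\phi\,\delta_\Gamma$ in the distributional sense on $\Omega_\delta$, which is equivalent to the standard single-layer jump $[\partial_{x_1}U]_\Gamma=\phi(x_2)$. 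Combining these two relations yields $2\partial_{x_1}U(0^+,x_2)=\phi(x_2)$, which is exactly \eqref{eq_int_eqn_phi}.

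The main technical point I expect to need attention is the rigorous control of the boundary terms at $x_1=\pm N$, which rests on a quantitative exponential-decay estimate for $G_\delta(\cdot,y;\lambda)$ when $\lambda\in I_\delta$. This is standard for periodic problems: using the Floquet representation \eqref{eq88}, one deforms the contour in the quasi-momentum variable $p$ off the real axis into a strip where $\lambda$ remains separated from every band, producing uniform exponential decay on compact subsets of $I_\delta$. The same estimate also underpins the finite-energy structure of $u^\pm$ that is implicit in the target formulas \eqref{eq-u_plus}--\eqref{eq-u_mius}.
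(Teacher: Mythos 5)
Your argument is correct, and it reaches the same conclusions, but it takes different paths on both halves of the lemma. For the representation formulas, you reflect $u^+$ evenly across $\Gamma$ and apply Green's identity on both halves of $\Omega_\delta$ — this is the same device the paper uses in Proposition \ref{representation formula}. The paper's proof of the present lemma instead constructs the half-waveguide Green's function $G_\delta^\Gamma$ with Neumann data on $\Gamma$ and observes that the reflection symmetry of $\Omega_\delta$ gives $G_\delta^\Gamma(x,y;\lambda)=2G_\delta(x,y;\lambda)$ for $y\in\Gamma$, after which the representation is a one-line Green's identity. The two ideas are mirror images (reflect the solution vs.\ reflect the Green's function) and cost about the same; yours has the advantage of reusing an argument the paper has already carried out in detail for the unperturbed structure.

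The genuine difference is in the jump formula \eqref{eq_int_eqn_phi}. You compute the one-sided normal derivative directly: evenness of the potential $U$ in $x_1$ gives $\partial_{x_1}U(0^+)=-\partial_{x_1}U(0^-)$, and the distributional equation $(\Delta+\lambda)U=\phi\,\delta_\Gamma$ gives $[\partial_{x_1}U]_\Gamma=\phi$, so $2\partial_{x_1}U(0^+)=\phi$. This is shorter and more transparent than what the paper does. The paper instead \emph{defines} $\psi:=2\partial_{x_1}U(0^+,\cdot)$, applies the just-proved formula \eqref{eq-u_plus} to $U|_{\Omega_\delta^+}$ to obtain $\int_\Gamma G_\delta\phi=\int_\Gamma G_\delta\psi$ on $\Omega_\delta^+$, extends the difference evenly to all of $\Omega_\delta$, and computes the distributional Laplacian of that (vanishing) extension to force $\phi=\psi$. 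The paper's route deliberately avoids invoking the classical single-layer jump relation as a black box, substituting a uniqueness argument; yours invokes it and therefore requires, as an extra (routine) step, the observation that $G_\delta(x,y;\lambda)$ has the free-space logarithmic singularity plus a remainder that is smooth in a neighborhood of $\Gamma$ (which holds because the obstacles $D_{n,\delta}$ stay a fixed distance from $\Gamma$ for $\delta$ small), and that for $\phi\in\tilde H^{-1/2}(\Gamma)$ the one-sided traces $\partial_{x_1}U(0^\pm,\cdot)$ exist in $\tilde H^{-1/2}(\Gamma)$ so the jump identity makes sense in that space. With that remark supplied, your argument is complete and arguably the more direct of the two.
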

\begin{proof}
We first prove the representation formula for $u^+$. The formula for $u^-$ can be proved similarly.
Let $G_\delta^{\Gamma}$ be Green's function in the perturbed semi-infinity waveguide with the Neumann boundary condition on $\Gamma$:
\begin{equation*}
    \left\{
    \begin{aligned}
        &(\Delta_x +\lambda)G^\Gamma_\delta(x,y;\lambda)=\tilde{\delta}(x-y),\quad x,y \in \Omega_\delta,\\
        &G^\Gamma_\delta(x,y;\lambda)=0,\quad x \in \cup_{n\geq 1}\partial D_{n,\delta}, \\
        &\frac{\partial}{\partial {x_2}} G^\Gamma_\delta(x,y;\lambda)=0,\quad x\in \Gamma_{-}\bigcup\Gamma_{+}, \\
        &\frac{\partial}{\partial {x_1}} G^\Gamma_\delta(x,y;\lambda)=0,\quad x\in \Gamma.
    \end{aligned}
    \right.
\end{equation*}

Since $\Omega_{\delta}$ attains reflection symmetry, for $y\in\Gamma$, there holds
\[
    G^\Gamma_\delta(x,y;\lambda)=\lim_{y_1\to 0^-}G_\delta(x,y;\lambda)+\lim_{y_1\to 0^+}G_\delta(x,y;\lambda)=2G_\delta(x,y;\lambda),
\]
where $G_\delta (x,y;\lambda)$ is the Green's function for the perturbed periodic structure defined in \eqref{eq85}.
Therefore, for $\lambda\in I_{\delta}$, $G^\Gamma_\delta(x,y;\lambda)$ decays exponentially for $|x_1|\to \infty$. Then an integration by parts yields
\begin{equation*}
u^+(x;\lambda)=\int_{\Gamma}G^{\Gamma}_\delta (x,y;\lambda)\phi^+(y;\lambda)d\sigma(y)=2\int_{\Gamma}G_\delta (x,y;\lambda)\phi^+(y;\lambda)d\sigma(y)
,\quad
\phi^+=\frac{\partial u^+}{\partial x_1}\Big|_{\Gamma}, 
\end{equation*}
whence the representation formula for $u^+$ follows.

Next, we show \eqref{eq_int_eqn_phi}. Let $\left(\frac{\partial}{\partial x_1}\int_{\Gamma}G_\delta (x,y;\lambda)\phi(y)d\sigma(y)\right)(0^+,x_2)=\frac{\psi(x_2)}{2}$.  
Then \eqref{eq-u_plus} gives
\begin{equation} \label{eq169}
    \int_{\Gamma}G_\delta (x,y;\lambda)\phi(y)d\sigma(y)
    =\int_{\Gamma}G_\delta (x,y;\lambda)\psi(y)d\sigma(y)\quad \mbox{for} \; x\in \Omega_{\delta}^+.
\end{equation}
Define
\begin{equation} \label{eq170}
    u(x) := \int_{\Gamma}G_\delta (x,y;\lambda)(\phi(y)-\psi(y))d\sigma(y)\quad \mbox{for} \; x\in \Omega_{\delta}^+.
\end{equation}
Then \eqref{eq169} implies that $u(x)\equiv 0$ for $x\in \Omega_{\delta}^+$. By the reflection symmetry of the Green's function $G_\delta (x,y;\lambda)$ (which follows from the same arguments as in Lemma \ref{commutativity and parity of the Green function}), $u(x)$ in \eqref{eq170} can be naturally extended to $x\in \Omega_{\delta}^-$.  This gives an even extension of $u$, i.e. $u(x_1,x_2):=u(-x_1,x_2)$ for $x_1<0$. Note that the extended $u(x)$ also vanishes in $\Omega_\delta$. On the other hand, if we rewrite $u$ as
\begin{equation*}
\begin{aligned}
u(x)= \int_{\Gamma}G_\delta (x,y;\lambda)(\phi(y)-\psi(y))d\sigma(y)
=\int_{\Omega_\delta}G_\delta (x,y;\lambda)(\tilde{\phi}-\tilde{\psi})(y)dy,
\end{aligned}
\end{equation*}
where $(\tilde{\phi}-\tilde{\psi})(y):=\phi(y_2)\tilde{\delta}(y_1)-\psi(y_2)\tilde{\delta}(y_1)$ (here $\tilde{\delta}$ denotes the delta function), then 
\begin{equation*}
(\Delta+\lambda)u=\tilde{\phi}-\tilde{\psi}.
\end{equation*}
However, $(\Delta_x +\lambda)u\equiv 0$ for $u(x)\equiv 0$ when $x\in \Omega_\delta$. Hence we conclude that $\phi=\psi$ and Lemma \ref{representation formula and jump formula in the band-gap} is proved.
\end{proof}
By Lemma \ref{representation formula and jump formula in the band-gap} and \eqref{eq104}, $u^{\pm}$ can be expressed as 
\begin{equation} \label{eq100}
    u^+(x;\lambda)=2\int_{\Gamma}G_\delta (x,y;\lambda)\phi(y;\lambda)d\sigma(y),
\end{equation}
\begin{equation} \label{eq101}
    u^-(x;\lambda)=-2\int_{\Gamma}G_{-\delta} (x,y;\lambda)\phi(y;\lambda)d\sigma(y),
\end{equation}
for some $\phi\in \tilde{H}^{-\frac{1}{2}}(\Gamma)$. By substituting \eqref{eq100} and \eqref{eq101} into the interface condition \eqref{eq103}, we obtain  
\begin{equation} \label{eq106}
    2\int_{\Gamma} \left (G_{\delta} ((0,x_2),(0,y_2);\lambda)+G_{-\delta} ((0,x_2),(0,y_2);\lambda)\right)\phi((0,y_2);\lambda)dy_2=0.
\end{equation}

Let us introduce the following single-layer boundary integral operators
\begin{equation} \label{eq89}
        \mathbb{G}_{\pm \delta}(\lambda): \tilde{H}^{-\frac{1}{2}}(\Gamma)\to H^{\frac{1}{2}}(\Gamma),\quad \phi(y) \mapsto 2\int_{\Gamma}G_{\pm \delta}(x,y;\lambda)\phi(y;\lambda)dy, 
\end{equation}
and set
\begin{equation*} \label{eq108}
    \tilde{\mathbb{G}}_\delta(\lambda):\tilde{H}^{-\frac{1}{2}}(\Gamma)\to H^{\frac{1}{2}}(\Gamma),\quad
\tilde{\mathbb{G}}_\delta(\lambda)=\mathbb{G}_\delta(\lambda)+\mathbb{G}_{-\delta}(\lambda).
\end{equation*}
Then \eqref{eq106} is equivalent to the following boundary integral equation
\begin{equation} \label{eq109}
\tilde{\mathbb{G}}_\delta(\lambda)\phi=0,\quad\mbox{for}\; \phi\in \tilde{H}^{-\frac{1}{2}}(\Gamma) .
\end{equation}
In summary, \eqref{eq7} attains an interface mode $u(x;\lambda)$, if and only if $\lambda\in I_{\delta}$ is a characteristic value of $\tilde{\mathbb{G}}_\delta(\lambda)$.

\subsection{Properties of boundary integral operators}
Here and henceforth, for each $\lambda\in I_\delta$, we parameterize $\lambda$ as $\lambda:=\lambda_*+\delta\cdot h$ for $h\in J:=(\frac{E_{1,\delta}-\lambda_*}{\delta},\frac{E_{2,\delta}-\lambda_*}{\delta})$. The complex neighborhood  of $J$ is denoted by $\tilde{J}:=\{h\in \mathbb{C}:|h|<|\frac{E_{1,\delta}-E_{2,\delta}}{2\delta}|\}$.

We investigate the boundary integral operator  $\tilde{\mathbb{G}}_\delta(\lambda_*+\delta\cdot h)=\mathbb{G}_\delta(\lambda_*+\delta\cdot h)+\mathbb{G}_{-\delta}(\lambda_*+\delta\cdot h)$ for $h\in\tilde{J}$. The results obtained will pave the way for applying the Gohberg and Sigal theory to prove Theorem \ref{main result}.

To this end,  we first define $\mathbb{T}_{\pm\delta}(\lambda_*+\delta\cdot h): \tilde{H}^{-\frac{1}{2}}(\Gamma)\to H^{\frac{1}{2}}(\Gamma)$ by
\begin{equation} \label{eq93}
    \begin{aligned}
        \mathbb{T}_{\pm \delta}(\lambda_*+\delta\cdot h)=&\int_{0}^{2\pi}\sum_{n\geq 3}\frac{\langle \cdot,\overline{u_{n,\pm\delta}(y;p)}\rangle}{\lambda_*+\delta\cdot h-\lambda_{n,\pm\delta}(p)}u_{n,\pm\delta}(x;p)dp \\ 
        &+\int_{[0,\pi-\delta^{\frac{1}{3}}]\bigcup[\pi+\delta^{\frac{1}{3}},2\pi]}\frac{\langle \cdot,\overline{u_{2,\pm\delta}(y;p)}\rangle}{\lambda_*+\delta\cdot h-\lambda_{2,\pm\delta}(p)}u_{2,\pm\delta}(x;p)dp \\
        &+\int_{[0,\pi-\delta^{\frac{1}{3}}]\bigcup[\pi+\delta^{\frac{1}{3}},2\pi]}\frac{\langle \cdot,\overline{u_{1,\pm\delta}(y;p)}
        \rangle}{\lambda_*+\delta\cdot h-\lambda_{1,\pm\delta}(p)}u_{1,\pm\delta}(x;p)dp. 
    \end{aligned}
\end{equation}
Here and henceforth, $\langle\cdot,\cdot\rangle$ denotes the duality pair on ${\tilde{H}^{-\frac{1}{2}}(\Gamma)\times H^{\frac{1}{2}}(\Gamma)}$.

Recall that $\phi_1, \phi_2$ are the two Dirac eigenmodes, with $\phi_1$ being odd and $\phi_2$ being even (see Corollary \ref{even odd mode and root function}). Let 
$\mathbb{P}$ be the projection operator defined by
\begin{equation*}
    \mathbb{P}:\tilde{H}^{-\frac{1}{2}}(\Gamma)\to H^{\frac{1}{2}}(\Gamma),\quad
    \psi\mapsto \langle\psi ,\overline{\phi}_2\rangle\phi_2.
\end{equation*}
We further introduce the following four functions:
\begin{equation*} \label{eq91}
    \begin{aligned}
     &f_1(h;\delta):=
     \frac{1}{2\pi}\int_{\pi-\delta^{\frac{1}{3}}}^{\pi+\delta^{\frac{1}{3}}}\frac{1}{\delta\cdot h+\sqrt{\delta^2 \beta_*^2+\alpha_*^2 (p-p_*)^2}}\frac{L(p;\delta)}{1+L(p;\delta)}dp, \\
     &f_2(h;\delta):=\frac{1}{2\pi}\int_{\pi-\delta^{\frac{1}{3}}}^{\pi+\delta^{\frac{1}{3}}}\frac{1}{\delta\cdot h-\sqrt{\delta^2 \beta_*^2+\alpha_*^2 (p-p_*)^2}}\frac{1}{1+L(p;\delta)}dp, \\
     &\tilde{f}_1(h;\delta):=\frac{1}{2\pi}\int_{\pi-\delta^{\frac{1}{3}}}^{\pi+\delta^{\frac{1}{3}}}\frac{1}{\delta\cdot h+\sqrt{\delta^2 \beta_*^2+\alpha_*^2 (p-p_*)^2}}\frac{1}{1+L(p;\delta)}dp, \\
     &\tilde{f}_2(h;\delta):=
     \frac{1}{2\pi}\int_{\pi-\delta^{\frac{1}{3}}}^{\pi+\delta^{\frac{1}{3}}}\frac{1}{\delta\cdot h-\sqrt{\delta^2 \beta_*^2+\alpha_*^2 (p-p_*)^2}}\frac{L(p;\delta)}{1+L(p;\delta)}dp,
    \end{aligned}
\end{equation*}
where 
\[
\alpha_*=\frac{\theta_*}{\gamma_*}, \quad \beta_*=\frac{t_*}{\gamma_*}, \quad L(p;\delta):=\frac{\alpha_*^2(p-p_*)^2}{(\delta \beta_*+\sqrt{\delta^2 \beta_*^2+\alpha_*^2 (p-p_*)^2})^2}. 
\]

We have the following asymptotic expansions of the operators $\mathbb{G}_{\pm\delta}$.

\begin{proposition} \label{Asymptotic formula of Green operators}
There exists $\delta_0>0$ such that for $0<\delta<\delta_0$, the operator $\mathbb{G}_{\delta}$ defined in \eqref{eq89} admits the following asymptotic expansion
\begin{equation} \label{eq90}
    \begin{aligned}
        \mathbb{G}_{\delta}(\lambda_*+\delta\cdot h)=\mathbb{T}_{\delta}(\lambda_*+\delta\cdot h)+\Big(f_1(h;\delta)+f_2(h;\delta)\Big)\mathbb{P}+\Big(\mathbb{R}_1(h;\delta)+\mathbb{R}_2(h;\delta)\Big),
    \end{aligned}
\end{equation}
where all functions and operators in \eqref{eq90} are analytic with respect to $h$ in $\tilde{J}$. Moreover, the following hold uniformly for $h\in \tilde{J}$:
\begin{equation} \label{eq92}
        \lim_{\delta\to 0}\|\mathbb{R}_1(h;\delta)\|_{\mathcal{B}(\tilde{H}^{-\frac{1}{2}}(\Gamma),H^{\frac{1}{2}}(\Gamma))}=0,\quad
        \lim_{\delta\to 0}\|\mathbb{R}_2(h;\delta)\|_{\mathcal{B}(\tilde{H}^{-\frac{1}{2}}(\Gamma),H^{\frac{1}{2}}(\Gamma))}=0.
\end{equation}
Similarly,
\begin{equation} \label{eq94}
    \begin{aligned}
        \mathbb{G}_{-\delta}(\lambda_*+\delta\cdot h)=\mathbb{T}_{-\delta}(\lambda_*+\delta\cdot h)+\Big(\tilde{f}_1(h;\delta)+\tilde{f}_2(h;\delta)\Big)\mathbb{P}
        +\Big(\tilde{\mathbb{R}}_1(h;\delta)+\tilde{\mathbb{R}}_2(h;\delta)\Big),
    \end{aligned}
\end{equation}
with
\begin{equation}\label{eq92-1}
    \lim_{\delta\to 0}\|\tilde{\mathbb{R}}_1(h;\delta)\|_{\mathcal{B}(\tilde{H}^{-\frac{1}{2}}(\Gamma),H^{\frac{1}{2}}(\Gamma))}=0,\quad
    \lim_{\delta\to 0}\|\tilde{\mathbb{R}}_2(h;\delta)\|_{\mathcal{B}(\tilde{H}^{-\frac{1}{2}}(\Gamma),H^{\frac{1}{2}}(\Gamma))}=0.
\end{equation}
\end{proposition}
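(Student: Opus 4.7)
The plan is to insert the spectral representation \eqref{eq88} of $G_{\pm\delta}$ into the defining formula \eqref{eq89} for $\mathbb{G}_{\pm\delta}(\lambda_*+\delta h)$, yielding a superposition over bands $n\ge 1$ and quasimomenta $p\in[0,2\pi]$ of rank--one operators of the form $\frac{\langle\cdot,\overline{u_{n,\pm\delta}(\cdot;p)|_\Gamma}\rangle\, u_{n,\pm\delta}(\cdot;p)|_\Gamma}{\lambda_*+\delta h-\lambda_{n,\pm\delta}(p)}$. I would then split the $(n,p)$--range into three pieces: (i) $n\geq 3$, all $p$; (ii) $n\in\{1,2\}$ with $p\notin(\pi-\delta^{\frac{1}{3}},\pi+\delta^{\frac{1}{3}})$; (iii) $n\in\{1,2\}$ with $|p-\pi|<\delta^{\frac{1}{3}}$. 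Pieces (i) and (ii) recombine verbatim into $\mathbb{T}_{\pm\delta}(\lambda_*+\delta h)$; in case (i) the denominators stay bounded away from zero by Assumption \ref{assump0}(4) together with the common--gap Corollary \ref{common band gap}, while in case (ii) Theorem \ref{dispersion relation near the dirac point} gives a lower bound $\gtrsim\delta^{\frac{1}{3}}$. Hence $\mathbb{T}_{\pm\delta}$ is analytic in $h\in\tilde{J}$ and bounded on $\tilde{H}^{-\frac{1}{2}}(\Gamma)\to H^{\frac{1}{2}}(\Gamma)$ uniformly in small $\delta$.

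The substantive work is in piece (iii). Here I would plug in the asymptotics \eqref{eq40}--\eqref{eq41} (respectively \eqref{eq42}--\eqref{eq43}), after first $L^2(C_\delta)$--normalizing the Bloch modes: assuming $\{\phi_1,\phi_2\}$ is an $L^2(C_\delta)$--orthonormal basis of the Dirac eigenspace, the squared norms of the modes in \eqref{eq41} equal $1+L(p;\delta)+O(p-p_*,\delta)$. The decisive step is Corollary \ref{even odd mode and root function}: since $\phi_1$ is odd in $x_1$, $\phi_1|_\Gamma\equiv 0$; both normalized traces therefore collapse to scalar multiples of $\phi_2|_\Gamma$, with leading coefficients $\frac{1}{\sqrt{1+L(p;\delta)}}$ for $u_{2,\delta}|_\Gamma$ and $\frac{\sqrt{-1}\theta_*(p-p_*)}{(\delta t_*+\sqrt{\delta^2 t_*^2+\theta_*^2(p-p_*)^2})\sqrt{1+L(p;\delta)}}$ for $u_{1,\delta}|_\Gamma$. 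The corresponding rank--one kernels become $\frac{1}{1+L(p;\delta)}\mathbb{P}$ and $\frac{L(p;\delta)}{1+L(p;\delta)}\mathbb{P}$ respectively, up to $O(p-p_*,\delta)$ errors. Dividing by the denominators $\delta h\mp\sqrt{\delta^2\beta_*^2+\alpha_*^2(p-p_*)^2}\bigl(1+O(p-p_*,\delta)\bigr)$ and integrating over $(\pi-\delta^{\frac{1}{3}},\pi+\delta^{\frac{1}{3}})$ reproduces $\bigl(f_1(h;\delta)+f_2(h;\delta)\bigr)\mathbb{P}$ as the principal part. The $-\delta$ case is identical after using \eqref{eq43} in place of \eqref{eq41}: the two bands swap roles and one obtains $\bigl(\tilde{f}_1+\tilde{f}_2\bigr)\mathbb{P}$.

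The remainders $\mathbb{R}_1,\mathbb{R}_2$ absorb two error sources: (a) the $O(p-p_*,\delta)$ corrections to the Bloch--mode traces on $\Gamma$, yielding additive rank--one operators of the form (scalar integral)$\times\bigl(\phi_2|_\Gamma\otimes\overline{R(p;\delta)}\bigr)$ with $\|R(p;\delta)\|_{H^{\frac{1}{2}}(\Gamma)}=O(p-p_*,\delta)$; and (b) the $(1+O(p-p_*,\delta))$ correction inside the denominator, which after geometric expansion of the reciprocal yields an additional $O(p-p_*,\delta)$ scalar factor multiplying $\mathbb{P}$. Each of these is a rank--one operator, so its norm in $\mathcal{B}(\tilde{H}^{-\frac{1}{2}}(\Gamma),H^{\frac{1}{2}}(\Gamma))$ reduces to an absolute scalar integral. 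The rescaling $p-p_*=\delta s$ then turns the singular $p$--integral over $|p-\pi|<\delta^{\frac{1}{3}}$ into an $s$--integral over $|s|<\delta^{-\frac{2}{3}}$ of $\bigl[O(\delta)+O(\delta s)\bigr]/\bigl[h\pm\sqrt{\beta_*^2+\alpha_*^2 s^2}(1+O(\delta,\delta s))\bigr]$, which is $O(\delta|\log\delta|)=o(1)$ uniformly in $h\in\tilde{J}$. Analyticity in $h\in\tilde{J}$ follows because $|h|<|\beta_*|$ there, so the square--root denominator has no real zeros for sufficiently small $\delta$.

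The hard part will be promoting the pointwise Bloch--mode asymptotics \eqref{eq41} and \eqref{eq43} to uniform $H^{\frac{1}{2}}(\Gamma)$--trace estimates, and then converting these into operator--norm bounds on $\mathcal{B}(\tilde{H}^{-\frac{1}{2}}(\Gamma),H^{\frac{1}{2}}(\Gamma))$. The rank--one structure is essential: once the trace--norm bounds on $R(p;\delta)$ are in hand, each remainder factors as a bounded rank--one operator multiplied by a scalar integral whose smallness is delivered by the rescaling estimate above. A secondary subtlety is the precise $L^2(C_\delta)$--normalization encoded in $L(p;\delta)$, which should follow from the orthonormality of $\phi_1,\phi_2$ together with the smoothness of the eigenfunction asymptotics in Theorem \ref{dispersion relation near the dirac point}.
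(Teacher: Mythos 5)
Your proposal is correct and follows the same route as the paper's Appendix D: split the spectral representation of $G_{\pm\delta}$ into the tail $\mathbb{T}_{\pm\delta}$ and the near-Dirac window $|p-\pi|<\delta^{1/3}$ over bands $n=1,2$, substitute the normalized Bloch-mode asymptotics from Theorem \ref{dispersion relation near the dirac point}, use $\phi_1|_\Gamma\equiv 0$ from Corollary \ref{even odd mode and root function} to collapse the trace to a multiple of $\phi_2|_\Gamma$, and read off $(f_1+f_2)\mathbb{P}$ as the principal rank-one part with the remainder bounded via a scalar integral. One minor slip: your rescaling $p-p_*=\delta s$ gives a numerator $O(\delta)+O(\delta s)$; the $O(\delta s)$ term integrated against $\bigl[h\pm\sqrt{\beta_*^2+\alpha_*^2s^2}\bigr]^{-1}$ over $|s|<\delta^{-2/3}$ is $O(\delta)\cdot O(\delta^{-2/3})=O(\delta^{1/3})$, not $O(\delta|\log\delta|)$ as you state, since $|s|/\sqrt{\beta_*^2+\alpha_*^2 s^2}\to 1/\alpha_*$ rather than decaying. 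The paper instead bounds the mode remainder uniformly by $O(\delta^{1/3})$ on the window and multiplies by the $O(|\log\delta|)$ singular integral, getting $O(\delta^{1/3}\log\delta)$. Either way the remainder is $o(1)$ uniformly in $h\in\tilde J$, so your conclusion stands; only the claimed rate needs correcting.
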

\begin{proof}
    See Appendix D.
\end{proof}

\medskip
We next investigate the limiting behavior of the operator $\mathbb{T}_{\pm\delta}$ as $\delta\to 0$. To this end, we define the following integral operator 
$\mathbb{T}_{0}\in \mathcal{B}(\tilde{H}^{-\frac{1}{2}}(\Gamma),H^{\frac{1}{2}}(\Gamma))$: 
\begin{equation*} \label{eq95}
        \mathbb{T}_{0}=\frac{1}{2\pi}\int_{0}^{2\pi}\sum_{n\geq 3}\frac{\langle \cdot,\overline{u_{n}(y;p)}\rangle}{\lambda_*-\lambda_{n}(p)}u_{n}(x;p)dp 
        + \frac{1}{2\pi}p.v.\int_{0}^{2\pi}\sum_{n=1,2}\frac{\langle \cdot,\overline{v_{n}(y;p)}\rangle}{\lambda_*-\mu_{n}(p)}v_{n}(x;p)dp.    
\end{equation*}
Using (\ref{eq-G}), the kernel function of the operator $\mathbb{T}_0$ (denoted by $T_0(x,y;\lambda_*)$) is related to the Green's function $G(x,y;\lambda_*)$ by
\begin{equation} \label{eq97}
     G(x,y;\lambda_*)=
   T_0(x,y;\lambda_*) -\frac{i}{2}\frac{v_1(x,\pi)\overline{v_1(y,\pi)}}{\alpha_*}-\frac{i}{2}\frac{v_2(x,\pi)\overline{v_2(y,\pi)}}{\alpha_*}. 
\end{equation}
The following two propositions give the properties of the operator $\mathbb{T}_0$.
\begin{proposition} \label{lem-t0}
The kernel of $\mathbb{T}_0$ is given by
\begin{equation} \label{eq-kernel-T_0}
\text{Ker}\,\mathbb{T}_0=\text{span}\left\{\frac{\partial v_1}{\partial x_1}\big|_{\Gamma}\right\}.
\end{equation}
\end{proposition}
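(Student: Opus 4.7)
The plan is to establish the two inclusions $\mathrm{span}\{\partial_{x_1} v_1|_\Gamma\}\subseteq \mathrm{Ker}\,\mathbb{T}_0$ and $\mathrm{Ker}\,\mathbb{T}_0\subseteq\mathrm{span}\{\partial_{x_1} v_1|_\Gamma\}$ separately, in each case using the decomposition \eqref{eq97} to convert questions about $\mathbb{T}_0$ into questions about the full Green's function $G$. For $\phi\in\tilde H^{-\frac{1}{2}}(\Gamma)$, writing $u_\phi(x):=\int_\Gamma G(x,y;\lambda_*)\phi(y)\,dy$ and $c_j(\phi):=\int_\Gamma\overline{v_j(y)}\phi(y)\,dy$, \eqref{eq97} immediately yields
\[
\mathbb{T}_0\phi(x) \;=\; u_\phi(x)+\frac{i}{2\alpha_*}\bigl(c_1(\phi)\,v_1(x)+c_2(\phi)\,v_2(x)\bigr),
\]
so that the kernel condition on $\Gamma$ is $u_\phi|_\Gamma = -\frac{i}{2\alpha_*}(c_1 v_1 + c_2 v_2)|_\Gamma$.

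For the inclusion $\partial_{x_1}v_1|_\Gamma\in\mathrm{Ker}\,\mathbb{T}_0$, I would apply the representation formula (Proposition \ref{representation formula}) to the right-going mode $v_1\in\mathcal{V}^+(\lambda_*)$, giving $v_1(x)=2\,u_{\phi_*}(x)$ on $\Omega^+$ with $\phi_*:=\partial_{x_1}v_1|_\Gamma$. Restricting to $\Gamma$ (legitimate since the single layer of an $\tilde H^{-\frac{1}{2}}$-density is continuous across $\Gamma$), Lemma \ref{orthogonality of the propagating mode} evaluates $c_1(\phi_*)=i\alpha_*/2$, while Lemma \ref{conjugate vs reflection} gives $v_2|_\Gamma=\tau v_1|_\Gamma$ and hence $c_2(\phi_*)=\overline{\tau}\,i\alpha_*/2$. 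Substituting these values and using $\overline{\tau}\,v_2|_\Gamma=v_1|_\Gamma$, a direct computation collapses the identity above to $v_1|_\Gamma=2\mathbb{T}_0\phi_*|_\Gamma+v_1|_\Gamma$, forcing $\mathbb{T}_0\phi_*=0$ on $\Gamma$.

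For the reverse inclusion, I would take $\psi\in\mathrm{Ker}\,\mathbb{T}_0$ and set $u:=u_\psi$, $c_j:=c_j(\psi)$. The symmetry identity \eqref{eq80_1} (applicable because $\psi$ is supported on $\Gamma$) forces $u(x_1,x_2)=u(-x_1,x_2)$, i.e.\ $u$ is even in $x_1$. Reading off the asymptotics $u\sim -\tfrac{ic_1}{\alpha_*}v_1+(\text{decay})$ as $x_1\to+\infty$ from \eqref{eq-73} and $u\sim -\tfrac{ic_2}{\alpha_*}v_2+(\text{decay})$ as $x_1\to-\infty$ from \eqref{eq73}, and matching the two descriptions at $+\infty$ via $v_2(x)=\tau v_1(-x_1,x_2)$, produces the linear constraint $c_2=\overline{\tau}\,c_1$. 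Thus the map $\psi\mapsto(c_1,c_2)$ sends $\mathrm{Ker}\,\mathbb{T}_0$ into a one-dimensional line in $\mathbb{C}^2$.

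It remains to show that this map is injective on $\mathrm{Ker}\,\mathbb{T}_0$. Suppose $\psi\in\mathrm{Ker}\,\mathbb{T}_0$ with $c_1=c_2=0$: the kernel condition then gives $u|_\Gamma=0$, and $u$ decays exponentially at both ends. Restricting $u$ to $\Omega^+$ and extending oddly across $\Gamma$ via $U(x_1,x_2):=u(x_1,x_2)$ for $x_1>0$ and $U(x_1,x_2):=-u(-x_1,x_2)$ for $x_1<0$ defines a function that is continuous across $\Gamma$ thanks to $u|_\Gamma=0$, with matching normal derivatives automatic from the odd extension. Assumption \ref{hypo_reflection symmetry} guarantees that the reflected obstacles coincide with the $D_n$ on the opposite side, so $U$ inherits the Dirichlet condition on every $\partial D_n$ and the Neumann condition on $\Gamma_\pm$, and is therefore an $L^2(\Omega)$ eigenfunction of $\mathcal{L}$ at $\lambda_*$. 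This contradicts Assumption \ref{assump0}(3) unless $U\equiv 0$, whence $u$ vanishes on $\Omega^+$, and by an identical argument on $\Omega^-$, so $u\equiv 0$; the single-layer jump relation (a minor variant of \eqref{eq_int_eqn_phi}) then forces $\psi=0$. The main obstacle I anticipate is this odd-extension step: one must carefully justify that $U$ genuinely lies in $H^1_{\mathrm{loc}}(\Omega)\cap L^2(\Omega)$ and satisfies all boundary conditions uniformly, so that the point-spectrum assumption in Assumption \ref{assump0}(3) can be legitimately invoked to conclude $U\equiv 0$.
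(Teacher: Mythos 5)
Your argument is correct and takes essentially the same route as the paper: the first inclusion via the representation formula of Proposition~\ref{representation formula} (to evaluate $u_{\phi_*}|_\Gamma=\tfrac12 v_1|_\Gamma$) together with Lemmas~\ref{conjugate vs reflection} and \ref{orthogonality of the propagating mode}; and the reverse inclusion by forming the single-layer potential of $\psi\in\text{Ker}\,\mathbb{T}_0$ with $c_1(\psi)=0$, extending it oddly across $\Gamma$ (using the reflection symmetry of Assumption~\ref{hypo_reflection symmetry}), and invoking Assumption~\ref{assump0}(3) to rule out an $L^2$ embedded eigenmode at $\lambda_*$, then killing $\psi$ by the jump relation. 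One small remark: the constraint $c_2=\overline{\tau}\,c_1$ you extract from matching asymptotics is in fact a trivial identity valid for \emph{every} $\psi\in\tilde H^{-\frac{1}{2}}(\Gamma)$---since $v_2|_\Gamma=\tau v_1|_\Gamma$ by Lemma~\ref{conjugate vs reflection}, one has $c_2(\psi)=\overline{\tau}\,c_1(\psi)$ by definition---so your ``line in $\mathbb{C}^2$'' step carries no extra information about $\text{Ker}\,\mathbb{T}_0$; the entire content of the reverse inclusion resides in the injectivity step, which is precisely the paper's argument that $\psi\in\text{Ker}\,\mathbb{T}_0$ with $\langle\psi,\overline{v_1}\rangle=0$ forces $\psi=0$. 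Finally, the ``obstacle'' you anticipate at the odd-extension step is not a genuine gap: once $c_1(\psi)=0$, the potential equals $\int_\Gamma G_0^+(\cdot,y;\lambda_*)\psi(y)\,dy_2$ modulo decaying remainder by \eqref{eq-73}, so the exponential decay of $G_0^+$ already puts the odd extension in $H^1\cap L^2(\Omega)$, and the Dirichlet/Neumann boundary conditions are inherited by the obstacle symmetry, exactly as the paper records.
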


\begin{proposition} \label{lem-t1}
$\mathbb{T}_0\in\mathcal{B}(\tilde{H}^{-\frac{1}{2}}(\Gamma),H^{\frac{1}{2}}(\Gamma))$ is a Fredholm operator of index zero. 
\end{proposition}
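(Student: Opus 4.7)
The plan is to decompose $\mathbb{T}_0$ into a free-space logarithmic single-layer operator plus compact remainders, and then invoke classical Fredholm theory for single-layer potentials on open arcs together with the stability of the Fredholm index under compact perturbations. By the identity (\ref{eq97}),
\[
T_0(x,y;\lambda_*) \;=\; G(x,y;\lambda_*) \;+\; \frac{i}{2\alpha_*}\bigl(v_1(x)\overline{v_1(y)} + v_2(x)\overline{v_2(y)}\bigr),
\]
so $\mathbb{T}_0 = \mathbb{S}_G + K_0$, where $\mathbb{S}_G\phi(x):=\int_\Gamma G(x,y;\lambda_*)\phi(y)\,d\sigma(y)$ and $K_0$ is the rank-two operator built from $v_1,v_2$ restricted to $\Gamma$. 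Since $v_1, v_2$ are smooth on $\overline\Gamma$ (being Bloch eigenmodes on the interior segment $\Gamma$, which is strictly separated from the obstacles and the outer walls), $K_0$ is compact from $\tilde{H}^{-\frac{1}{2}}(\Gamma)$ to $H^{\frac{1}{2}}(\Gamma)$, and it is enough to show $\mathbb{S}_G$ is Fredholm of index zero.

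Next, I would introduce the outgoing free-space fundamental solution $\Phi_\lambda(x,y):=-\frac{i}{4}H_0^{(1)}(\sqrt{\lambda_*}|x-y|)$, which solves $(\Delta+\lambda_*)u=\tilde\delta$ in $\mathbf{R}^2$ and shares with $G$ the principal logarithmic singularity $-\frac{1}{2\pi}\log|x-y|$ on the diagonal. The remainder $R(x,y):=G(x,y;\lambda_*)-\Phi_\lambda(x,y)$ is thus a distributional solution of $(\Delta_x+\lambda_*)R=0$ in $\Omega$ for each fixed $y$, with the singularity at $x=y$ cancelled. Interior elliptic regularity gives smoothness in $x$, and the symmetry $G(x,y)=G(y,x)$ from Lemma \ref{commutativity and parity of the Green function} (together with the analogous symmetry of $\Phi_\lambda$) lets one transfer this to joint $C^\infty$-regularity of $R$ on the compact set $\overline\Gamma\times\overline\Gamma$. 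Consequently $\mathbb{S}_R$ has a smooth kernel and is compact, so $\mathbb{S}_G=\mathbb{S}_{\Phi_\lambda}+\mathbb{S}_R$ with $\mathbb{S}_R$ compact.

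Finally, I would invoke the classical fact that the Helmholtz single-layer operator $\mathbb{S}_{\Phi_\lambda}:\tilde{H}^{-\frac{1}{2}}(\Gamma)\to H^{\frac{1}{2}}(\Gamma)$ on the open arc $\Gamma$ is Fredholm of index zero. This reduces, after subtracting the smooth piece of $\Phi_\lambda$, to the same statement for the purely logarithmic single-layer $\mathbb{S}_{\log}$ associated with $-\Delta$. The operator $\mathbb{S}_{\log}$ has a real symmetric kernel and is coercive on $\tilde{H}^{-\frac{1}{2}}(\Gamma)$ by the well-known positive-definiteness of the logarithmic potential on arcs of logarithmic capacity strictly less than unity, a condition readily satisfied by $\Gamma$ of length $\tfrac12$; the Lax--Milgram lemma then makes it an isomorphism. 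Stability of the Fredholm index under compact perturbations (the Fredholm-plus-compact proposition preceding Proposition \ref{stability of index under norm perturbation}) finally yields $\text{ind}(\mathbb{T}_0)=\text{ind}(\mathbb{S}_G)=\text{ind}(\mathbb{S}_{\Phi_\lambda})=\text{ind}(\mathbb{S}_{\log})=0$. The main technical obstacle will be the rigorous verification of the joint $C^\infty$-smoothness of $R$ on $\Gamma\times\Gamma$—which requires carefully using the symmetry of $G$ to promote $x$-regularity to joint $(x,y)$-regularity without appealing to any boundary conditions on $\Gamma$—together with establishing the coercivity estimate for $\mathbb{S}_{\log}$ in the precise Sobolev pairing $\tilde{H}^{-\frac{1}{2}}(\Gamma)\times H^{\frac{1}{2}}(\Gamma)$ used throughout the paper.
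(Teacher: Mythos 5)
Your first decomposition $\mathbb{T}_0 = \mathbb{S}_G + K_0$ via \eqref{eq97} is correct, and $K_0$ is indeed compact (finite rank with smooth factors, since $v_1,v_2$ are smooth up to $\overline\Gamma$ by elliptic regularity up to the flat Neumann walls). The gap is in the next step, where you subtract the free-space fundamental solution $\Phi_\lambda$ and assert that $R:=G-\Phi_\lambda$ is jointly $C^\infty$ on $\overline\Gamma\times\overline\Gamma$. Your justification rests on the claim that $\Gamma$ "is strictly separated from the obstacles and the outer walls," but this is false: $\overline\Gamma=\{0\}\times[0,\tfrac12]$ has its endpoints $(0,0)$ and $(0,\tfrac12)$ lying on $\Gamma_-$ and $\Gamma_+$, respectively. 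Because $G$ satisfies the homogeneous Neumann condition on $\Gamma_\pm$ while $\Phi_\lambda$ does not, $R$ must absorb the mismatch, and it does so through image singularities. Near the endpoint $(0,0)$ one has $G(x,y;\lambda_*)=\Phi_\lambda(x,y)+\Phi_\lambda\bigl(x,(y_1,-y_2)\bigr)+(\text{smoother})$, so in arclength coordinates $(s,t)$ on $\Gamma$ the remainder $R$ contains a term $\sim-\tfrac{1}{2\pi}\log(s+t)$, which blows up at the corner $(s,t)=(0,0)$ (and similarly at the other endpoint). Thus $R$ is not smooth on $\overline\Gamma\times\overline\Gamma$, and compactness of $\mathbb{S}_R$ cannot be read off from smoothness of the kernel; in fact the image contributions have the same logarithmic strength as the principal diagonal singularity.

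This is not a removable technicality you can postpone: the Neumann reflections at the arc endpoints are precisely what encode the boundary behavior of $\tilde{H}^{-\frac{1}{2}}(\Gamma)$ (defined in the paper relative to the extended boundary $\tilde\Gamma$, which includes the walls), and they genuinely affect the mapping properties. To salvage the plan you would have to include the leading images in the comparison kernel — say comparing $G$ against $\Phi_\lambda(x,y)+\Phi_\lambda(x,\sigma_-y)+\Phi_\lambda(x,\sigma_+y)$ with $\sigma_\pm$ the reflections across $\Gamma_\pm$ — which does make the new remainder smooth, but then the Fredholm-of-index-zero statement for this modified model operator on $\tilde{H}^{-\frac{1}{2}}(\Gamma)\to H^{\frac{1}{2}}(\Gamma)$ is a nontrivial new step, not the classical open-arc result you cite. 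The paper avoids the endpoint issue entirely by staying in the Floquet--Bloch picture: it splits $\mathbb{T}_0=\mathbb{T}_0^{(1)}+\mathbb{T}_0^{(2)}$ by regularizing the denominators in the spectral representation, establishes coercivity of $\mathbb{T}_0^{(1)}$ using the Parseval identities \eqref{eq87_4}--\eqref{eq87_5} (mirroring Lemma \ref{particle integral operator at the Dirac point}), and proves compactness of $\mathbb{T}_0^{(2)}$ from uniform-in-$p$ interior $H^2$ bounds on the Bloch modes, so no comparison kernel and no endpoint analysis are needed.
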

\begin{proof}
    See Appendix E.
\end{proof}
We are now ready to investigate the limit of the operator $\tilde{\mathbb{G}}_\delta(\lambda_*+\delta\cdot h)$. 
\begin{proposition}\label{limit of the sum operator}
The following holds uniformly for $h\in \tilde{J}$:
\begin{equation*} \label{eq98}
    \lim_{\delta\to 0}\Bigg\|\tilde{\mathbb{G}}_\delta(\lambda_*+\delta\cdot h)-\Big(2\mathbb{T}_0+\beta(h)\mathbb{P}\Big)\Bigg\|_{\mathcal{B}(\tilde{H}^{-\frac{1}{2}}(\Gamma),H^{\frac{1}{2}}(\Gamma))}=0, 
\end{equation*}
where
\begin{equation*} \label{eq96}
    \beta(h)=-\frac{1}{\beta_*\alpha_*}\frac{h}{\sqrt{1-(\frac{h}{\beta_*})^2}}.
\end{equation*}
\end{proposition}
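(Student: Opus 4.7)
The plan is to combine the two asymptotic expansions in Proposition \ref{Asymptotic formula of Green operators} and separately analyze (i) the regular operator part $\mathbb{T}_{\delta}+\mathbb{T}_{-\delta}$ and (ii) the scalar prefactor $f_1+f_2+\tilde f_1+\tilde f_2$ of the rank-one projection $\mathbb{P}$. Since the remainders $\mathbb{R}_i,\tilde{\mathbb{R}}_i$ tend to zero in operator norm uniformly in $h\in\tilde J$, everything reduces to these two tasks.

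First, I would handle the scalar sum by exploiting the algebraic identity $\tfrac{L(p;\delta)}{1+L(p;\delta)}+\tfrac{1}{1+L(p;\delta)}=1$. Summing $f_1+\tilde f_1$ eliminates the $L$-dependent weights entirely, and likewise for $f_2+\tilde f_2$, giving
\begin{equation*}
(f_1+f_2+\tilde f_1+\tilde f_2)(h;\delta)=\frac{1}{2\pi}\int_{\pi-\delta^{1/3}}^{\pi+\delta^{1/3}}\frac{2\delta h}{\delta^2 h^2-\delta^2\beta_*^2-\alpha_*^2(p-p_*)^2}\,dp.
\end{equation*}
The rescaling $p-p_*=\delta v/\alpha_*$ converts this into $\frac{h}{\pi\alpha_*}\int_{-\alpha_*\delta^{-2/3}}^{\alpha_*\delta^{-2/3}}\frac{dv}{h^2-\beta_*^2-v^2}$, and since $|h|<|\beta_*|$ on $\tilde J$ the denominator $\beta_*^2-h^2+v^2$ is bounded away from zero. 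Passing to the limit $\delta\to 0$ and evaluating $\int_{\mathbb R}\frac{dv}{v^2+(\beta_*^2-h^2)}=\pi/\sqrt{\beta_*^2-h^2}$ produces exactly $\beta(h)=-\frac{1}{\beta_*\alpha_*}\cdot\frac{h}{\sqrt{1-(h/\beta_*)^2}}$. The convergence is uniform in $h\in\tilde J$ because the tail $|v|>\alpha_*\delta^{-2/3}$ contributes $O(\delta^{2/3})$ uniformly.

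Second, I would show $\mathbb{T}_{\delta}+\mathbb{T}_{-\delta}\to 2\mathbb{T}_0$ in $\mathcal B(\tilde H^{-1/2}(\Gamma),H^{1/2}(\Gamma))$. Break $\mathbb{T}_{\pm\delta}$ into the $n\geq 3$ tail and the $n=1,2$ parts. For $n\geq 3$, Assumption \ref{assump0}(4) gives a positive gap $\lambda_{n,\pm\delta}(p)-\lambda_*\geq c>0$ uniformly in $p,\delta$, and the perturbed Bloch pairs $(\lambda_{n,\pm\delta},u_{n,\pm\delta})$ depend analytically on $\delta$ and converge (with their trace on $\Gamma$) to the unperturbed pairs, so this tail converges to the corresponding tail of $\mathbb{T}_0$ by dominated convergence in the Bloch decomposition, combined with the Parseval identities \eqref{eq87_4}-\eqref{eq87_5}. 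For the $n=1,2$ part integrated over $[0,\pi-\delta^{1/3}]\cup[\pi+\delta^{1/3},2\pi]$, I would use the identifications of Proposition \ref{Existence of Dirac points of the period-1 structure} and Theorem \ref{dispersion relation near the dirac point}: outside the shrinking neighborhood $|p-\pi|\geq\delta^{1/3}$ we have $|\lambda_*+\delta h-\lambda_{j,\pm\delta}(p)|\gtrsim \delta^{1/3}$ uniformly in $h\in\tilde J$, whereas $\lambda_{j,\pm\delta}(p)\to\mu_j(p)$ and $u_{j,\pm\delta}\to v_j$ uniformly on this set. The truncated integral with respect to $p$ therefore converges to the principal value integral over $[0,2\pi]$ appearing in the definition of $\mathbb{T}_0$; the additional factor of $2$ appears because each branch contributes once from $\mathbb{T}_\delta$ and once from $\mathbb{T}_{-\delta}$.

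The main obstacle I anticipate is the principal-value convergence in Step~2: I need that the truncation $|p-\pi|\geq\delta^{1/3}$ exactly produces the principal value in the limit, which requires the odd-symmetry cancellation of the integrand about $p=\pi$ after replacing $\lambda_{j,\pm\delta}(p)$ by its linearization $\mu_j(p)\approx\lambda_*\pm\alpha_*(p-\pi)$. Concretely, one writes $\frac{1}{\lambda_*+\delta h-\lambda_{j,\pm\delta}(p)}=\frac{1}{\lambda_*-\mu_j(p)}+$ an error that is controlled by $\delta/(\lambda_*-\mu_j(p))^2$, and the Lipschitz bound $|\lambda_*-\mu_j(p)|\gtrsim|p-\pi|$ from Assumption \ref{assump0}(2)(4) makes this error integrable on $\{|p-\pi|\geq\delta^{1/3}\}$ with a vanishing bound as $\delta\to 0$. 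Combining these three ingredients yields the desired operator-norm estimate.
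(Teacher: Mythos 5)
Your Step 1 (the scalar prefactor) is correct and essentially identical to the paper's: the algebraic identity $\tfrac{L}{1+L}+\tfrac{1}{1+L}=1$ eliminates the $L$-weights, and your rescaling $p-p_*=\delta v/\alpha_*$ is equivalent to the paper's trigonometric substitution producing $2\tan^{-1}(\alpha_*\delta^{-2/3}/\sqrt{\beta_*^2-h^2})\to\pi$. Your treatment of the propagating part ($n=1,2$) is also in the right spirit: the paper proceeds via the triangle inequality through an auxiliary truncated operator $\mathbb{T}^{prop}_{0,\delta}$ and three separate $o(1)$ estimates (the $\delta^{1/3}$-neighborhood shrinking in norm, the $\lambda\to\lambda_*$ replacement, and the $\delta\to 0$ eigenpair perturbation), each controlled through the $H^1$-Parseval identity; your proposed error bound $\delta/|p-\pi|^2$ with the cutoff $|p-\pi|\geq\delta^{1/3}$ is the same mechanism.

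The genuine gap is in the evanescent part ($n\geq 3$). Your argument rests on the assertion that ``the perturbed Bloch pairs $(\lambda_{n,\pm\delta},u_{n,\pm\delta})$ depend analytically on $\delta$ and converge (with their trace on $\Gamma$) to the unperturbed pairs, so this tail converges to the corresponding tail of $\mathbb{T}_0$ by dominated convergence.'' This does not go through as stated, for two reasons. First, labeled eigenvalues $\lambda_{n,\delta}(p)$ with $n\geq 3$ can cross as $\delta$ or $p$ varies, so they are in general not analytic (only Lipschitz), and the individual eigenfunctions $u_{n,\delta}(\cdot;p)$ are not even well-defined at degeneracies; one cannot simply invoke pointwise convergence of labeled pairs. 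Second, dominated convergence for the operator-valued integral $\int_0^{2\pi}\sum_{n\geq3}\cdots\,dp$ requires a dominating bound and a measurability/continuity argument that your proposal does not supply. The paper avoids both problems by never touching individual eigenpairs for $n\geq3$: it factors $\mathbb{T}^{evan}_{\delta}(\lambda;p)=\mathrm{Tr}\circ(\lambda+A_{p,\delta})^{-1}\circ M$ through the resolvent of the restricted quasi-periodic Laplacian $A_{p,\delta}$ on the spectral subspace $V_{p,\delta}=\overline{\mathrm{span}\{u_{n,\delta}\}_{n\geq3}}$; it proves the uniform bound in $(p,\delta)$ by coercivity/Lax--Milgram, continuity in $p$ via Kato's type-(A) analytic family $e^{-ipx_1}A_{p,\delta}e^{ipx_1}$, and almost-everywhere pointwise convergence through a quasi-periodic layer-potential representation involving $T_\delta^{-1}(p,\lambda_*)\to T^{-1}(p,\lambda_*)$, before finally invoking dominated convergence. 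These ingredients are the substantive content of the proposition's proof and are missing from your outline.
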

\begin{proof}
See Appendix F.
\end{proof} 
Denote the limiting operator above by $\tilde{\mathbb{G}}_0(h):=2\mathbb{T}_0+\beta(h)\mathbb{P}$. We have
\begin{proposition}\label{property of the limiting operator}
Let $h\in\tilde{J}$. Then $\tilde{\mathbb{G}}_0( h)$ is a Fredholm operator with index zero, analytic for $h\in\tilde{J}$, and continuous for $h\in\partial\tilde{J}$. As a function of $h$, it attains a unique characteristic value $h=0$ in $\tilde{J}$, whose null multiplicity is one. Moreover, $\tilde{\mathbb{G}}_0( h)$ is invertible for any $h\in\tilde{J}$ with $h\neq 0$.
\end{proposition}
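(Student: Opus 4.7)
The plan is to deduce the Fredholm and analytic properties as quick consequences of earlier results, then reduce everything about the characteristic values to two explicit identities on the Dirac eigenmodes.

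Fredholmness with index zero follows by combining Proposition \ref{lem-t1} (Fredholm of index zero for $\mathbb{T}_0$) with the observation that $\mathbb{P}$ is rank one, hence compact; adding a compact operator preserves Fredholmness and the index, so $\tilde{\mathbb{G}}_0(h)$ is Fredholm of index zero for every $h\in\tilde J$. Since $\tilde J = \{|h|<c|\beta_*|\}$ with $c<1$, the quantity $1-(h/\beta_*)^2$ stays in a half plane away from the negative real axis, so $\beta(h) = -\frac{1}{\beta_*\alpha_*}\,h/\sqrt{1-(h/\beta_*)^2}$ is holomorphic on a neighborhood of $\overline{\tilde J}$. This gives analyticity on $\tilde J$ and continuity on $\partial\tilde J$.

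For the characteristic values, I would first observe $\beta(0)=0$, so $\tilde{\mathbb{G}}_0(0) = 2\mathbb{T}_0$ has one-dimensional kernel spanned by $\phi_0:=\partial v_1/\partial x_1|_\Gamma$ by Proposition \ref{lem-t0}; hence $h=0$ is a characteristic value. I plan to obtain uniqueness, invertibility away from $0$, and null multiplicity one from the two identities
\begin{equation*}
(\text{a})\;\; \langle\phi_0,\overline{\phi_2}\rangle \neq 0 \text{ and } \langle\phi_0,\phi_2\rangle \neq 0, \qquad (\text{b})\;\; \phi_2 \notin \text{Ran}(\mathbb{T}_0).
\end{equation*}
From $(2\mathbb{T}_0+\beta(h)\mathbb{P})\psi=0$ we get $2\mathbb{T}_0\psi = -\beta(h)\langle\psi,\overline{\phi_2}\rangle\phi_2$. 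If $\langle\psi,\overline{\phi_2}\rangle=0$ then $\psi\in\text{span}\{\phi_0\}$ by Proposition \ref{lem-t0} and (a) forces $\psi=0$; otherwise $\phi_2\in\text{Ran}(\mathbb{T}_0)$ unless $\beta(h)=0$, i.e.\ $h=0$, contradicting (b). This gives injectivity (hence invertibility by index zero) for $h\in\tilde J\setminus\{0\}$ and confirms uniqueness of the characteristic value. For the null multiplicity, I would expand $\phi(h) = \phi_0 + h\phi_1 + O(h^2)$; the first-order obstruction $2\mathbb{T}_0\phi_1 + \beta'(0)\langle\phi_0,\overline{\phi_2}\rangle\phi_2 = 0$, with $\beta'(0) = -1/(\beta_*\alpha_*)\neq 0$, has no solution by (a) and (b), so the rank of $\phi_0$ is exactly one.

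The identities in (a) are routine after the preparation in Sections 2--3: Corollary \ref{even odd mode and root function} and Lemma \ref{conjugate vs reflection} give $\phi_2 = b_2(\tau v_1 + v_2)$ for some nonzero $b_2$; the parities of $\phi_1,\phi_2$ force $v_2|_\Gamma = \tau v_1|_\Gamma$ and $\partial_{x_1}v_2|_\Gamma = -\tau\,\partial_{x_1}v_1|_\Gamma$; substituting and applying $\int_\Gamma \partial_{x_1}v_1\cdot\overline{v_1}\,dx_2 = i\alpha_*/2$ from Lemma \ref{orthogonality of the propagating mode}, together with $v_2 = \overline{v_1}$ at $p=\pi$ from Proposition \ref{Existence of Dirac points of the period-1 structure}, produces nonzero imaginary multiples of $\alpha_*$ for both pairings. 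The main obstacle is identity (b): it relies on showing that the bilinear transpose of $\mathbb{T}_0$ (with respect to the duality pair $\tilde H^{-\frac12}(\Gamma)\times H^{\frac12}(\Gamma)$) equals $\mathbb{T}_0$ itself, which in turn requires $T_0(x,y;\lambda_*) = T_0(y,x;\lambda_*)$. This symmetry follows from Lemma \ref{commutativity and parity of the Green function} together with the rewriting of the correction in \eqref{eq97} as $\frac{i}{\alpha_*}\text{Re}(v_1(x)\overline{v_1(y)})$, made possible by $v_2=\overline{v_1}$ at $p=\pi$. With $\mathbb{T}_0$ self-transpose in this bilinear sense, the Fredholm alternative reduces $\phi_2\in\text{Ran}(\mathbb{T}_0)$ to $\langle\phi_0,\phi_2\rangle=0$, which fails by (a). Aside from this symmetry step, every other piece is a direct application of lemmas already in hand.
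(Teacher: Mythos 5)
Your proof is correct. The Fredholm, analyticity and continuity claims are handled exactly as in the paper (rank-one perturbation of $\mathbb{T}_0$, choice of a branch of $\sqrt{1-(h/\beta_*)^2}$). For the characteristic-value analysis the routes differ in packaging. The paper pairs $\tilde{\mathbb{G}}_0(h)w=0$ directly against the test density $\partial\overline{v_1}/\partial x_1|_\Gamma$, uses the conjugation identity $\langle\overline{\phi_0},\mathbb{T}_0 w\rangle = \overline{\langle\overline{w},\mathbb{T}_0\phi_0\rangle}$ (immediate from $T_0(x,y)=\overline{T_0(y,x)}$, which follows from the Floquet series since all eigenvalues are real), kills the $\mathbb{T}_0$ term via $\mathbb{T}_0\phi_0=0$, and arrives at $\langle w,\overline{\phi_2}\rangle\beta(h)=0$; the null-multiplicity-one claim is then dispatched by the brief remark that $\beta$ has a simple zero. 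You instead isolate two explicit facts, $\langle\phi_0,\overline{\phi_2}\rangle\neq0$, $\langle\phi_0,\phi_2\rangle\neq0$, and $\phi_2\notin\text{Ran}\,\mathbb{T}_0$, and derive the last from the bilinear self-transpose property $T_0(x,y)=T_0(y,x)$ together with the Fredholm alternative. Your grounding of that symmetry in Lemma \ref{commutativity and parity of the Green function} and $v_2=\overline{v_1}$ at $p=\pi$ is correct; note that combined with $T_0(x,y)=\overline{T_0(y,x)}$ it forces $T_0$ to be real-valued, so the bilinear symmetry you use and the conjugation identity the paper uses are in fact the same statement, and both reduce to the same ingredients ($G(x,y)=G(y,x)$, $\phi_2|_\Gamma\propto v_1|_\Gamma$, and $\int_\Gamma \partial_{x_1}v_1\,\overline{v_1}=\tfrac{i}{2}\alpha_*$). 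The genuine gain from your organization is the null-multiplicity-one step: expanding $\phi(h)=\phi_0+h\phi_1+O(h^2)$ and showing that the first-order equation $2\mathbb{T}_0\phi_1+\beta'(0)\langle\phi_0,\overline{\phi_2}\rangle\phi_2=0$ is unsolvable by (a) and (b) gives a self-contained proof that $\text{rank}(\phi_0)=1$, whereas the paper leaves this largely implicit. Same underlying calculations, but your version is more transparent on that last point.
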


\begin{proof}[Proof]
First, by the boundedness of $\mathbb{T}_0$ and $\mathbb{P}$, $\tilde{\mathbb{G}}_0(h)\in\mathcal{B}(\tilde{H}^{-\frac{1}{2}}(\Gamma),H^{\frac{1}{2}}(\Gamma))$. 
By selecting the principal branch of the square root on $\mathbb{C}\backslash(-\infty,-\beta_*)\bigcup(\beta_*,+\infty)$, it is clear that $\beta(h)$ is analytic on $\tilde{J}$, which implies the analyticity of $\tilde{\mathbb{G}}_0( h)$ as seen by its definition; then its continuity for $h\in\partial \tilde{J}$ is also clear. By Proposition \ref{lem-t0}, we deduce that $\mathbb{T}_0\in\mathcal{B}(\tilde{H}^{-\frac{1}{2}}(\Gamma),H^{\frac{1}{2}}(\Gamma))$ is a Fredholm operator with zero index and $\text{Ker}\,\mathbb{T}_0=\text{span}\left\{\frac{\partial v_1}{\partial x_1}\big|_{\Gamma}\right\}$.
Thus $\tilde{\mathbb{G}}_0( h)$, which is obtained by perturbing $\mathbb{T}_0$ with the rank-1 projection operator $\mathbb{P}$, is a Fredholm operator with zero index.

Now we show that, as a function of $h$, $\tilde{\mathbb{G}}_0( h)=2\mathbb{T}_0+\beta(h)\mathbb{P}$ attains a unique characteristic value $h=0$ in $\Tilde{J}$. In another word, the following equation attains a nontrivial solution $w\in \tilde{H}^{-\frac{1}{2}}(\Gamma)$ if and only if $h=0$:
\begin{equation} \label{eq112}
\tilde{\mathbb{G}}_0(h)w=0.
\end{equation}
To this end, we apply $\frac{\partial\overline{v_1}}{\partial x_1}\Big|_{\Gamma}$ to both hand sides of the equation above to get
\begin{equation*}
\begin{aligned}
0&=2\left\langle \frac{\partial\overline{v_1}}{\partial x_1}\Big|_{\Gamma}, \mathbb{T}_0 w\right\rangle
+\langle w,\overline{\phi_2}\rangle\beta(h)\left\langle \frac{\partial\overline{v_1}}{\partial x_1}\Big|_{\Gamma},\phi_2\right\rangle \\
&=2\overline{\left\langle \overline{w}, \mathbb{T}_0 \Big(\frac{\partial v_1}{\partial x_1}\Big)\Big|_{\Gamma}\right\rangle}
+\langle w,\overline{\phi_2}\rangle\beta(h)\left\langle \frac{\partial\overline{v_1}}{\partial x_1}\Big|_{\Gamma},\phi_2\right\rangle \\
&=\langle w,\overline{\phi_2}\rangle\beta(h)\left\langle \frac{\partial\overline{v_1}}{\partial x_1}\Big|_{\Gamma},\phi_2\right\rangle .
\end{aligned}
\end{equation*}
where the second identity follows from the definition of $\mathbb{T}_0$. Recall the fact that $\phi_2|_{\Gamma}$ is proportional to $v_1|_{\Gamma}$ (See Remark \ref{link between two types of modes}). As a result,
\begin{equation} \label{eq113}
    \langle w,\overline{\phi_2}\rangle\beta(h)\left\langle \frac{\partial\overline{v_1}}{\partial x_1}\Big|_{\Gamma},v_1\right\rangle =0.
\end{equation}
By Lemma \ref{orthogonality of the propagating mode}, $\Big\langle \frac{\partial\overline{v_1}}{\partial x_1}\Big|_{\Gamma},v_1\Big\rangle\neq 0$; thus \eqref{eq113} implies $\langle w,\overline{\phi_2}\rangle\beta(h)=0$, or equivalently
$$
\langle w,\overline{v_1}\rangle\beta(h)=0.
$$
If $\beta(h)\neq 0$, we have $\langle w,\overline{v_1}\rangle=0$ and $\mathbb{T}_0 w=0$, which imply that $w=0$ by Proposition \ref{lem-t0}. But this contradicts the assumption that $w\neq 0$. Hence, we deduce that the characteristic value problem \eqref{eq112} attains solutions only when $\beta(h)=0$. Solving $\beta(h)=0$, we obtain a simple root $h=0$. Then Proposition \ref{lem-t0} implies that $h=0$ is a characteristic value of multiplicity one with its associated eigenvector $\frac{\partial v_1}{\partial x_1}\Big|_{\Gamma}$. 


Finally, we prove the invertibility of $\tilde{\mathbb{G}}_0( h)$ for $h\neq 0$. Indeed, since $\tilde{\mathbb{G}}_0( h)$ has a unique characteristic value $h=0$, it is injective for $h\neq 0$; thus, $\tilde{\mathbb{G}}_0( h)$ is invertible by noting that it is a Fredholm operator of zero index.
\end{proof}

\begin{proposition}\label{finitely meormophic of the sum operator}
Let $h\in\tilde{J}$. Then $\tilde{\mathbb{G}}_\delta(\lambda_*+\delta\cdot h)$ is a Fredholm operator, and it is analytic for $h\in\tilde{J}$.
\end{proposition}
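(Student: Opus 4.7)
The plan is to derive both conclusions as direct consequences of the asymptotic results already established in this subsection. First I would handle the analyticity, then the Fredholm property, and finally address the one uniformity issue.

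For analyticity in $h \in \tilde{J}$, the key observation is that $h \in \tilde{J}$ is equivalent to $\lambda := \lambda_* + \delta h$ lying in the open complex disc of radius $|E_{2,\delta}-E_{1,\delta}|/2$ centered at $\lambda_*$. Because $I_\delta = (E_{1,\delta}, E_{2,\delta})$ is a band gap common to $\sigma(\mathcal{L}_\delta)$ and $\sigma(\mathcal{L}_{-\delta})$ by Corollary \ref{common band gap}, this disc meets the real line only inside the resolvent sets of both operators, and off the real line $\lambda$ is automatically in those resolvent sets. The denominators $\lambda - \lambda_{n,\pm\delta}(p)$ in the Bloch spectral expansion \eqref{eq88} of $G_{\pm\delta}(x,y;\lambda)$ therefore stay uniformly bounded away from zero on $[0,2\pi]$, which permits term-by-term differentiation in $\lambda$ and yields joint analyticity of $G_{\pm\delta}(\cdot,\cdot;\lambda)$. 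The single-layer operator $\mathbb{G}_{\pm\delta}(\lambda_* + \delta h)$ defined in \eqref{eq89} then depends analytically on $h$ as an element of $\mathcal{B}(\tilde{H}^{-1/2}(\Gamma),H^{1/2}(\Gamma))$, and taking the sum preserves this analyticity for $\tilde{\mathbb{G}}_\delta(\lambda_* + \delta h)$.

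For the Fredholm property, I would combine three ingredients already in hand. Proposition \ref{limit of the sum operator} provides the uniform operator-norm convergence $\tilde{\mathbb{G}}_\delta(\lambda_* + \delta h) \to \tilde{\mathbb{G}}_0(h)$ as $\delta \to 0$ for $h \in \tilde{J}$. Proposition \ref{property of the limiting operator} guarantees that $\tilde{\mathbb{G}}_0(h)$ is Fredholm of index zero for every such $h$. Proposition \ref{stability of index under norm perturbation} then ensures that the Fredholm property and the index persist under sufficiently small operator-norm perturbations, so for every $h \in \tilde{J}$ and all sufficiently small $\delta$, $\tilde{\mathbb{G}}_\delta(\lambda_* + \delta h)$ is Fredholm with index zero.

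The main subtlety, and the step I expect to require the most care, is ensuring that the threshold $\delta_0$ can be chosen uniformly in $h \in \tilde{J}$; the stability constant in Proposition \ref{stability of index under norm perturbation} depends a priori on the unperturbed operator, and $\tilde{\mathbb{G}}_0(h)$ loses invertibility at its characteristic value $h=0$. I would handle this by using the decomposition $\tilde{\mathbb{G}}_0(h) = 2\mathbb{T}_0 + \beta(h)\mathbb{P}$, in which $\mathbb{T}_0$ is a fixed Fredholm operator of index zero by Proposition \ref{lem-t1}, while $\beta(h)\mathbb{P}$ is a rank-one (hence compact) analytic perturbation; since compact perturbations do not affect the Fredholm property, the relevant stability constant can be bounded below uniformly in $h$ by one that depends only on $2\mathbb{T}_0$, and the uniform convergence from Proposition \ref{limit of the sum operator} then closes the argument.
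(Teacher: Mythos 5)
Your proof is correct and follows essentially the same route as the paper: analyticity from that of the Green's functions $G_{\pm\delta}(\cdot,\cdot;\lambda)$ in $\lambda$ (which holds because $\lambda=\lambda_*+\delta h$ with $h\in\tilde J$ lies in the resolvent set of $\mathcal{L}_{\pm\delta}$), and the Fredholm property from Propositions \ref{limit of the sum operator}, \ref{property of the limiting operator}, and \ref{stability of index under norm perturbation}. Your extra care about choosing $\delta_0$ uniformly in $h$ by anchoring the stability estimate on the fixed operator $2\mathbb{T}_0$ and treating $\beta(h)\mathbb{P}$ as a compact rank-one perturbation is a sound way to make precise a point the paper's terse proof leaves implicit (although your parenthetical worry about $\tilde{\mathbb{G}}_0(h)$ losing invertibility at $h=0$ is not actually relevant, since Fredholmness, unlike invertibility, is preserved under compact perturbations regardless).
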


\begin{proof}[Proof]
The analyticity of $\tilde{\mathbb{G}}_\delta(\lambda_*+\delta\cdot h)$ follows from the analyticity of the Green's function $G_{\delta}(x, y; \lambda)$ in $\lambda$, which holds for any $\lambda=\lambda_*+\delta\cdot h$ with $h\in\tilde{J}$.
Using Proposition \ref{limit of the sum operator} and \ref{property of the limiting operator}, and the fact that Fredholm index is stable under small perturbation (see Proposition \ref{stability of index under norm perturbation}), we conclude that 
$\tilde{G}_\delta(\lambda_*+\delta\cdot h)$ is a Fredholm operator with zero index for $h\in\tilde{J}$ when $\delta>0$ is sufficiently small. 
\end{proof}

\subsection{Proof of Theorem \ref{main result}}
By Propositions \ref{limit of the sum operator} and \ref{property of the limiting operator}, for sufficiently small $\delta>0$ and $h\in\partial \tilde{J}$, we have
\begin{equation*} \label{eq114}
\Bigg\|\tilde{\mathbb{G}}^{-1}_0(h)\Big(\tilde{\mathbb{G}}_\delta(\lambda_*+\delta\cdot h)-\tilde{\mathbb{G}}_0(h)\Big) \Bigg\|_{\mathcal{B}(\tilde{H}^{-\frac{1}{2}}(\Gamma))}<1,
\end{equation*}
where we have used the fact that $\tilde{\mathbb{G}}^{-1}_0(h)$ is uniformly bounded in norm for $h\in\partial\tilde{J}$ (it is a direct consequence of the continuity and invertibility of $\tilde{\mathbb{G}}_0(h)$ for $h\in\partial\tilde{J}$). Now, with Proposition \ref{property of the limiting operator} and \ref{finitely meormophic of the sum operator}, an application of Theorem \ref{generalized rouche theorem} shows that, for sufficiently small $\delta>0$, \eqref{eq109} attains a unique characteristic value $\lambda^\star :=\lambda_*+h^\star$ with $h^\star\in \tilde{J}$.
Let $\phi^\star$ be the associated eigenvector, then we construct a solution to \eqref{eq7} by setting
\begin{equation*} \label{eq115}
    \lambda=\lambda^\star,\quad
    u^\star=u((x_1,x_2);\lambda^\star)=
    \left\{
    \begin{aligned}
        &\int_{\Gamma}G_\delta ((x_1,x_2),y;\lambda)\phi^\star(y;\lambda)d\sigma(y),\quad x_1\geq 0, \\
        &-\int_{\Gamma}G_{-\delta} ((x_1,x_2),y;\lambda)\phi^\star(y;\lambda)d\sigma(y),\quad x_1< 0.
    \end{aligned}
    \right.
\end{equation*}
Meanwhile, we claim that $\lambda^\star \in\mathbf{R}$. Otherwise,
$(\overline{\lambda^\star},\overline{u^\star})$ gives another solution to the eigenvalue problem \eqref{eq7} and thus, $\overline{h}=\frac{\overline{\lambda^\star}-\lambda_*}{\delta}$ is another characteristic value of $\tilde{\mathbb{G}}_\delta(\lambda_*+\delta\cdot h)$ for $h\in \tilde{J}$, which is different from $h^\star$. But this contradicts the uniqueness of the characteristic value of $\tilde{\mathbb{G}}_\delta(\lambda_*+\delta\cdot h)$ for $h\in \tilde{J}$.

Finally, the assertion that $u^\star$ decays exponentially away from $\Gamma$ follows from the radiation condition of the Green's function $G_{\pm\delta}$ (See Section 2.3). This completes the proof of the theorem.

\appendix
\setcounter{secnumdepth}{0}
\section{Appendix}

\subsection{Appendix A: Proof of Corollary \ref{even odd mode and root function}}
\setcounter{equation}{0}
\setcounter{subsection}{0}
\renewcommand{\theequation}{A.\arabic{equation}}
\renewcommand{\thesubsection}{A.\arabic{subsection}}

\begin{proof}[Proof of Corollary \ref{even odd mode and root function}]
Let $v_1(x)=\text{Re }v_1(x)+i\,\text{Im }v_1(x)$. We first claim that $\text{Re }v_1(x)$ and $\text{Im }v_1(x)$ are linearly independent. Otherwise, there exists $c\in\mathbf{R}$ such that $\text{Re }v_1(x)=c \, \text{Im }v_1(x)$. Then the quasi-periodic boundary condition $v_1(x+\frac{1}{2}\bm{e_1})=iv_1$ gives
\begin{equation*}
    \begin{aligned}
    &v_1(x+\frac{1}{2}\bm{e_1})=\text{Re }v_1(x+\frac{1}{2}\bm{e_1})+i\text{Im }v_1(x+\frac{1}{2}\bm{e_1})=(c+i)\text{Im }v_1(x+\frac{1}{2}\bm{e_1}), \\
    &v_1(x+\frac{1}{2}\bm{e_1})=iv_1(x)=i(\text{Re }v_1(x)+i\text{Im }v_1(x))=i(c+i)\text{Im }v_1(x),
    \end{aligned}    
\end{equation*}
which implies that $i\,\text{Im }v_1(x)=\text{Im }v_1(x+\frac{1}{2}\bm{e_1})$, and thus $\text{Im }v_1(x)=\text{Re }v_1(x)=0$. This contradiction proves the claim. 

We next show that the desired even/odd eigenmodes at the Dirac point can be constructed by using linear combinations of $\text{Re }v_1$ and $\text{Im }v_1$. It is clear that both $\text{Re }v_1$ and $\text{Im }v_1$ are eigenmodes at the Dirac point defined in Proposition \ref{Existence of Dirac points of the period-1 structure}. Note that both $\overline{v_1(x_1,x_2)}$ and $v_1(-x_1,x_2)$ are Bloch eigenmodes with the same quasi-periodic boundary condition in the period-$\frac{1}{2}$ structure. Since the dimension of the eigenspace is one by Assumption \ref{assump0}, there exists $\theta\in [0,2\pi)$ such that $\overline{v_1(x_1, x_2)}=e^{i\theta}v_1(-x_1, x_2)$. Set 
$R:=\begin{pmatrix}
\cos\theta & -\sin\theta \\ -\sin\theta & -\cos\theta
\end{pmatrix}$. Then we have
\begin{equation*}
    \begin{pmatrix}
    \text{Re }v_1(x_1,x_2) \\ \text{Im }v_1(x_1,x_2)
    \end{pmatrix}
    =R
    \begin{pmatrix}
     \text{Re }v_1(-x_1,x_2) \\ \text{Im }v_1(-x_1,x_2)
    \end{pmatrix}.
\end{equation*}
The real matrix $R$ has two eigenvalues, $-1$ and $1$. Let $(a_1,b_1)^T,(a_2,b_2)^T\in \mathbf{R}^2$ be the corresponding real eigenvectors, i.e.,
\begin{equation*}
    R(a_1,b_1)^T=-(a_1,b_1)^T,\quad R(a_2,b_2)^T=(a_2,b_2)^T. 
\end{equation*}
We define
\begin{equation*}
    \phi_1:=a_1 \text{Re }v_1 +b_1\text{Im }v_1,\quad \phi_2:=a_2 \text{Re }v_1 +b_2\text{Im }v_1.
\end{equation*}
Then one can check that $\phi_1(-x_1,x_2)=-\phi_1(x_1,x_2)$, $\phi_2(-x_1,x_2)=\phi_2(x_1,x_2)$. This gives the desired construction. 

Finally, note that $\phi_1$ is real-valued. Thus, the quasi-periodicity of $\phi_1$ and reflection-symmetry allows us to set the corresponding boundary potential $\bm{\varphi_1}$ as
\begin{equation*} \label{eq116}
\bm{\varphi_1}=(\varphi_{ref},\varphi)^T,
\end{equation*}
where $\varphi\in H^{-\frac{1}{2}}(\partial D)$ is real-valued. Taking the reflection image of $\phi_1$ with respect to the straight line $x_1=\frac{1}{4}$, we obtain an even mode $\phi_2$ with the boundary potential
\begin{equation*} \label{eq117}
\bm{\varphi_2}=(\varphi,-\varphi_{ref})^T.
\end{equation*}
Then the proof is completed.
\end{proof}
\subsection{Appendix B: Proof of Proposition \ref{non-degeneracy 1}}
\setcounter{equation}{0}
\setcounter{subsection}{0}
\renewcommand{\theequation}{B.\arabic{equation}}
\renewcommand{\thesubsection}{B.\arabic{subsection}}
In Appendix B-C, the bracket $\langle\cdot,\cdot\rangle$ denotes the dual pair between $H^{-\frac{1}{2}}(\partial D)\times H^{-\frac{1}{2}}(\partial D)$ and $H^{\frac{1}{2}}(\partial D)\times H^{\frac{1}{2}}(\partial D)$.

\begin{proof}
Step 1. First we show that for $(p,\lambda)$ near $(\pi,\lambda_*)$,
\begin{equation} \label{eq119}
    \langle \bm{\varphi},T(p,\lambda)\bm{\varphi}\rangle=\overline{\langle \bm{\varphi},T(p,\lambda)\bm{\varphi}\rangle}
    \text{  for real-valued $\bm{\varphi}$}.
\end{equation}
Note that the quasi-periodic Green's function for the empty waveguide defined in \eqref{eq26} has the following properties: 
\begin{equation} \label{eq120}
    \begin{aligned}
    &G^e (x,y;\pi+h_p,\lambda_*)=\overline{G^e(x,y;\pi-h_p,\lambda_*)}, \\
    &G^e (x,y;\pi+h_p,\lambda_*+h_\lambda)=\overline{G^e(y,x;\pi+h_p,\lambda_*+h_\lambda)},
    \end{aligned}
\end{equation}
for sufficiently small constants $h_p,h_\lambda$.
Using the polar coordinate for the boundary $\partial D$ and Corollary \ref{even odd mode and root function}, we have
\begin{equation*} \label{eq121_1}
    \langle \bm{\varphi},T(p,\lambda)\bm{\varphi}\rangle = V_1(p,\lambda)+V_2(p,\lambda)+V_3(p,\lambda)+V_4(p,\lambda),
\end{equation*}
where 
\begin{equation*} \label{eq121}
    \begin{aligned}
    V_1(p,\lambda)&=\int_{[0,2\pi]\times[0,2\pi]}G^e(\bm{r}(\theta_1),\bm{r}(\theta_2);p,\lambda)\cdot\varphi_1(\theta_2)\varphi_1(\theta_1)\cdot|\bm{r}'(\theta_2)||\bm{r}'(\theta_1)|d\theta_2d\theta_1, \\
    V_2(p,\lambda)&=\int_{[0,2\pi]\times[0,2\pi]}G^e(\bm{r}(\theta_1),\frac{1}{2}\bm{e}_1+\bm{r}(\theta_2);p,\lambda)\cdot\varphi_2(\theta_2)\varphi_1(\theta_1)\cdot|\bm{r}'(\theta_2)||\bm{r}'(\theta_1)|d\theta_2d\theta_1 ,\\
    V_3(p,\lambda)&=\int_{[0,2\pi]\times[0,2\pi]}G^e(\frac{1}{2}\bm{e}_1+\bm{r}(\theta_1),\bm{r}(\theta_2);p,\lambda)\cdot\varphi_1(\theta_2)\varphi_2(\theta_1)\cdot|\bm{r}'(\theta_2)||\bm{r}'(\theta_1)|d\theta_2d\theta_1 ,\\
    V_4(p,\lambda)&=  \int_{[0,2\pi]\times[0,2\pi]}G^e(\bm{r}(\theta_1),\bm{r}(\theta_2);p,\lambda)\cdot\varphi_2(\theta_2)\varphi_2(\theta_1)\cdot|\bm{r}'(\theta_2)||\bm{r}'(\theta_1)|d\theta_2d\theta_1 .
    \end{aligned}
\end{equation*}
We claim that $V_1, V_4$, and $V_2+V_3$ are all real numbers. For $V_1$, a change of variable yields
\begin{equation*} \label{eq122}
    \begin{aligned}
2V_1(p,\lambda)=&\int_{[0,2\pi]\times[0,2\pi]}G^e(\bm{r}(\theta_1),\bm{r}(\theta_2);p,\lambda)\cdot\varphi_1(\theta_2)\varphi_1(\theta_1)\cdot|\bm{r}'(\theta_2)||\bm{r}'(\theta_1)|d\theta_2d\theta_1 \\
    &+\int_{[0,2\pi]\times[0,2\pi]}G^e(\bm{r}(\theta_2),\bm{r}(\theta_1);p,\lambda)\cdot\varphi_1(\theta_1)\varphi_1(\theta_2)\cdot|\bm{r}'(\theta_1)||\bm{r}'(\theta_2)|d\theta_1d\theta_2 \\
    =&\int_{[0,2\pi]\times[0,2\pi]}\Big(G^e(\bm{r}(\theta_1),\bm{r}(\theta_2))+G^e(\bm{r}(\theta_2),\bm{r}(\theta_1))\Big)\varphi_1(\theta_2)\varphi_1(\theta_1)|\bm{r}'(\theta_2)||\bm{r}'(\theta_1)|d\theta_2d\theta_1 .
    \end{aligned}
\end{equation*}
Thus \eqref{eq120} shows that the integrand of $V_1(p,\lambda)$ is real, which implies that $V_1(p,\lambda)\in \mathbf{R}$. Similarly, it can be proved that $V_4(p,\lambda)\in\mathbf{R}$. Besides,
\begin{equation*} \label{eq123}
\begin{aligned}
   &V_2(p,\lambda)+V_3(p,\lambda)=
    \int_{[0,2\pi]\times[0,2\pi]}G^e(\bm{r}(\theta_1),\frac{1}{2}\bm{e}_1+\bm{r}(\theta_2);p,\lambda)\cdot\varphi_2(\theta_2)\varphi_1(\theta_1)\cdot|\bm{r}'(\theta_2)||\bm{r}'(\theta_1)|d\theta_2d\theta_1 \\
&\qquad\qquad\qquad\qquad\quad+\int_{[0,2\pi]\times[0,2\pi]}G^e(\frac{1}{2}\bm{e}_1+\bm{r}(\theta_2),\bm{r}(\theta_1);p,\lambda)\cdot\varphi_1(\theta_1)\varphi_2(\theta_2)\cdot|\bm{r}'(\theta_1)||\bm{r}'(\theta_2)|d\theta_1d\theta_2 \\
&=\int_{[0,2\pi]\times[0,2\pi]}\Big(G^e(\bm{r}(\theta_1),\frac{1}{2}\bm{e}_1+\bm{r}(\theta_2))+G^e(\frac{1}{2}\bm{e}_1+\bm{r}(\theta_2),\bm{r}(\theta_1))\Big)\varphi_1(\theta_1)\varphi_2(\theta_2)|\bm{r}'(\theta_1)||\bm{r}'(\theta_2)|d\theta_1d\theta_2, 
    \end{aligned}
\end{equation*}
which is also real by invoking \eqref{eq120}. Thus, $\langle \bm{\varphi},T(p,\lambda)\bm{\varphi}\rangle$ is real for any real-valued $\bm{\varphi}$.

\medskip

Step 2. We prove that $\langle \bm{\varphi_1},T_p\bm{\varphi_1}\rangle=0$, and in a similar way $\langle\bm{\varphi_2},T_p\bm{\varphi_2}\rangle=0$. 
From  \eqref{eq120}, we have
\begin{equation*}
    \langle \bm{\varphi_1},T(p_*+h,\lambda_*)\bm{\varphi_1}\rangle
    =\langle \bm{\varphi_1},\overline{T(p_*-h,\lambda_*)}\bm{\varphi_1}\rangle .
\end{equation*}
Moreover, since $\bm{\varphi}_1$ is real-valued, \eqref{eq119} gives that
$$
\langle \bm{\varphi_1},\overline{T(p_*-h,\lambda_*)}\bm{\varphi_1}\rangle
=\overline{\langle \bm{\varphi_1},T(p_*-h,\lambda_*)\bm{\varphi_1}\rangle}
=\langle \bm{\varphi_1},T(p_*-h,\lambda_*)\bm{\varphi_1}\rangle .
$$
In conclusion,
$$
\langle \bm{\varphi_1},T(p_*+h,\lambda_*)\bm{\varphi_1}\rangle
=\langle \bm{\varphi_1},T(p_*-h,\lambda_*)\bm{\varphi_1}\rangle .
$$
Then by the smoothness of $\langle \bm{\varphi_1},T(p,\lambda_*)\bm{\varphi_1}\rangle$ in $p$ near $p=\pi$, we obtain $\langle \bm{\varphi_1},T_p\bm{\varphi_1}\rangle=0$.

\medskip

Step 3. We show that $\langle \bm{\varphi_1},T_p\bm{\varphi_2}\rangle=-\langle \bm{\varphi_2},T_p\bm{\varphi_1}\rangle$ is a pure imaginary number. First,
\begin{equation} \label{eq125}
    \begin{aligned}
    \langle \bm{\varphi_2},T(p,\lambda)\bm{\varphi_1}\rangle&=\int_{[0,2\pi]\times[0,2\pi]}G^e(\bm{r}(\theta_1),\bm{r}(\theta_2);p,\lambda)\cdot\varphi_{ref}(\theta_2)\varphi(\theta_1)\cdot|\bm{r}'(\theta_2)||\bm{r}'(\theta_1)|d\theta_2d\theta_1 \\
    &\quad+\int_{[0,2\pi]\times[0,2\pi]}G^e(\bm{r}(\theta_1),\frac{1}{2}\bm{e}_1+\bm{r}(\theta_2);p,\lambda)\cdot\varphi(\theta_2)\varphi(\theta_1)\cdot|\bm{r}'(\theta_2)||\bm{r}'(\theta_1)|d\theta_2d\theta_1 \\
    &\quad-\int_{[0,2\pi]\times[0,2\pi]}G^e(\frac{1}{2}\bm{e}_1+\bm{r}(\theta_1),\bm{r}(\theta_2);p,\lambda)\cdot\varphi_{ref}(\theta_2)\varphi_{ref}(\theta_1)\cdot|\bm{r}'(\theta_2)||\bm{r}'(\theta_1)|d\theta_2d\theta_1 \\
    &\quad-\int_{[0,2\pi]\times[0,2\pi]}G^e(\bm{r}(\theta_1),\bm{r}(\theta_2);p,\lambda)\cdot\varphi(\theta_2)\varphi_{ref}(\theta_1)\cdot|\bm{r}'(\theta_2)||\bm{r}'(\theta_1)|d\theta_2d\theta_1 \\
    &=:U_1(p,\lambda)+U_2(p,\lambda)+U_3(p,\lambda)+U_4(p,\lambda).
    \end{aligned}
\end{equation}

Note that the Green's function $G^e$ can be written in the following form
\begin{equation} \label{eq126}
G^e(x,y;p,\lambda)=\sum_{n\in\mathbb{Z}}f(\lambda,|p_n|,x_2,y_2)e^{ip_n(x_1-y_1)},
\end{equation}
where $p_n:=p+2n\pi$ ($n\in\mathbb{Z}$) and $f:\mathbf{R}^4\to\mathbf{R}$ is real-valued when $\lambda$ near $\lambda_*$.
It follows that
\begin{equation*}
    \begin{aligned}
    &G^e(\bm{r}(\theta_1),\bm{r}(\pi-\theta_2);p,\lambda)-
    G^e(\bm{r}(\pi-\theta_1),\bm{r}(\theta_2);p,\lambda) \\
    &=\sum_{n\in\mathbb{Z}}f\Big(\lambda,|p_n|,r(\theta_1)sin(\theta_1),r(\pi-\theta_2)sin(\pi-\theta_2)\Big)e^{ip_n\Big(r(\theta_1)cos(\theta_1)-r(\pi-\theta_2)cos(\pi-\theta_2)\Big)}\\
    &\quad-\sum_{n\in\mathbb{Z}}f\Big(\lambda,|p_n|,r(\pi-\theta_1)sin(\pi-\theta_1),r(\theta_2)sin(\theta_2)\Big)e^{ip_n\Big(r(\pi-\theta_1)cos(\pi-\theta_1)-r(\theta_2)cos(\theta_2)\Big)} \\
    &=2i\sum_{n\in\mathbb{Z}}f\Big(\lambda,|p_n|,r(\theta_1)sin\theta_1,r(\theta_2)sin\theta_2\Big)sin\Big[p_n(r(\theta_1)cos\theta_1+r(\theta_2)cos\theta_2)\Big],
    \end{aligned}
\end{equation*}
where $r(\pi-\theta)=r(\theta)$ is used in the last equality above. Therefore,
\begin{equation} \label{eq127}
    \begin{aligned}
    &U_1(p,\lambda)+U_4(p,\lambda) \\
    &=2i\sum_{n\in\mathbb{Z}}\int_{[0,2\pi]\times[0,2\pi]}
    f\Big(\lambda,|p_n|,r(\theta_1)sin\theta_1,r(\theta_2)sin\theta_2\Big)sin\Big[p_n(r(\theta_1)cos\theta_1+r(\theta_2)cos\theta_2)\Big] \\
    &\qquad\qquad\qquad\qquad\cdot\varphi(\theta_2)\varphi(\theta_1)\cdot|\bm{r}'(\theta_2)||\bm{r}'(\theta_1)|d\theta_2d\theta_1 
    \end{aligned}
\end{equation}
which is purely imaginary for $p$ near $p_*=\pi$ and $\lambda=\lambda_*$. Similarly, we can show that
$
U_2(p,\lambda)+U_3(p,\lambda)
$
is also purely imaginary. Therefore
\[
\langle \bm{\varphi_2},T_p\bm{\varphi_1}\rangle=
\frac{\partial}{\partial p}\Big(U_1+U_2+U_3+U_4  \Big)(\pi,\lambda_*)
\]
is also purely imaginary. We denote $\langle \bm{\varphi_2},T_p\bm{\varphi_1}\rangle =i\theta_* $
for some real number $\theta_*$. 

\medskip

Step 4. We prove $\langle \bm{\varphi_2},T_\lambda\bm{\varphi_1}\rangle=\langle \bm{\varphi_1},T_\lambda\bm{\varphi_2}\rangle=0$. Observe that the $n$-th and the $(-n-1)$-th term in \eqref{eq127} cancel out at $p=\pi$. Therefore $U_1(\pi,\lambda)+U_4(\pi,\lambda)\equiv 0$ for  $\lambda$ near $\lambda_*$. Similarly, there holds $U_2(\pi,\lambda)+U_3(\pi,\lambda)\equiv 0$. It follows that $\langle \bm{\varphi_2},T(\pi,\lambda)\bm{\varphi_1}\rangle\equiv 0$ for $\lambda$ near $\lambda_*$, which gives $\langle \bm{\varphi_2},T_\lambda\bm{\varphi_1}\rangle=0$. The equality $\langle \bm{\varphi_1},T_\lambda\bm{\varphi_2}\rangle=0$ can be proved similarly.

\medskip

Step 5. Finally, the equality $\langle \bm{\varphi_1}, T_\lambda\bm{\varphi_1} \rangle=\langle \bm{\varphi_2}, T_\lambda \bm{\varphi_2} \rangle=\gamma_*$ ($i=1,2$) for some real number $\gamma_*$ can be proved in the same way as $\langle \bm{\varphi_2}, T_p \bm{\varphi_1} \rangle$. 
\end{proof}

\subsection{Appendix C: Proof of Proposition \ref{non-degeneracy 2}}
\setcounter{equation}{0}
\setcounter{subsection}{0}
\renewcommand{\theequation}{C.\arabic{equation}}
\renewcommand{\thesubsection}{C.\arabic{subsection}}

\begin{proof}[Proof]
Notice that the perturbation of the periodic media is introduced by shifting the obstacles. Hence we only need to consider the off-diagonal terms  $\langle \bm{\varphi_j},T_\delta\bm{\varphi_i} \rangle$ for $i\neq j$. In particular, when $j=1$ and $i=2$, 
\begin{equation} \label{eq130}
    \begin{aligned}
    &\Big\langle
    \bm{\varphi_1},
    \begin{pmatrix}
    0 & T_{12,\delta}(p_*,\lambda_*) \\
    T_{21,\delta}(p_*,\lambda_*) & 0
    \end{pmatrix}
    \bm{\varphi_2} \Big\rangle\\
    &=-\int_{[0,2\pi]\times[0,2\pi]}G^e(\bm{r}(\theta_1),\frac{1+2\delta}{2}\bm{e}_1+\bm{r}(\theta_2);p_*,\lambda_*)\cdot\varphi_{ref}(\theta_2)\varphi_{ref}(\theta_1)\cdot|\bm{r}'(\theta_2)||\bm{r}'(\theta_1)|d\theta_2d\theta_1 \\
    &\quad+\int_{[0,2\pi]\times[0,2\pi]}G^e(\frac{1+2\delta}{2}\bm{e}_1+\bm{r}(\theta_1),\bm{r}(\theta_2);p_*,\lambda_*)\cdot\varphi(\theta_2)\varphi(\theta_1)\cdot|\bm{r}'(\theta_2)||\bm{r}'(\theta_1)|d\theta_2d\theta_1 \\
    &=\int_{[0,2\pi]\times[0,2\pi]}
    \Big[
    G^e(\frac{1+2\delta}{2}\bm{e}_1+\bm{r}(\theta_1),\bm{r}(\theta_2);p_*,\lambda_*)-
    G^e(\bm{r}(\pi-\theta_1),\frac{1+2\delta}{2}\bm{e}_1+\bm{r}(\pi-\theta_2);p_*,\lambda_*)
    \Big] \\
    &\qquad\qquad\qquad\qquad\cdot\varphi(\theta_2)\varphi(\theta_1)\cdot|\bm{r}'(\theta_2)||\bm{r}'(\theta_1)|d\theta_2d\theta_1 .
    \end{aligned}
\end{equation}
From \eqref{eq126},
\begin{equation*}
\begin{aligned}
&G^e(\frac{1+2\delta}{2}\bm{e}_1+\bm{r}(\theta_1),\bm{r}(\theta_2);p_*,\lambda_*)-
G^e(\bm{r}(\pi-\theta_1),\frac{1+2\delta}{2}\bm{e}_1+\bm{r}(\pi-\theta_2);p_*,\lambda_*) \\
&=\sum_{n\geq 0}f(\lambda_*,|p_n|,r(\theta_1)\sin\theta_1,r(\theta_2)\sin\theta_2)\cos \left((2n+1)\pi\Big(\frac{1+2\delta}{2}+r(\theta_1)\cos\theta_1-r(\theta_2)\cos\theta_2\Big)\right) \\
&\quad -
\sum_{n\geq 0}f(\lambda_*,|p_n|,r(\pi-\theta_1)\sin(\pi-\theta_1),r(\pi-\theta_2)\sin(\pi-\theta_2)) \\
&\quad\quad\quad\quad\quad\quad\quad\quad\cdot\cos \Bigg((2n+1)\pi\Big(r(\pi-\theta_1)\cos(\pi -\theta_1)-r(\pi-\theta_2)\cos(\pi-\theta_2)-\frac{1+2\delta}{2}\Big)\Bigg) \\
&=\sum_{n\geq 0}f(\lambda_*,|p_n|,r(\theta_1)\sin\theta_1,r(\theta_2)\sin\theta_2)\cos \left((2n+1)\pi(\frac{1+2\delta}{2}+r(\theta_1)\cos\theta_1-r(\theta_2)\cos\theta_2)\right) \\
&-\sum_{n\geq 0}f(\lambda_*,|p_n|,r(\theta_1)\sin\theta_1,r(\theta_2)\sin\theta_2)\cos \left(-(2n+1)\pi(\frac{1+2\delta}{2}+r(\theta_1)\cos\theta_1-r(\theta_2)\cos\theta_2)\right) \\
&=0.
\end{aligned}
\end{equation*}
Thus, for $\delta$ sufficiently small,
$$
\Big\langle
    \bm{\varphi_1},
    \begin{pmatrix}
    0 & T_{12,\delta}(p_*,\lambda_*) \\
    T_{21,\delta}(p_*,\lambda_*) & 0
    \end{pmatrix}
    \bm{\varphi_2} \Big\rangle \equiv 0.
$$
A similar calculation yields 
$$
\Big\langle\bm{\varphi_2},
\begin{pmatrix}
    0 & T_{12,\delta}(p_*,\lambda_*) \\
    T_{21,\delta}(p_*,\lambda_*) & 0
    \end{pmatrix}
\bm{\varphi_1} \Big\rangle\equiv 0.
$$
Hence $\langle \bm{\varphi_2},S \bm{\varphi_1}\rangle=\langle \bm{\varphi_1},S \bm{\varphi_2}\rangle=0$ by recalling that $S=\frac{\partial T_\delta}{\partial \delta}(p_*,\lambda_*;\delta)\Big|_{\delta=0}$. On the other hand, \eqref{eq126} also implies that
\begin{equation*} \label{eq131}
    \begin{aligned}
    &G^e(\bm{r}(\pi-\theta_1),\frac{1+2\delta}{2}\bm{e}_1+\bm{r}(\theta_2);p_*,\lambda_*)+G^e(\frac{1+2\delta}{2}\bm{e}_1+\bm{r}(\theta_1),\bm{r}(\pi-\theta_2);p_*,\lambda_*) \\
    &=4\sum_{n\geq 0}f\Big(\lambda_*,(2n+1)\pi,r(\theta_1)\sin\theta_1,r(\theta_2)\sin\theta_2\Big)
    \cos\left((2n+1)\pi\Big(
    \frac{1}{2}+\delta+r(\theta_1)\cos\theta_1+r(\theta_2)\cos\theta_2\Big)\right),
    \end{aligned}
\end{equation*}
\begin{equation*} \label{eq132}
    \begin{aligned}
    &G^e(\bm{r}(\theta_1),\frac{1+2\delta}{2}\bm{e}_1+\bm{r}(\pi-\theta_2);p_*,\lambda_*)+G^e(\frac{1+2\delta}{2}\bm{e}_1+\bm{r}(\pi-\theta_1),\bm{r}(\theta_2);p_*,\lambda_*) \\
    &=4\sum_{n\geq 0}f\Big(\lambda_*,(2n+1)\pi,r(\theta_1)\sin\theta_1,r(\theta_2)\sin\theta_2\Big)
    \cos\left((2n+1)\pi\Big(
    \frac{1}{2}+\delta-r(\theta_1)\cos\theta_1-r(\theta_2)\cos\theta_2\Big)\right).
    \end{aligned}
\end{equation*}
It follows that $\langle \bm{\varphi_1},T_\delta(p_*,\lambda_*)\bm{\varphi_1}\rangle =-\langle \bm{\varphi_2},T_\delta(p_*,\lambda_*)\bm{\varphi_2}\rangle\in\mathbf{R}$, by following the same lines as in \eqref{eq130}. Therefore, we conclude that
$\langle \bm{\varphi_1},S \bm{\varphi_1}\rangle
    =-\langle \bm{\varphi_2},S \bm{\varphi_2}\rangle\in\mathbf{R}$.
\end{proof}

\subsection{Appendix D: Proof of Proposition \ref{Asymptotic formula of Green operators}}
\setcounter{equation}{0}
\setcounter{subsection}{0}
\renewcommand{\theequation}{D.\arabic{equation}}
\renewcommand{\thesubsection}{D.\arabic{subsection}}
\begin{proof} [Proof]
We only prove \eqref{eq90} here. The proof of \eqref{eq94} is identical. The idea is to split the integral expression of the Green's function \eqref{eq88} into different parts and apply asymptotic expansion to each part. We start with the following decomposition:
$$
 G_\delta(x, y;\lambda) = T_\delta(x,y;\lambda)+U_1(x,y;\lambda,\delta)+U_2(x,y;\lambda,\delta),
$$
where
\begin{equation*} \label{eq135}
\begin{aligned}
     &T_\delta(x,y;\lambda)=\frac{1}{2\pi}\bigg(\int_{0}^{2\pi}\sum_{n\geq 3}\frac{u_{n,\delta}(x;p)\overline{u_{n,\delta}(y;p)}}{\lambda-\lambda_{n,\delta}(p)}dp \\
     &\qquad\qquad+\int_{[0,\pi-\delta^{\frac{1}{3}}]\bigcup[\pi+\delta^{\frac{1}{3}},2\pi]}\frac{u_{1,\delta}(x;p)\overline{u_{1,\delta}(y;p)}}{\lambda-\lambda_{1,\delta}(p)}dp+\int_{[0,\pi-\delta^{\frac{1}{3}}]\bigcup[\pi+\delta^{\frac{1}{3}},2\pi]}\frac{u_{2,\delta}(x;p)\overline{u_{2,\delta}(y;p)}}{\lambda-\lambda_{2,\delta}(p)}dp\bigg), \\
     &U_1(x,y;\lambda,\delta)=
        \frac{1}{2\pi}\int_{[\pi -\delta^{\frac{1}{3}},\pi+ \delta^{\frac{1}{3}}]}\frac{u_{1,\delta}(x;p)\overline{u_{1,\delta}(y;p)}}{\lambda-\lambda_{1,\delta}(p)}dp, \\
     &U_2(x,y;\lambda,\delta)=\frac{1}{2\pi}\int_{[\pi -\delta^{\frac{1}{3}},\pi +\delta^{\frac{1}{3}}]}\frac{u_{2,\delta}(x;p)\overline{u_{2,\delta}(y;p)}}{\lambda-\lambda_{2,\delta}(p)}dp. 
    \end{aligned}
\end{equation*}
Note that $T_\delta(x,y;\lambda)$ is the kernel function of the integral operator $\mathbb{T}_\delta(\lambda)$ defined in \eqref{eq93}. We only need to consider the functions $U_1$ and $U_2$. We first study the asymptotics of $U_2$. Set $\alpha_*=\frac{\theta_*}{\gamma_*}$ and $\beta_*=\frac{t_*}{\gamma_*}$. By Theorem \ref{dispersion relation near the dirac point}, for $p\in(\pi-\delta^{\frac{1}{3}},\pi+\delta^{\frac{1}{3}})$ we have
\begin{equation} \label{eq136}
    \left\{
    \begin{aligned}
    &\lambda_{1,\delta}(p)
    =\lambda_*-\sqrt{\delta^2\beta^2_*+\alpha^2_*(p-p_*)^2}\left(1+\mathcal{O}(p-p_*,\delta)\right), \\
    &u_{1,\delta}(x;p)=N(p;\delta)\left(\frac{i\alpha_*(p-p_*)}{\delta \beta_*+\sqrt{\delta^2 \beta_*^2+\alpha_*^2 (p-p_*)^2}}\phi_2(x)+r_1(x;p,\delta)\right),
    \enspace\|r_1(x;p,\delta)\|_{H^{\frac{1}{2}}(\Gamma)}=\mathcal{O}(\delta^{\frac{1}{3}})
    \end{aligned}
    \right.
\end{equation}
\begin{equation} \label{eq137}
    \left\{
    \begin{aligned}
    &\lambda_{2,\delta}(p)
    =\lambda_*+\sqrt{\delta^2\beta^2_*+\alpha^2_*(p-p_*)^2}\left(1+\mathcal{O}(p-p_*,\delta)\right), \\
    &u_{2,\delta}(x;p)=N(p;\delta)\left(\phi_2(x)+r_2(x;p,\delta)\right),\quad \|r_2(x;p,\delta)\|_{H^{\frac{1}{2}}(\Gamma)}=\mathcal{O}(\delta^{\frac{1}{3}}),
    \end{aligned}
    \right.
\end{equation}
where we have used the fact that the odd Dirac eigenmode $\phi_1$ vanishes on the interface $\Gamma$. The normalization factor $N(p)$ admits the following expansion:
\begin{equation*} \label{eq138}
    N(p;\delta)=\left(1+L(p;\delta)+\mathcal{O}(\delta^{\frac{1}{3}})\right)^{-\frac{1}{2}},
\end{equation*}
where $L(p;\delta)=\frac{\alpha_*^2(p-p_*)^2}{(\delta \beta_*+\sqrt{\delta^2 \beta_*^2+\alpha_*^2 (p-p_*)^2})^2}$. By substituting \eqref{eq137} into the integral $U_2(x,y;\lambda,\delta)$ and setting $h:=\frac{\lambda-\lambda_*}{\delta}$, we have
\begin{equation} \label{eq143}
    \begin{aligned}
    &U_2(x,y;\lambda,\delta) \\
    &=\frac{1}{2\pi}\int_{\pi-\delta^{\frac{1}{3}}}^{\pi+\delta^{\frac{1}{3}}}\frac{\phi_2(x)\overline{\phi_2(y)}+
    \phi_2(x)\overline{r_2(y;p,\delta)}
    +r_2(x;p,\delta)\overline{\phi_2(y)}
    +r_2(x;p,\delta)\overline{r_2(y;p,\delta)}
    }{\delta\cdot h-\sqrt{\delta^2 \beta_*^2+\alpha_*^2 (p-p_*)^2}\left(1+\mathcal{O}(p-p_*,\delta)\right)}N^2(p;\delta)dp.
    \end{aligned}
\end{equation}
Observe that the following estimates hold uniformly for $|p-p^*|\leq\delta^{\frac{1}{3}}$ and $h\in \tilde{J}$:
\begin{equation} \label{eq139}
    \begin{aligned}
    &\delta\cdot h-\sqrt{\delta^2 \beta_*^2+\alpha_*^2 (p-p_*)^2}\left(1+\mathcal{O}(p-p_*,\delta)\right)=\left(\delta\cdot h-\sqrt{\delta^2 \beta_*^2+\alpha_*^2 (p-p_*)^2}\right)(1+\mathcal{O}(\delta^{\frac{1}{3}})), \\
    &\quad\qquad\qquad\quad N(p;\delta)
    =\left(1+L(p;\delta)\right)^{-\frac{1}{2}}(1+\mathcal{O}(\delta^{\frac{1}{3}})),\quad \|r_2(x;p,\delta)\|_{H^{\frac{1}{2}}(\Gamma)}=\mathcal{O}(\delta^{\frac{1}{3}}).
    \end{aligned}
\end{equation}

We can show that the leading order term of $U_2$ is given by $f_2(h;\delta)\phi_2(x)\overline{\phi_2(y)}$, where
\begin{equation*}
    f_2(h;\delta)=\frac{1}{2\pi}\int_{\pi-\delta^{\frac{1}{3}}}^{\pi+\delta^{\frac{1}{3}}}\frac{1}{\delta\cdot h-\sqrt{\delta^2 \beta_*^2+\alpha_*^2 (p-p_*)^2}}\frac{1}{1+L(p;\delta)}dp. 
\end{equation*}
The remainder term is denoted as $R_2(x, y; h, \delta):=U_2-f_2\phi_2(x)\overline{\phi_2(y)}$. 
Let $\mathbb{R}_2(h;\delta)$ be the integral operator with kernel $R_2(x, y; h, \delta)$. Note that \eqref{eq139} gives that
\begin{equation*}
\begin{aligned}
R_2(x, y; h, \delta) 
&= \frac{1}{2\pi}\int_{\pi-\delta^{\frac{1}{3}}}^{\pi+\delta^{\frac{1}{3}}}\frac{\phi_2(x)\overline{r_2(y;p,\delta)}
    +r_2(x;p,\delta)\overline{\phi_2(y)}
    +r_2(x;p,\delta)\overline{r_2(y;p,\delta)}}{\delta\cdot h-\sqrt{\delta^2 \beta_*^2+\alpha_*^2 (p-p_*)^2}}\frac{1+\mathcal{O}(\delta^{\frac{1}{3}})}{1+L(p;\delta)}dp \\
&\quad +\frac{1}{2\pi}\int_{\pi-\delta^{\frac{1}{3}}}^{\pi+\delta^{\frac{1}{3}}}\frac{\phi_2(x)\overline{\phi_2(y)}}{\delta\cdot h-\sqrt{\delta^2 \beta_*^2+\alpha_*^2 (p-p_*)^2}}\frac{\mathcal{O}(\delta^{\frac{1}{3}})}{1+L(p;\delta)}dp.
\end{aligned}
\end{equation*}
By \eqref{eq139}, we have $\max_{|p-p_*|<\delta^{\frac{1}{3}}}\|r_2(x;p,\delta)\|_{H^{\frac{1}{2}}(\Gamma)}=\mathcal{O}(\delta^{\frac{1}{3}})$. 
On the other hand, 
\begin{equation*}
 \begin{aligned}
&\int_{\pi-\delta^{\frac{1}{3}}}^{\pi+\delta^{\frac{1}{3}}}\Big|\frac{1}{\delta\cdot h-\sqrt{\delta^2 \beta_*^2+\alpha_*^2 (p-p_*)^2}}\Big| |\frac{1}{1+L(p;\delta)} |dp \\
&\lesssim \int_{\pi-\delta^{\frac{1}{3}}}^{\pi+\delta^{\frac{1}{3}}}\Big|\frac{1}{\delta\cdot h-\sqrt{\delta^2 \beta_*^2+\alpha_*^2 (p-p_*)^2}}\Big|  dp \\
&=\int_{-tan^{-1}(\frac{\alpha_*}{\beta_*}\delta^{-\frac{2}{3}})}^{tan^{-1} (\frac{\alpha_*}{\beta_*}\delta^{-\frac{2}{3}})}\frac{\beta_*}{\alpha_*}\cdot\Big|\frac{\sec^2(\theta)}{h-|\beta_*|\sec(\theta)}\Big|d\theta
=\mathcal{O}(\log (\delta)).
\end{aligned}
\end{equation*}
Therefore, we can conclude that $\|\mathbb{R}_2(h;\delta)\|_{\mathcal{B}(\tilde{H}^{-\frac{1}{2}}(\Gamma),H^{\frac{1}{2}}(\Gamma))}=\mathcal{O}(\delta^{\frac{1}{3}}\log (\delta))$, whence the second equality in \eqref{eq92} follows.

The analysis of $U_1$ is similar. Set its leading-order term as
\begin{equation*}
f_1(h;\delta)\phi_2(x)\overline{\phi_2(y)}:=\left(\frac{1}{2\pi}\int_{\pi-\delta^{\frac{1}{3}}}^{\pi+\delta^{\frac{1}{3}}}\frac{1}{\delta\cdot h+\sqrt{\delta^2 \beta_*^2+\alpha_*^2 (p-p_*)^2}}\frac{L(p;\delta)}{1+L(p;\delta)}dp\right)\phi_2(x)\overline{\phi_2(y)}
\end{equation*}
and the remainder term $R_1(x,y;h,\delta):=U_1(x,y;\lambda,\delta)-f_1(h;\delta)\phi_2(x)\overline{\phi_2(y)}$. We denote the operator associated with $f_1(h;\delta)\phi_2(x)\overline{\phi_2(y)}$ and $R_1(x,y;h,\delta)$ by $f_1(h;\delta)\mathbb{P}$ and $\mathbb{R}_1(h;\delta)$, respectively. Then the first equality in \eqref{eq92} follows directly by repeating the work of estimating $f_2$ and $\mathbb{R}_2$ on $f_1$ and $\mathbb{R}_1$, and we omit it for brevity.  

Combining all the results above, we arrive at \eqref{eq90}. Finally, we point out that there is no essential difference between the analysis of $\mathbb{G}_{\delta}$ and $\mathbb{G}_{-\delta}$: by replacing \eqref{eq136} and \eqref{eq137} with \eqref{eq42} and \eqref{eq43}, we can deduce \eqref{eq94} by following the same line of argument.

\end{proof}

\subsection{Appendix E: Proof of Proposition \ref{lem-t0} and \ref{lem-t1}}
\setcounter{equation}{0}
\setcounter{subsection}{0}
\renewcommand{\theequation}{E.\arabic{equation}}
\renewcommand{\thesubsection}{E.\arab ic{subsection}}
In Appendix E-F, the duality pair between $\tilde{H}^{-\frac{1}{2}}(\Gamma)$ and $H^{\frac{1}{2}}(\Gamma)$ is denoted by $\langle \cdot,\cdot\rangle$.

\begin{proof} [Proof of Proposition \ref{lem-t0}]
We first prove (\ref{eq-kernel-T_0}). 
Observe that for $y\in\Gamma$, it follows from (\ref{eq97}) that
$$
     \mathbb{T}_0\left(\frac{\partial v_1}{\partial x_1}\Big|_\Gamma\right)(y)
{=}\int_{\Gamma}G(y,x;\lambda_*)\frac{\partial v_1}{\partial x_1}dx_2
    +\frac{i}{2\alpha_*}\int_{\Gamma}\left(
v_1(y)\overline{v_1(x)}+v_2(y)\overline{v_2(x)}
    \right)\frac{\partial v_1}{\partial x_1}(x)dx_2.  
$$
By Proposition \ref{representation formula}, we have $\int_{\Gamma}G(y,x;\lambda_*)\frac{\partial v_1}{\partial x_1}dx_2=\frac{1}{2}v_1(y)$; on the other hand, Lemma \ref{conjugate vs reflection} implies that $\frac{i}{2\alpha_*}\int_{\Gamma}\left(
v_1(y)\overline{v_1(x)}+v_2(y)\overline{v_2(x)}
    \right)\frac{\partial v_1}{\partial x_1}(x)dx_2=\frac{i}{\alpha_*}\int_{\Gamma}
v_1(y)\overline{v_1(x)}\frac{\partial v_1}{\partial x_1}(x)dx_2$. Thus
$$
\mathbb{T}_0\left(\frac{\partial v_1}{\partial x_1}\Big|_\Gamma\right)(y)   {=}\frac{1}{2}v_1(y)+\frac{i}{\alpha_*}v_1(y)\int_{\Gamma}\overline{v_1(x)}\frac{\partial v_1}{\partial x_1}(x)dx_2.
$$
By Lemma \ref{orthogonality of the propagating mode}, we obtain
$$
\mathbb{T}_0\left(\frac{\partial v_1}{\partial x_1}\Big|_\Gamma\right)(y)=\frac{1}{2}v_1(y)-\frac{1}{2}v_1(y)=0.
$$
Thus $\text{span}\big\{\frac{\partial v_1}{\partial x_1}\big|_{\Gamma}\big\}\subset\text{Ker}\mathbb{T}_0$. Conversely, suppose $\psi\in \text{Ker}\mathbb{T}_0$ and $\langle \psi,\overline{v_1}\rangle=0$. We aim to show $\psi=0$. Note that \eqref{eq-73} and \eqref{eq97} lead to
\begin{equation*} \label{eq148}
    \begin{aligned}
        0&=2\mathbb{T}_0 \psi
        =\left(2\int_{\Gamma}\Big(G_{0}^{+}(x,y;\lambda_*)- \frac{i}{2\alpha_*}v_1(x)\overline{v_1(y)}+\frac{i}{2\alpha_*}v_2(x)\overline{v_2(y)}\Big)\psi(y)dy_2\right)\Big|_{\Gamma} \\
        &=\left(2\int_{\Gamma}G_{0}^{+}(x,y;\lambda_*)\psi(y)dy_2\right)\Big|_{\Gamma}-\frac{i}{\alpha_*}v_1(x)\Big|_\Gamma\cdot \int_{\Gamma}\overline{v_1(y)}\psi(y)dy_2
        +\frac{i}{\alpha_*}v_2(x)\Big|_\Gamma\cdot \int_{\Gamma}\overline{v_2(y)}\psi(y)dy_2
    \end{aligned}
\end{equation*}
Then Lemma \ref{conjugate vs reflection} gives that
$$
\begin{aligned}
0&{=}\left(2\int_{\Gamma}G_{0}^{+}(x,y;\lambda_*)\psi(y)dy_2\right)\Big|_{\Gamma}-\frac{i}{\alpha_*}v_1(x)\Big|_\Gamma\cdot \int_{\Gamma}\overline{v_1(y)}\psi(y)dy_2
        +\frac{i}{\alpha_*}v_1(x)\Big|_\Gamma\cdot \int_{\Gamma}\overline{v_1(y)}\psi(y)dy_2 \\
        &=\left(2\int_{\Gamma}G_{0}^{+}(x,y;\lambda_*)\psi(y)dy_2\right)\Big|_{\Gamma}.
\end{aligned}
$$
Thus, if we define $v(x):=2\int_{\Gamma}G_{0}^{+}(x,y;\lambda_*)\psi(y)dy_2$ for $x\in\Omega^+$, then we have $v|_{\Gamma}=0$. Moreover, using \eqref{eq-73} and the fact that $\langle \psi,\overline{v_1}\rangle=0$, we have another expression of $v(x)$
\begin{equation} \label{eq150}
v(x)=2\int_{\Gamma}G(x,y;\lambda_*)\psi(y)dy_2,\quad x\in\Omega^+.
\end{equation}
We next show that
$v(x)\equiv 0$ for $x\in\Omega^+$, which shall lead to $\psi=0$ as claimed. For this purpose, we consider the following odd extension of $v$:
\begin{equation*}
    \tilde{v}(x_1,x_2)=\left\{
    \begin{aligned}
    &v(x_1,x_2),\quad x_1\geq 0, \\
    &-v(-x_1,x_2),\quad x_1<0,
    \end{aligned}
    \right.
\end{equation*}
Since $v|_{\Gamma}=0$, we have $\tilde{v}\in\mathcal{V}(\lambda_*)\bigcap L^2(\Omega)$ by noting that $G_{0}^+$ introduced in \eqref{eq-73}
decays exponentially. Thus $\tilde{v}$ gives a $L^2$-eigenmode for the 
unperturbed periodic structure $\Omega$. But by Assumption \ref{assump0}, $\lambda_*$ is not an embedded eigenvalue. Therefore $\tilde{v}\equiv 0$ and hence $v(x)\equiv 0$ for $x\in\Omega^+$.
On the other hand,  using Lemma \ref{commutativity and parity of the Green function}, \eqref{eq150} implies that $v(x)$ can be extended to the whole space $\Omega$. Moreover, the extended function is identically zero in $\Omega$. It follows that
\begin{equation*}
    \begin{aligned}
        0=(\Delta_x+\lambda_*)v(x)&=2(\Delta_x+\lambda_*)\int_{\Gamma}G(x,y;\lambda_*)\psi(y_2)dy_2 \\
        &=2(\Delta_x+\lambda_*)\int_{\Omega}G(x,y;\lambda_*)\left(\psi(y_2)\tilde{\delta}(y_1)\right)dy=\psi(y_2)\tilde{\delta}(y_1).
    \end{aligned}
\end{equation*}
Therefore $\psi=0$ in $\tilde{H}^{-\frac{1}{2}}(\Gamma)$. 
In conclusion, $\text{Ker}\mathbb{T}_0$ is at most one-dimensional. Finally, since  $\text{Ker}\mathbb{T}_0\supset\text{span}(\frac{\partial v_1}{\partial x_1}\big|_{\Gamma})$ and $\langle \frac{\partial v_1}{\partial x_1},\overline{v_1}\rangle\neq 0$, we conclude that $\text{Ker}\mathbb{T}_0=\text{span}\{\frac{\partial v_1}{\partial x_1}\big|_{\Gamma}\}$.
\end{proof} 

\bigskip

\begin{proof}[Proof of Proposition \ref{lem-t1}]
The proof here follows the same lines as in Lemma \ref{particle integral operator at the Dirac point}. We point out the major difference in the proof and skip the analogous steps.

Let $\text{Tr}:H^1(\Omega)\to H^{\frac{1}{2}}(\Gamma),\quad f\mapsto f|_{\Gamma}$ be the trace operator and $E:H^{\frac{1}{2}}(\Gamma)\to H^{1}(\Omega)$ be the Sobolev extension operator such that $\text{Tr}\circ E=id|_{H^{\frac{1}{2}}(\Gamma)}$. For each $\psi\in \tilde{H}^{-\frac{1}{2}}(\Gamma)$, let $c_n(\psi;p):=\langle \psi,\overline{u_n(\cdot;p)}\rangle$, where $\{u_n(\cdot;p)\}_{n\geq 1}$ are the Bloch eigenmodes associated with the waveguide in Figure 1 for the eigenvalue $\lambda_n(p)$. Now we decompose $\mathbb{T}_0$ as $\mathbb{T}_0=\mathbb{T}_0^{(1)}+\mathbb{T}_0^{(2)}$, where for any $\psi\in \tilde{H}^{-\frac{1}{2}}(\Gamma)$,
\begin{equation*}
\left\{
\begin{aligned}
&\mathbb{T}_0^{(1)}\psi:=
\text{Tr}\left(\frac{1}{2\pi}\int_{0}^{2\pi}\sum_{n\geq 3}\frac{c_n(\psi;p)}{\lambda_*-\lambda_n(p)}u_n(\cdot;p)dp
-\frac{1}{2\pi}\int_{0}^{2\pi}\sum_{n=1,2}c_n(\psi;p)u_n(\cdot;p)dp\right), \\
&\mathbb{T}_0^{(2)}\psi:=
\text{Tr}\left(\frac{1}{2\pi}\text{p.v.}\int_{0}^{2\pi}\sum_{n=1,2}\frac{c_n(\psi;p)}{\lambda_*-\mu_n(p)}v_n(\cdot;p)dp
+\frac{1}{2\pi}\int_{0}^{2\pi}\sum_{n=1,2}c_n(\psi;p)u_n(\cdot;p)dp\right).
\end{aligned}
\right. 
\end{equation*}
Similar to the proof of Lemma \ref{particle integral operator at the Dirac point}, we shall show that $\mathbb{T}_0^{(1)}$ is invertible while $\mathbb{T}_0^{(2)}$ is compact, which then implies that $\mathbb{T}_0$ is a Fredholm operator of zero index.

To show the invertibility of $\mathbb{T}_0^{(1)}$, we only need to: (1) establish an estimate analogous to \eqref{eq_A_2}, which implies the injectivity of $\mathbb{T}_0^{(1)}$ and closedness of $\text{Ran}\,(\mathbb{T}_0^{(1)})$; (2) obtain an identity parallel to \eqref{eq_A_3} which proves that $\text{Ran}\,(\mathbb{T}_0^{(1)})$ is dense in $H^{\frac{1}{2}}(\Gamma)$. Note that \eqref{eq_A_3} for $\mathbb{T}_0^{(1)}$ stills holds when the dual pair is taken on $\tilde{H}^{-\frac{1}{2}}(\Gamma)\times H^{\frac{1}{2}}(\Gamma)$, thus (2) is proved. Now we prove (1). Note that the following inequality, which is the counterpart of \eqref{eq36}, is straightforward
\begin{equation} \label{eq_F_1}
|\langle\overline{\psi}, \mathbb{T}_0^{(1)}\psi\rangle |\gtrsim \int_{0}^{2\pi}\sum_{n\geq 1}\frac{|c_n(\psi;p)|^2}{|\lambda_*-\lambda_n(p)|}dp.
\end{equation}
By the Floquet-Bloch theory, we can write $E\phi=\int_{0}^{2\pi}\sum_{n\geq 1}a_n(p)u_n(x;p)dp$ for each $\phi\in H^{\frac{1}{2}}(\Gamma)$. It follows that
\begin{equation*}
\begin{aligned}
\Big|\langle \psi, \overline{\phi}\rangle\Big|
&=\Big|\langle \psi, \overline{\text{Tr}(E\phi)}\rangle \Big|
=\Big|\int_{0}^{2\pi}\sum_{n\geq 1}\overline{a_n(p)}\cdot c_n(\psi;p)dp\Big| \\
&\leq\Big|\int_{0}^{2\pi}(\sum_{n\geq 1}|\lambda_*-\lambda_n(p)||a_n(p)|^2)^{\frac{1}{2}}
\cdot (\sum_{n\geq 1}\frac{|c_n(\psi;p)|^2}{|\lambda_*-\lambda_n(p)|})^{\frac{1}{2}}dp\Big| \\
&\leq 
\left(\int_{0}^{2\pi}\sum_{n\geq 1}|\lambda_*-\lambda_n(p)||a_n(p)|^2 dp\right)^{\frac{1}{2}}
\left(\int_{0}^{2\pi}\sum_{n\geq 1}\frac{|c_n(\psi;p)|^2}{|\lambda_*-\lambda_n(p)|} dp\right)^{\frac{1}{2}} \\
&\lesssim
\left(\int_{0}^{2\pi}\sum_{n\geq 1}(1+\lambda_n(p))|a_n(p)|^2 dp\right)^{\frac{1}{2}}
\left(\int_{0}^{2\pi}\sum_{n\geq 1}\frac{|c_n(\psi;p)|^2}{|\lambda_*-\lambda_n(p)|} dp\right)^{\frac{1}{2}}. 
\end{aligned}
\end{equation*}
Using \eqref{eq87_5}, we further obtain
\[
\Big|\langle \psi, \overline{\phi}\rangle\Big| \lesssim \|E\phi\|_{H^1(\Omega)}\left(\int_{0}^{2\pi}\sum_{n\geq 1}\frac{|c_n(\psi;p)|^2}{|\lambda_*-\lambda_n(p)|} dp\right)^{\frac{1}{2}}.
\]
Since $\|E\phi\|_{H^1(\Omega)} \lesssim  \|\phi\|_{H^\frac{1}{2}(\Gamma)}$, we can conclude that 
\begin{equation} \label{eq_F_2}
    \|\psi\|_{\tilde{H}^{-\frac{1}{2}}(\Gamma)}\lesssim \left(\int_{0}^{2\pi}\sum_{n\geq 1}\frac{|c_n(\psi;p)|^2}{|\lambda_*-\lambda_n(p)|} dp\right)^{\frac{1}{2}}.
\end{equation}
Then the desired estimate $|\langle\overline{\psi}, \mathbb{T}_0^{(1)}\psi\rangle |
\gtrsim \|\psi\|_{\tilde{H}^{-\frac{1}{2}}(\Gamma)}$ follows from \eqref{eq_F_1} and \eqref{eq_F_2}. 

We next show that $\mathbb{T}_0^{(2)}:\tilde{H}^{-\frac{1}{2}}(\Gamma)\to H^{\frac{1}{2}}(\Gamma)$ is compact. To this end, we fix a smooth domain $O$ such that $\Gamma \subset O \subset \Omega$. Then, by using the compactness of the embedding of $H^{2}(O)$ into $H^{1}(O)$ and the boundedness of the restriction operator from $H^{1}(O)$ to $H^{\frac{1}{2}}(\Gamma)$, it is sufficient to show that the natural extension of $(\mathbb{T}_0^{(2)}\psi)(x)$ ($x\in O$) is uniformly bounded in $H^{2}(O)$-norm for $\|\psi\|_{\tilde{H}^{-\frac{1}{2}}(\Gamma)}\leq 1$. Here we only estimate the $H^{2}(O)$-norm of $u_{\psi}(x):=\frac{1}{2\pi}\text{p.v.}\int_{0}^{2\pi}\frac{c_1(\psi;p)}{\lambda_*-\mu_1(p)}v_1(x;p)dp$, while the other terms in $\mathbb{T}_0^{(2)}\psi$ can be estimated similarly. To proceed, recall that both $\mu_1(p)$ and $v_1(x;p)$ are analytic in $p$ within a complex neighborhood of $[0,2\pi]$, which implies the following inequality by the principal value estimate of Banach-valued function\cite{CLEMENT1991453},
\begin{equation} \label{eq_F_3}
    \|u_{\psi}\|_{H^2(O)}
    \lesssim \max_{0\leq p\leq 2\pi}\left(\|v_1(\cdot;p)\|_{H^2(O)}+\|\partial_p v_1(\cdot;p)\|_{H^2(O)}\right)\|\psi\|_{\tilde{H}^{-\frac{1}{2}}(\Gamma)}.
\end{equation}
For the first term in \eqref{eq_F_3}, note that the regularity of Laplacian eigenfunctions implies that $\|v_1(\cdot;p)\|_{H^2(O)}<\infty$ for each $p\in[0,2\pi]$. Hence the analyticity gives $\max_{0\leq p\leq 2\pi}\|v_1(\cdot;p)\|_{H^2(O)}<\infty$. Second, note that the partial derivative $\partial_p v_1(\cdot;p)$ solves the following equation
\begin{equation*}
(\Delta_x +\mu_{1}(p))\partial_p v_1(x;p)=\mu_1' v_1(x;p),\quad x\in O.
\end{equation*}
Then a standard regularity theory of elliptic equations implies that $\max_{0\leq p\leq 2\pi}\|\partial_p v_1(\cdot;p)\|_{H^2(O)}<\infty$. Thus the uniformly boundedness of $\|u_{\psi}\|_{H^2(O)}$ follows from \eqref{eq_F_3}. This completes the proof of the proposition. 
\end{proof}

\subsection{Appendix F: Proof of Proposition \ref{limit of the sum operator}}
\setcounter{equation}{0}
\setcounter{subsection}{0}
\renewcommand{\theequation}{F.\arabic{equation}}
\renewcommand{\thesubsection}{F.\arabic{subsection}}
\begin{proof}[Proof]
Step 1.  By \eqref{eq90} and \eqref{eq94},
\begin{equation} \label{eq157}
    \begin{aligned}
    \big(\mathbb{G}_{\delta}+\mathbb{G}_{-\delta}\big)&(\lambda_*+\delta\cdot h)
    =\big(\mathbb{T}_{\delta}+\mathbb{T}_{-\delta}\big)(\lambda_*+\delta\cdot h) \\
    &+\left[(f_1+f_2+\tilde{f}_1+\tilde{f}_2)(h;\delta)\mathbb{P}+(\mathbb{R}_1+\mathbb{R}_2+\tilde{\mathbb{R}}_1+\tilde{\mathbb{R}}_2)(h;\delta)\right].
    \end{aligned}
\end{equation}
By Proposition \ref{Asymptotic formula of Green operators},  $\lim_{\delta\to 0}\|\mathbb{R}_1+\mathbb{R}_2+\tilde{\mathbb{R}}_1+\tilde{\mathbb{R}}_2\|=0$. 
Moreover, a direct calculation shows that
\begin{equation*} \label{eq158}
    \begin{aligned}
    (f_1&+f_2+\tilde{f}_1+\tilde{f}_2)(h;\delta)
    =\frac{1}{2\pi}\int_{\pi-\delta^{\frac{1}{3}}}^{\pi+\delta^{\frac{1}{3}}}\left(\frac{1}{\delta\cdot h-\sqrt{\delta^2 \beta_*^2+\alpha_*^2 (p-p_*)^2}}+\frac{1}{\delta\cdot h+\sqrt{\delta^2 \beta_*^2+\alpha_*^2 (p-p_*)^2}}\right) dp \\
    &=-\frac{\delta\cdot h}{\pi}\int_{\pi-\delta^{\frac{1}{3}}}^{\pi+\delta^{\frac{1}{3}}}\frac{1}{\delta^2 (\beta_*^2-h^2)+\alpha_*^2 (p-p_*)^2}dp =-\frac{1}{\pi\beta_*\alpha_*}\frac{h}{\sqrt{1-(\frac{h}{\beta_*})^2}}\cdot 2\tan^{-1}(\frac{\alpha_*}{\sqrt{\beta_*^2-h^2}}\delta^{-\frac{2}{3}}).
    \end{aligned}
\end{equation*}
For $h\in\tilde{J}$, we can show that the following  convergence holds uniformly:
\begin{equation*} \label{eq159_1}
    \lim_{\delta\to 0}(f_1+f_2+\tilde{f}_1+\tilde{f}_2)(h;\delta)=\beta(h):=-\frac{1}{\beta_*\alpha_*}\frac{h}{\sqrt{1-(\frac{h}{\beta_*})^2}}.
\end{equation*}

We now investigate the limit of the term 
$\mathbb{T}_\delta(\lambda_*+\delta\cdot h)$ in the rest of the proof.

Step 2. We decompose respectively the operators $\mathbb{T}_\delta(\lambda)$ and $\mathbb{T}_0$ as
$$
\mathbb{T}_\delta(\lambda)= \mathbb{T}^{prop}_\delta(\lambda) + 
\mathbb{T}^{evan}_\delta(\lambda), \quad \mathbb{T}_0= \mathbb{T}^{prop}_0 + 
\mathbb{T}^{evan}_0,
$$
with
\begin{equation} \label{eq161}
    \left\{
    \begin{aligned}
        &\mathbb{T}^{prop}_\delta(\lambda)=\frac{1}{2\pi}\int_{[0,\pi-\delta^{\frac{1}{3}}]\bigcup[\pi+\delta^{\frac{1}{3}},2\pi]}\sum_{n=1,2}\frac{\langle \cdot,\overline{u_{n,\delta}(x;p)}\rangle}{\lambda-\lambda_{n,\delta}(p)}u_{n,\delta}(x;p)dp,\\
        &\mathbb{T}^{evan}_\delta(\lambda)=\frac{1}{2\pi}\int_{0}^{2\pi}\sum_{n\geq 3}\frac{\langle \cdot,\overline{u_{n,\delta}(x;p)}\rangle}{\lambda-\lambda_{n,\delta}(p)}u_{n,\delta}(x;p)dp,
    \end{aligned}
    \right.
\end{equation}
\begin{equation*} \label{eq162}
    \left\{
    \begin{aligned}
        &\mathbb{T}^{prop}_0=\frac{1}{2\pi}\lim_{\epsilon\to 0}\int_{[0,\pi-\epsilon)\bigcup(\pi+\epsilon,2\pi]}\sum_{n=1,2}\frac{\langle \cdot,\overline{v_{n}(x;p)}\rangle}{\lambda_*-\lambda_{n}(p)}v_{n}(x;p)dp, \\
        &\mathbb{T}^{evan}_0=\frac{1}{2\pi}\int_{0}^{2\pi}\sum_{n\geq 3}\frac{\langle \cdot,\overline{u_{n}(x;p)}\rangle}{\lambda_*-\lambda_{n}(p)}u_{n}(x;p)dp,
    \end{aligned}
    \right.
\end{equation*}
where $v_n$ and $\mu_n$ ($n=1,2$) are introduced in Proposition \ref{Existence of Dirac points of the period-1 structure}. We further introduce the following auxiliary operator
\begin{equation*}
\begin{aligned}
\mathbb{T}^{prop}_{0,\delta}&=\frac{1}{2\pi}\sum_{n=1,2}\int_{[0,\pi-\delta^{\frac{1}{3}}]\bigcup[\pi+\delta^{\frac{1}{3}},2\pi]}\frac{\langle \cdot,\overline{u_{n}(x;p)}\rangle}{\lambda_*-\lambda_{n}(p)}u_{n}(x;p)dp \\
&=\frac{1}{2\pi}\sum_{n=1,2}\int_{[0,\pi-\delta^{\frac{1}{3}}]\bigcup[\pi+\delta^{\frac{1}{3}},2\pi]}\frac{\langle \cdot,\overline{v_{n}(x;p)}\rangle}{\lambda_*-\mu_{n}(p)}v_{n}(x;p)dp.
\end{aligned}
\end{equation*}

Step 3. In this step, we show that 
\begin{equation} \label{eq166}
    \lim_{\delta\to 0}\|\mathbb{T}^{prop}_{\delta}(\lambda_*+\delta\cdot h)-\mathbb{T}^{prop}_{0}\|_{\mathcal{B}(\tilde{H}^{-\frac{1}{2}}(\Gamma),H^{\frac{1}{2}}(\Gamma))}=0.
\end{equation} 
First of all, we prove the following limit
\begin{equation} \label{eq-aaa}
    \lim_{\delta\to 0}\|\mathbb{T}^{prop}_{0,\delta}-\mathbb{T}_0^{prop}\|_{\mathcal{B}(\tilde{H}^{-\frac{1}{2}}(\Gamma),H^{\frac{1}{2}}(\Gamma))}=0,
\end{equation}
Indeed, the definitions of $\mathbb{T}^{prop}_{0,\delta}$ and $\mathbb{T}_0^{prop}$ give that
\begin{equation*}
    (\mathbb{T}^{prop}_{0,\delta}-\mathbb{T}_0^{prop})\psi
    =\frac{1}{2\pi}\sum_{n=1,2}\lim_{\epsilon\to 0}\int_{[\pi-\delta^{\frac{1}{3}},\pi-\epsilon)\bigcup(\pi+\epsilon,\pi+\delta^{\frac{1}{3}},2\pi]}\frac{\langle \psi(y),\overline{v_{n}(y;p)}\rangle}{\lambda_*-\mu_{n}(p)}v_{n}(x;p)dp
    ,\quad \psi\in \tilde{H}^{-\frac{1}{2}}(\Gamma).
\end{equation*}
Then the analyticity of $v_n$ and $\mu_n$ ($n=1,2$) give the following estimate
\begin{equation*}
\begin{aligned}
\|(\mathbb{T}^{prop}_{0,\delta}-\mathbb{T}_0^{prop})\psi\|_{H^{\frac{1}{2}}(\Gamma)}
&\lesssim \delta^{\frac{1}{3}}\sum_{n=1,2}
\left(\|v_{n}(x;\pi)\|_{H^{\frac{1}{2}}(\Gamma)}
+\|\partial_p v_{n}(x;\pi)\|_{H^{\frac{1}{2}}(\Gamma)}\right)\|\psi\|_{\tilde{H}^{-\frac{1}{2}}(\Gamma)}
\end{aligned}
\end{equation*}
which is similar to \eqref{eq_F_3}. Thus
\begin{equation*}
    \|\mathbb{T}^{prop}_{0,\delta}-\mathbb{T}_0^{prop}\|_{\mathcal{B}(\tilde{H}^{-\frac{1}{2}}(\Gamma),H^{\frac{1}{2}}(\Gamma))}
    \lesssim \delta^{\frac{1}{3}}\sum_{n=1,2}\left( \|v_{n}(x;\pi)\|_{H^{\frac{1}{2}}(\Gamma)}
+\|\partial_p v_{n}(x;\pi)\|_{H^{\frac{1}{2}}(\Gamma)} \right),
\end{equation*}
where \eqref{eq-aaa} follows.

Next, we prove that
\begin{equation} \label{eq163}
    \lim_{\delta \to 0}\|\mathbb{T}^{prop}_\delta(\lambda_*+\delta\cdot h)-\mathbb{T}^{prop}_\delta(\lambda_*)\|=0.
\end{equation}
Note that by Theorem \ref{dispersion relation near the dirac point} and Assumption \ref{assump0}, the following estimate holds uniformly for all $h\in\tilde{J}$ and $p\in [0,\pi-\delta^{\frac{1}{3}}]\bigcup [\pi+\delta^{\frac{1}{3}},2\pi ]$
\begin{equation*}
    |\lambda_*+\delta\cdot h-\lambda_{n,\delta}(p)|\gtrsim \delta^{\frac{1}{3}},
    \quad \|u_{n,\delta}(\cdot;p)\|_{H^{\frac{1}{2}}(\Gamma)}=\mathcal{O}(1)
    \quad (n=1,2).
\end{equation*}
Next, for each $\psi\in \tilde{H}^{-\frac{1}{2}}(\Gamma)$, let $v:=\left(\mathbb{T}^{prop}_\delta(\lambda_*+\delta\cdot h)-\mathbb{T}^{prop}_\delta(\lambda_*)\right)\psi\in H^{\frac{1}{2}}(\Gamma)$. Note that $v(x)$ can be extended to $x\in\Omega_\delta$ using \eqref{eq161}. We can show that
\begin{equation} \label{eq-T-estimate}
    \begin{aligned}
        \|v\|_{H^1(\Omega_\delta)}^2 
        &=\Big\|\frac{1}{2\pi}\sum_{n=1,2}\int_{[0,\pi-\delta^{\frac{1}{3}}]\bigcup[\pi+\delta^{\frac{1}{3}},2\pi]}\frac{\delta\cdot h\langle \psi,\overline{u_{n,\delta}(x;p)}\rangle}{(\lambda_*+\delta\cdot h-\lambda_{n,\delta}(p))(\lambda_*-\lambda_{n,\delta}(p))}u_{n,\delta}(x;p)dp\Big\|_{H^1(\Omega_\delta)}^2 \\
        &=\sum_{n=1,2}\int_{[0,\pi-\delta^{\frac{1}{3}}]\bigcup[\pi+\delta^{\frac{1}{3}},2\pi]}(1+\lambda_{n,\delta}(p))\Big|\frac{\delta\cdot h\langle \psi,\overline{u_{n,\delta}(x;p)}\rangle}{(\lambda_*+\delta\cdot h-\lambda_{n,\delta}(p))(\lambda_*-\lambda_{n,\delta}(p))}\Big|^2dp \\
        &\lesssim \delta^{\frac{2}{3}}|h|^2\sum_{n=1,2}\int_{[0,\pi-\delta^{\frac{1}{3}}]\bigcup[\pi+\delta^{\frac{1}{3}},2\pi]}|\langle \psi,\overline{u_{n,\delta}(x;p)}\rangle|^2dp \\
        &\lesssim \delta^{\frac{2}{3}}|h|^2 \sum_{n=1,2}\int_{[0,\pi-\delta^{\frac{1}{3}}]\bigcup[\pi+\delta^{\frac{1}{3}},2\pi]}\|\psi\|^2_{\tilde{H}^{-\frac{1}{2}}(\Gamma)}dp
        \lesssim\delta^{\frac{2}{3}}|h|^2\|\psi\|^2_{\tilde{H}^{-\frac{1}{2}}(\Gamma)}
    \end{aligned}
\end{equation}
where the second equality is derived from \eqref{eq87_5}. Thus we have
\begin{equation*}
     \|\left(\mathbb{T}^{prop}_\delta(\lambda_*+\delta\cdot h)-\mathbb{T}^{prop}_\delta(\lambda_*)\right)\psi\|_{H^{\frac{1}{2}}(\Gamma)}
    \lesssim
    \|v\|_{H^1(\Omega_\delta)}\lesssim \delta^{\frac{1}{3}}|h|\|\psi\|_{\tilde{H}^{-\frac{1}{2}}(\Gamma)},
\end{equation*}
whence \eqref{eq163} follows.

We then show that
\begin{equation} \label{estimate-T22}
    \|\mathbb{T}^{prop}_\delta(\lambda_*)-\mathbb{T}^{prop}_{0,\delta}\|_{\mathcal{B}(\tilde{H}^{-\frac{1}{2}}(\Gamma),H^{\frac{1}{2}}(\Gamma))}\lesssim\delta^{\frac{1}{3}}|h|.
\end{equation}
By a similar perturbation argument, as we did in the proof of Theorem \ref{dispersion relation near the dirac point}, there holds
\begin{equation*}
    |\lambda_{n,\delta}(p)-\lambda_{n}(p)|=\mathcal{O}(\delta),\quad
    \|u_{n,\delta}(\cdot;p)-u_{n}(\cdot;p)\|_{H^\frac{1}{2}(\Gamma)}=\mathcal{O}(\delta),
\end{equation*}
uniformly for $n=1,2$ and $p\in(0,\pi)\bigcup(\pi,2\pi)$. Again, for each $\phi\in \tilde{H}^{-\frac{1}{2}}(\Gamma)$, let $u:=\mathbb{T}^{prop}_\delta(\lambda_*)\phi-\mathbb{T}^{prop}_{0,\delta}\phi$ and extend $u$ as a function defined on $\Omega_\delta$. Then we have
\begin{equation*} \label{eq165_1}
    \begin{aligned}
    u=&\frac{1}{2\pi}\sum_{n=1,2}\int_{[0,\pi-\delta^{\frac{1}{3}}]\bigcup[\pi+\delta^{\frac{1}{3}},2\pi]}\frac{\langle \psi,\overline{u_{n,\delta}(x;p)}\rangle u_{n,\delta}(x;p)}{\lambda_*-\lambda_{n,\delta}(p)}-\frac{\langle \psi,\overline{u_{n,\delta}(x;p)}\rangle u_{n}(x;p)}{\lambda_*-\lambda_{n,\delta}(p)}dp \\
    &+
    \frac{1}{2\pi}\sum_{n=1,2}\int_{[0,\pi-\delta^{\frac{1}{3}}]\bigcup[\pi+\delta^{\frac{1}{3}},2\pi]}\frac{\langle \psi,\overline{u_{n,\delta}(x;p)}\rangle u_{n}(x;p)}{\lambda_*-\lambda_{n,\delta}(p)}-\frac{\langle \psi,\overline{u_{n}(x;p)}\rangle u_{n}(x;p)}{\lambda_*-\lambda_{n,\delta}(p)}dp \\
    &+
    \frac{1}{2\pi}\sum_{n=1,2}\int_{[0,\pi-\delta^{\frac{1}{3}}]\bigcup[\pi+\delta^{\frac{1}{3}},2\pi]}\frac{\langle \psi,\overline{u_{n}(x;p)}\rangle u_{n}(x;p)}{\lambda_*-\lambda_{n,\delta}(p)}-\frac{\langle \psi,\overline{u_{n}(x;p)}\rangle u_{n}(x;p)}{\lambda_*-\lambda_{n}(p)}dp.
  \end{aligned}
\end{equation*}  
By applying the same method of estimation for each term at the right-hand side above as in \eqref{eq-T-estimate}, we arrive at \eqref{estimate-T22}.

In conclusion, with \eqref{eq-aaa}, \eqref{eq163} and \eqref{estimate-T22}, we deduce \eqref{eq166}.

\medskip

Step 4. We show that 
\begin{equation} \label{eq168}
    \lim_{\delta\to 0}\|\mathbb{T}^{evan}_{\delta}(\lambda_*+\delta\cdot h)-\mathbb{T}^{evan}_{0}\|_{\mathcal{B}(\tilde{H}^{-\frac{1}{2}}(\Gamma),H^{\frac{1}{2}}(\Gamma))}=0. 
\end{equation}
To this end, define the following auxiliary operators in $\mathcal{B}(\tilde{H}^{-\frac{1}{2}}(\Gamma),H^{\frac{1}{2}}(\Gamma))$ for $p\in [0,2\pi]$
\begin{equation*}
\mathbb{T}^{evan}_{\delta}(\lambda;p):=\sum_{n\geq 3}\frac{\langle\cdot,\overline{u_{n,\delta}(x;p)}\rangle}{\lambda-\lambda_{n,\delta}(p)}u_{n,\delta}(x;p)
,\quad
\mathbb{T}^{evan}_{0}(\lambda;p):=\sum_{n\geq 3}\frac{\langle\cdot,\overline{u_{n}(x;p)}\rangle}{\lambda-\lambda_{n}(p)}u_{n}(x;p).
\end{equation*}
Then we have
\begin{equation}
\mathbb{T}^{evan}_{\delta}(\lambda)=\int_{0}^{2\pi}\mathbb{T}^{evan}_{\delta}(\lambda;p)dp,\quad
\mathbb{T}^{evan}_{0}(\lambda)=\int_{0}^{2\pi}\mathbb{T}^{evan}_{0}(\lambda;p)dp.
\end{equation}
We aim to show that (1)$\mathbb{T}^{evan}_{\delta}(\lambda_*+\delta \cdot h;p)$ has uniformly bounded norm for every $p\in[0,2\pi]$ and $\delta\ll 1$; (2)$\mathbb{T}^{evan}_{\delta}(\lambda_*+\delta\cdot h;p)$ is a continuous operator-valued function of $p$; (3)$\mathbb{T}^{evan}_{\delta}(\lambda_*+\delta\cdot h;p)$ converges to $\mathbb{T}^{evan}_{0}(\lambda_*;p)$ in operator norm for almost every $p\in [0,2\pi]$. Then \eqref{eq168} follows directly from the dominated convergence theorem. In what follows, after introducing some notations in Step 5, we shall prove (1), (2), and (3) in Step 6, Step 7, and Step 8 respectively. 

\medskip

Step 5. We fix some notations. Denote $C_\delta:=\Omega_{\delta}\cap Y$, and the closed subspace $V_{p,\delta}$ of $H^1(C_\delta)$ by $V_{p,\delta}:=\overline{\text{span}\{u_{n,\delta}(\cdot;p)\}_{n\geq 3}}\subset H^1(C_\delta)$. Then the trace operator can be defined on $V_{p,\delta}$, which is $\text{Tr}: V_{p,\delta}\to  H^\frac{1}{2}(\Gamma)$. Moreover, we use $M: \tilde{H}^{-\frac{1}{2}}(\Gamma)\to (V_{p,\delta})^*$ to represent the adjoint of $\text{Tr}$, i.e. $M:=Tr^*$. It is clear that the $\|\text{Tr}\|$ and $\|M\|$ is uniformly bounded for every $p\in[0,2\pi]$ and $\delta\ll 1$ since $D_{1,\delta}$ and $D_{2,\delta}$ are away from $\Gamma$. 

\medskip

Step 6. We prove the uniform boundedness of $\|\mathbb{T}^{evan}_{\delta_*}(\lambda+ \delta\cdot h;p)\|$ in $p$. 
For this purpose, we define the following sesquilinear form $a_{p,\delta}(\cdot,\cdot)$ on $V_{p,\delta}$ and its associated operator $A_{p,\delta}:V_{p,\delta}\to (V_{p,\delta})^*$ by
$$
\begin{aligned}
&a_{p,\delta}(u,v):=-\sum_{n\geq 3}\lambda_{n,\delta}(p)\langle u(x),\overline{u_{n,\delta}(x;p)}\rangle_{(V_{\delta,p})^{*}\times V_{\delta,p}}
\cdot \langle \overline{v(x)},u_{n,\delta}(x;p)\rangle_{(V_{\delta,p})^{*}\times V_{\delta,p}}, \\
&\qquad\qquad\qquad\qquad\qquad a_{p,\delta}(u,v)\equiv \langle A_{p,\delta}u,v\rangle_{(V_{\delta,p})^*\times V_{\delta,p}},
\end{aligned}
$$
then the resolvent $(\lambda+A_{p,\delta})^{-1}$ can be expanded in its spectral form when it is well-defined
$$
(\lambda+A_{p,\delta})^{-1}=\sum_{n\geq 3}\frac{1}{\lambda-\lambda_{n,\delta}(p)}\langle \cdot,\overline{u_{n,\delta}(x;p)}\rangle_{(V_{\delta,p})^*\times V_{\delta,p}}u_{n,\delta}(x;p).
$$
Thus $\mathbb{T}^{evan}_{\delta}(\lambda;p)$ admits the following factorization:
\begin{equation} \label{eq168_1}
    \mathbb{T}^{evan}_{\delta}(\lambda;p)=\text{Tr}\circ (\lambda+A_{p,\delta})^{-1}\circ M.
\end{equation}
We note that for $\lambda=\lambda_*+\delta\cdot h$ with $h\in \tilde{J}$, by \eqref{eq87_5}, there exists $c>0$, which is independent of both $\delta$ and $p$, such that 
$$
|a_{p,\delta}(u,u)+\lambda\langle u,u\rangle_{V_{\delta,p}\times V_{\delta,p}}|\geq c\|u\|^2
$$
for any $u\in V_{p,\delta}$. Thus, by Lax-Milgram theorem, we have $\|(\lambda+A_{p,\delta})^{-1}\|\leq c^{-1}$. Then the uniform boundedness of $\|\mathbb{T}^{evan}_{\delta}(\lambda_*+\delta\cdot h;p)\|$ follows by using \eqref{eq168_1}.

\medskip

Step 7. We prove the continuity of $\mathbb{T}^{evan}_{\delta}(\lambda;p)$ with respect to $p\in[0,2\pi]$. Actually, the definition of the sesquilinear $a_{p,\delta}(\cdot,\cdot)$ implies that $A_{p,\delta}=\Delta|_{V_{p,\delta}}$. Thus, the operator $e^{-ipx_1}A_{p,\delta}e^{ipx_1}=(\nabla +ip \bm{e}_1)^2|_{V_{0,\delta}}$ is analytic of type (A) in $p$ for each fixed $\delta$, in the sense that $\left(e^{-ipx_1}A_{p,\delta}e^{ipx_1}\right)u\in (V_{0,\delta})^*$ is analytic vector-valued function with respect to $p$ for each $u\in V_{0,\delta}$ (see Section 2, Chapter VII of \cite{kato2013perturbation}). As a consequence, the discussions in the same chapter show that $e^{-ipx_1}A_{p,\delta}e^{ipx_1}:V_{0,\delta}\to (V_{0,\delta})^*$ is analytic. Then, by Theorem 1.3 at p.367 of \cite{kato2013perturbation}, we derive that $e^{-ipx_1}(\lambda+A_{p,\delta})^{-1}e^{ipx_1}:(V_{0,\delta})^*\to V_{0,\delta}$ is an analytic operator-valued function with respect to $p$ (see the discussions in Chapter VII, \cite{kato2013perturbation}). On the other hand, we can factorize $\mathbb{T}^{evan}_{\delta}(\lambda;p)$ as
$$
\mathbb{T}^{evan}_{\delta}(\lambda;p)=(\text{Tr}\circ e^{ipx_1})\circ (e^{-ipx_1}(\lambda+A_{p,\delta})^{-1}e^{ipx_1})\circ (e^{-ipx_1}\circ M).
$$
Note that all the operators inside the brackets on the right-hand side above have domains independent of $p$ and are continuous in $p$. We conclude that $\mathbb{T}^{evan}_{\delta}(\lambda;p)$ is continuous with respect to $p\in[0,2\pi]$ for each fixed $\delta$. 

\medskip

Step 8. We prove that $\mathbb{T}^{evan}_{\delta}(\lambda_*+\delta\cdot h;p)$ converges to $\mathbb{T}^{evan}_{0}(\lambda_*;p)$ in operator norm for almost every $p\in [0,2\pi]$. To this end, we apply the resolvent identity of $(\lambda+A_{p,\delta})^{-1}$ to \eqref{eq168_1} to obtain
$$
\begin{aligned}
\mathbb{T}^{evan}_{\delta}(\lambda_*+\delta\cdot h;p)-\mathbb{T}^{evan}_{\delta}(\lambda_*;p)=-\delta\cdot h\left( \text{Tr}\circ(\lambda_*+\delta\cdot h+A_{p,\delta})^{-1}\circ (\lambda_*+A_{p,\delta})^{-1}\circ M\right).
\end{aligned}
$$
Then the uniform boundedness of operators on the right-hand side above yields 
\begin{equation} \label{eq168_2}
    \lim_{\delta\to 0}\|\mathbb{T}^{evan}_{\delta}(\lambda_*+\delta\cdot h;p)-\mathbb{T}^{evan}_{\delta}(\lambda_*;p)\|_{\mathcal{B}(\tilde{H}^{-\frac{1}{2}}(\Gamma),H^{\frac{1}{2}}(\Gamma))}=0,
\end{equation}
for each $p\in[0,2\pi]$. The desired conclusion follows if we can prove that for almost every $p\in[0,2\pi]$,
\begin{equation} \label{eq168_3}
    \lim_{\delta\to 0}\|\mathbb{T}^{evan}_{\delta}(\lambda_*;p)-\mathbb{T}^{evan}_{0}(\lambda_*;p)\|_{\mathcal{B}(\tilde{H}^{-\frac{1}{2}}(\Gamma),H^{\frac{1}{2}}(\Gamma))}=0. 
\end{equation}
The idea for the proof of \eqref{eq168_3} is to express $\mathbb{T}^{evan}_{\delta}(\lambda_*;p)$ as the composition of a sequence of operators, which all converge in operator norm for all $p\in[0,2\pi]$ except a finite set. 
In more detail, for any $ \psi\in \tilde{H}^{-\frac{1}{2}}(\Gamma)$,  we extend the function $w_\delta(x;p):=\mathbb{T}^{evan}_{\delta}(\lambda_*;p)\psi$ to $x\in C_\delta$.  Note that $w_{\delta}$ satisfies the following equations
\begin{equation*}
\left\{
\begin{aligned}
&(\Delta_x +\lambda_*)w_{\delta}(x;p)=(f_\delta^0 \psi)(x):=\psi(x_2)\tilde{\delta}(x_1)-\sum_{n=1,2}\langle \psi,\overline{u_{n,\delta}(x;p)}\rangle u_{n,\delta}(x;p)\in H^{-1}(C_\delta) ,\\
&\frac{\partial }{\partial x_2}\Big|_{\Gamma^{\pm}}(w_{\delta}(x;p))=w_{\delta}(x;p)\Big|_{\partial D_{1,\delta}}=w_{\delta}(x;p)\Big|_{\partial D_{2,\delta}}=0, \\
&w_{\delta}(x;p)(x+\bm{e}_1;p)=e^{ip}w_{\delta}(x;p),\quad
\frac{\partial w_{\delta}}{\partial x_1}(x+\bm{e}_1;p)=e^{ip}\frac{\partial w_{\delta}}{\partial x_1}(x;p).
\end{aligned}
\right.
\end{equation*}
where $\tilde{\delta}(\cdot)$ denotes the Dirac delta function. Thus, for $p\notin \mathcal{S}(\lambda_*):=\{\pi\}\cup\{p\in[0,2\pi]:\lambda_*=\lambda_n^e(p)\text{ for some integer $n$}\}$ ($\lambda_n^e(p)$ are introduced in \eqref{eq26_1}), we can express $w_{\delta}(x;p)$ by the Green formula
\begin{equation} \label{eq168_4}
\begin{aligned}
w_{\delta}(x;p)=& 
\int_{C_\delta} G^e(x,y;p,\lambda_*) (f_\delta^0 \psi)(y) dy \\
&+\int_{\partial D} G^e(x,y+z_1-\delta\bm{e}_1;p,\lambda_*) \varphi_{1,\delta}(y) d\sigma(y)+\int_{\partial D} G^e(x,y+z_2+\delta\bm{e}_1;p,\lambda_*) \varphi_{2,\delta}(y) d\sigma(y),
\end{aligned}
\end{equation}
for some $\bm{\varphi}_\delta=(\varphi_{1,\delta},\varphi_{2,\delta})\in{H}^{-\frac{1}{2}}(\partial D)\times{H}^{-\frac{1}{2}}(\partial D)$. The Dirichlet conditions on $\partial D_{1,\delta}\cap\partial D_{2,\delta}$ require that
\begin{equation*}
T_\delta(p,\lambda_*)\bm{\varphi}_\delta=-
\begin{pmatrix}
\left(\int_{\partial D} G^e(x+z_1-\delta\bm{e}_1,y;p,\lambda_*) (f_\delta^0 \psi)(y) dy\right)\Big|_{\partial D} \\
\left(\int_{\partial D} G^e(x+z_2+\delta\bm{e}_1,y;p,\lambda_*) (f_\delta^0 \psi)(y) dy\right)\Big|_{\partial D}
\end{pmatrix},
\end{equation*}
where $T_\delta(p,\lambda_*)$ is introduced in \eqref{eq32}. 

We now show that $T_\delta(p,\lambda_*)$ is invertible for any $p\notin \mathcal{S}(\lambda_*)$ when $\delta$ is sufficiently small. Indeed, since $T(p,\lambda_*)$ is Fredholm with zero index (by Lemma \ref{particle integral operator at the Dirac point}) and $\lim_{\delta\to 0}\|T_\delta(p,\lambda_*)-T(p,\lambda_*)\|=0$ (by Lemma \ref{taylor approximation of T_delta}), $T_\delta(p,\lambda_*)$ is also Fredholm with zero index. On the other hand, for each $p\neq \pi$, $\lambda_*$ is not a characteristic values of $T_\delta(\lambda;p):=T_\delta(p,\lambda)$ (see Corollary \ref{common band gap}). Therefore $T_\delta(p,\lambda_*)$ is invertible for $p\notin \mathcal{S}(\lambda_*)$. 
As a result, we have the following expression
\begin{equation} \label{eq168_5}
\bm{\varphi}_\delta=-T^{-1}_\delta(p,\lambda_*)
\begin{pmatrix}
\left(\int_{\partial D} G^e(x+z_1-\delta\bm{e}_1,y;p,\lambda_*) (f_\delta^0 \psi)(y) dy\right)\Big|_{\partial D} \\
\left(\int_{\partial D} G^e(x+z_2+\delta\bm{e}_1,y;p,\lambda_*) (f_\delta^0 \psi)(y) dy\right)\Big|_{\partial D}
\end{pmatrix}.
\end{equation}
Moreover $\lim_{\delta\to 0}\|T^{-1}_\delta(p,\lambda_*)-T^{-1}(p,\lambda_*)\|=0$. 
By substituting \eqref{eq168_5} into \eqref{eq168_4} and then taking trace to $\Gamma$, we can express $\mathbb{T}^{evan}_{\delta}(\lambda_*;p)$ as the composition of a sequence of operators, which all converge in operator norm for each $p\notin \mathcal{S}(\lambda_*)$. Therefore \eqref{eq168_3} holds almost everywhere for $p\in[0,2\pi]$.


\medskip

Step 9. Finally, in view of \eqref{eq166}, \eqref{eq168} and the decomposition \eqref{eq161}, we have 
\begin{equation*}
    \lim_{\delta\to 0}\|\mathbb{T}_\delta(\lambda_*+\delta\cdot h)-\mathbb{T}_0\|_{\mathcal{B}(\tilde{H}^{-\frac{1}{2}}(\Gamma),H^{\frac{1}{2}}(\Gamma))}=0.
\end{equation*}
Following the same line of proof, it can be shown that 
\begin{equation*}
    \lim_{\delta\to 0}\|\mathbb{T}_{-\delta}(\lambda_*+\delta\cdot h)-\mathbb{T}_0\|_{\mathcal{B}(\tilde{H}^{-\frac{1}{2}}(\Gamma),H^{\frac{1}{2}}(\Gamma))}=0.
\end{equation*}
Combining the above with the results of Step 1 and the estimates of (\ref{eq92}) and (\ref{eq92-1}),  we conclude the proof.
\end{proof}

\bibliographystyle{plain}
\bibliography{ref}

\begin{thebibliography}{10}

\bibitem{ablowitz2012nonlinear}
Mark~J Ablowitz and Yi~Zhu.
\newblock Nonlinear waves in shallow honeycomb lattices.
\newblock {\em SIAM Journal on Applied Mathematics}, 72(1):240--260, 2012.

\bibitem{ammari2020topologically}
Habib Ammari, Bryn Davies, Erik~Orvehed Hiltunen, and Sanghyeon Yu.
\newblock Topologically protected edge modes in one-dimensional chains of
  subwavelength resonators.
\newblock {\em Journal de Math{\'e}matiques Pures et Appliqu{\'e}es},
  144:17--49, 2020.

\bibitem{Ammari2018-SIAMAP}
Habib Ammari, Brian Fitzpatrick, Erik Hiltunen, and Sanghyeon Yu.
\newblock Subwavelength localized modes for acoustic waves in bubbly crystals
  with a defect.
\newblock {\em SIAM Journal on Applied Mathematics}, 78, 04 2018.

\bibitem{ammari2020honeycomb}
Habib Ammari, Brian Fitzpatrick, Erik~Orvehed Hiltunen, Hyundae Lee, and
  Sanghyeon Yu.
\newblock Honeycomb-lattice minnaert bubbles.
\newblock {\em SIAM Journal on Mathematical Analysis}, 52(6):5441--5466, 2020.

\bibitem{ammari2018mathematical}
Habib Ammari, Brian Fitzpatrick, Hyeonbae Kang, Matias Ruiz, Sanghyeon Yu, and
  Hai Zhang.
\newblock {\em Mathematical and computational methods in photonics and
  phononics}, volume 235.
\newblock American Mathematical Society, 2018.

\bibitem{Ammari2019linedefect}
Habib Ammari, Erik Hiltunen, and Sanghyeon Yu.
\newblock Subwavelength guided modes for acoustic waves in bubbly crystals with
  a line defect.
\newblock {\em Journal of the European Mathematical Society},
  24(7):2279–2313, 2022.

\bibitem{Cassier2021}
Maxence Cassier and Michael~I Weinstein.
\newblock High contrast elliptic operators in honeycomb structures.
\newblock {\em Multiscale Modeling \& Simulation}, 19:1784--1856, 12 2021.

\bibitem{CLEMENT1991453}
Philippe Clément and Ben {de Pagter}.
\newblock Some remarks on the banach space valued hilbert transform.
\newblock {\em Indagationes Mathematicae}, 2(4):453--460, 1991.

\bibitem{Conca1995FluidsAP}
Carlos Conca, Jerome~A. Planchard, and Muthusamy Vanninathan.
\newblock {\em Fluids And Periodic Structures}.
\newblock 1995.

\bibitem{Drouot2019TheBC}
Alexis Drouot.
\newblock The bulk-edge correspondence for continuous honeycomb lattices.
\newblock {\em Communications in Partial Differential Equations}, 44:1406 --
  1430, 2019.

\bibitem{drouot2020}
Alexis Drouot, Charles~L Fefferman, and Michael~I Weinstein.
\newblock Defect modes for dislocated periodic media.
\newblock {\em Communications in Mathematical Physics}, 377(3):1637--1680,
  2020.

\bibitem{drouot2020edge}
Alexis Drouot and Micheal~I Weinstein.
\newblock Edge states and the valley hall effect.
\newblock {\em Advances in Mathematics}, 368:107142, 2020.

\bibitem{fefferman2017topologically}
Charles Fefferman, J~Lee-Thorp, and Michael~I Weinstein.
\newblock {\em Topologically protected states in one-dimensional systems},
  volume 247.
\newblock American Mathematical Society, 2017.

\bibitem{fefferman2012honeycomb}
Charles Fefferman and Michael~I Weinstein.
\newblock Honeycomb lattice potentials and dirac points.
\newblock {\em Journal of the American Mathematical Society}, 25(4):1169--1220,
  2012.

\bibitem{fefferman2016edge}
Charles~L Fefferman, James~P Lee-Thorp, and Michael~I Weinstein.
\newblock Edge states in honeycomb structures.
\newblock {\em Annals of PDE}, 2(2):1--80, 2016.

\bibitem{figotin1994}
Alexander Figotin and Abel Klein.
\newblock Localization of electromagnetic and acoustic waves in random media.
  lattice models.
\newblock {\em Journal of statistical physics}, 76(3):985--1003, 1994.

\bibitem{figotin1996}
Alexander Figotin and Abel Klein.
\newblock Localization of classical waves i: Acoustic waves.
\newblock {\em Communications in mathematical physics}, 180(2):439--482, 1996.

\bibitem{figotin1997}
Alexander Figotin and Abel Klein.
\newblock Localization of classical waves ii: Electromagnetic waves.
\newblock {\em Communications in mathematical physics}, 184(2):411--441, 1997.

\bibitem{figotin1997_jsp}
Alexander Figotin and Abel Klein.
\newblock Localized classical waves created by defects.
\newblock {\em Journal of statistical physics}, 86(1):165--177, 1997.

\bibitem{figotin1998}
Alexander Figotin and Abel Klein.
\newblock Localization of light in lossless inhomogeneous dielectrics.
\newblock {\em Journal of Optical Society of America A}, 15(5):1423--1435,
  1998.

\bibitem{figotin1998-1}
Alexander Figotin and Abel Klein.
\newblock Midgap defect modes in dielectric and acoustic media.
\newblock {\em SIAM J. APPL. MATH.}, 58(6):1748--1773, 1998.

\bibitem{fliss2016solutions}
Sonia Fliss and Patrick Joly.
\newblock Solutions of the time-harmonic wave equation in periodic waveguides:
  asymptotic behaviour and radiation condition.
\newblock {\em Archive for Rational Mechanics and Analysis}, 219(1):349--386,
  2016.

\bibitem{Vu2014}
Vu~Hoang and Maria Radosz.
\newblock Absence of bound states for waveguides in 2d periodic structures.
\newblock {\em Journal of Mathematical Physics}, 55876134:55--33506, 03 2014.

\bibitem{kato2013perturbation}
Tosio Kato.
\newblock {\em Perturbation theory for linear operators}, volume 132.
\newblock Springer Science \& Business Media, 2013.

\bibitem{kuchment2016overview}
Peter Kuchment.
\newblock An overview of periodic elliptic operators.
\newblock {\em Bulletin of the American Mathematical Society}, 53(3):343--414,
  2016.

\bibitem{Kuchment2009}
Peter Kuchment and Beng~Seong Ong.
\newblock On guided electromagnetic waves in photonic crystal waveguides.
\newblock {\em Operator theory and its applications}, 339, 12 2009.

\bibitem{lee2019elliptic}
James~P Lee-Thorp, Michael~I Weinstein, and Yi~Zhu.
\newblock Elliptic operators with honeycomb symmetry: Dirac points, edge states
  and applications to photonic graphene.
\newblock {\em Archive for Rational Mechanics and Analysis}, 232(1):1--63,
  2019.

\bibitem{LLZ2022}
Wei Li, Junshan Lin, and Hai Zhang.
\newblock Dirac points for the honeycomb lattice with impenetrable obstacles.
\newblock {\em SIAM Journal on Applied Mathematics, to appear}, 2023.

\bibitem{LZ2021}
Junshan Lin and Hai Zhang.
\newblock Mathematical theory for topological photonic materials in one
  dimension.
\newblock {\em Journal of Physics A: Mathematical and Theoretical}, 55 495203,
  2022.

\bibitem{ozawa2019}
Tomoki Ozawa, Hannah~M Price, Alberto Amo, Nathan Goldman, Mohammad Hafezi,
  Ling Lu, Mikael~C Rechtsman, David Schuster, Jonathan Simon, Oded Zilberberg,
  et~al.
\newblock Topological photonics.
\newblock {\em Reviews of Modern Physics}, 91(1):015006, 2019.

\bibitem{qiu2011}
Wenjun Qiu, Zheng Wang, and Marin Solja{\v{c}}i{\'c}.
\newblock Broadband circulators based on directional coupling of one-way
  waveguides.
\newblock {\em Optics express}, 19(22):22248--22257, 2011.

\bibitem{reed1972methods}
Michael Reed and Barry Simon.
\newblock {\em Methods of modern mathematical physics}, volume~1.
\newblock Elsevier, 1972.

\bibitem{sheng1990}
Ping Sheng.
\newblock {\em Scattering and localization of classical waves in random media},
  volume~8.
\newblock World Scientific, 1990.

\bibitem{soukoulis2012}
Costas~M Soukoulis.
\newblock {\em Photonic crystals and light localization in the 21st century},
  volume 563.
\newblock Springer Science \& Business Media, 2012.

\bibitem{TZ2022}
Guochuan Thiang and Hai Zhang.
\newblock Bulk-interface correspondences for one-dimensional topological
  materials with inversion symmetry.
\newblock {\em Proceedings of the Royal Society A}, 479:20220675, 2023.

\end{thebibliography}

\end{document}